\newcommand{\cat}[1]{\ensuremath{\mathbf{#1}}}
\newcommand{\pbc}{\mathsf{Pb}}
\newcommand{\pbe}{\mathsf{ePb}}
\tikzstyle{hedge}=[fill=white, draw=black, shape=rectangle, rounded corners=2mm, inner sep=0.2mm, outer sep=-2mm, scale=0.8, minimum height=8mm, minimum width=8mm, tikzit category=hypergraph]
\tikzstyle{hedge blue}=[hedge, fill={rgb,255: red,102; green,204; blue,255}, draw=black, shape=rectangle, tikzit category=hypergraph]
\tikzstyle{node}=[fill=black, draw=black, shape=circle, minimum size=1.5mm, inner sep=0mm, tikzit category=hypergraph]
\tikzstyle{red node}=[fill=red, draw=black, shape=circle, minimum size=1.5mm, inner sep=0mm, tikzit category=hypergraph]
\tikzstyle{node highlight}=[fill=black, draw=blue, thick, shape=circle, minimum size=1.5mm, inner sep=0mm, tikzit category=hypergraph]
\tikzstyle{red node highlight}=[fill=red, draw=blue, thick, shape=circle, minimum size=1.5mm, inner sep=0mm, tikzit category=hypergraph]
\tikzstyle{yellow hedge}=[hedge, fill=yellow, draw=black, shape=rectangle, tikzit category=hypergraph]
\tikzstyle{green hedge}=[hedge, fill=green, draw=black, shape=rectangle, tikzit category=hypergraph]
\tikzstyle{small box}=[fill=white, draw=black, shape=rectangle, minimum height=6mm, minimum width=6mm, tikzit category=string diagram]
\tikzstyle{vsmall box}=[fill=black, draw=black, shape=rectangle, minimum height=4mm, minimum width=1mm, tikzit category=string diagram, inner sep=0]
\tikzstyle{medium box}=[fill=white, draw=black, shape=rectangle, minimum height=11mm, minimum width=6mm, tikzit category=string diagram]
\tikzstyle{semilarge box}=[fill=white, draw=black, shape=rectangle, minimum height=16mm, minimum width=6mm, tikzit category=string diagram]
\tikzstyle{large box}=[fill=white, draw=black, shape=rectangle, minimum height=21mm, minimum width=6mm, tikzit category=string diagram]
\tikzstyle{black dot}=[fill=black, draw=black, shape=circle, minimum size=2mm, inner sep=0mm, tikzit category=string diagram]
\tikzstyle{white dot}=[fill=white, draw=black, shape=circle, minimum size=2mm, inner sep=0mm, tikzit category=string diagram]
\tikzstyle{red dot}=[fill=red, draw=black, shape=circle, minimum size=2mm, inner sep=0mm, tikzit category=string diagram]
\tikzstyle{wlabel}=[fill=none, draw=none, shape=rectangle, tikzit category=string diagram, font={\footnotesize}, inner sep=0pt, tikzit fill={rgb,255: red,102; green,204; blue,255}, tikzit draw={rgb,255: red,102; green,204; blue,255}, yshift=0.3mm]
\tikzstyle{BRchange}=[draw=black, shape=diamond, tikzit shape=circle, tikzit fill={rgb,255: red,96; green,0; blue,0}, diamond split part fill={black,red}, inner sep=-5mm, minimum width=2.7mm, minimum height=1.7mm]
\tikzstyle{RBchange}=[draw=black, shape=diamond, tikzit shape=circle, tikzit fill={rgb,255: red,165; green,0; blue,0}, diamond split part fill={red,black}, inner sep=0, minimum width=2.7mm, minimum height=1.7mm]
\tikzstyle{dummy}=[fill=none, draw=none, shape=circle, font={\small}, inner sep=1pt, tikzit draw=blue, tikzit fill=white]
\tikzstyle{node label}=[fill=none, draw=none, shape=rectangle, tikzit fill=cyan, tikzit draw=cyan, font={\scriptsize}, tikzit shape=circle, inner sep=0pt]
\tikzstyle{empty diag}=[fill=white, draw={rgb,255: red,165; green,165; blue,165}, shape=rectangle, minimum size=1.2 cm, dashed, thick]
\tikzstyle{dashed edge}=[-, dashed, very thick]
\tikzstyle{alt sort}=[-, dashed, dash pattern=on 2pt off 0.5pt, thick, draw=red]
\tikzstyle{diredge}=[->, >={Latex[length=1.5mm]}]
\tikzstyle{diredge highlight}=[->, >={Latex[length=1.5mm]}, draw=blue, thick]
\tikzstyle{boundary frame}=[-, draw={rgb,255: red,170; green,170; blue,255}, dashed, fill={rgb,255: red,238; green,238; blue,255}, thick, dash pattern=on 2pt off 0.5pt]
\tikzstyle{graph frame}=[-, draw={rgb,255: red,191; green,191; blue,191}, dashed, fill={rgb,255: red,238; green,238; blue,238}, thick, dash pattern=on 2pt off 0.5pt]
\tikzstyle{def sort}=[-]
\tikzstyle{component}=[-, draw=red, thick]
\tikzstyle{map edge}=[{|->}, >=latex, shorten <=0.5mm, shorten >=0.5mm]
\tikzstyle{hypergraph map edge}=[{|->}, draw=red, shorten <=1mm, shorten >=1mm]
\tikzstyle{cdedge}=[->]
\tikzstyle{big cdedge}=[->, very thick, >=latex]
\tikzstyle{pointer edge}=[->, draw=gray, thick]
\newcommand\functorop[1][l]{\csname#1functor\endcsname}
\newcommand\lfunctorop[3]{%
	\setbox0=\hbox{$#2$}%
	\kern\wd0%
	\ensurestackMath{\Centerstack[c]{#1\\ \mathllap{#2\;\,}\mathclap{\DownArrow}\\#3}}%
}		
\newcommand\rfunctorop[3]{%
	\setbox0=\hbox{$#2$}%
	\ensurestackMath{\Centerstack[c]{#1\\\mathclap{\UpArrow}\mathrlap{\,\;#2}\\#3}}%
	\kern\wd0%
}
\newcommand\UpArrow{\rotatebox[origin=c]{90}{$\longrightarrow$\,}}
\newcommand\DownArrow{\rotatebox[origin=c]{-90}{$\longrightarrow$\,}}
\newcommand\functor[1][l]{\csname#1functor\endcsname}
\newcommand\lfunctor[3]{%
	\setbox0=\hbox{$#2$}%
	\kern\wd0%
	\ensurestackMath{\Centerstack[c]{#1\\ \mathllap{#2\;\,}\mathclap{\DownArrow}\\#3}}%
}
\newcommand\rfunctor[3]{%
	\setbox0=\hbox{$#2$}%
	\ensurestackMath{\Centerstack[c]{#1\\\mathclap{\DownArrow}\mathrlap{\,\;#2}\\#3}}%
	\kern\wd0%
}
\newcommand\functormapsto{\mathrel{\ensurestackMath{\Centerstack[c]{\longmapsto\\ \\\longmapsto}}}}
\newcommand{\lgh}{\mathsf{lg}}
\newcommand{\eq}{\mathsf{eq}}
\newcommand{\quo}{\mathsf{Q}}
\DeclareMathAlphabet{\mymathbb}{U}{BOONDOX-ds}{m}{n}
\newcommand{\Set}{\mathbf{Set}}
\newcommand{\eg}[0]{\mathbf{EGG}}
\def\@citecolor{blue}%
\def\@urlcolor{blue}%
\def\@linkcolor{blue}%
\def\orcidID#1{\smash{\href{http://orcid.org/#1}{\protect\raisebox{-1.25pt}{\protect\includegraphics{orcid_color.eps}}}}}
\def\B{\textbf {\textup{B}}}
\def\C{\textbf {\textup{C}}}
\def\D{\textbf {\textup{D}}}
\def\X{\textbf {\textup{X}}}
\def\Y{\textbf {\textup{Y}}}
\def\E{\textbf {\textup{E}}}
\def\T{\textbf {\textup{1}}}
\def\A{\textbf {\textup{A}}}
\newcommand{\catname}[1]{\textbf{\textup{#1}}}
\newcommand{\hyp}{\catname{Hyp}}
\newcommand{\hyps}{\catname{Hyp}_{\Sigma}}
\newcommand{\EqHyp}{\catname{EqHyp}} 
\newcommand{\EqTG}{\catname{EqTG}}
\newcommand{\EqTGs}{\catname{EqTG}_{\Sigma}}
\newcommand{\tg}[0]{\catname{TG}_{\Sigma}}
\newcommand{\pro}{\mathsf{prod}}
\newcommand{\cod}{\mathsf{cod}}
\newcommand{\dom}{\mathsf{dom}}
\newcommand{\egg}{\mathsf{e}\text{-}\catname{EqHyp}}
\newcommand{\ari}{\mathsf{ar}}
\newcommand{\comma}[2]{#1\hspace{1pt} {\downarrow}#2}
\newcommand{\commentato}[1]{ {} }
\newcommand{\Sum}{\mathlarger{\sum}}
\newcommand{\mor}{\mathsf{Mor}}
\newcommand{\mon}{\mathsf{Mono}}
\newcommand{\reg}{\mathsf{Reg}}
\newcommand{\mto}{\rightarrowtail}
\newcommand{\eto}{\twoheadrightarrow}
\newcommand{\id}[1]{\mathsf{id}_{#1}}
\tikzset{->-/.style={decoration={
			markings,
			mark=at position #1 with {\arrow{>}}},
			postaction={decorate}
			}}
\title{EGGs are adhesive!}
\titlerunning{EGGs are adhesive!} 
\author{Roberto Biondo}
{Department of Computer Science, University of Pisa, Pisa, IT}
{r.biondo@studenti.unipi.it}{}{}
\author{Davide Castelnovo}
{Department of Computer Science, University of Pisa, Pisa, IT}
{castelnovod@gmail.com}
{https://orcid.org/0000-0002-5926-5615}{}
\author{Fabio Gadducci}
{Department of Computer Science, University of Pisa, Pisa, IT}
{fabio.gadducci@unipi.it}
{https://orcid.org/0000-0003-0690-3051}{}
\authorrunning{R.~Biondo, D.~Castelnovo, F.~Gadducci}
\keywords{Hypergraphs, terms graphs, e-graphs, adhesive categories.} 
\begin{document}

\maketitle

\begin{abstract}
The use of rewriting-based visual formalisms is on the rise. 
In the formal methods community, this is due also to the introduction of adhesive
categories, where most properties of classical approaches to graph transformation, 
such as those on parallelism and confluence, can be rephrased and proved in a general and 
uniform way.
E-graphs (EGGs) are a formalism for program optimisation 
via an efficient implementation of equality saturation. 
In short, EGGs can be  defined as (acyclic) term graphs with an additional notion of 
equivalence on nodes that is closed under the operators of the signature.
Instead of replacing the components of a program, the optimisation step 
is performed by adding new components and linking them to 
the existing ones via an equivalence relation, until an optimal program is reached.
This work describes EGGs via adhesive categories. 
Besides the benefits in itself of a formal presentation, which renders the 
properties of the data structure precise, the description of the addition of equivalent 
program components using standard graph transformation tools offers the advantages 
of the adhesive framework in modelling, for example, concurrent updates.
%
\end{abstract}

\section{Introduction}
The introduction of \emph{adhesive categories} marked a watershed moment for the algebraic approaches 
to the rewriting of graph-like structures~\cite{lack2005adhesive,ehrig2006fundamentals}.
Until then, key results of the approaches on e.g. parallelism and confluence had to be proven 
over and over again for each different formalism at hand, 
despite the obvious similarity of the procedure.
Adhesive categories provides such a disparate set of formalisms with a common abstract framework 
where many of these general results can be recast and uniformly proved once and for all.
 
\vspace{.1cm}
\noindent
\begin{minipage}[l]{.78\linewidth}In short, following the double-pushout (DPO) approach
to graph transformation~\cite{CorradiniMREHL97,ehrig2006fundamentals}, 
a rule is given by two arrows $l: K \to L$ and $r: K \rightarrow R$
and its application requires a match $m: L \to G$: the rewriting step from $G$
to $H$ is then given by the diagram aside, whose halves are pushouts.
  \end{minipage}%
    \hfill
  \begin{minipage}[r]{.20\linewidth }
    \xymatrix@C=.5cm@R=.5cm{
      L \ar[d]_{m}
      & K \ar[r]^r \ar[l]_{l} \ar[d] & R \ar [d] \\
      G & C \ar[r] \ar[l]                    & H
    }
  \end{minipage}
\vspace{.1cm}

\noindent
Thus, $L$ and $R$ are the left- and right-hand side of the rule, respectively, while $K$ witnesses those parts that must 
be present for the rule to be executed, yet that are not affected by the rule itself.
Should a category be adhesive, and the arrows of the rules monomorphisms, the presence of a match ensures that
if the pushout complement $C \to G$ exists then it is unique, hence a rewriting step can be deterministically performed.
The theory of $\mathcal{M}$-adhesivity~\cite{azzi2019essence,heindel2009category} extends the core framework, ensuring that if the arrows 
of the rules are in a suitable class $\mathcal{M}$ of monomorphisms
then the benefits of adhesivity
can be recovered~\cite{ehrig2012,ehrig2014adhesive}. 
If only the left-hand side belongs to $\mathcal{M}$, the theory is still under development, as witnessed e.g.
by~\cite{BaldanC0G24}.
However, despite the elegance and effectiveness,
proving that a given category satisfies the conditions 
for being $\mathcal{M}$-adhesive can be a daunting task. For this reason, sufficient criteria have been provided for the core 
framework, e.g. that every elementary topos is adhesive \cite{lack2006toposes}, as well as for the extended one of
$\mathcal{M}$-adhesivity~\cite{CastelnovoGM24}.
For some structures such as \emph{hypergraphs with equivalence} in~\cite{concur2006}, the question 
of their $\mathcal{M}$-adhesivity has not yet been settled.

\emph{E-graphs} (shortly, EGGs) are an up-and-coming formalism for program optimisation and synthesis via a compact 
representation and efficient implementation of equality saturation. 
Albeit a classical data structure~\cite{DetlefsNS05}, EGGs received 
new impulse after the seminal~\cite{WillseyNWFTP21} and
developed a thriving community, as witnessed by the official website~\cite{eggs}.
The key idea of rewriting-based program optimisation is to perform the manipulation of a syntactical description 
of a program, replacing some of its components in such a way that the semantics is preserved while 
the computational costs of its actual execution are improved. Instead of directly removing sub-programs, EGGs just add the 
new components and link them to the older ones via the equivalence relation, until an optimal program is 
reached and extracted.

EGGs can be concisely defined as term graphs equipped with a notion of equivalence on nodes
that is closed under the operators of a signature~\cite[Section~4.2]{DetlefsNS05}.
In the presentation of term graphs via string diagrams~\cite{CastelnovoGM24}, EGGs are (hyper)trees 
whose edges are labelled by operators and with
the possibility of sharing subtrees, with an additional equivalence relation $\equiv$ on nodes that 
is closed under composition. In plain words and using a toy example:
if $a$ and $b$ are two constants such that $a \equiv b$, then $f(a) \equiv f(b)$ for any unary operator $f$.


Building on the criterion developed in~\cite{CastelnovoGM24}, this work proves that both hypergraphs with equivalence
and EGGs form an 
$\mathcal{M}$-adhesive category for a suitable choice of $\mathcal{M}$.
The advantages from this characterisation are two-fold. On the one side, 
we put the benefits \emph{per se} of a formal presentation, making precise the properties of the data structure. 
On the other side, describing the optimisation steps via the DPO approach
offers the tools for modelling their parallel and concurrent execution
and for proving their confluence and termination.

\emph{Synopsis}
The paper has the following structure. 
In Section~\ref{sec:ade} we briefly recall 
the theory of $\mathcal{M}$-adhesive categories
and of kernel pairs.
In Section~\ref{sec:hyper} we present the graphical structures of our interest, 
 (labelled) hypergraphs and term graphs, and we provide a
functorial characterisation, which allows for proving their adhesivity properties.
This is expanded in Section~\ref{hypereq} for proving the $\mathcal{M}$-adhesivity
of hypergraphs 
and term graphs with equivalence and in Section~\ref{eggs} of
their variants where equivalences are closed with respect to operator application,
thus subsuming EGGs.
In Section~\ref{rewriting} we put the machinery at work, showing how the optimisation steps
can be rephrased as the application of term graph rewriting rules.
Finally, in Section~\ref{conclusioni} we draw our conclusions, hint at future endeavours and offer some 
brief remarks on related works.
For the sake of space, the proofs appear in the appendices. 

\section{Facts about $\mathcal{M}$-adhesive categories and kernel pairs}\label{sec:ade}

%
We open this background section by fixing some notation.
Given a category $\X$ we do not distinguish notationally between $\X$ and its class of objects, so
``$X\in \X$'' means that $X$ is an object of $\X$. We let $\mor(\X)$, $\mon(\X)$ and $\reg(\X)$ denote the class of all arrows, monos and regular monos of $\X$, respectively.  Given an object $X$, we  denote by $?_X$ the unique arrow from an initial object into $X$ and by $!_X$ that  unique arrow from $X$ into a terminal one. We will also use the notation $e\colon X\eto Y$ to denote that an arrow $e\colon X\to Y$ is a regular epi. 

\subsection{$\mathcal{M}$-adhesivity}\label{subsec:ade}
The key property of $\mathcal{M}$-adhesive categories is the \emph{Van Kampen condition}~\cite{brown1997van,johnstone2007quasitoposes,lack2005adhesive},
and for defining it we need some notions.
Let  $\X$ be a category. A subclass $\mathcal{A}$ of $\mor(\X)$ is said to be

\vspace{-.25cm}
\parbox{11cm}{\begin{itemize}
	\item		\emph{stable under pushouts (pullbacks)} if for every pushout (pullback) square as the one aside, if $m \in \mathcal{A}$ ($n\in \mathcal{A}$) then $n \in \mathcal{A}$ ($m \in \mathcal{A}$);
		\item \emph{closed under composition} if $h, k\in \mathcal{A}$ implies $h\circ k\in \mathcal{A}$ whenever $h$ and $k$ are composable;
		\item \emph{closed under decomposition}, or \emph{left-cancellable}, if whenever $g$ and $g\circ f$ belong to $\mathcal{A}$, then $f\in \mathcal{A}$. 
\end{itemize}}\hfill
\parbox{1cm}{
\xymatrix{A \ar[r]^{f} \ar[d]_{m}& B \ar[d]^{n} \\ C \ar[r]_{g} & D}}

	\begin{definition}
	Let $\mathcal{A}\subseteq \mor(\X)$ be a class of arrows in a category $\X$ and consider the cube below on the right. 
	
\vspace{-.25cm}
\parbox{9.5cm}{We say that the bottom square is an \emph{$\mathcal{A}$-Van Kampen square} if
	\begin{enumerate}
		\item it is a pushout square;
		\item 	whenever the cube above has pullbacks as back and left faces and the vertical arrows belong to $\mathcal{A}$, then its top face is a pushout 
		if and only if the front and right faces are pullbacks.
	\end{enumerate}} \hfill
	\parbox{3cm}{
	\xymatrix@C=10pt@R=6pt{&A'\ar[dd]|\hole_(.65){a}\ar[rr]^{f'} \ar[dl]_{m'} && B' \ar[dd]^{b} \ar[dl]_{n'} \\ C'  \ar[dd]_{c}\ar[rr]^(.7){g'} & & D' \ar[dd]_(.3){d}\\&A\ar[rr]|\hole^(.65){f} \ar[dl]^{m} && B \ar[dl]^{n} \\C \ar[rr]_{g} & & D }
}

	Pushout squares that enjoy only the ``if'' half of item (2) above are called \emph{$\mathcal{A}$-stable}. A $\mor(\X)$-Van Kampen square is called  \emph{Van
		Kampen} and a $\mor(\X)$-stable square  \emph{stable}.
\end{definition}

We can now define $\mathcal{M}$-adhesive categories.

\begin{definition}[\cite{azzi2019essence,ehrig2012,ehrig2014adhesive,lack2005adhesive,heindel2009category}]
	Let $\X$ be a category and $\mathcal{M}$ a subclass of
	$\mon(\X)$  including  all isos, closed under composition, decomposition,  and stable under pullbacks and pushouts.  The category  $\X$ is said to be \emph{$\mathcal{M}$-adhesive} if
	\begin{enumerate}
		\item it has \emph{$\mathcal{M}$-pullbacks}, i.e.~pullbacks along arrows of $\mathcal{M}$;
		\item it has \emph{$\mathcal{M}$-pushouts}, i.e.~pushouts along arrows of $\mathcal{M}$;
		\item  $\mathcal{M}$-pushouts are $\mathcal{M}$-Van Kampen squares.
	\end{enumerate}
	A category $\X$ is said to be \emph{strictly $\mathcal{M}$-adhesive}
	if $\mathcal{M}$-pushouts are Van Kampen. We write $m\colon X \mto Y$ to denote that an arrow $m\colon X\to Y$ belongs to $\mathcal{M}$.
\end{definition}

\begin{remark}
	\label{rem:salva} Our notion of $\mathcal{M}$-adhesive category corresponds to what in \cite{ehrig2006fundamentals} is called \emph{weak adhesive HLR category}, while a strict $\mathcal{M}$-adhesive categories corresponds to \emph{adhesive HLR} ones. Finally, 	\emph{adhesivity} and \emph{quasiadhesivity} 
	\cite{lack2005adhesive,garner2012axioms} coincide with strict
	$\mon(\X) $-adhesivity and strict $\reg(\X)$-adhesivity,
	respectively. 
\end{remark}

$\mathcal{M}$-adhesivity is well-behaved with respect to  the construction of (co-)slice and functor categories \cite{mac2013categories}, as well with respect to subcategories, as shown by the following properties, taken from~\cite[Thm.~4.15]{ehrig2006fundamentals}, \cite[Prop.~3.5]{lack2005adhesive} and  \cite[Thm.~2.12]{CastelnovoGM24}.

\begin{proposition}
	\label{thm:slice-functors}
	Let $\X$ be an (strict) $\mathcal{M}$-adhesive category. Then the following hold
	\begin{enumerate}
		\item if $\Y$ is an (strict) $\mathcal{N}$-adhesive category, $L\colon \Y\to \A$ a functor preserving $\mathcal{N}$-pushouts and $R\colon \X\to \A$ one preserving $\mathcal{M}$-pullbacks, then $\comma{L}{R}$ is (strictly) $\comma{\mathcal{N}}{\mathcal{M}}$-adhesive, where
		\[\comma{\mathcal{N}}{\mathcal{M}}:=\{(h,k) \in \mor(\comma{L}{R}) \mid h\in \mathcal{N}, k\in \mathcal{M}\}\]
		\item for every object $X$
		the categories $\X/X$  and $X/\X$ are, respectively, (strictly) $\mathcal{M}/X$-adhesive and (strictly) $X/\mathcal{M}$-adhesive, where
		\[\mathcal{M}/X:=\{m\in \mor(\X/X) \mid m\in
		\mathcal{M}\} \hspace{25pt} X/\mathcal{M}:=\{m\in \mor(X/\X) \mid m\in \mathcal{M}\}\]
		\item for every small category $\Y$, the category $\X^\Y$ of
		functors $\Y\to \X$ is (strictly) $\mathcal{M}^{\Y}$-adhesive, where
		$\mathcal{M}^{\Y}:=\{\eta \in \mor(\X^\Y) \mid \eta_Y \in
		\mathcal{M} \text{ for every } Y\in \Y\}$;
		\item if $\Y$ is a full subcategory of $\X$ closed under pullbacks and $\mathcal{M}$-pushouts, then $\Y$ is (strictly) $\mathcal{N}$-adhesive for every class of arrows $\mathcal{N}$ of $\Y$ contained in $\mathcal{M}$ that is stable under pullbacks and pushouts, contains all isos, and is closed under composition and decomposition.
	\end{enumerate} 
\end{proposition}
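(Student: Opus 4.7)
The plan is to prove the four statements in turn, with (2) and (3) arising as specialisations of the general comma-category argument in (1), and (4) following from the fact that limits and colimits computed in $\X$ descend to any suitably closed full subcategory.

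For \textbf{(1)}, the key is that both pullbacks and pushouts in the comma category $\comma{L}{R}$ can be constructed componentwise under the hypotheses. Given a cospan in $\comma{L}{R}$, I would pull back separately in $\Y$ and in $\X$ to obtain candidate components $\tilde Y$ and $\tilde X$; the required structural arrow $L\tilde Y \to R\tilde X$ then exists by the universal property of $R\tilde X$, which is a pullback in $\A$ precisely because $R$ preserves $\mathcal{M}$-pullbacks. Dually, $(\comma{\mathcal{N}}{\mathcal{M}})$-pushouts are built componentwise using that $L$ preserves $\mathcal{N}$-pushouts. Once both limits and colimits are pointwise, a cube in $\comma{L}{R}$ decomposes into a pair of cubes in $\Y$ and in $\X$, and a face of the original cube is a pullback (resp.\ pushout) exactly when each of its two projections is. The Van Kampen condition thus reduces to the Van Kampen conditions in $\Y$ and $\X$, which hold by assumption. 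The closure properties of $\comma{\mathcal{N}}{\mathcal{M}}$ (isos, composition, decomposition, stability under pullback and pushout) inherit directly from the corresponding properties of $\mathcal{N}$ and $\mathcal{M}$.

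For \textbf{(2)}, I would identify $\X/X$ with the comma category $\comma{\id_{\X}}{K_X}$, where $K_X\colon \terminal \to \X$ picks out $X$; since $\terminal$ is trivially adhesive, $\id_{\X}$ preserves every limit and colimit, and $K_X$ vacuously preserves pullbacks, so (1) applies and delivers $\mathcal{M}/X$-adhesivity. The coslice case is dual, viewing $X/\X$ as $\comma{K_X}{\id_{\X}}$. For \textbf{(3)}, both pullbacks and pushouts in $\X^{\Y}$ are computed pointwise and a square of natural transformations is a pullback (resp.\ pushout) iff each of its components is; hence the Van Kampen condition for $\mathcal{M}^{\Y}$-pushouts is checked pointwise, reducing the statement to $\mathcal{M}$-adhesivity of $\X$ at each object of $\Y$. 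Closure properties of $\mathcal{M}^{\Y}$ follow immediately from those of $\mathcal{M}$.

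Part \textbf{(4)} proceeds by noting that fullness and closure under pullbacks and $\mathcal{M}$-pushouts guarantee that these constructions performed in $\X$ land in $\Y$; consequently the inclusion preserves and reflects them. An $\mathcal{N}$-pushout in $\Y$ is in particular an $\mathcal{M}$-pushout in $\X$ (since $\mathcal{N}\subseteq \mathcal{M}$), hence $\mathcal{M}$-Van Kampen there. Checking the $\mathcal{N}$-Van Kampen condition in $\Y$ then amounts to inheriting the $\mathcal{M}$-Van Kampen property of $\X$: all the pullback and pushout faces arising in the defining cube stay within $\Y$ by the closure hypothesis, so the equivalence ``top pushout iff front and right faces pullbacks'' transfers verbatim. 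The main obstacle in the whole proof is the Van Kampen verification in (1): one must carefully track how the comma datum $\alpha\colon LY \to RX$ propagates through the cube and verify that the componentwise decomposition of each face is compatible with the structural arrows at all eight vertices simultaneously.
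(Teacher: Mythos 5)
Your overall strategy—componentwise construction of pullbacks and pushouts in $\comma{L}{R}$ from the preservation hypotheses, reduction of the Van Kampen condition to the two components, and then deriving (2) and (3) as special cases and (4) by transporting (co)limits along the full inclusion—is exactly the strategy of the sources this paper cites for the proposition (the paper itself gives no proof; its appendix results \Cref{colim,lim} are precisely the joint-creation lemmas this strategy runs on). Parts (2)--(4) of your sketch are fine in outline, and your identification $\X/X\cong\comma{\id{\X}}{\delta_X}$ matches \Cref{prop:commaapp}.

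The gap is the pivotal sentence ``a face of the original cube is a pullback (resp.\ pushout) exactly when each of its two projections is''. As a statement about arbitrary squares in $\comma{L}{R}$ this is false: a pullback square in a comma category need not have pullback components, since the componentwise pullback may carry no structural arrow when $R$ does not preserve it (already visible for $\Y=\terminal$, i.e.\ for objects of the form $L(*)\to R(X)$). What rescues the reduction, and what you must actually verify, is that every face of the Van Kampen cube you need is a pullback or a pushout along an arrow of $\comma{\mathcal{N}}{\mathcal{M}}$: back, left, front and right are pullbacks along the vertical arrows, which lie in the class; the bottom is a pushout along $m$; and the top is a pushout along $m'$, which lies in the class because the left face is a componentwise pullback and $\mathcal{N},\mathcal{M}$ are pullback-stable. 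For such squares, \Cref{colim,lim} together with uniqueness of (co)limits (and the fact that isomorphisms in $\comma{L}{R}$ are exactly the componentwise ones) give the equivalence ``(co)limit in $\comma{L}{R}$ iff componentwise (co)limit'', and then your argument closes. Note that this justification uses that the vertical arrows belong to the class, i.e.\ it establishes the $\comma{\mathcal{N}}{\mathcal{M}}$-Van Kampen property; for the strict variant the vertical arrows are arbitrary, the back/left/front/right faces are no longer pullbacks along class arrows, and the stated hypotheses alone do not yield their componentwise decomposition—there one needs $R$ to preserve the relevant pullbacks (as indeed happens in every application in this paper, where $R$ preserves all pullbacks). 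The same caveat applies, mutatis mutandis, to the ``only if'' half of your pointwise claim in (3) and to ``preserves and reflects'' in (4), both of which should be justified by the same uniqueness argument along class arrows rather than asserted outright.
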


We briefly list some examples of $\mathcal{M}$-adhesive categories.

\begin{example}
	\label{ex:adhesive}
	$\cat{Set}$ is adhesive, and, more generally, every topos is
	adhesive~\cite{lack2006toposes}. By the closure properties above, every presheaf $[\cat{X},\cat{Set}]$ is adhesive, thus the category
	$\cat{Graph} = [ E \rightrightarrows V, \cat{Set}]$ is adhesive
	where $E \rightrightarrows {V}$ is the two objects category with two
	morphisms $s,t \colon{E} \to {V}$. Similarly, various
	categories of hypergraphs can be shown to be adhesive, such as term
	graphs and hierarchical graphs~\cite{CastelnovoGM24}. Note that the category $\cat{sGraphs}$ of simple graphs, 
	i.e.~graphs without parallel edges, is
	$\reg{(\cat{sGraphs})}$-adhesive~\cite{BehrHK23} but not
	quasiadhesive.
\end{example}

We can state some useful properties of $\mathcal{M}$-adhesive category
(see, for instance, \cite[Thm.~4.26]{ehrig2006fundamentals} or \cite[Fact 2.6]{azzi2019essence}). A proof is provided in \Cref{regmono-proof}.
\begin{restatable}{proposition}{regmono}\label{prop:regmono}
	Let $\X$ be an $\mathcal{M}$-adhesive category. Then the following hold
	\begin{enumerate}
		\item every $\mathcal{M}$-pushout square is also a pullback;
		\item every arrow in $\mathcal{M}$ is a regular mono.
	\end{enumerate}
\end{restatable}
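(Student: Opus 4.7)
For (1), the plan is to exhibit a Van Kampen cube whose front face coincides with the given $\mathcal{M}$-pushout. Suppose the bottom face is an $\mathcal{M}$-pushout with $m\colon A\mto C$ in $\mathcal{M}$, cofactor $f\colon A\to B$, and induced maps $g\colon C\to D$ and $n\colon B\to D$; note $n\in\mathcal{M}$ since $\mathcal{M}$ is stable under pushouts. Take the top face to be the trivial pushout
\[
\xymatrix@C=1.1cm@R=.6cm{A\ar[r]^{f}\ar[d]_{\id_A} & B\ar[d]^{\id_B} \\ A\ar[r]_{f} & B}
\]
with $A'=A$, $B'=B$, $C'=A$, $D'=B$, $f'=g'=f$, $m'=\id_A$, $n'=\id_B$, and vertical arrows $a=\id_A$, $b=\id_B$, $c=m$, $d=n$. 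All four vertical maps lie in $\mathcal{M}$ (identities are isos and hence in $\mathcal{M}$). The back face is a pullback since two of its parallel arrows are identities, and the left face
\[
\xymatrix@C=1cm@R=.6cm{A\ar[r]^{\id_A}\ar[d]_{\id_A} & A\ar[d]^{m} \\ A\ar[r]_{m} & C}
\]
is a pullback precisely because $m$ is a mono (it is in $\mathcal{M}\subseteq\mon(\X)$). The top face is visibly a pushout. Applying the $\mathcal{M}$-Van Kampen property of the bottom face forces the front face to be a pullback; but the front face is exactly the bottom $\mathcal{M}$-pushout square rearranged, so the pushout is a pullback.

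For (2), I use (1). Given $m\colon A\mto C$ in $\mathcal{M}$, form its cokernel pair by the $\mathcal{M}$-pushout of $m$ with itself, yielding $n_1,n_2\colon C\to P$ with $n_1\circ m=n_2\circ m$, both in $\mathcal{M}$ by stability under pushouts. By item (1), this square is a pullback, with projections $m,m$. To check that $m$ is the equalizer of $n_1,n_2$, suppose $x\colon X\to C$ satisfies $n_1\circ x=n_2\circ x$. Then $(x,x)\colon X\to C\times C$ matches the pullback cone, so there is a unique $y\colon X\to A$ with $m\circ y=x$; uniqueness of such a factorisation is guaranteed because $m$ is mono. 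Hence $m$ is a regular mono.

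\textbf{Main obstacle.} The only genuinely delicate point is choosing the correct cube in (1): the vertical maps must all lie in $\mathcal{M}$ (which is why we must invoke stability of $\mathcal{M}$ under pushouts to place $n$ in $\mathcal{M}$) and the back and left faces must be pullbacks, forcing the use of the identity columns on the right-hand vertical edges and of the mono property of $m$ on the left. Once this cube is set up, the two conclusions follow by direct application of Van Kampen and a routine pullback computation.
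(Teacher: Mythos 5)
Your proposal is correct and follows essentially the same route as the paper: for (1) the cube with the trivial pushout on top, identity verticals on one side and $m,n$ on the other, pullback back/left faces, and the Van Kampen property forcing the front face (the original square) to be a pullback; for (2) the cokernel pair of $m$ with itself, which by (1) is a pullback, exhibiting $m$ as the equalizer of the two pushout injections. One cosmetic nit: the reference to $(x,x)\colon X\to C\times C$ presupposes a product that need not exist — the universal property of the pullback, which you invoke anyway, suffices on its own.
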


\subsection{Some properties of kernel pairs and regular epimorphisms}

In this section we recall the definition and some properties of \emph{kernel pairs}.

\noindent 
\parbox{11cm}{
\begin{definition}
    A \emph{kernel pair} for an arrow $f\colon A \to B$ is an object $K_f$ together with two arrows $\pi^1_f, \pi^2_f\colon K_f \rightrightarrows A$, denoted as $(K_f, \pi^1_f, \pi^2_f)$, such that the square aside is a pullback.
\end{definition}}\hfill 
\parbox{2cm}{\xymatrix@R=15pt{K_f \ar[r]_{\pi^2_f} \ar[d]_{\pi^1_f}& A \ar[d]^f \\ A \ar[r]^{f}& B}}

\smallskip
\begin{remark}\label{prop:pairng_of_kernel_pairs_mono}
If $(K_f, \pi^1_f, \pi^2_f)$ is a a kernel pair for $f\colon X \to Y$ and a product of $X$ with itself exists, then the canonical arrow $\langle \pi^1_f, \pi^2_f\rangle \colon K_f \to X \times X$ is a mono.
\end{remark}

\begin{remark}\label{prop:kermono}
An arrow $m\colon M\to X$ is a mono if and only if it admits $(M, \id{M}, \id{M})$ as a kernel pair.
\end{remark}

Together with \Cref{lem:pb1}, the previous remarks allow us to prove the following result.

\begin{proposition}\label{cor:kermono}
	Let $f\colon X\to Y$ be an arrow and $m\colon Y\to Z$ a mono. If
	$(K_f, \pi_f^1, \pi_f^2)$ is a kernel pair for $f\colon X\to Y$, then it is also a kernel pair for $m\circ f$.
\end{proposition}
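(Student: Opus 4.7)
The plan is to verify directly that $(K_f, \pi_f^1, \pi_f^2)$ satisfies the universal property of a pullback for the cospan $X \xrightarrow{m\circ f} Z \xleftarrow{m\circ f} X$. Commutativity of the candidate square is inherited from the kernel pair of $f$ by post-composition with $m$: one has $(m\circ f)\circ \pi_f^1 = m\circ (f\circ \pi_f^1) = m\circ (f\circ \pi_f^2) = (m\circ f)\circ \pi_f^2$, since the original kernel pair square for $f$ commutes.

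For the universal property, I take an arbitrary cone, that is, arrows $g, h\colon A \to X$ satisfying $(m\circ f)\circ g = (m\circ f)\circ h$. The crucial step is exploiting the hypothesis that $m$ is a mono to cancel it on the left, yielding $f\circ g = f\circ h$. This is precisely the cone condition for the kernel pair of $f$, so its universal property provides a unique $u\colon A \to K_f$ with $\pi_f^1\circ u = g$ and $\pi_f^2 \circ u = h$. Since any candidate factorization through the new square must satisfy the same equations against the two projections, uniqueness is inherited verbatim from the kernel pair of $f$.

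Equivalently, one can frame the argument as a ``unique up to iso'' comparison: letting $(K, p_1, p_2)$ denote a kernel pair of $m\circ f$, the cone $(K_f, \pi_f^1, \pi_f^2)$ induces an arrow $K_f \to K$, while cancelling $m$ from $m\circ f\circ p_1 = m\circ f\circ p_2$ yields $f\circ p_1 = f\circ p_2$ and hence an arrow $K \to K_f$; uniqueness of factorizations through kernel pairs then forces these to be mutually inverse.

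The main obstacle is essentially nil: the argument rests entirely on the left-cancellability of the mono $m$, reducing the statement to the already known universal property of the kernel pair of $f$. I expect no categorical subtleties beyond this, and no further appeal to the ambient structure (e.g.\ $\mathcal{M}$-adhesivity) is needed.
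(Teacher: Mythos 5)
Your proof is correct. The only point worth comparing is the route: the paper does not spell the argument out but indicates that it follows from \Cref{lem:pb1} (pasting of pullbacks) together with the preceding remarks, i.e.\ one first observes that, because $m$ is a mono, the square with $\id_X$ on top, $f$ on the left, $m\circ f$ on the right and $m$ on the bottom is a pullback, and then pastes it onto the kernel-pair square of $f$ to conclude that the outer rectangle exhibits $(K_f,\pi^1_f,\pi^2_f)$ as a pullback of $m\circ f$ along itself. You instead verify the universal property of that outer square directly, cancelling the mono $m$ to reduce any cone for $m\circ f$ to a cone for $f$; this is the same cancellation that makes the auxiliary square a pullback in the paper's route, so the mathematical content coincides, but your version is self-contained and avoids invoking the pasting lemma, while the paper's version is more modular and reuses lemmas already stated. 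One small caveat: your alternative ``unique up to iso'' reformulation tacitly assumes that a kernel pair of $m\circ f$ exists, which the statement does not grant; your primary argument, which shows the given square is itself a pullback, is the right formulation and needs no such assumption.
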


Regular epis are particular well-behaved with respect to their kernel pairs.

\begin{restatable}{proposition}{epic}\label{prop:reg_epi_coeq}
    Let $e\colon X \eto Y$ be a regular epi in a category $\X$ with a kernel pair $(K_e, \pi^1_e, \pi^2_e)$. Then, $e$ is the coequalizer of $\pi^1_e$ and $\pi^2_e$.
\end{restatable}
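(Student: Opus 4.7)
The plan is the standard textbook argument: any regular epi coequalizes its own kernel pair. I would unpack the definition of ``regular epi'', transfer the coequalizing pair through the universal property of the kernel pair, and then reap the factorisation from the original coequalizer.

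More precisely, since $e\colon X\eto Y$ is a regular epi, by definition there exist arrows $f,g\colon Z\rightrightarrows X$ such that $e$ is the coequalizer of $f$ and $g$. In particular $e\circ f=e\circ g$, so the universal property of the kernel pair $(K_e,\pi^1_e,\pi^2_e)$ yields a unique $u\colon Z\to K_e$ with $\pi^1_e\circ u=f$ and $\pi^2_e\circ u=g$. Moreover, the pullback square defining $(K_e,\pi^1_e,\pi^2_e)$ immediately gives $e\circ \pi^1_e=e\circ \pi^2_e$, so $e$ is a candidate cocone on the parallel pair $\pi^1_e,\pi^2_e$.

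For the universal property, take any $h\colon X\to W$ with $h\circ \pi^1_e=h\circ \pi^2_e$. Composing on the right with $u$ yields
\[ h\circ f=h\circ \pi^1_e\circ u=h\circ \pi^2_e\circ u=h\circ g, \]
so $h$ coequalizes $f$ and $g$ as well. Since $e$ is the coequalizer of $f$ and $g$, there is a unique $k\colon Y\to W$ with $k\circ e=h$, which provides the required factorisation. Uniqueness of $k$ follows at once from the fact that $e$, being a regular epi, is in particular an epi: if $k'\circ e=h=k\circ e$, then $k'=k$.

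The argument is essentially routine once the right diagrammatic move (factoring the original coequalizing pair through $K_e$) is spotted, and no obstacle needs to be navigated; in particular, no hypothesis beyond the existence of a kernel pair for $e$ is required, so the statement lives entirely at the level of universal properties.
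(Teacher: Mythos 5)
Your proof is correct and follows essentially the same route as the paper's: express $e$ as the coequalizer of some pair $f,g$, factor that pair through the kernel pair, and transfer the universal property. No issues.
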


\commentato{ 
\begin{restatable}{corollary}{natepi}\label{cor:reg_epi_components_reg_epi_nat_trans}
    Let $\X$ be a category with pullbacks and $\phi\colon F \to G$ a natural transformation between functors $F, G: \D \rightrightarrows \X$. If $\phi_d$ is a regular epi for every $d$, then $\phi$ is a regular epi.
\end{restatable}

From the previous result we deduce that the class of regular epis is closed under colimits.

\begin{restatable}{lemma}{epicol}\label{lemma:nat_trans_reg_epi_canonical_arrow_reg_epi}
    Let $F,G\colon \D\rightrightarrows \X$ be two diagrams, and suppose that $\X$ has all colimits of shape $\D$. Let $(X, \{x_d\}_{d \in \D})$ and $(Y, \{y_d\}_{d\in D})$ be the colimits of $F$ and $G$, respectively.  If $\phi\colon  F \to G$ is a natural transformation whose components are regular epis, then the arrow induced by $\phi$ from $X$ to $Y$ is a regular epi.
\end{restatable}
}
We conclude this section exploring some properties of kernel pairs in an $\mathcal{M}$-adhesive category. 
The results below are simple, yet they appear to be original, and we give their proofs in \Cref{kpp-proof}.

\begin{restatable}{lemma}{kpp}\label{lemma:kern_pairs_pres_pullbacks}
	Let $f\colon X \to Y$ and $g\colon Z \to W$ be arrows admitting kernel pairs and suppose that the solid part of the four squares below is given. 
	If the leftmost square is commutative, then there is a unique arrow $k_h\colon K_f \to K_g$ making the other three commutative.
	\[\xymatrix@R=15pt{X \ar[r]^{h}\ar[d]_{f}& Z \ar[d]^{g} & K_f \ar@{.>}[r]^{k_h} \ar[d]_{\pi^1_f}& K_g \ar[d]^{\pi^1_g} & K_f  \ar@{.>}[r]^{k_h} \ar[d]_{\pi^2_f}& K_g \ar[d]^{\pi^2_g}& K_f  \ar@{.>}[r]^{k_h} \ar[d]_{(\pi^1_f,\pi^2_f)} & K_g \ar[d]^{(\pi^1_g, \pi^2_g)}\\ Y \ar[r]_t& W & X \ar[r]_{h} & Z & X\ar[r]_{h} &Z & X\times Z \ar[r]_{h\times h}& X\times Z}\]
	
		Moreover, the following hold
		\begin{enumerate}
			\item if $h$ is a mono then $k_h$ is a mono;
			\item if the leftmost square is a pullback then the central two are pullbacks;
			\item if $h$ is mono and the leftmost square is a pullback then the rightmost is a pullback.
		\end{enumerate} 
\end{restatable}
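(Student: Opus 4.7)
The plan is to construct $k_h$ first from the universal property of the kernel pair $(K_g, \pi^1_g, \pi^2_g)$. Commutativity of the leftmost square gives $g \circ h \circ \pi^1_f = t \circ f \circ \pi^1_f = t \circ f \circ \pi^2_f = g \circ h \circ \pi^2_f$, so the pair $(h \circ \pi^1_f, h \circ \pi^2_f)$ factors through $K_g$ via a unique $k_h$ making both central squares commute; commutativity of the rightmost square then follows from uniqueness of the pairing into $Z \times Z$. For part (1), if $h$ is a mono and $k_h \circ u = k_h \circ v$, postcomposing with $\pi^i_g$ yields $h \circ \pi^i_f \circ u = h \circ \pi^i_f \circ v$, hence $\pi^i_f \circ u = \pi^i_f \circ v$ by monicity of $h$; joint monicity of $(\pi^1_f, \pi^2_f)$, immediate from the kernel-pair pullback, then forces $u = v$.

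For part (2), I would use the pullback pasting lemma in two directions. The outer rectangle with sides $\pi^1_f$ and $g$, top arrow $h \circ \pi^2_f$, and bottom arrow $t \circ f$ is a pullback, since it decomposes as the kernel-pair square of $f$ (on the left) pasted with the leftmost square (on the right), both of which are pullbacks. The same rectangle also decomposes as the first central square pasted on its right with the kernel-pair square of $g$ (using $\pi^2_g$ for the top row); since that latter piece is a pullback, the converse direction of pasting forces the central square itself to be a pullback. The second central square is handled symmetrically by swapping the roles of $\pi^1$ and $\pi^2$.

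For part (3), I would reduce to (2). Given a cone with $a \colon A \to X \times X$ and $b \colon A \to K_g$ satisfying $(h \times h) \circ a = \langle \pi^1_g, \pi^2_g \rangle \circ b$, letting $a_i$ be the two components of $a$ gives $h \circ a_1 = \pi^1_g \circ b$ and $h \circ a_2 = \pi^2_g \circ b$. Since the first central square is a pullback by (2), there is a unique $c \colon A \to K_f$ with $\pi^1_f \circ c = a_1$ and $k_h \circ c = b$. The identity $h \circ \pi^2_f \circ c = \pi^2_g \circ k_h \circ c = \pi^2_g \circ b = h \circ a_2$, together with monicity of $h$, gives $\pi^2_f \circ c = a_2$, so $\langle \pi^1_f, \pi^2_f \rangle \circ c = a$; uniqueness of $c$ is inherited from the pullback property in (2).

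The main obstacle is part (2), where choosing the right arrangement of pasting steps is not immediately obvious; once it is in place, parts (1) and (3) follow routinely from universal properties. Notably none of the argument actually uses the $\mathcal{M}$-adhesivity assumption, so the result is really a general fact about kernel pairs in any category with the relevant pullbacks.
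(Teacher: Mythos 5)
Your proposal is correct and follows essentially the same route as the paper: the same universal-property construction of $k_h$, the same two pasting decompositions of the outer rectangle for part (2) via the composition/decomposition lemma for pullbacks, and a monicity argument for parts (1) and (3). The only cosmetic difference is in part (3), where the paper builds two arrows from both central squares and identifies them using that $k_h$ is mono, while you build one arrow from the first central square and recover the second component using that $h$ is mono; the two arguments are interchangeable.
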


The previous result allows us to deduce the following lemma in an $\mathcal{M}$-adhesive context.

\noindent
\parbox{7.5cm}{
\begin{restatable}{proposition}{mpo}\label{lem:mpo}
	Let $\X$ be a strict $\mathcal{M}$-adhesive category with all pullbacks, and suppose that in the cube aside the top face is an $\mathcal{M}$-pushout and all the vertical faces are pullbacks. Then the right square is a pushout.
\end{restatable}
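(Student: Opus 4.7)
The plan is to apply the strict $\mathcal{M}$-Van Kampen property to an auxiliary cube built from a pushout completion of the right face. First, since the top face is an $\mathcal{M}$-pushout with $m'\in\mathcal{M}$, stability of $\mathcal{M}$ under pushouts yields $n'\in\mathcal{M}$. As $\X$ has $\mathcal{M}$-pushouts, the pushout $P$ of $b\colon B'\to B$ and $n'\colon B'\to D'$ exists in $\X$, with injections $i\colon B\to P$ (again in $\mathcal{M}$ by stability) and $j\colon D'\to P$. The commutativity of the right face (which is a pullback, hence commutes) together with the universal property of $P$ induces a unique comparison $u\colon P\to D$ satisfying $u\circ i = n$ and $u\circ j = d$. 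Showing the right square is a pushout is thus equivalent to showing that $u$ is an isomorphism.

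To establish this, I would construct an auxiliary cube whose bottom face is the newly formed $\mathcal{M}$-pushout $(B',D',B,P)$ and whose top face is the original top $\mathcal{M}$-pushout $(A',B',C',D')$, connected by carefully chosen vertical arrows that coincide with the original ones on the $B'$, $D'$, $B$ components and involve $u$ on the $P$-$D$ vertical. Using the pullback hypotheses on the original vertical faces together with the pasting lemma for pullbacks (and the composition--decomposition properties of pullbacks), I would verify that the back and left faces of the auxiliary cube are pullbacks. Applying the strict $\mathcal{M}$-Van Kampen condition to the bottom $\mathcal{M}$-pushout of this cube then yields that its top face is a pushout if and only if the front and right vertical faces are pullbacks; since the top face is indeed a pushout by hypothesis, those vertical faces must be pullbacks.

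A final diagram chase, combined with Proposition~\ref{prop:regmono} (stating that $\mathcal{M}$-pushouts are simultaneously pullbacks), then forces $u$ to be both a regular mono and a split epimorphism, hence an isomorphism; this identifies $D$ with $P$ and shows the right square is indeed a pushout.

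The main obstacle is the correct assembly of the auxiliary cube: selecting vertical arrows that make the back and left faces genuine pullbacks requires a careful pasting-lemma argument on the original pullback data, and translating the Van Kampen output into the isomorphism statement for $u$ needs some additional bookkeeping via the universal properties of the two $\mathcal{M}$-pushouts involved.
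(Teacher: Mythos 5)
Your proposal addresses the wrong square, and this is not a cosmetic slip. In the statement, ``the right square'' is not the right-hand face of the cube (the square with edges $n'\colon B'\to D'$, $n\colon B\to D$ and verticals $b$, $d$): it is the separate square drawn to the right of the cube, whose vertices are the kernel pairs $K_a,K_b,K_c,K_d$ of the four vertical arrows and whose edges are the induced maps $k_{f'},k_{m'},k_{g'},k_{n'}$ (compare the $\Set$ analogue \Cref{prop:kerset}, which speaks explicitly of ``the right square, made by the kernel pairs of the vertical arrows''). The claim you actually set out to prove --- that the right face is a pushout, i.e.\ that your comparison $u\colon P\to D$ is an isomorphism --- is false under the stated hypotheses. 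Take $\X=\Set$ with $\mathcal{M}$ the class of monomorphisms: let the top face be the coproduct square with $A'=\emptyset$, $B'=\{x\}$, $C'=\{y\}$, $D'=\{x,y\}$, let $a,b,c$ be identities and let $d$ be the inclusion $\{x,y\}\hookrightarrow\{x,y,z\}=D$, with $f,m,n,g$ the evident inclusions. The top face is an $\mathcal{M}$-pushout and all four vertical faces are pullbacks, yet the pushout of $b$ and $n'$ is $\{x,y\}$, so $u$ is the non-surjective inclusion into $\{x,y,z\}$ and the right face is not a pushout; the kernel-pair square, by contrast, is a pushout, as the proposition asserts. So no amount of bookkeeping in your auxiliary cube can close the argument: the Van Kampen condition can certify that faces are pullbacks or that a top face is a pushout, but in your setup it gives no handle on the surjectivity of $u$, and indeed the step where $u$ becomes ``a regular mono and a split epimorphism'' has no justification.

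The intended argument is the one the figure suggests. Since $\X$ has all pullbacks, the kernel pairs $K_a,K_b,K_c,K_d$ exist; \Cref{lemma:kern_pairs_pres_pullbacks} produces the arrows $k_{f'},k_{m'},k_{g'},k_{n'}$ and, because each vertical face of the given cube is a pullback, its second point shows that the four squares relating $K_a,K_b,K_c,K_d$ through the first projections $\pi^1$ to $A',B',C',D'$ are pullbacks. One thus obtains a commutative cube whose bottom face is the original top face --- an $\mathcal{M}$-pushout (which, by \Cref{prop:regmono}, is also a pullback) --- and whose four vertical faces are pullbacks. Strict $\mathcal{M}$-adhesivity, i.e.\ the ``if'' half of the Van Kampen condition applied to this kernel-pair cube, immediately yields that its top face, the kernel-pair square, is a pushout. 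Your instinct to invoke the Van Kampen property is the right one, but it must be applied to this cube, not to a pushout completion of the right face.
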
}\hfill 
\parbox{6cm}{\xymatrix@C=10pt@R=6pt{&A'\ar[dd]|\hole_(.65){a}\ar[rr]^{f'} \ar@{>->}[dl]_(.55){m'} && B' \ar[dd]^{b} \ar@{>->}[dl]_(.55){n'} & K_a\ar[rr]^{k_{f'}} \ar[dd]_{k_{m'}}&& K_b \ar[dd]^{k_{n'}} \\ C'  \ar[dd]_{c}\ar[rr]^(.7){g'} & & D' \ar[dd]_(.3){d}\\&A\ar[rr]|\hole^(.65){f} \ar[dl]^{m} && B \ar[dl]^{n}  & K_{c} \ar[rr]_{k_{g'}}&& K_d\\C \ar[rr]_{g} & & D }}

%

Focusing on $\Set$, we can prove a sharper result (see \Cref{proof:kerset} for the proof).

\noindent
\parbox{7.5cm}{\begin{restatable}{lemma}{mps}\label{prop:kerset}
	Suppose that in $\Set$ the commuting cube in the diagram on the left is given, whose top face is a pushout, the left and bottom faces are pullbacks,  and $n\colon B\mto D$ is an injection. 
	Then the following hold
	\begin{enumerate}
		\item the right face of the cube is a pullback;
		\item the right square, made by the kernel pairs of the vertical arrows, is a pushout.
	\end{enumerate}
\end{restatable}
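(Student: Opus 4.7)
The plan for Part 1 is a direct set-theoretic argument exploiting the pushout structure of $D'$. A first observation is that $m\colon A\to C$ is a monomorphism, being the pullback of the mono $n$ along $g$ in the bottom pullback; consequently, $m'\colon A'\to C'$ is also mono, as the pullback of $m$ along $c$ in the left pullback. Since $\Set$ is adhesive and the top face is a pushout along the mono $m'$, the arrow $n'$ is likewise mono, and every element of $D'$ admits a unique representation as either $n'(b')$ for some $b'\in B'$ or as $g'(c')$ for some $c'\in C'\setminus m'(A')$, these two cases being disjoint.

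Given a pair $(b_0, d')\in B\times D'$ satisfying $n(b_0)=d(d')$, I would case-split on this representation. If $d'=n'(b')$, then commutativity of the right face and injectivity of $n$ yield $b(b')=b_0$, so $b'$ is the required preimage, with uniqueness of $b'$ following from $n'$ being a mono. If instead $d'=g'(c')$ with $c'\notin m'(A')$, then $n(b_0)=d(d')=g(c(c'))$; the bottom pullback then produces $a_0\in A$ with $f(a_0)=b_0$ and $m(a_0)=c(c')$, and the left pullback yields $a'\in A'$ with $m'(a')=c'$, contradicting the assumption $c'\notin m'(A')$.

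For Part 2, the plan is to invoke \Cref{lem:mpo}. By Part 1 together with the given left pullback, two of the four vertical faces of the cube are already known to be pullbacks. The back and front faces can be verified by entirely analogous case analyses on the pushout decomposition of $D'$: given a compatible pair living in the corresponding fibered product, one decomposes the relevant element of $D'$ as above, uses the monos $m'$ and $n'$ to uniquely identify preimages, and appeals to the bottom pullback to eliminate the mixed case. Once all four vertical faces of the cube are pullbacks, the hypothesis that $\Set$ is strictly $\mathcal{M}$-adhesive with all pullbacks, together with the fact that the top is an $\mathcal{M}$-pushout (being a pushout along the mono $m'$), allows us to apply \Cref{lem:mpo}, which directly yields that the kernel pair square is a pushout.

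The main obstacle is carrying out the case analysis for the back and front faces with full rigour, which requires carefully tracing preimages through the explicit pushout decomposition of $D'$ and combining the monicity of $m'$ and $n'$ with the given left and bottom pullbacks. Once this bookkeeping is properly organised, Part 1 and Part 2 both reduce to reasonably routine applications of the setup already developed in the excerpt.
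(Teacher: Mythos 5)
Your treatment of Part 1 is fine: it is an element-level rendering of the paper's own argument (decompose $D'$ as the disjoint union of $n'(B')$ and $g'(C'\setminus m'(A'))$, and use the bottom pullback to kill the mixed case), so that half stands.

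Part 2, however, contains a genuine gap: the back and front faces of the cube are \emph{not} pullbacks in general under the stated hypotheses, so the plan to establish them ``by entirely analogous case analyses'' and then invoke \Cref{lem:mpo} cannot work. Concretely, take $A'=C'=\emptyset$, $B'=\{b_1',b_2'\}$, and let $A,B,C,D$ all be singletons with $f,m,n,g$ the evident bijections, $a,c$ the empty maps and $b$ the constant map. The top face is the pushout of $C'\leftarrow A'\to B'$, so $D'=B'$ with $n'=\id{B'}$; the left face is a pullback (both legs empty), the bottom face is a pullback, and $n$ is injective, so all hypotheses of \Cref{prop:kerset} hold. Yet the back face is not a pullback (the pullback of $f$ along $b$ has two elements while $A'=\emptyset$), and likewise the front face fails; note that the kernel-pair square is still a pushout here ($K_a=K_c=\emptyset$ and $k_{n'}$ is an isomorphism), consistent with the statement. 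This is exactly why the lemma is presented as a result \emph{sharper} than \Cref{lem:mpo}: it assumes only two faces to be pullbacks, so it cannot be reduced to that lemma. The paper instead proves Part 2 directly in $\Set$: using \Cref{lem:po_set}, \Cref{lem:nodim}, extensivity and the third point of \Cref{lemma:kern_pairs_pres_pullbacks}, it decomposes $K_d$ (and then $K_c$) into coproduct summands indexed by the splitting $D'\cong B'+(C'\setminus m'(A'))$, shows the mixed summands are empty via the bottom pullback and \Cref{prop:isempty}, identifies the remaining summands with $K_b$ and (the image of) $K_c$, and verifies the pushout universal property of the kernel-pair square by hand. To repair your proof you would need to carry out an argument of this kind rather than appeal to \Cref{lem:mpo}.
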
}\hfill 
\parbox{6cm}{\xymatrix@C=10pt@R=5pt{&A'\ar[dd]|\hole_(.65){a}\ar[rr]^{f'} \ar@{>->}[dl]_(.55){m'} && B' \ar[dd]^{b} \ar@{>->}[dl]_(.55){n'} & K_a\ar[rr]^{k_{f'}} \ar[dd]_{k_{m'}}&& K_b \ar[dd]^{k_{n'}} \\ C'  \ar[dd]_{c}\ar[rr]^(.7){g'} & & D' \ar[dd]_(.3){d}\\&A\ar[rr]|\hole^(.65){f} \ar@{>->}[dl]^{m} && B \ar@{>->}[dl]^{n}  & K_{c} \ar[rr]_{k_{g'}}&& K_d\\C \ar@{>->}[rr]_{g} & & D }}

\section{Hypergraphical structures}\label{sec:hyper}

In this section we briefly recall the notion of \emph{hypergraph}. In order to do so, a pivotal role is played by the \emph{Kleene star monad} $(-)^\star\colon \Set\to \Set$, also known as 
\emph{list monad},
sending a set to the free monoid on it \cite{sakarovitch2009elements,Wadler95}.
We recall some of its proprieties.

\begin{proposition}\label{prop:fact}
	Let $X$ be a set and $n\in \mathbb{N}$. Then the following facts hold
	\begin{enumerate}
		\item there are arrows $v_{n}\colon X^n\to X^\star $ such that $(X^\star, \{v_{n}\}_{n\in \mathbb{N}})$ is a coproduct;
		\item for every arrow $f\colon X\to Y$, $f^\star\colon X^\star \to Y^\star$ is the coproduct of the family $\{f^n\}_{n\in \mathbb{N}}$;
		\item $(-)^\star$ preserves all \emph{connected limits} \cite{carboni1995connected}, in particular it preserves pullbacks and equalizers.
	\end{enumerate}
\end{proposition}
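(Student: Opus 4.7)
The overall plan is to exploit the concrete description of $X^\star$ as the disjoint union of finite powers $X^n$ for $n \in \mathbb{N}$. Once this is fixed, items (1) and (2) will be essentially unpacking of definitions, while item (3) will reduce to the well-known fact that in $\Set$ coproducts commute with connected limits, combined with the trivial observation that every finite-power functor $(-)^n \colon \Set \to \Set$ preserves all limits.

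For item (1) the plan is to set $v_n \colon X^n \to X^\star$ as the function sending a tuple $(x_1, \dots, x_n)$ to the word $x_1 \cdots x_n$ of length $n$. Since every word in $X^\star$ has a unique length, the family $\{v_n\}_{n\in\mathbb{N}}$ realises $X^\star$ as the coproduct $\coprod_n X^n$. For item (2) one simply checks that, given $f\colon X \to Y$, the composite $f^\star \circ v_n$ sends $(x_1, \dots, x_n)$ to $(f(x_1), \dots, f(x_n))$, which is exactly $v_n \circ f^n$; hence, by the universal property just established, $f^\star$ is the unique coproduct arrow induced by $\{f^n\}_{n\in\mathbb{N}}$.

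The substantive point is item (3), and my plan here is to combine two ingredients. First, each functor $(-)^n \colon \Set \to \Set$, being the $n$-fold cartesian product, preserves all limits (it is right adjoint to the $n$-fold copower, and in any case finite products of limits are limits componentwise). Second, in $\Set$ coproducts commute with connected limits: if $J$ is a connected small category and $D_i \colon J \to \Set$ is an $I$-indexed family of diagrams, then the canonical arrow $\coprod_i \lim_j D_i(j) \to \lim_j \coprod_i D_i(j)$ is a bijection, as can be verified pointwise using connectedness of $J$ to assign a well-defined coproduct index to every cone. Putting the two facts together: given a connected diagram $D\colon J \to \Set$,
\[
(\lim_j D(j))^\star \;=\; \coprod_{n\in\mathbb{N}} (\lim_j D(j))^n \;\cong\; \coprod_{n\in\mathbb{N}} \lim_j D(j)^n \;\cong\; \lim_j \coprod_{n\in\mathbb{N}} D(j)^n \;=\; \lim_j D(j)^\star,
\]
and the comparison map is the canonical one, hence $(-)^\star$ preserves connected limits; specialising $J$ to the shapes of pullback and equalizer gives the final sentence.

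The only step that genuinely requires care is the coproduct/connected-limit interchange in $\Set$; this is standard (see e.g.\ the reference~\cite{carboni1995connected} already cited in the statement), and once invoked the rest is routine bookkeeping. No novel machinery is needed, and no appeal to $\mathcal{M}$-adhesivity is required at this stage.
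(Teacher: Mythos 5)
Your proposal is correct. Note that the paper itself gives no proof of this proposition: it is recalled as background on the list monad, with the references doing the work (in particular \cite{carboni1995connected} for the interchange of coproducts with connected limits), and your argument --- decompose $X^\star$ as $\coprod_n X^n$ via the length of a word, observe that each $(-)^n$ preserves all limits, and then commute the coproduct past the connected limit, using connectedness (which includes nonemptiness) to see that a compatible family must live in a single summand --- is exactly the standard proof those sources supply, so there is nothing to add beyond the routine check that the resulting bijection is the canonical comparison map.
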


\begin{remark}\label{rem:mono}
	Preservation of pullbacks implies that $(-)^\star$ sends monos to monos.
\end{remark}

\begin{remark}\label[remark]{rem:length}Notice that $1^\star$ can be canonically identified with $\mathbb{N}$, thus for every set $X$ the arrow $!_\X\colon X\to 1$ induces a \emph{length function} $\lgh_{X}\colon X^\star \to \mathbb{N}$, which sends a word to its length.
\end{remark}

\subsection{The category of hypergraphs}

We open this section with the definition of hypergraphs and we show how to label them with an algebraic signature.  

\noindent 
\parbox{10.8cm}{\begin{definition}An \emph{hypergraph} is a 4-uple $\mathcal{G}:=(E_\mathcal{G}, V_\mathcal{G}, s_\mathcal{G}, t_\mathcal{G})$ made by two sets $E_\mathcal{G}$ and $V_\mathcal{G}$, called respectively the sets of \emph{hyperedges} and \emph{nodes}, plus a pair of \emph{source} and \emph{target} arrows  $s_\mathcal{G}, t_\mathcal{G}\colon E_\mathcal{G}\rightrightarrows V_\mathcal{G}^\star$. 
		
\hspace{10pt} A \emph{hypergraph morphism} $(E_\mathcal{G}, V_\mathcal{G}, s_\mathcal{G}, t_\mathcal{G})\to (E_\mathcal{H}, V_\mathcal{H}, s_\mathcal{H}, t_\mathcal{H})$ is a pair $(h,k)$ of functions $h\colon E_\mathcal{G}\to E_\mathcal{H}$, $k\colon V_\mathcal{G}\to V_\mathcal{H}$ such that the diagrams on the right are commutative.

\hspace{10pt}	We define $\hyp$ to be the resulting category.
\end{definition}}\hfill\parbox{3cm}{\xymatrix@R=15pt{E_{\mathcal{G}} \ar[d]_{h} \ar[r]_{s_{\mathcal{G}}}& V^\star_{\mathcal{G}}  \ar[d]^{k^\star}\\E_{\mathcal{G}} \ar[r]^{s_{\mathcal{H}}} & V^\star_{\mathcal{H}}} \hspace{1pt}\\ \xymatrix@R=15pt{E_{\mathcal{G}} \ar[d]_{h} \ar[r]_{t_{\mathcal{G}}}& V^\star_{\mathcal{G}}  \ar[d]^{k^\star}\\E_{\mathcal{G}} \ar[r]^{t_{\mathcal{H}}} & V^\star_{\mathcal{H}}} }

Let $\pro^\star$ be the functor $\Set\to \Set$ sending $X$ to the product $X^\star\times X^\star$.  We can use it to present $\hyp$ as a comma category.

\begin{proposition}\label[proposition]{prop:com}
	$\hyp$ is isomorphic to $\comma{\id{\Set}}{\pro^\star}$
\end{proposition}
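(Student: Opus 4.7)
The plan is to exhibit the isomorphism of categories via the universal property of the binary product $Y^\star \times Y^\star = \pro^\star(Y)$. Recall that an object of $\comma{\id{\Set}}{\pro^\star}$ is a triple $(X,Y,f)$ with $f\colon X \to Y^\star \times Y^\star$, and a morphism $(X,Y,f) \to (X',Y',f')$ is a pair $(h,k)$ of functions such that $(k^\star \times k^\star) \circ f = f' \circ h$. The whole point is that such an arrow $f$ is uniquely determined by its two components $\pi_1 \circ f, \pi_2 \circ f \colon X \to Y^\star$, which are exactly the source and target maps of a hypergraph on $(X,Y)$.

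First, I would define a functor $F \colon \hyp \to \comma{\id{\Set}}{\pro^\star}$ by
\[F(E_\mathcal{G}, V_\mathcal{G}, s_\mathcal{G}, t_\mathcal{G}) := (E_\mathcal{G}, V_\mathcal{G}, \langle s_\mathcal{G}, t_\mathcal{G}\rangle)\]
and sending a hypergraph morphism $(h,k)$ to the same pair $(h,k)$; I would then define $G$ in the opposite direction by $G(X, Y, f) := (X, Y, \pi_1 \circ f, \pi_2 \circ f)$, again acting as the identity on the underlying pairs of functions.

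Next, I would check that $F$ and $G$ are well-defined. For $F$, the two commutative squares in the definition of a hypergraph morphism bundle (via pairing) into the single equation $(k^\star \times k^\star) \circ \langle s_\mathcal{G}, t_\mathcal{G}\rangle = \langle s_\mathcal{H}, t_\mathcal{H}\rangle \circ h$, which is precisely the commutativity required in the comma category, using that $\pro^\star(k) = k^\star \times k^\star$. For $G$, postcomposition of $(k^\star \times k^\star) \circ f = f' \circ h$ with each projection recovers the two separate squares. Functoriality on both sides is immediate because both constructions act as the identity on the pair of underlying functions.

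Finally, $G \circ F$ and $F \circ G$ are the identity functors: $G \circ F$ recovers $(s_\mathcal{G}, t_\mathcal{G})$ from $\langle s_\mathcal{G}, t_\mathcal{G}\rangle$ by projection, while $F \circ G$ reassembles $f$ as $\langle \pi_1 \circ f, \pi_2 \circ f\rangle = f$ by the universal property of the product.

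There is no substantive obstacle here; the statement is effectively a repackaging of the universal property of the binary product, combined with the definition of $\pro^\star$ as $(-)^\star \times (-)^\star$, which ensures that the action on morphisms matches on both sides of the correspondence.
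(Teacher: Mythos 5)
Your proof is correct and is exactly the argument the paper has in mind: the paper states the proposition without proof, treating it as the immediate repackaging of source and target into a single map $\langle s_\mathcal{G}, t_\mathcal{G}\rangle\colon E_\mathcal{G}\to V_\mathcal{G}^\star\times V_\mathcal{G}^\star$ via the universal property of the product, which is precisely what you spelled out.
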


Note that by hypothesis $(-)^\star$ preserves pullbacks, while $\pro$ is continuous by definition, 
hence by \Cref{prop:com} and \Cref{cor:mono} we can deduce the following result.

\begin{corollary}\label[corollary]{cor:monhyper}
	A morphism $(h,k)$ is a mono in $\hyp$ if and only if both its components are injective functions.
\end{corollary}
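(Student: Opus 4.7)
The plan is to reduce the statement to the general characterisation of monomorphisms in a comma category supplied by \Cref{cor:mono}, via the identification $\hyp \cong \comma{\id{\Set}}{\pro^\star}$ established in \Cref{prop:com}.

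First I would recall that a morphism in $\comma{\id{\Set}}{\pro^\star}$ is a pair of functions $(h,k)\colon (E_\mathcal{G}, V_\mathcal{G}, (s_\mathcal{G},t_\mathcal{G}))\to (E_\mathcal{H}, V_\mathcal{H}, (s_\mathcal{H},t_\mathcal{H}))$ making the obvious square commute, and that \Cref{cor:mono} characterises monos in a comma category $\comma{L}{R}$ (for $L$ preserving some suitable structure and $R$ preserving pullbacks) as exactly those pairs whose components are monomorphisms in the respective domain categories. Since monos in $\Set$ are precisely the injective functions, matching the conclusion of the corollary to our situation will give the result.

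The second step is to verify the hypotheses of \Cref{cor:mono}. The left leg of the comma is the identity functor $\id{\Set}$, which preserves any limit whatsoever, so nothing needs to be checked there. For the right leg $\pro^\star$, recall that $\pro^\star(X) = X^\star \times X^\star$. By \Cref{prop:fact}(3) the Kleene star functor preserves pullbacks (indeed all connected limits), and the diagonal-product functor $\pro$ is continuous by definition, so the composite $\pro^\star = \pro \circ (-)^\star$ preserves pullbacks. This is exactly the input required to apply \Cref{cor:mono}.

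Putting the two steps together, $(h,k)$ is a mono in $\hyp \cong \comma{\id{\Set}}{\pro^\star}$ if and only if both $h\colon E_\mathcal{G}\to E_\mathcal{H}$ and $k\colon V_\mathcal{G}\to V_\mathcal{H}$ are monos in $\Set$, i.e., injective. I do not expect any genuine obstacle here: the only delicate point is confirming that \Cref{cor:mono} applies, which amounts to the preservation-of-pullbacks check above. No separate direct argument on hypergraph morphisms is needed.
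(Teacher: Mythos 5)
Your proposal is correct and matches the paper's own argument: the paper derives this corollary exactly by combining \Cref{prop:com} with \Cref{cor:mono}, noting that $(-)^\star$ preserves pullbacks and $\pro$ is continuous, so $\pro^\star$ preserves pullbacks. (The only cosmetic difference is that \Cref{cor:mono} imposes no hypothesis on the left leg $L$ at all, so your check of $\id{\Set}$ is harmless but unnecessary.)
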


Applying \Cref{thm:slice-functors} we also get the next corollary (cfr.~\cite[Fact 4.17]{ehrig2006fundamentals}).

\begin{corollary}\label[corollary]{prop:hypadh}
	 $\hyp$ is an adhesive category.
\end{corollary}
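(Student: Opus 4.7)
The plan is to leverage the comma-category presentation $\hyp\cong \comma{\id{\Set}}{\pro^\star}$ from \Cref{prop:com} together with item~(1) of \Cref{thm:slice-functors}, which transfers (strict) adhesivity from the source categories to a comma category provided the left functor preserves the relevant pushouts and the right functor preserves the relevant pullbacks.

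First, I would recall that $\Set$ is (strictly) $\mon(\Set)$-adhesive, i.e.~adhesive, as noted in \Cref{ex:adhesive}. Hence both the domain $\Set$ of $\id{\Set}$ and the domain $\Set$ of $\pro^\star$ are adhesive. Second, $\id{\Set}$ trivially preserves all pushouts, so in particular it preserves $\mon(\Set)$-pushouts. Third, I need to check that $\pro^\star\colon \Set\to \Set$ preserves pullbacks along monos (actually, all pullbacks). By definition $\pro^\star(X) = X^\star\times X^\star$, i.e.~$\pro^\star$ factors as $(-)^\star$ followed by the binary product functor $\pro$ on $\Set$. By \Cref{prop:fact}(3), $(-)^\star$ preserves pullbacks (they are connected limits), and the binary product functor on $\Set$ preserves all limits since it is itself defined by a limit. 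Therefore $\pro^\star$ preserves all pullbacks, so in particular $\mon(\Set)$-pullbacks.

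Applying \Cref{thm:slice-functors}(1) with $\Y = \X = \A = \Set$, $L = \id{\Set}$, $R = \pro^\star$ and $\mathcal{N} = \mathcal{M} = \mon(\Set)$, we conclude that $\comma{\id{\Set}}{\pro^\star}$ is (strictly) $\comma{\mon(\Set)}{\mon(\Set)}$-adhesive. Transporting along the isomorphism of \Cref{prop:com} and using \Cref{cor:monhyper}, which identifies monos of $\hyp$ with pairs of injective functions, i.e.~with elements of $\comma{\mon(\Set)}{\mon(\Set)}$, we obtain that $\hyp$ is strictly $\mon(\hyp)$-adhesive, that is, adhesive in the sense of \Cref{rem:salva}.

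The only real subtlety is the verification that $\pro^\star$ preserves (all) pullbacks, but this decomposes immediately using \Cref{prop:fact}(3) and the continuity of the binary product functor; everything else is a direct application of the closure result \Cref{thm:slice-functors}.
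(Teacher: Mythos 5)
Your proposal is correct and follows essentially the same route as the paper: present $\hyp$ as the comma category $\comma{\id{\Set}}{\pro^\star}$ (\Cref{prop:com}), observe that $\id{\Set}$ preserves pushouts and that $\pro^\star$ preserves pullbacks since $(-)^\star$ preserves connected limits and products are continuous, and then apply item~(1) of \Cref{thm:slice-functors} together with the identification of monos from \Cref{cor:monhyper}. This is precisely the argument the paper intends when it derives the corollary from \Cref{thm:slice-functors}.
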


\Cref{prop:left,prop:com} allow us to deduce immediately the following.

\begin{proposition}\label{cor:left}  The forgetful functor $U_{\hyp}$ which sends a hypergraph $\mathcal{G}$ to its set of nodes has a left adjoint $\Delta_{\hyp}$.
\end{proposition}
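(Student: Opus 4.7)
The plan is to exploit the comma category identification furnished by \Cref{prop:com}: under the isomorphism $\hyp \cong \comma{\id{\Set}}{\pro^\star}$, the forgetful functor $U_{\hyp}$ corresponds precisely to the right projection $\comma{\id{\Set}}{\pro^\star} \to \Set$ sending $(E, V, s, t)$ to $V$. Invoking \Cref{prop:left} then reduces the existence of $\Delta_{\hyp}$ to the general statement about left adjoints of such projections in a comma category of functors from $\Set$, which applies because $\Set$ has an initial object $\initial$ and $\id{\Set}$ trivially preserves it.

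Unpacking the construction, the candidate left adjoint $\Delta_{\hyp}\colon \Set \to \hyp$ sends a set $V$ to the \emph{discrete hypergraph} on $V$, namely the hypergraph $(\initial, V, ?_{V^\star}, ?_{V^\star})$ with no hyperedges at all, whose source and target maps are the unique arrows out of the initial object. On morphisms $f\colon V \to W$, we set $\Delta_{\hyp}(f) := (\id{\initial}, f)$, and the square expressing compatibility with $s$ and $t$ commutes trivially, since $\initial$ is initial and $?_{W^\star} = f^\star \circ ?_{V^\star}$.

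To verify the adjunction $\Delta_{\hyp} \dashv U_{\hyp}$, it suffices to exhibit a natural bijection
\[
\Hom_{\hyp}(\Delta_{\hyp}(V), \mathcal{G}) \;\cong\; \Hom_{\Set}(V, U_{\hyp}(\mathcal{G}))
\]
for every set $V$ and every hypergraph $\mathcal{G} = (E_\mathcal{G}, V_\mathcal{G}, s_\mathcal{G}, t_\mathcal{G})$. A morphism on the left is a pair $(h, k)$ with $h\colon \initial \to E_\mathcal{G}$ and $k\colon V \to V_\mathcal{G}$ making the source/target squares commute; but $h$ is forced to be the unique arrow from $\initial$, and both squares commute automatically by initiality, so the pair reduces to the datum $k$ alone. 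Naturality in both $V$ and $\mathcal{G}$ is immediate from the functoriality of the assignments.

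The argument is essentially formal once the comma category viewpoint is in place; the only delicate point is confirming that the general adjunction theorem for comma categories (i.e.\ \Cref{prop:left}) is applicable here, which amounts to observing that $\id{\Set}$ sends the initial object $\initial \in \Set$ to the initial object of the codomain, so the universal arrow $?_{V^\star \times V^\star}\colon \id{\Set}(\initial) \to \pro^\star(V)$ exists and is unique for every $V$. No genuine obstacle arises beyond this bookkeeping.
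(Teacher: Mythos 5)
Your proposal is correct and follows essentially the same route as the paper, which obtains the result by combining the comma-category presentation of \Cref{prop:com} with the general adjunction criterion of \Cref{prop:left} (initial object in the domain of $\id{\Set}$, trivially preserved), yielding exactly the discrete hypergraph $\Delta_{\hyp}(X)$ described in the paper's subsequent example. Your explicit hom-set verification merely re-derives the content of \Cref{prop:left} in this instance and is harmless.
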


\begin{example}Since the initial object of $\catname{Set}$ is the empty set,  $\Delta_{\hyp}(X)$ is the hypergraph which has $X$ as set of nodes, $\emptyset$ as set of hyperedges, and $?_X$ as source and target function.
\end{example}


\begin{example}\label[example]{exa_2}
We represent hypergraphs visually: dots denote nodes and boxes hyperedges. Should we be interested in their identity, we put a name near the corresponding dot or box. 
Sources and targets are represented by lines between dots and squares: the lines from the sources of a hyperedge comes from the left of the box, while the lines to the targets 
exit from the right of the box.
Let us look at the picture below. It represent a hypergraph $\mathcal{G}$ with sets $V_\mathcal{G} = \{v_1, v_2, v_3, v_4\}$ and $E_\mathcal{G} = \{h_1, h_2, h_3, h_4\}$ of nodes 
and edges, respectively, such that  $h_1, h_2$ have no source and $h_3, h_4$ a pair of nodes
	\begin{center}
        \begin{tikzpicture}
                \begin{pgfonlayer}{nodelayer}
                        \node[style=none] (lab1) at (-1.8, 0.8) {$h_1$};
                        \node[style=small box] (h1) at (-1.8, 0.3) {};
                        \node[style=small box] (2) at (-1.8, -0.5) {};
                        \node[style=none] (lab2) at (-1.8, -1.0) {$h_2$};
                        \node[style=none] (h1b) at (-1.8, 0.3) {};
                        \node[style=none] (th2) at (0, 0.3) {};
                        \node[style=none] (bh2) at (0, -0.3) {};
                        \node[style=none] (h2b) at (0, 0) {};
                        \node[style=none] (th3) at (1.8, 0){};
                        \node[style=none] (bh3) at (1.8, -0.6){};
                        \node[style=none] (h3c) at (1.8, -0.3){};
                        \node[style=none] (2c) at (-1.8, -0.5){};
                        \node[style=none] (lab3) at (0, 0.8) {$h_3$};
                        \node[style=medium box] (h2) at (0, 0) {};
                        \node[style=none] (lab4) at (1.8, 0.5) {$h_4$};
                        \node[style=medium box] (h3) at (1.8, -0.3) {};
                        \node[style=none](labv2) at (-0.9, -0.9) {$v_2$};
                        \node[style=node] (v2) at (-0.9, -0.5) {};
                        \node[style=none] (labv1) at (-0.9, 0.7) {$v_1$};
                        \node[style=node] (v1) at (-0.9, 0.3) {};
                        \node[style=none] (labv3) at (0.9, 0.4) {$v_3$};
                        \node[style=node] (v3) at (0.9, 0) {};
                        \node[style=none] (labv4) at (2.7, 0.1) {$v_4$};
                        \node[style=node] (v4) at (2.7, -0.3) {};
                \end{pgfonlayer}
                \begin{pgfonlayer}{edgelayer}
                        \draw (h1b.center) to (th2.center);
                        \draw (2c.center) to (v2);
                        \draw[in=180, out=30] (v2) to (bh2.center);
                        \draw (h2b.center) to (th3.center);
                        \draw (v4) to (h3c.center);
                        \draw[in=180, out=-30](v2) to (bh3);
                \end{pgfonlayer}
        \end{tikzpicture}
	\end{center}
\end{example}

\commentato{
\begin{example}\label[example]{exa_2}
	Let $V_\mathcal{G} = \{v_1, v_2, v_3, v_4, v_5\}$, and $E_\mathcal{G} = \{h_1, h_2, h_3, h_4\}$. Then we define:
	\[\begin{matrix}
		s_{\mathcal{G}}(h_1)\colon 0\to V_{\mathcal{G}} & s_{\mathcal{G}}(h_1)=?_{V_\mathcal{G}} &&
		s_{\mathcal{G}}(h_2)\colon 0\to V_{\mathcal{G}} & s_{\mathcal{G}}(h_2)=?_{V_\mathcal{G}}\\
		s_{\mathcal{G}}(h_3)\colon 2\to V_{\mathcal{G}} & \begin{matrix} 
					0 \mapsto v_1\\
					1\mapsto v_2
				\end{matrix}&& s_{\mathcal{G}}(h_4)\colon 2 \to V_\mathcal{G} & \begin{matrix}
					0 \mapsto v_3\\
					1\mapsto v_2
				\end{matrix}\\
		t_{\mathcal{G}}(h_1)\colon 1\to V_{\mathcal{G}} & 0 \mapsto v_1 &&
		t_{\mathcal{G}}(h_2)\colon 1\to V_{\mathcal{G}} & 0\mapsto v_3\\
		t_{\mathcal{G}}(h_3)\colon 1\to V_{\mathcal{G}} & 1\mapsto v_3 &&
		t_\mathcal{G}(h_4)\colon 1 \to V_\mathcal{G} & 0 \mapsto v_4
	\end{matrix}\]

	Now, we can depict $\mathcal{G}$ as

	\begin{center}
		        \begin{tikzpicture}
                \begin{pgfonlayer}{nodelayer}
                        \node[style=none] (lab1) at (-1.8, 0.8) {$h_1$};
                        \node[style=small box] (h1) at (-1.8, 0.3) {};
                        \node[style=small box] (2) at (-1.8, -0.5) {};
                        \node[style=none] (lab2) at (-1.8, -1.0) {$h_2$};
                        \node[style=none] (h1b) at (-1.8, 0.3) {};
                        \node[style=none] (th2) at (-0.3, 0.3) {};
                        \node[style=none] (bh2) at (-0.3, -0.3) {};
                        \node[style=none] (h2b) at (0, 0) {};
                        \node[style=none] (th3) at (1.5, 0){};
                        \node[style=none] (bh3) at (1.5, -0.6){};
                        \node[style=none] (h3c) at (1.8, -0.3){};
                        \node[style=none] (2c) at (-1.8, -0.5){};
                        \node[style=none] (lab3) at (0, 0.8) {$h_3$};
                        \node[style=medium box] (h2) at (0, 0) {};
                        \node[style=none] (lab4) at (1.8, 0.5) {$h_4$};
                        \node[style=medium box] (h3) at (1.8, -0.3) {};
                        \node[style=none](labv2) at (-0.9, -0.9) {$v_2$};
                        \node[style=node] (v2) at (-0.9, -0.5) {};
                        \node[style=none] (labv1) at (-0.9, 0.7) {$v_1$};
                        \node[style=node] (v1) at (-0.9, 0.3) {};
                        \node[style=none] (labv3) at (0.9, 0.4) {$v_3$};
                        \node[style=node] (v3) at (0.9, 0) {};
                        \node[style=none] (labv4) at (2.7, 0.1) {$v_4$};
                        \node[style=node] (v4) at (2.7, -0.3) {};
                \end{pgfonlayer}
                \begin{pgfonlayer}{edgelayer}
                        \draw[style=diredge] (h1b.center) to (v1);
                        \draw[style=diredge] (v1) to (th2.center);
                        \draw[style=diredge] (2c.center) to (v2);
                        \draw[style=diredge, in=180, out=30] (v2) to (bh2.center);
                        \draw[style=diredge] (v3) to (th3.center);
                        \draw[style=diredge] (h3c.center) to (v4);
                        \draw[style=diredge] (h2b.center) to (v3);
                        \draw[style=diredge,in=180, out=-30](v2) to (bh3);
                \end{pgfonlayer}
        \end{tikzpicture}
\end{center}

\end{example}
}

\begin{remark}\label{rem:functor}
	It is worth to point out, as first noted in \cite{bonchi2022string}, that $\hyp$ is equivalent to a category of presheaves. 
	Indeed, consider the category $\catname{H}$ in which the set of objects is given by $ (\mathbb{N}\times \mathbb{N}) \cup \{\bullet\}$ and arrows are given by the identities $\id{k,l}$, $\id{\bullet}$ and exactly $k+l$ arrows $f_i\colon (k,l)\rightarrow \bullet$, where $i$ ranges from $0$ to $k+l-1$. 
	The functors $\catname{H}\to \Set$ corresponds exactly to hypergraphs: nodes correspond to the image of $\bullet$ while the set of hyperedges with source of length $k$ and target of length $l$ corresponds to the image of $(k,l)$ (see \cite{CastelnovoGM24}).
\end{remark}

In particular, \Cref{rem:functor} entails the following result. 

\begin{proposition}\label{prop:cocomp}
	$\hyp$ has all limits and colimits.
\end{proposition}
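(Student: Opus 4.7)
The plan is to derive completeness and cocompleteness directly from the presheaf characterisation mentioned in \Cref{rem:functor}, since this sidesteps any case analysis on the comma category presentation. Concretely, by \Cref{rem:functor}, $\hyp$ is equivalent to the functor category $[\catname{H}, \Set]$, where $\catname{H}$ is the small category with objects $(\mathbb{N}\times \mathbb{N})\cup\{\bullet\}$ described there.

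First I would invoke the standard fact that $\Set$ has all small limits and colimits. Then, since $\catname{H}$ is small, the functor category $[\catname{H}, \Set]$ inherits all small limits and colimits from $\Set$, computed pointwise (this is a classical result, see e.g.\ \cite{mac2013categories}). Finally, any equivalence of categories both reflects and preserves the existence of arbitrary limits and colimits, so transporting along the equivalence from \Cref{rem:functor} yields that $\hyp$ itself has all limits and colimits.

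An alternative route, not requiring \Cref{rem:functor}, would be to use \Cref{prop:com}, which presents $\hyp$ as the comma category $\comma{\id{\Set}}{\pro^\star}$. One then checks that $\comma{L}{R}$ is (co)complete whenever $\catname{Set}$ is (co)complete and the functor on the appropriate side preserves the relevant (co)limits: $\id{\Set}$ trivially preserves colimits, while $\pro^\star$ preserves limits, being a composition of the continuous functor $\pro$ with $(-)^\star$, which preserves the connected limits needed (cf.\ \Cref{prop:fact}). Since general limits in the comma category require only that the right-hand functor preserve them, and general colimits only that the left-hand functor preserve them, both constructions go through.

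I would favour the first route as the main proof, since the presheaf presentation is already in hand and the argument is a single sentence. The only subtlety — and the main obstacle, if any — lies in checking that all the coproducts required to build colimits in $\hyp$ exist, but this is immediate once we know that colimits in $[\catname{H}, \Set]$ are pointwise and $\Set$ is cocomplete.
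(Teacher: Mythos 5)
Your main argument is correct and is exactly the paper's: \Cref{prop:cocomp} is stated as an immediate consequence of \Cref{rem:functor}, i.e.\ $\hyp$ is equivalent to a presheaf category $[\catname{H},\Set]$ with $\catname{H}$ small, so all small limits and colimits exist (pointwise) and transport along the equivalence. Be aware, though, that your fallback via \Cref{prop:com} would not give \emph{all} limits as claimed: $(-)^\star$, and hence $\pro^\star$, preserves only connected limits (it does not preserve products), so the comma-category argument yields colimits and connected limits but not products or a terminal object — another reason the presheaf route is the right one.
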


\subsubsection{Labelling hypergraph with an algebraic signature}\label{sssect:hyp_alg_sign}

Our interest for hypergraphs stems from their use as a graphical representation of algebraic terms. We thus need a way to label hyperedges with symbols taken from a signature.

\noindent
\begin{minipage}[l]{.73\linewidth}
\begin{definition}
An \emph{algebraic signature} $\Sigma$ is a pair $(O_\Sigma, \ari_\Sigma)$ given by a \emph{set of operations} $O_\Sigma$ and an \emph{arity function} $\ari_\Sigma\colon O_\Sigma \to \mathbb{N}$. 
We define the \emph{hypergraph $\mathcal{G}_\Sigma$ associated with $\Sigma$} taking $O_\Sigma$ as set of hyperedges, $1$ as set of nodes, so that $1^\star$ is $\mathbb{N}$, $\ari_\Sigma$ as the source function and $\gamma_1$, which always picks the element $1$, as target function. The category $\hyps$ of \emph{algebraically labelled hypergraphs} is the slice category $\hyp/\mathcal{G}^\Sigma$.
\end{definition}

\begin{example}\label[example]{exa_3}
	Let $\Sigma = (O_\Sigma, \ari_\Sigma)$ be the signature with $O_\Sigma =  \mathbb{N} \uplus A \uplus \{*, /\}$, 
	where $n$ stands for any natural number and $a$ for any element in $A$, both sets of constants, and
	$\ari_\Sigma(*) = \ari_\Sigma(/) = 2$. Then the hypergraph $\mathcal{G}^\Sigma$ is depicted as the picture aside.
\end{example}
\end{minipage}
\hfill
\begin{minipage}[r]{.22\linewidth}
\begin{tikzpicture}
        \begin{pgfonlayer}{nodelayer}
                \node[style=node] (v) at (0, 0){};
                \node[style=none] (vlab) at (0, -0.4){$v$};
                
                \node[style=small box] (n) at (-1.8, 0.4) {};
                \node[style=none] (nlab) at (-1.8, 0.9) {$n$};
                \node[style=none] (nattach) at (-1.6, 0.4) {};

                \node[style=small box] (const) at (-1.8, -0.4) {};
                \node[style=none] (constlab) at (-1.8, -0.9){$a$};
                \node[style=none] (constattach) at (-1.6, -0.4){};

                \node[style=medium box] (times) at (0, 1.8) {};
                \node[style=none] (timeslab) at (0, 2.6) {$*$};
                \node[style=none] (timesin1) at (-0.3, 2.1) {};
                \node[style=none] (timesin2) at (-0.3, 1.5) {};
                \node[style=none] (timesout) at (0.2, 1.8){};

                \node[style=medium box] (div) at (0, -1.8) {};
                \node[style=none] (divlab) at (0, -2.6) {$/$};
                \node[style=none] (divin1) at (-0.3, -2.1) {};
                \node[style=none] (divin2) at (-0.3, -1.5) {};
                \node[style=none] (divout) at (0.2, -1.8){};
        \end{pgfonlayer}

        \begin{pgfonlayer}{edgelayer}
                \draw[style=diredge, in=180, out= 112.5] (v) to (timesin2.center); 
                \draw[style=diredge,out=135, in= 180] (v) to (timesin1.center);
                \draw[style=diredge,in=165, out=0] (nattach.center) to (v);
                
                \draw[style=diredge,out=0, in=-165] (constattach.center) to (v);
                \draw[style=diredge,in=180, out= -112.5] (v) to (divin2.center); 
                \draw[style=diredge,out=-135, in= 180] (v) to (divin1.center);

                \draw[style=diredge, in=30, out=0] (timesout.center) to (v);
                \draw[style=diredge, in=-30, out=0] (divout.center) to (v);
        \end{pgfonlayer}
\end{tikzpicture}
\end{minipage}

\Cref{cor:mono} and \Cref{thm:slice-functors} give us immediately an adhesivity result for $\hyp_{\Sigma}$ and a characterisation of monos in it.

\begin{proposition}\label[proposition]{prop:mono} Let $\Sigma$ be an algebraic signature. Then the following hold
	\begin{enumerate}
		\item an arrow $(h,k)$ in $\hyp_{\Sigma}$ is a mono if and only if $h$ and $k$ are injective;
		\item $\hyp_{\Sigma}$ is an adhesive category. 
	\end{enumerate}
\end{proposition}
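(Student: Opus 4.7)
Both items are immediate consequences of the characterisation of $\hyps$ as the slice category $\hyp/\mathcal{G}^\Sigma$, combined with the results already established for $\hyp$.

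For item (1), I would invoke the standard fact that in any slice category $\X/X$ an arrow is a monomorphism if and only if its underlying arrow in $\X$ is a monomorphism (this follows directly from the definition of slices, since a parallel pair equalised in the slice is equalised underneath, and vice-versa). Applied to $\hyp/\mathcal{G}^\Sigma$, this reduces the claim to the characterisation of monos in $\hyp$ given by \Cref{cor:monhyper}, which states precisely that a morphism $(h,k)$ in $\hyp$ is mono iff both $h$ and $k$ are injective functions.

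For item (2), the plan is to apply \Cref{thm:slice-functors}(2) with $\X := \hyp$ and $\mathcal{M} := \mon(\hyp)$. By \Cref{prop:hypadh}, $\hyp$ is adhesive, i.e.\ strictly $\mon(\hyp)$-adhesive (see \Cref{rem:salva}). Therefore \Cref{thm:slice-functors}(2) yields that the slice $\hyp/\mathcal{G}^\Sigma$ is strictly $\mon(\hyp)/\mathcal{G}^\Sigma$-adhesive. By item (1), the class $\mon(\hyp)/\mathcal{G}^\Sigma$ coincides with $\mon(\hyp_\Sigma)$, so $\hyp_\Sigma$ is strictly $\mon(\hyp_\Sigma)$-adhesive, which is exactly adhesivity in the sense of \Cref{rem:salva}.

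There is essentially no obstacle here: the two items are formal consequences of the slice construction together with the already proved adhesivity of $\hyp$ and the characterisation of its monos. The only care needed is to verify that the class $\mon(\hyp)/\mathcal{G}^\Sigma$ used by \Cref{thm:slice-functors}(2) is indeed the full class of monos of $\hyp_\Sigma$, which is precisely what item (1) ensures.
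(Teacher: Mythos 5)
Your proposal is correct and takes essentially the same route as the paper, which also derives both items immediately from the slice presentation $\hyps=\hyp/\mathcal{G}^\Sigma$ together with the characterisation of monos in $\hyp$ and \Cref{thm:slice-functors}. The only cosmetic difference is that you use the direct slice-category fact that monos are detected by the underlying arrow, where the paper invokes its comma-category \Cref{cor:mono} (slices being comma categories); the content is the same.
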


\noindent 
\parbox{10.8cm}{\begin{remark}\label[remark]{rem:label}	
Let $\mathcal{H}=(E, V, s, t)$ be a hypergraph, by definition we know that $U_{\hyp}(\mathcal{G}^{\Sigma})$ is the terminal object $1$, so an arrow $\mathcal{H}\rightarrow \mathcal{G}^{\Sigma}$, is determined by a function $h\colon E_\mathcal{H}\to O_\Sigma$  making the two squares on the right commutative (cfr.~\Cref{rem:length}).

\hspace{10pt}Now, consider a coprojection $v_n\colon V^n_\mathcal{H}\to  V^\star_{\mathcal{H}}$.  By the second diagram above entails that $t_{\mathcal{H}}$ factors via the inclusion $v_1\colon V_{\mathcal{H}}\to V_\mathcal{H}^{\star}$ of words of length $1$, i.e.~all hyperedges must have a single target vertex, that is $t_{\mathcal{H}}=v_1\circ \tau_{\mathcal{H}}$ for some $\tau_{\mathcal{H}}\colon E_{\mathcal{H}}\to V_{\mathcal{H}}$.
\end{remark}} \hfill \parbox{3cm}{\xymatrix@R=15pt{E_{\mathcal{H}} \ar[d]_{h}\ar[r]_{s_{\mathcal{H}}} & V^\star_{\mathcal{H}} \ar[d]^{\lgh_{V_{\mathcal{H}}}} \\  O_\Sigma \ar[r]^{\ari_\Sigma}& \mathbb{N}} \hspace{1pt}\\ \xymatrix@R=15pt{E_{\mathcal{H}} \ar[d]_{h}\ar[r]_{t_{\mathcal{H}}} & V^\star_{\mathcal{H}} \ar[d]^{\lgh_{V_{\mathcal{H}}}} \\  O_\Sigma \ar[r]^{\gamma_1}& \mathbb{N}}}

$\hyp_{\Sigma}$ has a forgetful functor $U_{\Sigma}\colon \hyp_{\Sigma}\to \X$ which sends $(h,k)\colon \mathcal{H}\to \mathcal{G}^{\Sigma}$ to $U_{\hyp}(\mathcal{H}$). Now, since $U_{\hyp}(\mathcal{G}^{\Sigma})=1$ 
for every set $X$ we can define $\Delta_{\Sigma}(X)\colon \Delta_{\hyp}(X)\to \mathcal{G}^{\Sigma}$  as $(?_{O_\Sigma}, !_{X})$. It is straightforward to see that in this way we get a left adjoint to $U_\Sigma$.

\begin{proposition} $U_\Sigma$
	has a left adjoint $\Delta_\Sigma$.
\end{proposition}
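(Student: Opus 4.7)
The plan is to establish the adjoint hom-set bijection
\[\hyps(\Delta_\Sigma(X), (\mathcal{H}, \phi)) \;\cong\; \Set(X, U_\Sigma(\mathcal{H}, \phi))\]
naturally in $X \in \Set$ and in $(\mathcal{H},\phi) \in \hyps$; from this the functoriality of $\Delta_\Sigma$ and the unit/counit of the adjunction are forced.

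Unpacking the left-hand side, a morphism in the slice $\hyps = \hyp/\mathcal{G}^\Sigma$ is a hypergraph morphism $(\alpha,\beta)\colon \Delta_{\hyp}(X)\to \mathcal{H}$ with $\phi\circ(\alpha,\beta) = \Delta_\Sigma(X) = (?_{O_\Sigma}, !_X)$. Since $E_{\Delta_{\hyp}(X)} = \emptyset$, the edge component $\alpha$ is forced to be $?_{E_\mathcal{H}}$ and the source/target squares commute vacuously; writing $\phi = (h,k)$, the triangle condition on edges collapses to $h\circ ?_{E_\mathcal{H}} = ?_{O_\Sigma}$, which is automatic as both sides originate from $\emptyset$. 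On vertices, the triangle $k\circ \beta = !_X$ is automatic because $V_{\mathcal{G}^\Sigma} = 1$ is terminal. Hence $(\alpha,\beta)$ is determined by the arbitrary function $\beta\colon X \to V_\mathcal{H} = U_\Sigma(\mathcal{H},\phi)$, yielding the claimed bijection.

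Naturality in both variables is immediate: a slice morphism $(\mathcal{H},\phi)\to (\mathcal{H}',\phi')$ acts on $\beta$ by postcomposition with its vertex component, and a function $X'\to X$ acts by precomposition. There is no real obstacle here; the adjunction holds essentially because $\mathcal{G}^\Sigma$ has a singleton node set and $\Delta_{\hyp}(X)$ has no edges, making both coherence triangles trivial, so the construction effectively transports the adjunction $\Delta_{\hyp} \dashv U_{\hyp}$ of \Cref{cor:left} along the slice projection.
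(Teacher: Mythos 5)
Your proof is correct and takes essentially the same route as the paper: the paper also defines $\Delta_\Sigma(X)$ as $(?_{O_\Sigma},!_X)\colon \Delta_{\hyp}(X)\to\mathcal{G}^\Sigma$ and observes that, because $\Delta_{\hyp}(X)$ has no hyperedges and $U_{\hyp}(\mathcal{G}^\Sigma)=1$, the compatibility triangles over $\mathcal{G}^\Sigma$ are automatic, so the adjunction $\Delta_{\hyp}\dashv U_{\hyp}$ transports to the slice. Your explicit hom-set bijection is just a spelled-out version of the paper's appeal to the unit of that adjunction.
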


We extend our graphical notation of hypergraphs to labeled ones putting the label of an hyperedge $h$ inside its corresponding square.

\begin{example}\label[example]{lab_2}
	Consider again $\Sigma$ the signature of \Cref{exa_3}, then the hypergraph $\mathcal{G}$ of \Cref{exa_2} can be labeled by a morphism
	$(l, !_{V_\mathcal{G}}): \mathcal{G} \to \mathcal{G}^\Sigma$ that is characterised by the image of the edges. 
	If $l(h_1) = a$, $l(h_2) = 2$, $l(h_3) = *$, and $l(h_4) = /$, we represent it visually 
	by putting the labels of the edges in  $\mathcal{G}^\Sigma$ inside the boxes representing the edges of $\mathcal{G}$.
		\begin{center}
		\begin{tikzpicture}
			\begin{pgfonlayer}{nodelayer}
				\node[style=none] (lab1) at (-1.8, 0.8) {};
				\node[style=small box] (h1) at (-1.8, 0.3) {$a$};
				\node[style=small box] (2) at (-1.8, -0.5) {$2$};
				\node[style=none] (lab2) at (-1.8, -1.0) {};
				\node[style=none] (h1b) at (-1.8, 0.3) {};
				\node[style=none] (th2) at (0, 0.3) {};
				\node[style=none] (bh2) at (0, -0.3) {};
				\node[style=none] (h2b) at (0, 0) {};
				\node[style=none] (th3) at (1.8, 0){};
				\node[style=none] (bh3) at (1.8, -0.6){};
				\node[style=none] (h3c) at (1.8, -0.3){};
				\node[style=none] (2c) at (-1.8, -0.5){};
				\node[style=none] (lab3) at (0, 0.8) {};
				\node[style=medium box] (h2) at (0, 0) {$*$};
				\node[style=none] (lab4) at (1.8, 0.5) {};
				\node[style=medium box] (h3) at (1.8, -0.3) {$/$};
				\node[style=none](labv2) at (-0.9, -0.9) {};
				\node[style=node] (v2) at (-0.9, -0.5) {};
				\node[style=none] (labv1) at (-0.9, 0.7) {};
				\node[style=node] (v1) at (-0.9, 0.3) {};
				\node[style=none] (labv3) at (0.9, 0.4) {};
				\node[style=node] (v3) at (0.9, 0) {};
				\node[style=none] (labv4) at (2.7, 0.1) {};
				\node[style=node] (v4) at (2.7, -0.3) {};
			\end{pgfonlayer}
			\begin{pgfonlayer}{edgelayer}
				\draw (h1b.center) to (th2.center);
				\draw (2c.center) to (v2);
				\draw[in=180, out=30] (v2) to (bh2.center);
				\draw (h2b.center) to (th3.center);
				\draw (v4) to (h3c.center);
				\draw[in=180, out=-30](v2) to (bh3);
			\end{pgfonlayer}
        	\end{tikzpicture}
	\end{center}
\end{example}

\subsection{Term Graphs}
Term graphs have been adopted as a convenient way to represent terms over a signature with an explicit sharing of sub-terms,
and as such have been a convenient tool for an efficient implementation of term rewriting~\cite{Plu:TGR-ENTCS}.
We elaborate here on the presentation given in~\cite{CastelnovoGM24}.

\begin{definition}\label[definition]{def:tg}\index{term graph}
	Given an algebraic signature $\Sigma$, we say that a labelled hypergraph $(l, !_{V_\mathcal{G}})\colon \mathcal{G}\to \mathcal{G}^{\Sigma}$ is a \emph{term graph} if $t_\mathcal{G}$ is a mono. We define $\tg$ to be the full subcategory of $\hyp_{\Sigma}$ given by term graphs and denote by $I_\Sigma$ the inclusion. Restricting $U_\Sigma\colon \hyp_{\Sigma}\to \catname{Set}$ we get a forgetful functor $U_{\tg}\colon \tg\to \catname{Set}$.
\end{definition}

\begin{remark}\label[remark]{rem:mono2}By \Cref{rem:label}, we know that if $\mathcal{G}$ is a term graph then $t_{\mathcal{G}}=v_1\circ \tau_{\mathcal{G}}$, where $v_1$ is the coprojection of $V_{\mathcal{G}}$ into $V^\star_{\mathcal{G}}$.  Notice that since $t_{\mathcal{G}}$ is a mono then $\tau_{\mathcal{G}}$ is a mono.
\end{remark}

\begin{example}
The labelled hypergraph of \Cref{lab_2} is a term graph.
\end{example}

We now examine some properties of $\tg$, in order to study its adhesivity properties. We begin noticing that, for every set $X$,  $\Delta_\Sigma(X)$ has no hyperedges and so is a term graph. this yields at once the following.

\begin{proposition}\label[proposition]{term:left}The forgetful functor $U_{\tg}$ has a left adjoint $\Delta_{\tg}$.
\end{proposition}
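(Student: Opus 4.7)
The plan is to show that the same object $\Delta_\Sigma(X)$ that serves as the free labelled hypergraph on $X$ also serves as the free term graph on $X$, so that $\Delta_{\tg}$ can simply be defined as the restriction of $\Delta_\Sigma$. The universal property then transfers for free since $I_\Sigma\colon \tg \hookrightarrow \hyp_\Sigma$ is a full subcategory inclusion and $U_{\tg} = U_\Sigma \circ I_\Sigma$.

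First, I would verify the key observation already stated in the paper: for every set $X$, the labelled hypergraph $\Delta_\Sigma(X) = (?_{O_\Sigma}, !_X)\colon \Delta_{\hyp}(X) \to \mathcal{G}^\Sigma$ is a term graph. Indeed, $\Delta_{\hyp}(X)$ has empty set of hyperedges, so the target function $t_{\Delta_{\hyp}(X)}\colon \emptyset \to X^\star$ is the unique arrow $?_{X^\star}$, which is automatically a mono. Hence by \Cref{def:tg}, $\Delta_\Sigma(X)$ lies in $\tg$, and we may set $\Delta_{\tg}(X) := \Delta_\Sigma(X)$, extended on morphisms in the obvious way.

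Next, I would establish the adjunction $\Delta_{\tg} \dashv U_{\tg}$ by transferring the bijection from $\Delta_\Sigma \dashv U_\Sigma$. Given a set $X$ and a term graph $\mathcal{H} \in \tg$, the natural bijection
\[
\hyp_\Sigma(\Delta_\Sigma(X), I_\Sigma(\mathcal{H})) \;\cong\; \Set(X, U_\Sigma(I_\Sigma(\mathcal{H})))
\]
provided by $\Delta_\Sigma \dashv U_\Sigma$ has left-hand side equal to $\tg(\Delta_{\tg}(X), \mathcal{H})$, because $I_\Sigma$ is fully faithful, and right-hand side equal to $\Set(X, U_{\tg}(\mathcal{H}))$ by the definition $U_{\tg} = U_\Sigma \circ I_\Sigma$. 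Naturality in both arguments is inherited from the corresponding naturality of the $\Delta_\Sigma \dashv U_\Sigma$ adjunction.

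There is essentially no obstacle here: the argument is purely formal once one notices that $\Delta_\Sigma(X)$ already satisfies the defining condition of a term graph. The only thing worth being careful about is that $\tg$ is a \emph{full} subcategory of $\hyp_\Sigma$, so that homsets are literally equal rather than merely in bijection; this is given to us by \Cref{def:tg}.
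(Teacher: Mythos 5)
Your proposal is correct and follows exactly the paper's argument: the paper also observes that $\Delta_\Sigma(X)$ has no hyperedges, hence its target map is trivially mono and it is a term graph, so the adjunction $\Delta_\Sigma \dashv U_\Sigma$ restricts along the full inclusion $I_\Sigma$ to give $\Delta_{\tg} \dashv U_{\tg}$. Your spelling out of the hom-set bijection is just a more explicit rendering of the same one-line observation.
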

\commentato{
\begin{proof}
	This follows noticing that $\Delta_{\Sigma}(X)$ is a term graph for every object $X$.
\end{proof}}

We can list some other categorical properties of $\tg$ (see \cite[Sec.~5]{CastelnovoGM24}).

\begin{proposition}\label{prop:tlim}
Let $\Sigma$ be an algebraic signature. Then the following hold
\begin{enumerate}
	\item if  $(i,j)\colon \mathcal{H}\to \mathcal{G}$ is a mono from $(l, !_{V_\mathcal{G}})\colon \mathcal{G}\to \mathcal{G}^{\Sigma}$ to $(l', !_{V_\mathcal{H}})\colon \mathcal{H}\to \mathcal{G}^{\Sigma}$ in $\hyp_\Sigma$ and the latter is in $\tg$, then also the former is in $\tg$
	\item $\tg$ has equalizers, binary products and pullbacks and they are created by $I_\Sigma$.
\end{enumerate}
\end{proposition}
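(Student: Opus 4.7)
My plan is to establish the two items separately, deriving the equalizer case of item (2) from item (1) and handling binary products and pullbacks together by reducing them to pullbacks in $\hyp$.

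\emph{Item (1).} Given a mono $(i,j)\colon \mathcal{H}\to \mathcal{G}$ in $\hyp_{\Sigma}$ with $\mathcal{G}$ a term graph, I will invoke \Cref{prop:mono} to make both $i$ and $j$ injective. The commuting square defining a hypergraph morphism yields $j^\star\circ t_{\mathcal{H}} = t_{\mathcal{G}}\circ i$. By \Cref{rem:mono}, $j^\star$ is a mono; combined with $t_{\mathcal{G}}$ being mono (since $\mathcal{G}\in \tg$) and $i$ being mono, the right-hand side is a composition of monos. Hence $j^\star\circ t_{\mathcal{H}}$ is a mono, and left-cancellability of monos forces $t_{\mathcal{H}}$ to be mono, so $\mathcal{H}$ is a term graph.

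\emph{Item (2).} By \Cref{prop:cocomp} and the inheritance of limits by slice categories, $\hyp_{\Sigma}$ has all finite limits, so it suffices to show that the limits there of diagrams of term graphs land in $\tg$; fullness of $I_{\Sigma}$ then gives that they are limits in $\tg$ as well. Equalizers are monos into term graphs, so item (1) applies. For binary products and pullbacks in $\hyp_{\Sigma}=\hyp/\mathcal{G}^{\Sigma}$, both reduce to pullbacks in $\hyp$. Given a cospan $\mathcal{G}\to \mathcal{K}\leftarrow \mathcal{H}$ of term graphs, I will compute the pullback apex $\mathcal{P}$ componentwise as $V_{\mathcal{P}}=V_{\mathcal{G}}\times_{V_{\mathcal{K}}}V_{\mathcal{H}}$ and $E_{\mathcal{P}}=E_{\mathcal{G}}\times_{E_{\mathcal{K}}}E_{\mathcal{H}}$; by \Cref{prop:fact}(3) the Kleene star preserves this pullback of vertex sets, so $V_{\mathcal{P}}^\star = V_{\mathcal{G}}^\star\times_{V_{\mathcal{K}}^\star} V_{\mathcal{H}}^\star$. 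Since by \Cref{rem:mono2} each of $t_{\mathcal{G}},t_{\mathcal{H}},t_{\mathcal{K}}$ factors as $v_1\circ \tau$ with $\tau$ mono, the universal property induces a factorisation $t_{\mathcal{P}}=v_1\circ \tau_{\mathcal{P}}$ where $\tau_{\mathcal{P}}(e_1,e_2)=(\tau_{\mathcal{G}}(e_1),\tau_{\mathcal{H}}(e_2))$. The componentwise nature of $\tau_{\mathcal{P}}$ together with $\tau_{\mathcal{G}},\tau_{\mathcal{H}}$ mono makes $\tau_{\mathcal{P}}$ mono, whence $t_{\mathcal{P}}$ is mono and $\mathcal{P}$ is a term graph.

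\emph{Main obstacle.} The one substantive step is identifying the target map $t_{\mathcal{P}}$ of the pullback as factoring through $V_{\mathcal{P}}\hookrightarrow V_{\mathcal{P}}^\star$: this is where preservation of connected limits by $(-)^\star$ is essential, and where one must carefully track how the universal property of the pullback produces the mono $\tau_{\mathcal{P}}$. Everything else is routine bookkeeping against \Cref{prop:mono}, \Cref{rem:mono}, and \Cref{rem:mono2}.
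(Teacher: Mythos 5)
Your proof is correct in substance; note that the paper itself does not prove this proposition but imports it from \cite[Sec.~5]{CastelnovoGM24}, so there is no internal proof to compare against. Your item (1) is exactly the expected argument: from $j^\star\circ t_{\mathcal H}=t_{\mathcal G}\circ i$ with $t_{\mathcal G}$ and $i$ mono, left-cancellability gives $t_{\mathcal H}$ mono (the appeal to \Cref{rem:mono} for $j^\star$ is not even needed). For item (2), deriving equalizers from (1) and reducing products and pullbacks in $\hyp_\Sigma=\hyp/\mathcal G^\Sigma$ to pullbacks in $\hyp$, then checking that $t_{\mathcal P}$ factors as $v_1\circ\tau_{\mathcal P}$ with $\tau_{\mathcal P}$ injective, is a sound route, and the fullness of $I_\Sigma$ does yield creation. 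One imprecision to fix: in the binary-product case the cospan vertex is $\mathcal G^\Sigma$ with its identity labelling, which is \emph{not} in general a term graph (its target map $\gamma_1$ is constant), so your phrase ``cospan of term graphs'' and the appeal to \Cref{rem:mono2} for $t_{\mathcal K}$ do not literally apply there. Fortunately your argument never uses injectivity of $\tau_{\mathcal K}$: you only need that $t_{\mathcal K}$ lands in length-one words (true for any labelled hypergraph, including $\mathcal G^\Sigma$, by \Cref{rem:label}) so that the letters $\tau_{\mathcal G}(e_1)$ and $\tau_{\mathcal H}(e_2)$ agree in $V_{\mathcal K}$, and that $\tau_{\mathcal G},\tau_{\mathcal H}$ are mono. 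Rephrasing that step to assume only that the two outer objects are term graphs makes the proof complete as stated.
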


\begin{remark}
	$\tg$ in general does not have terminal objects. 
	Since $U_{\tg}$ preserves limits, if a terminal object exists it must have the singleton as set of nodes, therefore the set of hyperedges must be empty or a singleton. 
	Hence, for a counterexample, it suffices to take a signature with two operations $a$ and $b$, both of arity $0$.
	$\tg$ is not an adhesive category, either. 
	In particular, as noted in e.g.~\cite{CastelnovoGM24}, 
	 it does not have pushouts along all monos. 
\end{remark}
	
\commentato{
\begin{remark}
	$\tg$ in general does not have terminal objects. Consider an algebraic signature in $\Set$. Since $U_{\tg}$ preserves limits, if a terminal object exists it must have the singleton as set of nodes, therefore the set of hyperedges must be empty or a singleton $\{h\}$. Hence, for a counterexaqmple it suffices to take as signature the one given by two operations $a$ and $b$, both of arity $0$; we have three term graphs with only one node $v$: $\Delta_{\tg{\Sigma}}(\{v\})$, $(l_a, !_{V_{\mathcal{G}}})\colon \mathcal{G}_a\to \mathcal{G}^{\Sigma}$ and $(l_b, !_{V_{\mathcal{G}}})\colon \mathcal{G}_b\to \mathcal{G}^{\Sigma}$.
	\begin{center}\begin{tikzpicture}
			
			\node[circle,fill=black,inner sep=0pt,minimum size=6pt,label=left:{$v$}] (V) at (5,0) {};
			\node[circle,fill=black,inner sep=0pt,minimum size=6pt,label=left:{$v$}] (U) at (3,0) {};
			\node at(5,1.25){$a$};	
			\node at(5,1.7){$h$};	
			
			\draw[->-=.5](5,1)--(V)node[pos=0.5, right,font=\fontsize{7}{0}\selectfont]{$1$};
			
			\draw[rounded corners] (4.75, 1) rectangle (5.25, 1.5) {};

			\node[circle,fill=black,inner sep=0pt,minimum size=6pt,label=left:{$v$}] (V) at (7,0) {};
			
			\node at(7,1.25){$b$};	
			\node at(7,1.7){$h$};	
			
			\draw[->-=.5](7,1)--(V)node[pos=0.5, right,font=\fontsize{7}{0}\selectfont]{$1$};
			
			\draw[rounded corners] (6.75, 1) rectangle (7.25, 1.5) {};
			
		\end{tikzpicture}
	\end{center}
	There are no morphisms in $\tg$ between the last two and from the last two to the first one, therefore none of them can be terminal.
\end{remark}

\begin{remark}
	$\tg$ is not an adhesive category. In particular it does not have pushouts along all monos. For instance, if we take the three term graphs of the previous remark, then have two arrows
	$(?_{\{h\}}, \id{\{v\}})\colon \Delta_{\tg}(\{v\})\to (l_a, !_{V_{\mathcal{G}_a}})$ and $(?_{\{h\}}, \id{\{v\}})\colon \Delta_{\tg}(\{v\})\to (l_b, !_{V_{\mathcal{G}_a}})$ which cannot be completed to a square. Indeed if $(q, !_{V_\mathcal{H}})\colon \mathcal{H}\to \mathcal{G}^\Sigma$ is another term graph with $(g_E, g_V)\colon (l_a, !_{V_{\mathcal{G}}})\to (q, !_{V_\mathcal{H}})$ and $(k_E, k_V)\colon (l_a, !_{V_{\mathcal{G}}})\to (q, !_{V_\mathcal{H}})$  such that 
	\[(g_E, g_V)\circ (?_{\{h\}}, \id{\{v\}}) = (k_E, k_V)\circ (?_{\{h\}}, \id{\{v\}})\]
	then $g_V=k_V$ and
	\[t_{\mathcal{H}}(g_E(h))=g^\star_V(t_{\mathcal{G}}(h))=g_V^\star(\delta_v)=k^\star_V(\delta_V)=k^\star_V(t_{\mathcal{G}}(h))=t_{\mathcal{H}}(k_E(h))\]
	so that we also have $g_E=k_E$, but then
	\[
	a=l_a(h)=q(g_E(h))=q(k_E(h))=l_b(h)=b\]
\end{remark}
}

\begin{definition}
	Let $(l, !_{V_{\mathcal{G}}})\colon \mathcal{G}\to \mathcal{G}^{\Sigma}$  be a term graph. A \emph{input node} is an element of $V_{\mathcal{G}}$ not in the image of $\tau_{\mathcal{G}}$.  A morphism $(f,g)$ between
	 $(l, !_{V_{\mathcal{G}}})\colon \mathcal{G}\to \mathcal{G}^{\Sigma}$ and $(l, !_{V_{\mathcal{H}}})\colon \mathcal{H}\to \mathcal{G}^{\Sigma}$ in $\tg$, is said to \emph{preserve input nodes} if $g$ sends input nodes to input nodes.
\end{definition}

\commentato{
\todo{Se questo remark sotto non serve nel pezzo sulle equivalenze possiamo toglierlo}
\begin{remark}\label{prop:image}
	Suppose that $(f,g)\colon (l, !_{V_{\mathcal{G}}})\to (l', !_{V_{\mathcal{H}}})$ preserves input nodes. Then  if $\tau_{\mathcal{H}}(h)=g(v)$ for some $v\in V_{\mathcal{G}}$ then $h$ belongs to the image of $f$. Indeed, by hypothesis $v$ must be in the image of $\tau_{\mathcal{G}}$ and so there exists $k$ such that $\tau_{\mathcal{G}}(k)=v$. But then $\tau_{\mathcal{H}}(f(k))=g(v)$ and we can conclude that $f(k)=h$.
\end{remark}
}

Preservation of input nodes characterizes regular monos in $\tg$.

\begin{proposition}\label[proposition]{lem:reg} Let $(i,j)$ be a mono between two term graphs  $(l, !_{V_{\mathcal{G}}})\colon \mathcal{G}\to \mathcal{G}^{\Sigma}$ and  $(l', !_{V_{\mathcal{H}}})\colon \mathcal{H}\to \mathcal{G}^{\Sigma}$. Then it is a regular mono if and only if it preserves the input nodes.
\end{proposition}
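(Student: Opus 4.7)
The plan is to prove both directions via the cokernel pair of $(i,j)$ in $\hyp_\Sigma$, using input-node preservation as the precise obstruction to the pushout object being a term graph.

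For the \emph{only if} direction, suppose $(i,j)$ is the equalizer in $\tg$ of a parallel pair $(f_1,g_1),(f_2,g_2)\colon \mathcal{H}\rightrightarrows \mathcal{K}$. Let $v\in V_\mathcal{G}$ be an input node and, towards a contradiction, assume $j(v)=\tau_\mathcal{H}(h)$ for some $h\in E_\mathcal{H}$. Since $(i,j)$ equalizes the pair, $g_1\circ j=g_2\circ j$; combined with the naturality of the target functions, this yields $\tau_\mathcal{K}(f_1(h))=\tau_\mathcal{K}(f_2(h))$. As $\mathcal{K}\in \tg$, $\tau_\mathcal{K}$ is a mono by \Cref{rem:mono2}, so $f_1(h)=f_2(h)$ and the universal property of the equalizer produces $h'\in E_\mathcal{G}$ with $i(h')=h$. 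Applying $\tau_\mathcal{H}$ and using the injectivity of $j$ (\Cref{prop:mono}) then gives $v=\tau_\mathcal{G}(h')$, contradicting the hypothesis that $v$ is an input node.

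For the \emph{if} direction, assume $(i,j)$ preserves input nodes. I will build the cokernel pair of $(i,j)$ in $\hyp_\Sigma$ as a pushout, exhibit it as lying in $\tg$, and then invoke that $(i,j)$ is the equalizer of its cokernel pair. The pushout exists by cocompleteness of $\hyp$ (\Cref{prop:cocomp}); writing $(Q_1,q_1),(Q_2,q_2)\colon\mathcal{H}\rightrightarrows \mathcal{K}$ for the coprojections, the sets $V_\mathcal{K}$ and $E_\mathcal{K}$ are the corresponding pushouts in $\Set$, with both coprojections injective because $i$ and $j$ are.

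The central step, where I expect the real work, is checking that $\tau_\mathcal{K}$ is a mono. A case analysis reduces this to ruling out a collision $\tau_\mathcal{K}(Q_1(h_1))=\tau_\mathcal{K}(Q_2(h_2))$ with $Q_1(h_1)\neq Q_2(h_2)$. Such a collision forces $\tau_\mathcal{H}(h_1)=\tau_\mathcal{H}(h_2)=j(v)$ for some $v\in V_\mathcal{G}$; since $j(v)$ lies in the image of $\tau_\mathcal{H}$, it is not an input node of $\mathcal{H}$, so by the preservation hypothesis $v$ is not an input node of $\mathcal{G}$, and there is $h\in E_\mathcal{G}$ with $\tau_\mathcal{G}(h)=v$. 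Then $\tau_\mathcal{H}(i(h))=j(v)$ and the injectivity of $\tau_\mathcal{H}$ yield $i(h)=h_1=h_2$, whence $Q_1(h_1)=Q_2(h_2)$ in the pushout, contradicting the assumed distinctness. Hence $\mathcal{K}\in \tg$ and the parallel pair $(Q_1,q_1),(Q_2,q_2)$ lives in $\tg$. A direct componentwise computation in $\Set$ shows that the equalizer in $\hyp_\Sigma$ of the two coprojections of a pushout along a mono recovers the original mono, namely $(i,j)$; since $I_\Sigma$ creates equalizers by \Cref{prop:tlim}, this is also the equalizer in $\tg$, exhibiting $(i,j)$ as a regular mono.
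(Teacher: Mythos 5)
Your proof is correct, and it takes the route that the paper itself relies on for this statement (the proof is not spelled out in the paper but deferred to~\cite{CastelnovoGM24}): characterise the regular mono via its cokernel pair, with the input-node hypothesis entering exactly in the mixed-copy collision case of the pushout, in parallel with \Cref{prop:push}. The only step worth tightening is the element-level appeal to the universal property of the equalizer in the ``only if'' direction: it is legitimate because equalizers in $\tg$ are created by $I_\Sigma$ and computed componentwise in $\Set$ (\Cref{prop:tlim}, \Cref{prop:fact}), so the edge component of the image of $i$ is precisely $\{h\in E_{\mathcal{H}} \mid f_1(h)=f_2(h)\}$, which yields the required $h'$.
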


This characterization, in turn, provides us with the following result \cite{CastelnovoGM24,castelnovo2023thesis}. 

\begin{lemma}\label[lemma]{prop:push} Consider three term graphs $(l_0, !_{V_\mathcal{G}})\colon \mathcal{G}\to \mathcal{G}^{\Sigma}$, $(l_1, !_{V_\mathcal{H}})\colon \mathcal{H}\to \mathcal{G}^{\Sigma}$ and $(l_2, !_{V_\mathcal{K}})\colon \mathcal{K}\to \mathcal{G}^{\Sigma}$. Given $(f_1, g_1)\colon (l_0, !_{V_\mathcal{G}})\to (l_1, !_{V_\mathcal{H}})$, $(f_2, g_2)\colon (l_0, !_{V_\mathcal{G}})\to (l_2, !_{V_\mathcal{K}})$, if $(f_1, g_1)$ is a regular mono, then its pushout  $(p, !_{V_{\mathcal{P}}})\colon \mathcal{P}\to \mathcal{G}^{\Sigma}$ in $\hyp_{\Sigma}$ along $(f_2, g_2)$ is a term graph.
\end{lemma}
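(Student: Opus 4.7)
The plan is to show that the pushout object $\mathcal{P}$ satisfies the defining condition of a term graph, namely that $\tau_\mathcal{P}$ is a monomorphism. By Remark~\ref{rem:label}, labels from $\mathcal{G}^\Sigma$ force target words to have length $1$, so $t_\mathcal{P} = v_1 \circ \tau_\mathcal{P}$; thus $t_\mathcal{P}$ being mono is equivalent to $\tau_\mathcal{P}$ being injective. First I would recall that $\hyp_\Sigma = \hyp/\mathcal{G}^\Sigma$ is a slice of (essentially) a presheaf topos, so pushouts are computed componentwise: $V_\mathcal{P}$ and $E_\mathcal{P}$ are pushouts in $\Set$ of the node and edge components respectively, with source, target, and label functions uniquely induced. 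From Proposition~\ref{lem:reg} the hypothesis gives that $(f_1,g_1)$ preserves input nodes, and from Proposition~\ref{prop:mono} both $f_1$ and $g_1$ are injections.

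Next I would analyze the pushout $V_\mathcal{P}$ of sets along the injection $g_1$. A standard computation of the equivalence relation on $V_\mathcal{H}\sqcup V_\mathcal{K}$ generated by $g_1(v)\sim g_2(v)$ shows, exploiting the injectivity of $g_1$, that (a) two distinct elements of $V_\mathcal{K}$ are never identified; (b) two elements of $V_\mathcal{H}$ are identified iff both are of the form $g_1(v_i)$ with $g_2(v_1)=g_2(v_2)$; (c) an element of $V_\mathcal{H}$ is identified with an element of $V_\mathcal{K}$ iff they equal $g_1(v_0)$ and $g_2(v_0)$ for a common $v_0 \in V_\mathcal{G}$.

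With this description in hand, I would do a case analysis on representatives $h_1, h_2$ of two equivalence classes in $E_\mathcal{P}$ with $\tau_\mathcal{P}([h_1]) = \tau_\mathcal{P}([h_2])$. If both lie in $E_\mathcal{K}$, use (a) and injectivity of $\tau_\mathcal{K}$. If both lie in $E_\mathcal{H}$, use (b): either $\tau_\mathcal{H}(h_1) = \tau_\mathcal{H}(h_2)$ and $\tau_\mathcal{H}$ injective closes the case, or we get $v_1,v_2 \in V_\mathcal{G}$ with $\tau_\mathcal{H}(h_i) = g_1(v_i)$. Here preservation of input nodes is the crucial ingredient: since $g_1(v_i)$ lies in the image of $\tau_\mathcal{H}$, $v_i$ cannot be an input node, hence $v_i = \tau_\mathcal{G}(e_i)$; injectivity of $\tau_\mathcal{H}$ then forces $h_i = f_1(e_i)$, while $g_2(v_1)=g_2(v_2)$ together with injectivity of $\tau_\mathcal{K}$ yields $f_2(e_1) = f_2(e_2)$, so $[h_1] = [f_1(e_1)] = [f_2(e_1)] = [f_2(e_2)] = [f_1(e_2)] = [h_2]$. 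The mixed case is handled symmetrically using (c) and the same input-node argument to produce a common $e_0 \in E_\mathcal{G}$ with $h_1 = f_1(e_0)$ and $h_2 = f_2(e_0)$.

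The main obstacle is the careful bookkeeping in the node-pushout analysis, particularly verifying claims (a)--(c) honestly: one must check that the zig-zag chains generating the pushout equivalence collapse because injectivity of $g_1$ prevents re-entering the $V_\mathcal{H}$-side via a different preimage. Once this is in place the rest is a routine diagram chase in which preservation of input nodes does the real work of lifting identifications in $\mathcal{P}$ to identifications through $E_\mathcal{G}$, which is precisely what is needed for $\tau_\mathcal{P}$ to be monic.
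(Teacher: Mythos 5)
Your proof is correct. Since the paper does not spell out an argument for this lemma but simply defers to the cited work of Castelnovo, Gadducci and Miculan, what you have written is a self-contained elementary verification, and it is built exactly from the ingredients the paper already has on hand: \Cref{rem:label,rem:mono2} reduce the claim to injectivity of the induced $\tau_{\mathcal{P}}$, \Cref{lem:reg} supplies preservation of input nodes, and your claims (a)--(c) about the node pushout are precisely the content of \Cref{lem:po_set} (a pushout along an injection decomposes the result as the coproduct of $V_{\mathcal{K}}$ with the complement of the image of $g_1$), so the ``zig-zag'' bookkeeping you flag as the main obstacle is already settled by that lemma rather than needing a bespoke analysis. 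The step where an identification $\tau_{\mathcal{H}}(h)=g_1(v)$ is lifted to an edge of $\mathcal{G}$ via input-node preservation is exactly the observation the authors use (it appears verbatim as a commented-out remark in the source), and your case analysis on representatives in $E_{\mathcal{H}}$ and $E_{\mathcal{K}}$, using injectivity of $\tau_{\mathcal{H}}$ and $\tau_{\mathcal{K}}$ and the compatibility $\tau_{\mathcal{H}}\circ f_1=g_1\circ\tau_{\mathcal{G}}$, $\tau_{\mathcal{K}}\circ f_2=g_2\circ\tau_{\mathcal{G}}$, closes all cases correctly. In short: same route as the cited proof, carried out in full; no gaps.
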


\Cref{thm:slice-functors,prop:mono}, \Cref{lem:reg} and \Cref{prop:push} allow us to recover the following result, previously proved by direct computation in \cite[Thm.~4.2]{CorradiniG05} (see also \cite[Cor.~5.15]{CastelnovoGM24} for the details of the argument presented here).
\begin{corollary}\label{cor:term}
	The category $\tg$ is quasiadhesive.
\end{corollary}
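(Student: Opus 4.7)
The plan is to apply \Cref{thm:slice-functors}(4) to the inclusion $I_\Sigma : \tg \hookrightarrow \hyp_\Sigma$, aiming at the class $\mathcal{N} = \reg(\tg)$: by \Cref{rem:salva}, strict $\reg(\tg)$-adhesivity is exactly quasiadhesivity. The ambient hypothesis is supplied by \Cref{prop:mono}(2), which states that $\hyp_\Sigma$ is adhesive, and hence strictly $\mon(\hyp_\Sigma)$-adhesive.

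First I would set up the application. Fullness of $\tg$ in $\hyp_\Sigma$ is by \Cref{def:tg}; closure of $\tg$ under pullbacks in $\hyp_\Sigma$ is \Cref{prop:tlim}(2); and closure under the pushouts of interest is the content of \Cref{prop:push}, since by \Cref{lem:reg} the regular monos of $\tg$ are exactly the input-node-preserving ones. Note that, rather than using the whole of $\mon(\hyp_\Sigma)$ -- which will not do, as $\tg$ fails to be closed under all such pushouts, as remarked after \Cref{prop:tlim} -- I would work with the sub-class $\mathcal{M}$ of $\mon(\hyp_\Sigma)$ whose restriction to arrows between term graphs lies in $\reg(\tg)$.

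The second, more technical step is checking that $\reg(\tg)$ itself is a class of arrows of $\tg$ satisfying the hypotheses of the clause: it must contain all isomorphisms, be closed under composition and decomposition, and be stable under pullbacks and pushouts in $\tg$. Containment of isos and closure under composition are immediate from the characterisation in \Cref{lem:reg}; closure under decomposition follows since, given composable $k$ and $h \circ k$ preserving input nodes, any hyperedge of the intermediate term graph whose target is the image of an input node of the source must already be witnessed in the source by \Cref{rem:mono2} together with the preservation of input nodes by $h \circ k$. Pullback-stability is inherited from $\hyp_\Sigma$ via \Cref{prop:tlim}(2), while pushout-stability is extracted from the explicit description of the pushout in $\hyp_\Sigma$ supplied by \Cref{prop:push}.

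The main obstacle is precisely this last point: one must show that whenever $(f_1,g_1)$ is a regular mono of $\tg$ and $(f_2,g_2)$ is an arbitrary morphism sharing its source in $\tg$, the leg of the pushout in $\hyp_\Sigma$ opposite to $(f_1,g_1)$ is again input-node-preserving, equivalently a regular mono of $\tg$ by \Cref{lem:reg}. This is a set-theoretic analysis of the pushout construction at the level of nodes and hyperedges, tracking how an input node of the new object can only arise from an input node of one of the original vertices. Once this verification is in place, \Cref{thm:slice-functors}(4) immediately yields the strict $\reg(\tg)$-adhesivity of $\tg$, namely its quasiadhesivity.
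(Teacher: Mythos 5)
Your route is the paper's own: derive quasiadhesivity, i.e.\ strict $\reg(\tg)$-adhesivity (\Cref{rem:salva}), by applying \Cref{thm:slice-functors}(4) to the full subcategory $\tg\subseteq \hyp_\Sigma$, with \Cref{prop:mono} giving adhesivity of the ambient category, \Cref{prop:tlim} closure under pullbacks, \Cref{prop:push} closure under the relevant pushouts, and \Cref{lem:reg} identifying $\reg(\tg)$ with the input-node-preserving monos; this is exactly the combination the paper cites, with the routine verifications deferred to \cite{CastelnovoGM24}.

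One detail of your patch would fail as stated. To cope with the fact that $\tg$ is not closed under pushouts along all of $\mon(\hyp_\Sigma)$, you shrink the ambient class to the $\mathcal{M}$ of monos whose restriction to arrows between term graphs lies in $\reg(\tg)$. But invoking \Cref{thm:slice-functors}(4) presupposes that $\hyp_\Sigma$ is strictly $\mathcal{M}$-adhesive, and the definition requires $\mathcal{M}$ to be closed under decomposition; your class is not: take a non-input-node-preserving mono $f\colon \mathcal{G}\to\mathcal{H}$ between term graphs and a mono $g$ from $\mathcal{H}$ into a labelled hypergraph that is not a term graph; then $g$ and $g\circ f$ belong to $\mathcal{M}$ vacuously, while $f$ does not. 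The intended reading, and the one used in \cite{CastelnovoGM24}, is that one keeps $\mathcal{M}=\mon(\hyp_\Sigma)$ and only needs $\tg$ to be closed under $\mathcal{N}$-pushouts for $\mathcal{N}=\reg(\tg)$ --- precisely \Cref{prop:push} --- since in the Van Kampen cubes the top-left arrow is a pullback of an arrow in $\mathcal{N}$ and hence again in $\mathcal{N}$, so no shrinking of $\mathcal{M}$ is necessary. Two smaller remarks: for decomposition of $\reg(\tg)$ only the hypothesis on the composite is needed (if $k(v)$ were the target of a hyperedge $e$, then $h(k(v))$ would be the target of $h(e)$, contradicting that $h\circ k$ preserves input nodes); and the pushout-stability of $\reg(\tg)$, which you leave as ``a set-theoretic analysis'', is the genuinely substantive check, since \Cref{prop:push} only asserts that the pushout object is a term graph, not that the leg opposite the regular mono again preserves input nodes --- this is the computation carried out in the works the corollary refers to.
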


\section{Adding equivalences to hypergraphical structures}
\label{hypereq}
The use of hypergraphs equipped with an equivalence relation over their nodes has been argued as a convenient way to express concurrency in the DPO approach to rewriting~\cite{concur2006}.
This section presents the framework by means of adhesive categories, including also its version for term graphs, as a stepping stone towards an analogous characterisation for e-graphs.

\subsection{Hypergraphs with equivalence}

Let us start with the case of general hypergraphs. These were introduced in~\cite{concur2006}, even if no general consideration about their structure as a category was proved, and adhesivity, 
which is the main focus here, was yet to be presented to the world.

\begin{definition}
	A \emph{hypergraph with equivalence} $\mathcal{G} = (E_\mathcal{G}, V_{\mathcal{G}}, Q_\mathcal{G}, s_\mathcal{G}, t_\mathcal{G}, q_\mathcal{G})$ is a 6-tuple such that $\mathcal{G} = (E_\mathcal{G}, V_{\mathcal{G}}, s_\mathcal{G}, t_\mathcal{G})$ is a hypergraph, $Q_\mathcal{G}$ is a set and $q_{\mathcal{G}}: V_{\mathcal{G}}\eto Q_{\mathcal{G}}$ is a surjection called \emph{quotient map}. 
	A morphism $h\colon \mathcal{G\to H}$ is a triple $(h_E, h_V, h_Q)$ such that the following diagrams commute
	\[\xymatrix@R=15pt{
		{E_\mathcal{G}}\ar[r]^{s_\mathcal{G}}\ar[d]_{h_E} & {V_{\mathcal{G}}^\star}\ar[d]^{h_V^\star} & {E_\mathcal{G}}\ar[r]^{t_\mathcal{G}}\ar[d]_{h_E} & {V_{\mathcal{G}}^\star}\ar[d]^{h_V^\star} & {V_\mathcal{G}}\ar@{>>}[r]^{q_\mathcal{G}}\ar[d]_{h_V} & {Q_{\mathcal{G}}} \ar[d]^{h_Q} \\
		{E_\mathcal{H}}\ar[r]_{s_\mathcal{H}} & {V_{\mathcal{H}}^\star}	& {E_\mathcal{H}}\ar[r]_{t_\mathcal{H}} & {V_{\mathcal{H}}^\star}& {V_\mathcal{H}}\ar@{>>}[r]_{q_\mathcal{H}} & {Q_\mathcal{H}}
	}\]
	The category of hypergraphs with equivalence and their morphisms is denoted $\EqHyp$.
	
\end{definition}

\begin{remark}
	Notice that in $\Set$ the classes of surjections, epis and regular epis coincide.
\end{remark}

\begin{remark}\label{rem:eqhyp_morphs}
	Morphisms of hypergraphs with equivalences are uniquely determined by the first two components. That is, if $h_1 = (h_E, h_V, f)$ and $h_2 = (h_E, h_V, g)$ are two morphisms $\mathcal{G} \rightrightarrows \mathcal{H}$, then we have
	$
	f \circ q_\mathcal{G} = q_\mathcal{H}\circ h_V =g\circ q_\mathcal{G}.
	$
	Since $q_\mathcal{G}$ is epi, we obtain $f = g$.
\end{remark}

Forgetting the quotient part yields a functor $T\colon \EqHyp \to \hyp$ sending a hypergraph with equivalence $(E_\mathcal{G}, V_{\mathcal{G}}, C_\mathcal{G}, s_\mathcal{G}, t_\mathcal{G}, q_\mathcal{G})$ to $(E_{\mathcal{G}}, V_{\mathcal{G}}, s_\mathcal{G}, t_{\mathcal{G}})$.   We now explore some of its properties to deduce information on the structure of $\EqHyp$.  
A proof is in \Cref{proof:forghyp}.

\begin{restatable}{proposition}{fhyp}\label{prop:forghyp}  Consider the forgetful functor $T\colon \EqHyp \to \hyp$. Then the following hold
	\begin{enumerate}
		\item$T$ is faithful;
		\item $T$ has a left adjoint;
		\item $T$ has a right adjoint.
	\end{enumerate}
\end{restatable}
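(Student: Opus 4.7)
The three claims are largely independent and I would handle them in turn.

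For faithfulness, the work is already done by Remark~\ref{rem:eqhyp_morphs}: two morphisms $h_1,h_2 \colon \mathcal{G}\rightrightarrows \mathcal{H}$ in $\EqHyp$ with $T(h_1)=T(h_2)$ share their first two components, so the cited remark forces $h_1=h_2$.

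For the \emph{left adjoint} I would send a hypergraph $\mathcal{G}=(E_\mathcal{G},V_\mathcal{G},s_\mathcal{G},t_\mathcal{G})$ to the same hypergraph equipped with the finest (identity) equivalence, namely $L\mathcal{G}:=(E_\mathcal{G},V_\mathcal{G},V_\mathcal{G},s_\mathcal{G},t_\mathcal{G},\id{V_\mathcal{G}})$, and define $L$ on morphisms in the obvious way. The unit $\eta_\mathcal{G}\colon \mathcal{G}\to TL\mathcal{G}$ is the identity. Given any $(h_E,h_V)\colon \mathcal{G}\to T\mathcal{H}$ in $\hyp$, an extension to a morphism $L\mathcal{G}\to \mathcal{H}$ in $\EqHyp$ is forced: the compatibility square with the quotient maps yields $h_Q=q_\mathcal{H}\circ h_V$, which is well-defined because $q_{L\mathcal{G}}=\id{V_\mathcal{G}}$, while Remark~\ref{rem:eqhyp_morphs} supplies uniqueness.

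For the \emph{right adjoint} I would send $\mathcal{H}$ to its underlying hypergraph endowed with the coarsest (total) equivalence, collapsing $V_\mathcal{H}$ to a single class. Concretely, take $Q_{R\mathcal{H}}$ to be the image of $!_{V_\mathcal{H}}\colon V_\mathcal{H}\to 1$, so $Q_{R\mathcal{H}}$ is $1$ or $\emptyset$ according to whether $V_\mathcal{H}$ is nonempty, and let $q_{R\mathcal{H}}$ be the corresponding surjection. The counit $\epsilon_\mathcal{H}\colon TR\mathcal{H}\to \mathcal{H}$ is the identity. Given $(h_E,h_V)\colon T\mathcal{G}\to \mathcal{H}$, Remark~\ref{rem:eqhyp_morphs} again gives uniqueness of any lift $(h_E,h_V,h_Q)$; for existence, the composite $q_{R\mathcal{H}}\circ h_V\colon V_\mathcal{G}\to Q_{R\mathcal{H}}$ is either constant with value the unique point of $1$ or vacuous, and in either case factors through the epi $q_\mathcal{G}\colon V_\mathcal{G}\eto Q_\mathcal{G}$ via a uniquely determined $h_Q$.

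The plan is essentially forced by the shape of $\EqHyp$, and the main obstacle is purely bookkeeping: one must be careful in defining $R$ when $V_\mathcal{H}$ is empty, since then $Q_{R\mathcal{H}}=1$ would break the surjectivity requirement on $q_{R\mathcal{H}}$; choosing $Q_{R\mathcal{H}}=\emptyset$ in that degenerate case makes $R$ functorial. Functoriality of both $L$ and $R$ on morphisms, as well as the naturality of unit and counit, are then routine verifications using the uniqueness clause of Remark~\ref{rem:eqhyp_morphs}.
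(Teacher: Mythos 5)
Your proof is correct and takes essentially the same route as the paper: faithfulness from \Cref{rem:eqhyp_morphs}, the left adjoint $L(\mathcal{H})=(E_\mathcal{H},V_\mathcal{H},V_\mathcal{H},s_\mathcal{H},t_\mathcal{H},\id{V_\mathcal{H}})$ with identity unit and the lift $q_{\mathcal{H}}\circ h_V$, and the right adjoint given by the coarsest quotient with identity counit, uniqueness coming from the same remark. Your extra care in the right adjoint for the case $V_\mathcal{H}=\emptyset$ (taking the quotient set to be the image of $!_{V_\mathcal{H}}$ rather than always $1$) is in fact a small refinement over the paper, whose $R(\mathcal{H})=(E_\mathcal{H},V_\mathcal{H},1,s_\mathcal{H},t_\mathcal{H},!_{V_\mathcal{H}})$ violates the surjectivity requirement on the quotient map in that degenerate case.
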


\begin{corollary}\label{cor:limcolim}
	The functor $T$ preserves limits and colimits.
\end{corollary}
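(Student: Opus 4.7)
The plan is to invoke the standard categorical fact that a functor which is simultaneously a left adjoint and a right adjoint must preserve all limits and all colimits, and then read off the conclusion directly from \Cref{prop:forghyp}.

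More precisely, by item (2) of \Cref{prop:forghyp}, $T$ has a left adjoint; this makes $T$ itself the right adjoint of an adjunction. Since right adjoints preserve all limits that exist in their domain, $T$ preserves limits. Symmetrically, by item (3) of \Cref{prop:forghyp}, $T$ has a right adjoint, so $T$ is itself a left adjoint, and left adjoints preserve all colimits. Combining both observations yields the corollary. There is no real obstacle here: once the two adjunctions of \Cref{prop:forghyp} are in place, the argument is a one-line application of the classical preservation theorems (as in \cite{mac2013categories}).

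Note that item (1) of \Cref{prop:forghyp} (faithfulness) is not needed for this corollary; it plays a role elsewhere in the paper but is irrelevant to the preservation of limits and colimits.
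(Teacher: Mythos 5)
Your proposal is correct and is exactly the reasoning the paper intends: \Cref{cor:limcolim} is drawn from items (2) and (3) of \Cref{prop:forghyp} via the classical facts that right adjoints preserve limits and left adjoints preserve colimits. Your observation that faithfulness (item (1)) is not needed here is also accurate.
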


From the previous results we get the following characterization of monos in $\EqHyp$.

\begin{corollary}\label{cor:mono1}
	An arrow $(h_E, h_V, h_Q): \mathcal{G \to H}$ in $\EqHyp$ is a mono if and only if $(h_E, h_V)$ is a mono in $\hyp$.
\end{corollary}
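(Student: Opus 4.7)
The plan is to use the two main facts already established: the forgetful functor $T\colon \EqHyp\to \hyp$ preserves limits (\Cref{cor:limcolim}) and morphisms in $\EqHyp$ are completely determined by their first two components (\Cref{rem:eqhyp_morphs}).

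For the ``only if'' direction, I would simply invoke \Cref{cor:limcolim}. Since $T$ preserves limits, it preserves pullbacks, hence monos. Therefore if $(h_E, h_V, h_Q)$ is a mono in $\EqHyp$, its image $T(h_E, h_V, h_Q) = (h_E, h_V)$ is a mono in $\hyp$.

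For the ``if'' direction, suppose $(h_E, h_V)$ is a mono in $\hyp$, and let $(f_E, f_V, f_Q), (g_E, g_V, g_Q)\colon \mathcal{K}\rightrightarrows \mathcal{G}$ be two parallel arrows in $\EqHyp$ satisfying
\[(h_E, h_V, h_Q)\circ (f_E, f_V, f_Q) = (h_E, h_V, h_Q)\circ (g_E, g_V, g_Q).\]
Applying $T$, which is functorial, yields $(h_E, h_V)\circ (f_E, f_V) = (h_E, h_V)\circ (g_E, g_V)$ in $\hyp$. Since $(h_E, h_V)$ is assumed to be a mono in $\hyp$, we deduce $f_E = g_E$ and $f_V = g_V$. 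By \Cref{rem:eqhyp_morphs}, a morphism of hypergraphs with equivalence is uniquely determined by its first two components, so $(f_E, f_V, f_Q) = (g_E, g_V, g_Q)$, as required.

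No step here looks like a real obstacle: the backward direction is essentially a direct consequence of \Cref{rem:eqhyp_morphs}, and the forward direction is immediate from limit preservation (alternatively, faithfulness of $T$ together with the observation that the class of monos is reflected by faithful functors would also suffice, but the limit-preservation route is cleaner given what is already in hand).
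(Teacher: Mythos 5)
Your proof is correct and follows essentially the same route the paper intends: the corollary is drawn from \Cref{prop:forghyp} and \Cref{cor:limcolim}, i.e.\ limit preservation by $T$ gives the ``only if'' direction, while faithfulness of $T$ (which the paper itself derives from \Cref{rem:eqhyp_morphs}, exactly the fact you invoke) gives the ``if'' direction. No gaps.
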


Now, we can consider the forgetful functor $U_{\eq}\colon \EqHyp\to \Set$ obtained by composing $T$ and $U_{\hyp}$.  By \Cref{cor:left} and the second point of \Cref{prop:forghyp} we get the following.

\begin{corollary}\label{cor:ladj}
	$U_{\eq}$ has a left adjoint $\Delta_{\eq}\colon \Set \to \EqHyp$.
\end{corollary}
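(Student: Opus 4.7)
The plan is to observe that this is an immediate consequence of the composition of adjoints. By construction, $U_{\eq}$ factors as $U_{\hyp}\circ T\colon \EqHyp\to \hyp\to \Set$. From \Cref{cor:left} we have $\Delta_{\hyp}\dashv U_{\hyp}$, and from item (2) of \Cref{prop:forghyp} we have a left adjoint, call it $L_T$, to $T$. Then I would simply define
\[\Delta_{\eq}:=L_T\circ \Delta_{\hyp}\colon \Set\to \EqHyp\]
and invoke the standard fact that left adjoints compose to yield a left adjoint of the composite on the right. Concretely, for $X\in \Set$ and $\mathcal{G}\in \EqHyp$ one has natural bijections
\[\Hom_{\EqHyp}(L_T(\Delta_{\hyp}(X)),\mathcal{G}) \cong \Hom_{\hyp}(\Delta_{\hyp}(X),T(\mathcal{G})) \cong \Hom_{\Set}(X,U_{\hyp}(T(\mathcal{G})))=\Hom_{\Set}(X,U_{\eq}(\mathcal{G}))\]
exhibiting $\Delta_{\eq}\dashv U_{\eq}$.

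There is no real obstacle here: the statement is purely formal once the two auxiliary adjunctions are available, both of which have already been established. If desired, one could also make $\Delta_{\eq}(X)$ explicit by unpacking the constructions: $\Delta_{\hyp}(X)$ is the edgeless hypergraph on $X$, and the left adjoint to $T$ equips such a hypergraph with the discrete equivalence (i.e.\ $q_{\Delta_{\eq}(X)}=\id{X}$), which gives the expected ``free'' hypergraph with equivalence on a set of nodes.
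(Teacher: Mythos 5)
Your proposal is correct and matches the paper's own argument: the paper likewise obtains $\Delta_{\eq}$ by composing the left adjoint of $T$ (second point of \Cref{prop:forghyp}) with $\Delta_{\hyp}$ (from \Cref{cor:left}), since $U_{\eq}=U_{\hyp}\circ T$ and left adjoints compose. Your explicit description of $\Delta_{\eq}(X)$ as the edgeless hypergraph on $X$ with identity quotient map is also exactly what the constructions in the paper yield.
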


Notice that there is another functor $K: \EqHyp \to \Set$ sending $(E, V, Q, s, t, q)$ to $Q$, and a morphism $(h_E, h_V, h_Q)$ to $h_Q$. We  exploit it to compute limits and colimits in $\EqHyp$. A full proof of the following lemma can be found in \Cref{proof:comp}.

\noindent
\parbox{11.4cm}{
\begin{restatable}{lemma}{comp}\label{prop:eqhyp_complete}
Consider a diagram $F\colon \D \to \EqHyp$ and let $(E_D, V_D, Q_D, s_D, t_D, q_D)$ be the image of an object $D$. Then the following hold
\begin{enumerate}
		\item $F$ has a colimit, which is preserved by $K$;
	\item consider a cone $(L, \{l_D\}_{D\in \D})$ limiting  for $K \circ F$ and let $((E, V), \{(\pi^D_E, \pi^D_V)\}_{D\in \D})$ be one for $T\circ F$, then $F$ has a limit $(E, V, Q, s, t, q)$ and there is a mono $m\colon Q\mto L$ such that the diagram on the right commutes for every $D\in \D$.
\end{enumerate}
\end{restatable}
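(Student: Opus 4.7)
My plan is to lean on \Cref{prop:forghyp}: since $T$ has both adjoints, it preserves all (co)limits, so the first two components of any (co)limit in $\EqHyp$ are forced to coincide with those of $T\circ F$ in $\hyp$ (which exist by \Cref{prop:cocomp}). The work is then concentrated in constructing the quotient component $Q$ together with $q$, and in verifying universality. Here \Cref{rem:eqhyp_morphs} is decisive: the third component of a morphism is uniquely determined by the first two, which reduces every uniqueness obligation in $\EqHyp$ to the corresponding one in $\hyp$.

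For point 1, I would construct the colimit componentwise: take $(E, V, s, t)$ as the colimit of $T\circ F$ in $\hyp$, take $Q$ as the colimit of $K\circ F$ in $\Set$, and let $q\colon V\to Q$ be the unique arrow induced by the cocone $\{q_D\circ \iota^D_V\}_{D\in\D}$, where the $\iota^D_V$ denote the colimit injections into $V$. The key check is that $q$ is surjective, which is immediate: an element of $Q$ comes from some $y_D\in Q_D$, surjectivity of $q_D$ lifts it to $x_D\in V_D$, and then $q(\iota^D_V(x_D))=y_D$. Preservation by $K$ is built into the construction, and the universal property follows from the universal properties of the two component colimits together with \Cref{rem:eqhyp_morphs}.

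For point 2, take $(E, V)$ as the limit of $T\circ F$ in $\hyp$ and $L$ as the limit of $K\circ F$ in $\Set$. The family $\{q_D\circ \pi^D_V\}_{D\in\D}$ is a cone on $K\circ F$, hence induces a unique $\varphi\colon V\to L$, which I factor through its image as $V\xrightarrow{q} Q\xrightarrow{m} L$, with $m$ a mono. By construction $l_D\circ m\circ q = q_D\circ \pi^D_V$, which is precisely the commutativity required by the statement. The data $(E, V, Q, s, t, q)$ together with projections $(\pi^D_E,\pi^D_V, l_D\circ m)$ then form a cone in $\EqHyp$.

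The delicate step is verifying universality of this limit. Given a competing cone $((E', V', Q', s', t', q'), \{(h^D_E, h^D_V, h^D_Q)\})$, the limit of $T\circ F$ produces a unique $(h_E, h_V)\colon (E',V')\to(E,V)$, and the limit of $K\circ F$ produces a unique $\bar h\colon Q'\to L$. The crucial observation is that the two maps $V'\to L$ given by $\varphi\circ h_V$ and $\bar h\circ q'$ both compute the tuple $(q_D\circ h^D_V)_D$, hence coincide; since $q'$ is surjective, the image of $\bar h$ lies in the image of $\varphi$, which is exactly $Q$. This yields the required factorization $h_Q\colon Q'\to Q$, unique because $m$ is mono, and the morphism axioms for $(h_E, h_V, h_Q)$ reduce to equations already established. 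Uniqueness of the mediating morphism in $\EqHyp$ finally follows from uniqueness in $\hyp$ combined with \Cref{rem:eqhyp_morphs}.
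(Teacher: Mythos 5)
Your proof is correct and follows essentially the same route as the paper's: (co)limits are assembled componentwise from $T\circ F$ in $\hyp$ and $K\circ F$ in $\Set$, the quotient part of the limit is obtained by factoring the induced map $V\to L$ as a surjection followed by a mono, and all uniqueness/compatibility obligations are reduced to the component categories via \Cref{rem:eqhyp_morphs}. The only differences are cosmetic: you check surjectivity of $q$ and the diagonal factorizations elementwise in $\Set$, where the paper instead cites \Cref{lemma:nat_trans_reg_epi_canonical_arrow_reg_epi} and the (regular epi, mono) factorization property of \Cref{rem:fact}; also, the leg of your colimit cocone should read $\iota^D_Q\circ q_D$ rather than $q_D\circ\iota^D_V$.
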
}\hfill 
\parbox{3cm}{\xymatrix{V \ar@{>>}[d]_{q} \ar[r]^{\pi^D_V}& V_D \ar@{>>}[dd]^{q_D}\\Q \ar@{>.>}[d]_{m}&\\ L \ar[r]_{l_d} & Q_D}}   

\begin{restatable}{corollary}{mn}\label{cor:mono2}
	An arrow $(h_E, h_V, h_Q): \mathcal{G\to H}$ in $\EqHyp$ is a regular mono if and only if all its components are injective functions.
\end{restatable}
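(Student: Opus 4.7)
The plan is to prove both implications by exploiting the description of limits in $\EqHyp$ provided by Lemma~\ref{prop:eqhyp_complete}, together with the adhesivity of $\hyp$ and $\Set$ (Corollary~\ref{prop:hypadh} and Example~\ref{ex:adhesive}), which by Proposition~\ref{prop:regmono} guarantees that every mono in either category is a regular mono, and hence the equalizer of its cokernel pair.

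For the \emph{only if} direction, a regular mono is in particular a mono, so Corollaries~\ref{cor:mono1} and~\ref{cor:monhyper} immediately yield that $h_E$ and $h_V$ are injective. For $h_Q$, I would express the arrow as the equalizer of some pair $a,b\colon \mathcal{H}\rightrightarrows \mathcal{K}$ and apply Lemma~\ref{prop:eqhyp_complete} to extract the $Q$-component of this limit: by the lemma it fits into a factorisation $Q_{\mathcal{G}} \cong Q \mto L \hookrightarrow Q_{\mathcal{H}}$, where $L$ is the set-theoretic equaliser of $a_Q$ and $b_Q$ and the middle arrow is the mono $m$ furnished by the lemma. Thus $h_Q$ is, up to isomorphism, a composite of two injections, and therefore itself injective.

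For the \emph{if} direction, I would show that $(h_E, h_V, h_Q)$ is the equaliser of its own cokernel pair $a, b\colon \mathcal{H}\rightrightarrows \mathcal{K}$ in $\EqHyp$. By Corollary~\ref{cor:limcolim} and item~(1) of Lemma~\ref{prop:eqhyp_complete}, both $T$ and $K$ preserve this colimit, so $(a_E, a_V), (b_E, b_V)$ and $a_Q, b_Q$ are respectively the cokernel pairs of $(h_E, h_V)$ in $\hyp$ and of $h_Q$ in $\Set$; the injectivity hypothesis combined with Proposition~\ref{prop:regmono} then makes $(h_E, h_V)$ and $h_Q$ the equalisers of these cokernel pairs in their respective categories. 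Feeding this back into Lemma~\ref{prop:eqhyp_complete} applied to the equaliser of $a, b$ in $\EqHyp$, the $(E,V)$-part is identified with $(E_{\mathcal{G}}, V_{\mathcal{G}})$ via $(h_E, h_V)$ and $L$ with $Q_{\mathcal{G}}$ via $h_Q$. The main obstacle is the last identification of the $Q$-component: one must verify that the commuting diagram in Lemma~\ref{prop:eqhyp_complete}, combined with $q_{\mathcal{H}}\circ h_V = h_Q\circ q_{\mathcal{G}}$, forces $m\circ q = q_{\mathcal{G}}$, so that the mono $m$ is also surjective and the limit collapses onto $\mathcal{G}$ with canonical map $(h_E, h_V, h_Q)$.
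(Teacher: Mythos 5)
Your proposal is correct and takes essentially the same route as the paper: the forward direction extracts the mono $m$ from the second point of \Cref{prop:eqhyp_complete} to factor $h_Q$ through the set-theoretic equalizer of the $Q$-components, and the backward direction forms the cokernel pair (componentwise, via preservation of colimits by $T$ and $K$), uses that monos in $\hyp$ and $\Set$ are equalizers of their cokernel pairs, and then identifies $(h_E,h_V,h_Q)$ with the equalizer in $\EqHyp$. The only steps you leave implicit---cancelling the epi $q_{\mathcal{G}}$ to conclude $h_Q=e\circ m$ in the forward direction, and cancelling the mono $h_Q$ to force $m\circ q=q_{\mathcal{G}}$ in the backward one---are exactly the computations the paper spells out, and they go through.
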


\commentato{
\todo{Ripensandoci il funtore quoziente non serve: l'unico punto in cui si usa nella tesi è che il quoziente di un colimite è il colimite dei quozienti, ma questo segue dal punto $1$ di \Cref{prop:eqhyp_complete}. Inoltre la dimostrazione della creazione è sbagliata (e probabilmente quella è una cosa falsa), io cancellerei}

We can now build another functor $\EqHyp \to \hyp$: the idea is that from an object of $\EqHyp$ one can build a hypergraph taking as set of nodes directly the codomain of the quotient map.  

\begin{definition}
	We define the \emph{quotient functor} $\quo :\EqHyp\to \hyp $ as the one sending $(E, V, Q, s, t, q)$ to $(E, Q, q^{\star}\circ s, q^{\star}\circ t)$ and an arrow $(h_E, h_V, h_Q)$ to $(h_E, h_Q)$.
\end{definition}

\commentato{
\begin{remark}
	The action of the functor on a morphism of hypergraphs with equivalences gives a morphism of hypergraphs,
	in fact $q^{\star}_\mathcal{H} \circ s_\mathcal{H} \circ h_E = q^{\star}_\mathcal{H} \circ h_V^\star \circ s_\mathcal{G} = h_Q^\star \circ q^{\star}_\mathcal{G} \circ s_\mathcal{G}$.
	The same is valid for $t_\mathcal{H}$ and $t_\mathcal{G}$. 
\end{remark}
}

 A proof of the following result can be found in \Cref{proof:quot}.

\begin{restatable}{lemma}{quot}\label{lemma:quot} The functor $\quo$ is a left adjoint.
\end{restatable}

\begin{example} Notice that, due to the way in which limits in $\EqHyp$ are computed (see \Cref{prop:eqhyp_complete}), $\quo$ does not preserve them.  For instance,  let $A$ and $B$ be two non-empty sets and define
	\begin{gather*}
	\mathcal{G}_1:=(\emptyset, A, A, ?_{A^\star}, ?_{A^\star}, \id{A}) \qquad \mathcal{G}_2 := (\emptyset, B, B, ?_{B^\star}, ?_{B^\star}, \id{B}) \\ \mathcal{G}_3 := (\emptyset, A + B, 1, ?_{(A+B)^\star}, ?_{(A+B)^\star}, !_{A + B})	\end{gather*}
		
Now, we have two morphisms $(\id{\emptyset}, \iota_A, !_A)\colon \mathcal{G}_1 \to \mathcal{G}_3$,  $(\id{\emptyset}, \iota_B, !_B)\colon  \mathcal{G}_2 \to \mathcal{G}_3$, where $\iota_A\colon A\mto A+B$  and $\iota_B\colon B \mto A+B$  are the coprojections.

On the one hand, if we apply $\quo$ and compute the pullback of the resulting morphisms, we obtain a hypergraph with no hyperdeges and $A\times B$ as set of nodes. On the other hand, \Cref{prop:eqhyp_complete} we know that taking the pullback of the original two morphisms gives us the hypergraph with equivalence $(\emptyset, \emptyset, Q, ?_{\emptyset^\star}, ?_{\emptyset^\star}, q)$ where $Q$ is the image of $?_{A\times B}$, so that $Q$ is also empty.  
\end{example}
}

A proof is in \Cref{proof:regmono}.
We have now all the ingredients to study the adhesivity properties of $\EqHyp$.  As a first step we need to introduce a class of monos.

\noindent
\parbox{11cm}{
\begin{definition}
	We define $\pbc$ as the class of  regular monos $(h_E, h_V, h_Q)\colon \mathcal{G}\to \mathcal{H}$ in $\EqHyp$ such that the square on the right is a pullback
\end{definition}
} \hfill
	\parbox{2cm}{
	     \xymatrix@R=18pt{
	        V_\mathcal{G} \ar@{>>}[r]_{q_\mathcal{G}} \ar@{>->}[d]_{h_V}& Q_\mathcal{G} \ar@{>->}[d]^{h_Q}\\
	        V_\mathcal{H} \ar@{>>}[r]^{q_\mathcal{H}}  & Q_\mathcal{{H}}
              }
         }   

\vspace{.1cm}
Now, we show that $\pbc$ is a suitable class for adhesivity. A proof is in \Cref{proof:pbmono}.

\begin{restatable}{lemma}{pbm}\label{lem:pbmono}
	The class $\pbc$ contains all isomorphisms, it is closed under composition, decomposition and it is stable under pullbacks and pushouts.
\end{restatable}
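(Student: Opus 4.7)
The plan is to split the defining condition of $\pbc$ into two parts — being a regular mono and satisfying the pullback-square condition on the quotient maps — and establish each closure and stability property for both parts. The regular mono part is largely bookkeeping via the characterisations already available, while the pullback-square part reduces in the hardest case to an element chase in $\Set$.

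For the regular mono component, Corollary~\ref{cor:mono2} identifies regular monos in $\EqHyp$ with those morphisms whose three components are injective. Injective functions contain all identities and are closed under composition and left-cancellation, so these properties lift immediately. Stability under pullbacks uses that the $E$- and $V$-components of a pullback in $\EqHyp$ agree with the $\Set$-pullback by Lemma~\ref{prop:eqhyp_complete} (so the pullback of an injection is injective), and that $Q_A$ is a subobject of $Q_B\times_{Q_D}Q_C$, which is itself a mono into $Q_C$ when $n_Q$ is injective. Stability under pushouts uses that $T$ and $K$ preserve colimits (Corollary~\ref{cor:limcolim} and Lemma~\ref{prop:eqhyp_complete}), so the $E, V, Q$-components of the pushout are computed in $\Set$; pushouts of injections in $\Set$ stay injections, so $n_V, n_E, n_Q$ are injective.

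For the pullback-square condition, isomorphisms satisfy it trivially, composition follows from the usual pasting lemma for pullbacks, and decomposition follows from the converse pasting (given the $g$-square and the composite $(g\!\circ\! f)$-square are pullbacks, the $f$-square is one). Stability under pullbacks is also an application of pasting combined with an element chase in the cube whose top face is the $V$-pullback and whose back face is the quotient square for $n$: starting from $(v,c)\in V_C\times_{Q_C}Q_A$, the pullback condition for $n$ produces a unique $b\in V_B$ over $(g_V(v),f_Q(c))$, and the top $V$-pullback then produces the required $a\in V_A$. The main obstacle is stability under pushouts. Here, for $(v,c)\in V_D\times_{Q_D}Q_B$ one splits according to the pushout presentation of $V_D$: if $v=n_V(b)$ then $c=q_B(b)$ by injectivity of $n_Q$, and we are done; the crux is ruling out the case $v=g_V(c_0)$ with $c_0\notin m_V(V_A)$. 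Here one uses two ingredients together: the $\Set$-fact that the pushout of the injection $m_Q$ along $f_Q$ is also a pullback, which transforms $g_Q(q_C(c_0))=n_Q(c)$ into an element $a_Q\in Q_A$ with $m_Q(a_Q)=q_C(c_0)$, and the pullback condition defining $m\in\pbc$, which then lifts $(c_0,a_Q)\in V_C\times_{Q_C}Q_A$ to an element $a\in V_A$ with $m_V(a)=c_0$, contradicting $c_0\notin m_V(V_A)$. Uniqueness is immediate from the injectivity of $n_V$.
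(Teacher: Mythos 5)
Your proposal is correct, but it takes a more elementary, element-wise route than the paper at the two points where the real work happens. For stability under pushouts, the paper does not argue by hand: it invokes the first point of \Cref{prop:kerset} (the cube lemma in $\Set$: top face a pushout, left and bottom faces pullbacks, $n_Q$ injective $\Rightarrow$ right face a pullback), feeding it the componentwise pushouts from \Cref{prop:eqhyp_complete} and the fact that pushouts along injections in $\Set$ are pullbacks (\Cref{prop:regmono}); your case split on $V_D=n_V(V_B)\sqcup g_V(V_C\setminus m_V(V_A))$ is essentially a direct re-proof of exactly that instance of \Cref{prop:kerset}, which is fine and self-contained but duplicates a lemma the paper has already paid for (and reuses later). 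For stability under pullbacks, the paper argues arrow-theoretically: it uses the mono $m_{\mathcal{P}}\colon Q_{\mathcal{P}}\rightarrowtail Q_1\times_{Q_3}Q_2$ from \Cref{prop:eqhyp_complete} and two applications of the pasting lemma \Cref{lem:pb1} to exhibit the square $r_V,\,q_2,\,m_{\mathcal{P}}\circ q_{\mathcal{P}}$ as a pullback, then cancels $m_{\mathcal{P}}$; your element chase reaches the same conclusion, but note that after producing $a\in V_A$ from the $V$-pullback you still owe the verification $q_A(a)=c$ --- this is not automatic and is precisely where the embedding $Q_A\rightarrowtail Q_B\times_{Q_D}Q_C$ must be used (compare the images of $q_A(a)$ and $c$ in $Q_B$ and $Q_C$ and cancel the mono), which is the step the paper's $m_{\mathcal{P}}$-argument packages cleanly. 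Finally, you treat the regular-mono (injectivity) half explicitly via \Cref{cor:mono2}, which the paper leaves implicit; that is a welcome addition, and your identification of $m_Q$ (resp.\ $r_Q$) as the composite of the subobject inclusion with the $\Set$-pullback projection is the right justification for the $Q$-component.
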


\vspace{.1cm}
Finally, we show the key lemma for $\pbc$-adhesivity of $\EqHyp$. A proof is in \Cref{proof:pbstable}.

\begin{restatable}{lemma}{pbs}\label{lemma:stab}
	In $\EqHyp$, $\pbc$-pushouts are stable.
\end{restatable}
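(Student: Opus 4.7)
The plan is to reduce stability to the Van Kampen conditions available in $\hyp$ and in $\Set$, using the functors $T\colon \EqHyp \to \hyp$ and $K\colon \EqHyp \to \Set$ together with the component-wise description of (co)limits supplied by \Cref{prop:eqhyp_complete}.

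Consider the cube whose bottom face is a $\pbc$-pushout, whose four vertical arrows lie in $\pbc$, and whose four vertical faces are pullbacks in $\EqHyp$; the goal is to show that the top face is a pushout in $\EqHyp$. Applying $T$, which preserves both limits and colimits by \Cref{cor:limcolim}, produces a cube in $\hyp$ whose bottom is a pushout, whose vertical faces are pullbacks, and whose vertical arrows are monos (since $\pbc \subseteq \reg(\EqHyp)$ forces injective components, cf.~\Cref{cor:mono2,cor:monhyper}). Since $\hyp$ is adhesive by \Cref{prop:hypadh}, Van Kampen gives that the top face is a pushout in $\hyp$, so its $E$- and $V$-components are pushouts in $\Set$.

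The crux is to handle the $Q$-components via $K$. Since $K$ preserves colimits, the bottom $Q$-face is a pushout in $\Set$, and it is along a mono because $m \in \pbc$ has injective $Q$-component. I would then show that each vertical $Q$-face is itself a pullback in $\Set$: for the back face, the canonical comparison $(a_Q, f'_Q)\colon Q_{A'} \to Q_A \times_{Q_B} Q_{B'}$ is injective because $a_Q$ is, while for surjectivity, given a compatible pair $(p, q)$ I would lift $p$ to some $v \in V_A$ via the surjection $q_A$, invoke the $\pbc$-condition on $b$ to lift $f_V(v)$ to a unique $w \in V_{B'}$ with $q_{B'}(w) = q$, and use that the back $V$-face is a pullback in $\Set$ (from the $T$-image) to obtain an element $(v,w) \in V_{A'}$ whose $q_{A'}$-image witnesses surjectivity. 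The remaining three vertical $Q$-faces are handled by the same argument applied to the relevant pair of $\pbc$-arrows.

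With all four vertical $Q$-faces being pullbacks in $\Set$ over a pushout bottom along a mono, and all vertical arrows monic, Van Kampen in $\Set$ (adhesive by \Cref{ex:adhesive}) yields that the top $Q$-face is a pushout in $\Set$. All three components of the top face are therefore $\Set$-pushouts, which by \Cref{prop:eqhyp_complete} is exactly the component-wise recipe for a pushout in $\EqHyp$, establishing stability. The main obstacle is precisely the diagram chase showing that the vertical $Q$-faces are genuine $\Set$-pullbacks: because $K$ preserves colimits but not limits in general, $Q_{A'}$ only embeds into $Q_A \times_{Q_B} Q_{B'}$ a priori, and it is the $\pbc$-condition linking $V$-pullbacks to $Q$-pullbacks that forces this embedding to be surjective.
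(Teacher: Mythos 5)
Your plan founders on the meaning of ``stable''. In the paper's terminology a \emph{stable} square is a $\mor(\EqHyp)$-stable one: the four vertical arrows of the cube are \emph{arbitrary} morphisms, not members of $\pbc$. You assume from the outset that they lie in $\pbc$, and that assumption is not decorative --- it is exactly what makes your crucial step go through. Your argument that each vertical $Q$-face is a pullback in $\Set$ invokes the $\pbc$-condition on the vertical arrow $b$ to lift a node over a prescribed equivalence class; without it the comparison map $Q_{A'} \to Q_A \times_{Q_B} Q_{B'}$ is only a mono (\Cref{prop:eqhyp_complete}) and can genuinely fail to be surjective. For instance, take a cospan into a hypergraph with two nodes in a single class, with the two legs picking out the two different nodes: the pullback has empty node set, hence empty $Q$-part, while the pullback of the $Q$-parts is a singleton. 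So what you prove is $\pbc$-stability, a strictly weaker statement than the lemma asserts; in particular it cannot feed into \Cref{lemma:van_kampen}, whose ``if'' half needs stability along cubes whose vertical arrows are merely regular monos (or, a priori, arbitrary).

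The paper's proof is structured precisely to dodge this obstacle. It obtains the $E$- and $V$-components of the top face as pushouts exactly as you do, but it never claims the $Q$-faces of the given cube are pullbacks. Instead it builds an \emph{auxiliary} cube over the bottom $Q$-pushout by genuinely pulling back $d_Q$ along $z_Q$ and $p_Q$, so that its top face is a pushout by adhesivity of $\Set$ while the actual sets $Q_i'$ only embed into its vertices; it then constructs the missing $Q$-component of the mediating morphism out of $\mathcal{G}_4'$ by showing that $q \circ v_V$ coequalizes the kernel pair of the regular epi $q_4'$, using the pushout of kernel pairs from \Cref{prop:kerset} together with \Cref{prop:reg_epi_coeq}. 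To repair your proof you would need an argument of this coequalizer/kernel-pair kind for arbitrary vertical arrows; the componentwise Van Kampen argument on the $Q$-parts cannot be made to work in that generality.
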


\begin{example}
It is noteworthy to show that pushouts along regular monos are not stable. 
\begin{minipage}[l]{.35\linewidth}Consider e.g. the cube aside: the vertices are just graphs without edges, and in the graphs themselves the equivalence classes 
are denoted by encircling nodes with a dotted line. All the arrows are regular monos. It is immediate to see that the bottom face is a pushout and the side faces are pullback. Unfortunately, 
the top face fails to be a pushout
  \end{minipage}\hfill
\begin{minipage}[r]{.60\linewidth}
                \xymatrix@C=10pt@R=6pt{
                &\emptyset \ar@{>->}[dd]|\hole \ar@{>->}[rr] \ar@{>->}@(dl,)[dl] &&
                \begin{tikzpicture}[baseline=(v.base)]\begin{pgfonlayer}{nodelayer}
                                \node[style=node](v) at (0, 0){};
                                \node[style=none] at(0, 0.5){$c$};
		\end{pgfonlayer}
                \begin{pgfonlayer}{eqlayer}\draw[dashed, rounded corners] (-0.3, -0.3) rectangle (0.3, 0.3);\end{pgfonlayer}
                \end{tikzpicture} \ar@{>->}[dd] \ar@{>->}@(dl,)[dl]!<6ex,0ex> \\
                \begin{tikzpicture}[baseline=(v.base)]\begin{pgfonlayer}{nodelayer}
                        \node[style=node](v)at(0, 0){};
                        \node[style=none]at(0, 0.5){$b$};
                \end{pgfonlayer}
                \begin{pgfonlayer}{eqlayer}\draw[dashed, rounded corners] (-0.3, -0.3) rectangle (0.3, 0.3);\end{pgfonlayer}
        \end{tikzpicture}
                \ar@{>->}[dd]\ar@{>->}[rr] & &
                \begin{tikzpicture}[baseline=(v.base)]\begin{pgfonlayer}{nodelayer}
                        \node[style=node](v)at(-0.5, 0){};
                        \node[style=none]at(-0.5, 0.5){$b$};
                        \node[style=node]at(0.5, 0){};
                        \node[style=none]at(0.5, 0.5){$c$};
                \end{pgfonlayer}\begin{pgfonlayer}{eqlayer}
                        \draw[dashed, rounded corners](-0.8, 0.3) rectangle (0.8, -0.3);
                \end{pgfonlayer}\end{tikzpicture}
                \ar@<2ex>@{>->}[dd]\\&
                \begin{tikzpicture}[baseline=(v.base)]\begin{pgfonlayer}{nodelayer}
                        \node[style=node](v)at(0, 0){};
                        \node[style=none]at(0, 0.5){$a$};
                \end{pgfonlayer}
                \begin{pgfonlayer}{eqlayer}\draw[dashed, rounded corners] (-0.3, -0.3) rectangle (0.3, 0.3);\end{pgfonlayer}
        \end{tikzpicture}
                \ar@{>->}[rr]|\hole \ar@{>->}@(dl,)[dl] && 
                \begin{tikzpicture}[baseline=(v.base)]\begin{pgfonlayer}{nodelayer}
                        \node[style=node](v)at(-0.5, 0){};
                        \node[style=none]at(-0.5, 0.5){$a$};
                        \node[style=node]at(0.5, 0){};
                        \node[style=none]at(0.5, 0.5){$c$};
                \end{pgfonlayer}\begin{pgfonlayer}{eqlayer}
                        \draw[dashed, rounded corners](-0.8, 0.3) rectangle (0.8, -0.3);
                \end{pgfonlayer}\end{tikzpicture}
                \ar@{>->}@(,r)[dl] \\
                \begin{tikzpicture}[baseline=(v.base)]\begin{pgfonlayer}{nodelayer}
                        \node[style=node](v)at(-0.5, 0){};
                        \node[style=none]at(-0.5, 0.5){$a$};
                        \node[style=node]at(0.5, 0){};
                        \node[style=none]at(0.5, 0.5){$b$};
                \end{pgfonlayer}\begin{pgfonlayer}{eqlayer}
                        \draw[dashed, rounded corners](-0.8, 0.3) rectangle (0.8, -0.3);
                \end{pgfonlayer}\end{tikzpicture}
                \ar@{>->}[rr]_{g} & & 
                \begin{tikzpicture}[baseline=(v.base)]\begin{pgfonlayer}{nodelayer}
                        \node[style=node](v)at(0, 0){};
                        \node[style=none]at(0, 0.5){$a$};
                        \node[style=node]at(1, 0){};
                        \node[style=none]at(1, 0.5){$b$};
                        \node[style=node]at(2, 0){};
                        \node[style=none]at(2, 0.5){$c$};
                \end{pgfonlayer}\begin{pgfonlayer}{eqlayer}
                        \draw[dashed, rounded corners](-.3, 0.3) rectangle (2.3, -0.3);
                \end{pgfonlayer}\end{tikzpicture} }
  \end{minipage}
\end{example}

Having proved stability of $\pbc$-pushouts, we now turn to prove that they are Van Kampen with respect to regular monos.
A proof is found in \Cref{proof:pbvk}.

\begin{restatable}{lemma}{pbvk}\label{lemma:van_kampen}
	In $\EqHyp$, pushouts along arrows in $\pbc$ are $\reg(\EqHyp)$-Van Kampen.
\end{restatable}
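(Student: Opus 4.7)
The plan is to exploit the adhesivity of $\hyp$ together with the characterisation of regular monos in $\EqHyp$ in order to lift the Van Kampen property from the underlying hypergraphs. Since \Cref{lemma:stab} already yields stability---the implication ``front and right pullbacks imply top pushout''---it suffices to establish the converse: given a cube whose bottom face is a $\pbc$-pushout and whose vertical arrows are regular monos, if the top face is a pushout then the front and right faces are pullbacks.

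Given such a cube, apply the functor $T\colon \EqHyp\to \hyp$, which preserves both limits and colimits by \Cref{cor:limcolim}. The resulting cube in $\hyp$ has pushouts as top and bottom faces, pullbacks as back and left faces, and vertical arrows that are monomorphisms in $\hyp$: indeed, regular monos in $\EqHyp$ have injective node and edge components by \Cref{cor:mono2}, which by \Cref{cor:monhyper} makes their $T$-images monic. Adhesivity of $\hyp$ (\Cref{prop:hypadh}) then forces the front and right faces of the $T$-cube to be pullbacks in $\hyp$.

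It remains to lift these pullbacks back to $\EqHyp$. Focus on the front face, with vertex $C'$ at the top and cospan $C\xrightarrow{g} D\xleftarrow{d} D'$ at its bottom and right; let $\mathcal{P}$ be the pullback of this cospan in $\EqHyp$ and $\phi\colon C'\to \mathcal{P}$ the canonical comparison. Since $T$ preserves limits and the $T$-front is a pullback in $\hyp$, $T(\phi)$ is an isomorphism, so the edge and node components of $\phi$ are bijections. By the description of limits in \Cref{prop:eqhyp_complete}, $Q_{\mathcal{P}}$ is cut out as the image of the induced arrow $V_{\mathcal{P}}\to Q_C\times_{Q_D} Q_{D'}$. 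Using that $V_{\mathcal{P}}$ coincides with $V_{C'}$ via $\phi$, together with surjectivity of $q_{C'}$ and the morphism conditions of the cube, this arrow factors through $Q_{C'}$ via the pairing $\psi=\langle c_Q, g'_Q\rangle$. Now $c\colon C'\to C$ is a regular mono in $\EqHyp$, and thus its $Q$-component $c_Q$ is injective by \Cref{cor:mono2}, whence $\psi$ is injective and the factorisation $Q_{C'}\to Q_{\mathcal{P}}$ is a bijection. Therefore $\phi$ is an isomorphism in $\EqHyp$ and the front face is a pullback. An analogous argument, invoking that $b\colon B'\to B$ is a regular mono, dispatches the right face.

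The delicate point is the quotient component: pullbacks in $\EqHyp$ are not computed componentwise on $Q$ but via an image construction (\Cref{prop:eqhyp_complete}), so bijectivity of the $Q$-part of $\phi$ does not follow at once from the $\hyp$-pullback obtained through $T$. The key observation is that the injectivity of the $Q$-component of a regular mono---granted by \Cref{cor:mono2}---forces this image to exhaust the whole of $Q_{C'}$, which is exactly what is needed to conclude.
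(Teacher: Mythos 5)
Your proposal is correct and follows essentially the same route as the paper's proof: stability is delegated to \Cref{lemma:stab}, the edge/node components of the front and right faces are seen to be pullbacks via the adhesivity of the underlying (hyper)graph level, and the quotient component is handled through the image description of limits in \Cref{prop:eqhyp_complete} together with the injectivity of the $Q$-component of the vertical regular monos (\Cref{cor:mono2}). The only differences are presentational: you apply $T$ and invoke adhesivity of $\hyp$ where the paper argues with the $E$- and $V$-cubes in $\Set$, and you phrase the quotient step as showing the comparison map to the pullback is an iso, where the paper verifies directly that $Q'_3$ and $Q'_2$ are the required images.
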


\begin{corollary}\label{cor:eqade}
	$\EqHyp$ is $\pbc$-adhesive.
\end{corollary}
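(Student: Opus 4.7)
The plan is to simply assemble the pieces provided by the preceding restatable lemmas, each of which addresses one of the clauses in the definition of $\mathcal{M}$-adhesivity. First I would verify that $\pbc$ is an admissible class of monomorphisms: this is exactly the content of \Cref{lem:pbmono}, which gives that $\pbc$ contains all isos and is closed under composition, decomposition, and stability under pullbacks and pushouts. Second, I would note that \Cref{prop:eqhyp_complete} yields that $\EqHyp$ has all limits and all colimits (both items (1) and (2) of that lemma), so in particular $\pbc$-pullbacks and $\pbc$-pushouts exist.

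The remaining clause is that $\pbc$-pushouts are $\pbc$-Van Kampen squares. Here the key observation is that, by definition, $\pbc \subseteq \reg(\EqHyp)$: every arrow in $\pbc$ is a regular mono. Consequently, any cube in which the vertical faces are pullbacks and the vertical arrows lie in $\pbc$ is in particular a cube whose vertical arrows lie in $\reg(\EqHyp)$. Hence \Cref{lemma:van_kampen}, which asserts that $\pbc$-pushouts are $\reg(\EqHyp)$-Van Kampen, applies to such cubes and gives both implications required by the $\pbc$-Van Kampen property: the forward direction (top pushout $\Rightarrow$ front and right pullbacks) is $\pbc$-stability as already isolated in \Cref{lemma:stab}, and the backward direction (front and right pullbacks $\Rightarrow$ top pushout) is the substantive content borrowed from \Cref{lemma:van_kampen}.

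Putting these three observations together yields the corollary. I do not anticipate any obstacle: the real work has already been done in \Cref{lem:pbmono,lemma:stab,lemma:van_kampen,prop:eqhyp_complete}, and the corollary is essentially a matter of checking the definition against these lemmas, together with the inclusion $\pbc \subseteq \reg(\EqHyp)$ that lets us import the stronger Van Kampen statement from \Cref{lemma:van_kampen} into the weaker one required for $\pbc$-adhesivity.
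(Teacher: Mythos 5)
Your assembly is correct and is exactly the derivation the paper intends for this corollary: admissibility of $\pbc$ from \Cref{lem:pbmono}, existence of the required (co)limits from \Cref{prop:eqhyp_complete}, and the Van Kampen condition imported from \Cref{lemma:van_kampen} via the inclusion $\pbc\subseteq\reg(\EqHyp)$ (a stronger class of vertical arrows gives a stronger Van Kampen property). One small terminological slip: in the paper's convention, \emph{stability} (\Cref{lemma:stab}) is the ``if'' half, i.e.\ front and right faces pullbacks $\Rightarrow$ top face pushout, while the converse implication is the extra content established in the proof of \Cref{lemma:van_kampen}; you swapped the two labels, but since \Cref{lemma:van_kampen} asserts both directions this does not affect the argument.
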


\subsection{Term graphs with equivalence}

We are now going to generalize the work done in the previous section equipping term graphs with equivalence relation. First of all we need a notion of labelling for $\EqHyp$. 

\begin{definition}
	Let $\Sigma$ be an algebraic signature and $\mathcal{G}^{\Sigma}$ the hypergraph associated to it. A \emph{labelled hypergraph with equivalence} is a pair $(\mathcal{H}, l)$ where $\mathcal{H}$ is an object of $\EqHyp$ and $l$ a morphism $T(\mathcal{H})\to \mathcal{G}^\Sigma$ of $\hyp$. A \emph{morphism of labelled hypergraphs with equivalence} between $(\mathcal{H}, l)$ and $(\mathcal{H}', l')$ is an arrow $h\colon \mathcal{H}\to \mathcal{H}'$ such that $l= l'\circ T(h)$.
	
	We denote the resulting category by $\EqHyp_\Sigma$.
\end{definition}

Let $(\mathcal{H}, l)$ be an object of $\EqHyp_\Sigma$: since $T$ has a right adjoint $R$ by \Cref{prop:forghyp}, $l\colon T(\mathcal{H})\to \mathcal{G}^\Sigma$ corresponds to the arrow $(l, !_{Q_{\mathcal{H}}})\colon \mathcal{H}\to R(\mathcal{G}^\Sigma)$. It is immediate to see that this correspondence extends to an equivalence with the slice over $R(\mathcal{G}^\Sigma)$.

\begin{proposition}\label{prop:slice}
$\EqHyp_\Sigma$ is equivalent to $\EqHyp/R(\mathcal{G}^\Sigma)$.
\end{proposition}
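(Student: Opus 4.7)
The plan is to use the hom-set bijection of the adjunction $T \dashv R$ from \Cref{prop:forghyp} to define an equivalence of categories explicitly. Write $\Phi_{\mathcal{H},\mathcal{G}^\Sigma}\colon \hyp(T(\mathcal{H}), \mathcal{G}^\Sigma) \to \EqHyp(\mathcal{H}, R(\mathcal{G}^\Sigma))$ for the natural bijection. On objects, define $F\colon \EqHyp_\Sigma \to \EqHyp/R(\mathcal{G}^\Sigma)$ by sending $(\mathcal{H}, l)$ to the arrow $\Phi_{\mathcal{H},\mathcal{G}^\Sigma}(l)\colon \mathcal{H} \to R(\mathcal{G}^\Sigma)$. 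On morphisms, send $h\colon (\mathcal{H}, l) \to (\mathcal{H}', l')$ to $h$ viewed as an arrow in $\EqHyp$ over $R(\mathcal{G}^\Sigma)$; the slice-triangle commutes precisely because naturality of $\Phi$ in its first argument gives
\[\Phi_{\mathcal{H},\mathcal{G}^\Sigma}(l' \circ T(h)) = \Phi_{\mathcal{H}',\mathcal{G}^\Sigma}(l') \circ h,\]
and the hypothesis $l = l' \circ T(h)$ yields $\Phi(l) = \Phi(l') \circ h$ as required.

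Going in the other direction, define $G\colon \EqHyp/R(\mathcal{G}^\Sigma) \to \EqHyp_\Sigma$ on an object $k\colon \mathcal{H} \to R(\mathcal{G}^\Sigma)$ by $G(k) := (\mathcal{H}, \Phi^{-1}_{\mathcal{H},\mathcal{G}^\Sigma}(k))$, and on a morphism $h$ in the slice by the underlying $h$ in $\EqHyp$. The compatibility with labels in $\EqHyp_\Sigma$ follows again from the naturality of $\Phi^{-1}$: if $\Phi(l') \circ h = \Phi(l)$, then applying $\Phi^{-1}$ and using naturality gives $l' \circ T(h) = l$.

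Once both assignments are shown to be functorial, the fact that $\Phi$ and $\Phi^{-1}$ are mutually inverse bijections immediately yields $G \circ F = \id{\EqHyp_\Sigma}$ and $F \circ G = \id{\EqHyp/R(\mathcal{G}^\Sigma)}$ on objects and on morphisms (both functors act as the identity on the underlying morphism in $\EqHyp$), so in fact we obtain an isomorphism of categories, which is stronger than the claimed equivalence. The only step requiring care is the bookkeeping for naturality, but since it is exactly the standard transpose-under-adjunction argument applied componentwise, no genuine obstacle arises; I would simply cite the universal property of the adjunction and leave the diagram-chases implicit.
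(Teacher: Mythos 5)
Your proposal is correct and takes essentially the same route as the paper: both transpose the labelling $l\colon T(\mathcal{H})\to \mathcal{G}^\Sigma$ across the adjunction $T\dashv R$ of \Cref{prop:forghyp} (the paper just computes the transpose concretely as $(l,!_{Q_{\mathcal{H}}})$) and note that this correspondence, being the identity on underlying morphisms of $\EqHyp$, extends to the claimed equivalence — indeed, as you observe, to an isomorphism of categories.
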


Let $V_\Sigma\colon \EqHyp_\Sigma\to \EqHyp$ be the functor forgetting the labelling and $\pbc_\Sigma$ the class of morphisms in $\EqHyp_\Sigma$ whose image in $\EqHyp$ is in $\pbc$. From \Cref{prop:eqhyp_complete,cor:pbapp}, the second point of \Cref{thm:slice-functors,cor:eqade}, we can deduce the following.

\begin{proposition}\label{prop:lim}Consider the forgetful functor $V_{\Sigma}\colon \EqHyp_\Sigma\to \EqHyp$. Then the following hold
	\begin{enumerate}
		\item $\EqHyp_\Sigma$ has all colimits and all connected limits, which are created by $V_{\Sigma}$;
		\item $\EqHyp_\Sigma$ is $\pbc_\Sigma$-adhesive.
	\end{enumerate}
\end{proposition}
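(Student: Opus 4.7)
The plan is to deduce both items from the equivalence $\EqHyp_\Sigma \simeq \EqHyp/R(\mathcal{G}^\Sigma)$ established in \Cref{prop:slice}, and then transport properties from $\EqHyp$ to the slice category. Under this equivalence, $V_\Sigma$ corresponds to the canonical forgetful functor $U\colon \EqHyp/R(\mathcal{G}^\Sigma)\to \EqHyp$, and I will use the two interchangeably.

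For item (1), I would rely on the standard fact that the forgetful functor out of a slice category creates all colimits and all connected limits. Concretely, given a diagram $D\colon \D\to \EqHyp/R(\mathcal{G}^\Sigma)$, the structure maps of its vertices form a cocone on $U\circ D$; hence, whenever a colimit of $U\circ D$ exists in $\EqHyp$, its universal property induces a unique compatible arrow into $R(\mathcal{G}^\Sigma)$, which equips the colimit with the structure of an object in the slice and exhibits it as the colimit there. The dual argument for limits goes through as soon as the structure maps can be assembled into a single compatible cone, which is exactly what connectedness of $\D$ guarantees. Since $\EqHyp$ has all small colimits and all small limits by \Cref{prop:eqhyp_complete}, item (1) follows.

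For item (2), I would proceed in two short steps. First, by \Cref{cor:eqade}, the category $\EqHyp$ is $\pbc$-adhesive; then item (2) of \Cref{thm:slice-functors}, applied with $\X=\EqHyp$, $X=R(\mathcal{G}^\Sigma)$, and $\mathcal{M}=\pbc$, yields $\pbc/R(\mathcal{G}^\Sigma)$-adhesivity of $\EqHyp/R(\mathcal{G}^\Sigma)$. Second, since $\pbc_\Sigma$ is by definition the preimage of $\pbc$ under $V_\Sigma$, the equivalence of \Cref{prop:slice} transports $\pbc_\Sigma$ to exactly $\pbc/R(\mathcal{G}^\Sigma)$, giving $\pbc_\Sigma$-adhesivity of $\EqHyp_\Sigma$. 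The only real obstacle is bookkeeping: one has to check that the equivalence of \Cref{prop:slice} matches the class $\pbc_\Sigma$ with $\pbc/R(\mathcal{G}^\Sigma)$ and that $V_\Sigma$ corresponds to the slice forgetful functor, both of which are tautological from the definitions.
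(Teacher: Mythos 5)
Your proposal is correct and follows essentially the same route as the paper: identify $\EqHyp_\Sigma$ with the slice $\EqHyp/R(\mathcal{G}^\Sigma)$ via \Cref{prop:slice}, obtain colimits and connected limits created by the slice forgetful functor (the paper cites \Cref{cor:pbapp} together with \Cref{prop:eqhyp_complete}, where you re-sketch that standard fact), and transfer adhesivity by applying the second point of \Cref{thm:slice-functors} to \Cref{cor:eqade}, matching $\pbc_\Sigma$ with $\pbc/R(\mathcal{G}^\Sigma)$.
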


Using \Cref{cor:mono1,cor:mono2} we immediately get the following result. 

\begin{corollary}\label{prop:monos_in_eqhyps} Let $h=(h_{E}, h_V, h_Q)$ be an arrow in $\EqHyp_\Sigma$. Then the following hold
	\begin{enumerate}
		\item $h$ is mono if and only if $h_E$ and $h_V$ are injective;
		\item $h$ is a regular mono if and only if $h_E$, $h_V$ and $h_Q$ are injective.
	\end{enumerate}
\end{corollary}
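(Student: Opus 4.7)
The plan is to transfer the characterisations from $\EqHyp$ to $\EqHyp_\Sigma$ via the equivalence $\EqHyp_\Sigma\simeq \EqHyp/R(\mathcal{G}^\Sigma)$ given by \Cref{prop:slice}. Under this equivalence the forgetful functor $V_\Sigma$ becomes the slice projection onto $\EqHyp$, so it suffices to argue that this projection preserves and reflects monos and regular monos, and then to invoke the characterisations already available in $\EqHyp$.

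First I would record the categorical fact that the projection $\EqHyp/R(\mathcal{G}^\Sigma)\to \EqHyp$ is faithful and creates limits. Faithfulness plus preservation of limits immediately yields that an arrow is mono in the slice if and only if its image is mono in $\EqHyp$. For regular monos, I would use the kernel-pair criterion: an arrow $h$ is a regular mono iff it is the equaliser of its own kernel pair. Since the projection creates kernel pairs and preserves equalisers, $h$ is a regular mono in $\EqHyp_\Sigma$ if and only if $V_\Sigma(h)$ is a regular mono in $\EqHyp$.

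Having established this, part (1) follows by chaining two prior results: by \Cref{cor:mono1}, $V_\Sigma(h)=(h_E,h_V,h_Q)$ is mono in $\EqHyp$ iff $(h_E,h_V)$ is mono in $\hyp$, and by \Cref{cor:monhyper} the latter holds iff both $h_E$ and $h_V$ are injective. Part (2) follows directly: by \Cref{cor:mono2}, $V_\Sigma(h)$ is a regular mono in $\EqHyp$ iff all three components $h_E$, $h_V$, $h_Q$ are injective.

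I do not foresee a main obstacle, as every needed ingredient has already been proved in the excerpt. The only point requiring a line of justification is the reflection of regular monos along the slice projection; but this is a standard consequence of the fact that the projection creates connected limits (and hence kernel pairs) and preserves equalisers, both of which are instances of the general preservation/creation properties already exploited in \Cref{prop:lim}.
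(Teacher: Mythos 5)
Your overall route is the same as the paper's: the statement is read off from \Cref{cor:mono1,cor:mono2} via the slice description of \Cref{prop:slice}, using that the projection $V_\Sigma$ detects monos and regular monos. The mono part of your argument is fine. The problem is in the one step you yourself single out as needing justification, the reflection of regular monos, where your criterion is dualised incorrectly: it is not true that ``$h$ is a regular mono iff it is the equaliser of its own kernel pair'' (the kernel pair of any mono is the trivial pair $(\id,\id)$, and its equaliser is an identity, so this characterises nothing; the two arrows do not even have the right codomain for $h$ to equalise them). Consequently ``$V_\Sigma$ creates kernel pairs and preserves equalisers'' does not give reflection of regular monos: if $V_\Sigma(h)$ is an equaliser of some pair $u,v$ in $\EqHyp$, that pair need not live over $R(\mathcal{G}^\Sigma)$, so you cannot directly exhibit $h$ as an equaliser in $\EqHyp_\Sigma$.

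The correct (and standard) fix is the dual one, via \emph{cokernel} pairs: $\EqHyp_\Sigma$ has all colimits, created by $V_\Sigma$ (\Cref{prop:lim}, ultimately \Cref{colim} and \Cref{prop:eqhyp_complete}), so the cokernel pair of $h$ exists in $\EqHyp_\Sigma$ and is sent by $V_\Sigma$ to the cokernel pair of $V_\Sigma(h)$. If $V_\Sigma(h)$ is a regular mono in $\EqHyp$, it is the equaliser of its cokernel pair there (this is exactly the device used in the paper's proof of \Cref{cor:mono2}); since $V_\Sigma$ also creates equalizers, $h$ is the equaliser of its cokernel pair in $\EqHyp_\Sigma$, hence a regular mono. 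With this substitution your argument goes through and coincides in substance with the paper's (unstated) justification; preservation of regular monos in the other direction follows, as you say, from $V_\Sigma$ preserving equalizers.
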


We can now easily define term graphs with equivalence.

\begin{definition}Let $\Sigma$ be an algebraic signature.
	An object $(\mathcal{H}, l)$ of $\EqHyp_\Sigma$ is a \emph{term graph with equivalence} if $l\colon T(\mathcal{H})\to \mathcal{G}^\Sigma$ is a term graph. We denote by $\EqTG_{\Sigma}$ the full subcategory of $\EqHyp_\Sigma$ so defined and by $J_{\Sigma}$ the corresponding inclusion functor.
 \end{definition}

\noindent 
\parbox{9.5cm}{\begin{remark}\label{rem:obv}
Let $T_\Sigma:  \EqHyp_{\Sigma} \to \hyp_{\Sigma}$ the forgetful functor lifting
$T:  \EqHyp \to \hyp$.
%
Notice that, by definition, $(\mathcal{H}, l)$ is in $\EqTG_{\Sigma}$ amounts to say that $(T(\mathcal{H}), l)$ is in $\tg$. 
Thus there exists a functor $S_\Sigma$ as in the diagram on the right.
\end{remark}}
\hfill 
\parbox{4cm}{\xymatrix@R=15pt{\EqTG_{\Sigma} \ar[r]_{J_\Sigma}  \ar[d]_{S_\Sigma}& \EqHyp_\Sigma \ar[d]^{T_\Sigma} \\ \tg \ar[r]^{I_\Sigma}& \hyp_{\Sigma}}}

\begin{remark}\label{rem:t}
	Notice that, by \Cref{cor:limcolim} and \Cref{prop:lim}, the functor $T_\Sigma$ preserves all connected limits and all colimits.
\end{remark}

 The previous remark allows us to prove an analog of \Cref{prop:tlim}. We refer the reader to \Cref{proof:term} for details.
 
 \begin{restatable}{proposition}{trm}\label{prop:term}
$\EqTG_{\Sigma}$ has equalizers, binary products and pullbacks and they are created by $J_\Sigma$.
 \end{restatable}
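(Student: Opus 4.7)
The strategy is to lift equalizers, binary products and pullbacks from $\EqHyp_\Sigma$ (where they are guaranteed by Proposition~\ref{prop:lim}) to $\EqTG_\Sigma$ using the commutative square of functors in Remark~\ref{rem:obv}. Throughout I exploit that $J_\Sigma\colon \EqTG_\Sigma \hookrightarrow \EqHyp_\Sigma$ is a full inclusion and that $I_\Sigma\colon \tg \hookrightarrow \hyp_\Sigma$ creates the three kinds of limits at hand, by Proposition~\ref{prop:tlim}.

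Fix a diagram $F\colon \D \to \EqTG_\Sigma$, where $\D$ is one of the three shape categories of interest. By Proposition~\ref{prop:lim}, the composite $J_\Sigma \circ F$ admits a limit $(\mathcal{L}, \ell)$ in $\EqHyp_\Sigma$. Since $J_\Sigma$ is fully faithful, it suffices to show that this object lies in $\EqTG_\Sigma$: indeed, any cone in $\EqTG_\Sigma$ over $F$ gives a cone in $\EqHyp_\Sigma$ over $J_\Sigma \circ F$, which factors uniquely through $(\mathcal{L}, \ell)$, and the mediating arrow lifts uniquely to $\EqTG_\Sigma$ by fullness.

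To verify that $(\mathcal{L}, \ell)\in \EqTG_\Sigma$, by Remark~\ref{rem:obv} it is enough to check that $T_\Sigma(\mathcal{L},\ell) = (T(\mathcal{L}), \ell)$ is a term graph, i.e.~that it lies in $\tg$. I establish this by showing that $T_\Sigma(\mathcal{L},\ell)$ is the limit of $T_\Sigma \circ J_\Sigma \circ F = I_\Sigma \circ S_\Sigma \circ F$ in $\hyp_\Sigma$, since by Proposition~\ref{prop:tlim} any such limit is automatically in $\tg$ (being created from $\tg$ by $I_\Sigma$). For the cases of equalizers and pullbacks this preservation is a direct consequence of Remark~\ref{rem:t}, as both are connected limits. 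For binary products, I unfold the slice presentation $\EqHyp_\Sigma \simeq \EqHyp/R(\mathcal{G}^\Sigma)$ and $\hyp_\Sigma = \hyp/\mathcal{G}^\Sigma$: the binary product of $(\mathcal{H}_1,\ell_1)$ and $(\mathcal{H}_2,\ell_2)$ is computed as the pullback of $\ell_1$ and $\ell_2$ in the ambient category, and $T$ preserves such pullbacks by Corollary~\ref{cor:limcolim}, so $T_\Sigma$ preserves binary products as well.

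The main subtle point is exactly this last step, since Remark~\ref{rem:t} only asserts preservation of \emph{connected} limits, while a binary product diagram is not connected on the nose. Once one observes that binary products in a slice correspond to pullbacks in the base, where preservation is available, the three cases of the statement collapse to the same uniform argument, and creation by $J_\Sigma$ follows immediately from the full-subcategory argument outlined above.
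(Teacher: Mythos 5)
Your proof is correct and follows essentially the same route as the paper's: take the limiting cone in $\EqHyp_\Sigma$ via \Cref{prop:lim}, transport it along $T_\Sigma$ into $\hyp_{\Sigma}$, use \Cref{prop:tlim} and \Cref{rem:obv} to conclude that the limit object is a term graph with equivalence, and obtain creation by $J_\Sigma$ from fullness of the inclusion. Your explicit treatment of binary products through the slice presentation (noting that \Cref{rem:t} only guarantees preservation of connected limits, while products in the slice are pullbacks in the base, preserved by $T$ via \Cref{cor:limcolim}) patches a step the paper's proof glosses over, so your argument is, if anything, slightly more complete.
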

 
\noindent
\parbox{11cm}{\hspace{15pt}Let now $\mathcal{T}$ be the class of morphism in $\EqTG_{\Sigma}$ whose image through $J_\Sigma$ is in $\pbc_\Sigma$ and whose image through $S_\Sigma$ is a regular mono in $\tg$.  By \Cref{lem:reg,prop:monos_in_eqhyps}, we have that a morphism $(h_E, h_V, h_Q)\colon (\mathcal{G},l)\to (\mathcal{H}, l')$ is in $\mathcal{T}$ if and only if all of its components are injections and the square on the right is a pullback.} \hfill 
\parbox{2cm}{\xymatrix{V_{\mathcal{G}}\ar@{>->}[r]^{h_V} \ar@{>>}[d]_{q_{\mathcal{H}}}& V_{\mathcal{H}} \ar@{>>}[d]^{q_{\mathcal{H}}}\\ Q_{\mathcal{G}} \ar@{>->}[r]_{h_Q}& Q_{\mathcal{H}}} }

\smallskip 
In particular, by \Cref{cor:term} and \Cref{lemma:stab}, $\mathcal{T}$ contains all isomorphisms, is closed under composition and decomposition and stable under pushout and pullbacks.

The following proposition is now an easy corollary of \Cref{prop:push}, \Cref{prop:lim}, and \Cref{rem:obv}. We provide the details in \Cref{proof:tade}.

\begin{restatable}{proposition}{po}\label{prop:po}
$\EqTG_{\Sigma}$ has all $\mathcal{T}$-pushouts, which are created by $J_{\Sigma}$.
\end{restatable}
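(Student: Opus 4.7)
The plan is to reduce the existence of $\mathcal{T}$-pushouts in $\EqTG_{\Sigma}$ to pushouts in $\EqHyp_{\Sigma}$ along $\pbc_{\Sigma}$ (available by Proposition~\ref{prop:lim}) and then use Lemma~\ref{prop:push} to ensure that the resulting vertex is still a term graph with equivalence. More precisely, given a span $(\mathcal{H}, l_{\mathcal{H}}) \xleftarrow{f} (\mathcal{G}, l_{\mathcal{G}}) \xrightarrow{g} (\mathcal{K}, l_{\mathcal{K}})$ in $\EqTG_{\Sigma}$ with $f \in \mathcal{T}$, I first observe that by definition of $\mathcal{T}$ we have $J_{\Sigma}(f) \in \pbc_{\Sigma}$, so that the pushout of $J_{\Sigma}(f)$ along $J_{\Sigma}(g)$ exists in $\EqHyp_{\Sigma}$ by Proposition~\ref{prop:lim}. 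Call this pushout $(\mathcal{P}, l_{\mathcal{P}})$ with legs $f'$ and $g'$.

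The key step is to show that $(\mathcal{P}, l_{\mathcal{P}})$ is actually in $\EqTG_{\Sigma}$, that is, that $l_{\mathcal{P}} \colon T(\mathcal{P}) \to \mathcal{G}^{\Sigma}$ is a term graph. By Remark~\ref{rem:t}, $T_{\Sigma}$ preserves all colimits, hence $T_{\Sigma}(\mathcal{P})$ together with $T_{\Sigma}(f'), T_{\Sigma}(g')$ is the pushout of $T_{\Sigma}(f), T_{\Sigma}(g)$ in $\hyp_{\Sigma}$. Using the commuting diagram from Remark~\ref{rem:obv}, namely $T_{\Sigma} \circ J_{\Sigma} = I_{\Sigma} \circ S_{\Sigma}$, this pushout coincides with the pushout in $\hyp_{\Sigma}$ of $I_{\Sigma}(S_{\Sigma}(f))$ along $I_{\Sigma}(S_{\Sigma}(g))$, taken between term graphs. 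Since $f \in \mathcal{T}$ we know by definition that $S_{\Sigma}(f)$ is a regular mono in $\tg$, so Lemma~\ref{prop:push} applies and guarantees that this pushout vertex is a term graph. Hence $(\mathcal{P}, l_{\mathcal{P}}) \in \EqTG_{\Sigma}$.

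Finally, I verify that $(\mathcal{P}, l_{\mathcal{P}})$ is a pushout inside $\EqTG_{\Sigma}$ and that $J_{\Sigma}$ creates it. Since $J_{\Sigma}$ is fully faithful (as an inclusion of a full subcategory), any competing cocone in $\EqTG_{\Sigma}$ is in particular a cocone in $\EqHyp_{\Sigma}$, and the unique mediating morphism obtained in $\EqHyp_{\Sigma}$ is automatically a morphism in $\EqTG_{\Sigma}$ because both its source and target lie in $\EqTG_{\Sigma}$. This gives both the universal property in $\EqTG_{\Sigma}$ and the creation property with respect to $J_{\Sigma}$.

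Conceptually, the only substantive content is the preservation of the term-graph property under pushouts, which is entirely subsumed by the previously established Lemma~\ref{prop:push}; the rest is diagram chasing exploiting the commutativity $T_{\Sigma} \circ J_{\Sigma} = I_{\Sigma} \circ S_{\Sigma}$, the colimit-preservation of $T_{\Sigma}$, and the full faithfulness of $J_{\Sigma}$. I do not foresee any real obstacle: the proof essentially amounts to assembling these three facts, with no new technical work required.
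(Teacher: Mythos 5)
Your proof is correct and follows essentially the same route as the paper's: form the pushout in $\EqHyp_\Sigma$ (Proposition~\ref{prop:lim}), transport it along $T_\Sigma$ using Remark~\ref{rem:t} and the identity $T_\Sigma \circ J_\Sigma = I_\Sigma \circ S_\Sigma$ of Remark~\ref{rem:obv}, and invoke Lemma~\ref{prop:push} to see the vertex is a term graph. Your explicit check of the universal property and creation via full faithfulness of $J_\Sigma$ is just the standard full-subcategory argument that the paper leaves implicit.
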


\begin{corollary}\label{cor:tade}
	$\EqTG_{\Sigma}$ is $\mathcal{T}$-adhesive.
\end{corollary}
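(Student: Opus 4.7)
The plan is to verify the four defining conditions of $\mathcal{T}$-adhesivity. The class $\mathcal{T}$ already contains isomorphisms and is closed under composition, decomposition, pullback and pushout by the observations recorded just before the statement. The existence of $\mathcal{T}$-pullbacks is covered by \Cref{prop:term} (indeed all connected limits exist and are created by $J_{\Sigma}$), while \Cref{prop:po} guarantees that $\mathcal{T}$-pushouts exist and are likewise created by $J_{\Sigma}$. The remaining task is the $\mathcal{T}$-Van Kampen property, and this is where I expect the real work to sit.

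The strategy is to transfer the Van Kampen biconditional from $\EqHyp_{\Sigma}$, which by \Cref{prop:lim}(2) is $\pbc_{\Sigma}$-adhesive, down to $\EqTG_{\Sigma}$. Take a cube in $\EqTG_{\Sigma}$ whose bottom face is a $\mathcal{T}$-pushout along some $m \in \mathcal{T}$, with back and left faces pullbacks and vertical arrows in $\mathcal{T}$. Since $\mathcal{T} \subseteq \pbc_{\Sigma}$ and since both pullbacks and $\mathcal{T}$-pushouts are created by $J_{\Sigma}$, viewing the same cube inside $\EqHyp_{\Sigma}$ produces a $\pbc_{\Sigma}$-pushout at the bottom, pullbacks at back and left, and vertical arrows in $\pbc_{\Sigma}$. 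The $\pbc_{\Sigma}$-adhesivity of $\EqHyp_{\Sigma}$ then yields, in that ambient category, the equivalence: the top face is a pushout iff the front and right faces are pullbacks.

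The subtle point — and what I expect to be the main obstacle — is to lift this equivalence back to $\EqTG_{\Sigma}$. The pullback half is immediate because $J_{\Sigma}$ creates pullbacks: a square of term graphs with equivalence is a pullback in $\EqTG_{\Sigma}$ exactly when it is a pullback in $\EqHyp_{\Sigma}$. For the pushout half, one uses that $\mathcal{T}$ is stable under pullbacks: the back-top leg $m' \colon A' \to C'$, being the pullback of $m \in \mathcal{T}$ along a vertical arrow of $\mathcal{T}$, belongs itself to $\mathcal{T}$. Hence any top face that is a pushout is automatically a $\mathcal{T}$-pushout, and by \Cref{prop:po} its formation in $\EqHyp_{\Sigma}$ agrees with its formation in $\EqTG_{\Sigma}$; conversely, a pushout of the top span in $\EqTG_{\Sigma}$ coincides, by fullness of $J_{\Sigma}$ together with this same creation property, with the pushout computed in $\EqHyp_{\Sigma}$. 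Combining the two transfers yields the $\mathcal{T}$-Van Kampen condition inside $\EqTG_{\Sigma}$, and hence $\mathcal{T}$-adhesivity.
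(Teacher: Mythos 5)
Your proposal is correct and is essentially the paper's own route: the paper obtains the corollary by applying the subcategory criterion of \Cref{thm:slice-functors}(4) to the full inclusion $J_\Sigma\colon \EqTG_{\Sigma}\to \EqHyp_\Sigma$, using \Cref{prop:term}, \Cref{prop:po}, the closure properties of $\mathcal{T}$, and the $\pbc_\Sigma$-adhesivity of $\EqHyp_\Sigma$ from \Cref{prop:lim}. Your argument merely unfolds the proof of that criterion by hand---transferring the Van Kampen cube along the fully faithful $J_\Sigma$, which creates pullbacks and $\mathcal{T}$-pushouts, and using stability of $\mathcal{T}$ under pullbacks to recognise the top face as a $\mathcal{T}$-pushout---so the substance coincides with the paper's.
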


\section{EGGs}
\label{eggs}
The previous section proved some adhesivity property for the categories $\EqHyp$ and $\EqTG_{\Sigma}$. We extend these results to
encompass equivalence classes which are closed under the target arrow, i.e.~under operator composition for term graphs, thus precisely capturing EGGs.
%

\subsection{E-hypergraphs}

We start introducing the notion of \emph{e-hypergraphs}, hypergraphs equipped with an equivalence relation that is closed under the target arrow:
in other words, whenever the relation identifies the source of two hyperedges, it identifies their targets too.

\begin{definition}
	Let $\mathcal{G} = (E, V, Q, s, t, q)$ be a hypergraph with equivalence and $(S, \pi_1, \pi_2)$ a kernel pair for $q^\star \circ s$.
	We will say that $\mathcal{G}$ is an \emph{e-hypergraph} if $q^\star \circ t \circ \pi_1 = q^\star \circ t \circ \pi_2$.
	
	We will denote by $\egg$ the full subcategory of $\EqHyp$ whose objects are e-hypergraphs, and by $I\colon \egg \to \EqHyp$ the associated inclusion functor.
\end{definition}

\begin{example}
	Consider the hypergraph $\mathcal{G}$ of \Cref{exa_2} and consider as quotient the identity $\id{V_\mathcal{G}}\colon V_\mathcal{G}\colon \to\mathcal{G}$. Then the kernel pair of $\id{V^\star_\mathcal{G}} \circ s$ concide with the kernel pair of $s$, which is empty. Thus  $\mathcal{G}$ is, trivially, an e-hypergraph
\end{example}
A first result that we need is that limits in $\egg$ are computed as in $\EqHyp$. Full proofs are in \Cref{proof:elim}.

\begin{restatable}{lemma}{elim}\label{lem:elim}
	$\egg$ has all limits and $I$ creates them.
\end{restatable}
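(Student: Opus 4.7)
The plan is to invoke \Cref{prop:eqhyp_complete} to compute the limit of any diagram $F\colon \D \to \egg$ inside $\EqHyp$, and then to show that the resulting object already satisfies the e-hypergraph condition, so that it serves as the limit in $\egg$. Fullness of $I$ will then yield both the lifting and the uniqueness properties required for creation.

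Concretely, let $\mathcal{G}=(E,V,Q,s,t,q)$ be the limit of $I\circ F$ in $\EqHyp$, with limiting cone $h^D=(h^D_E,h^D_V,h^D_Q)\colon \mathcal{G}\to F(D)$, and let $(S,\pi_1,\pi_2)$ be a kernel pair of $q^\star\circ s$. For each $D$, write $F(D)=(E_D,V_D,Q_D,s_D,t_D,q_D)$ and let $(S_D,\pi_1^D,\pi_2^D)$ be a kernel pair of $q_D^\star\circ s_D$. Because $h^D$ is a morphism of hypergraphs with equivalence, the identities $(h^D_V)^\star\circ s = s_D\circ h^D_E$, $(h^D_V)^\star\circ t = t_D\circ h^D_E$ and $h^D_Q\circ q = q_D\circ h^D_V$ hold, so composing and using $(-)^\star$ we get $(h^D_Q)^\star\circ q^\star\circ s = q_D^\star\circ s_D\circ h^D_E$ and $(h^D_Q)^\star\circ q^\star\circ t = q_D^\star\circ t_D\circ h^D_E$. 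Hence the square with $h^D_E$, $(h^D_Q)^\star$ on top/bottom and $q^\star\circ s$, $q_D^\star\circ s_D$ on the sides commutes, and \Cref{lemma:kern_pairs_pres_pullbacks} yields a mediating arrow $k^D\colon S\to S_D$ with $\pi_i^D\circ k^D = h^D_E\circ \pi_i$ for $i=1,2$.

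Since $F(D)\in\egg$, we have $q_D^\star\circ t_D\circ \pi_1^D = q_D^\star\circ t_D\circ \pi_2^D$; precomposing with $k^D$ and using the identity above for $t$ rewrites this as
\[
(h^D_Q)^\star\circ q^\star\circ t\circ \pi_1 \;=\; (h^D_Q)^\star\circ q^\star\circ t\circ \pi_2
\]
for every $D\in\D$. It remains to cancel the $(h^D_Q)^\star$'s jointly. By the second clause of \Cref{prop:eqhyp_complete}, the cone component $q\colon V\eto Q$ fits under a mono $m\colon Q\mto L$ with $L$ the limit of the $Q_D$ in $\Set$, so the family $\{h^D_Q\}_{D\in\D}$ is jointly monic in $\Set$. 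Because the action of $(-)^\star$ is length-preserving and componentwise, joint monicity transfers to $\{(h^D_Q)^\star\}_{D\in\D}$: two words of $Q^\star$ that agree on every $(h^D_Q)^\star$ have equal length and equal letter-by-letter images, hence coincide. Therefore $q^\star\circ t\circ\pi_1=q^\star\circ t\circ\pi_2$, showing that $\mathcal{G}$ is an e-hypergraph.

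Finally, since $I$ is the inclusion of a full subcategory, any cone in $\egg$ over $F$ is a cone in $\EqHyp$ over $I\circ F$, and thus factors uniquely through $\mathcal{G}$; moreover the factorisation lies in $\egg$ by fullness, proving that $\mathcal{G}$ is the limit of $F$ and that $I$ creates it. I expect the only genuinely delicate step to be the kernel-pair transfer argument, i.e.\ using \Cref{lemma:kern_pairs_pres_pullbacks} to produce the mediating $k^D$ and then reducing the e-condition on $\mathcal{G}$ to the e-condition on each $F(D)$; the joint monicity of the family $\{(h^D_Q)^\star\}$ is an easy consequence of the description of $Q$ given in \Cref{prop:eqhyp_complete} together with the componentwise nature of $(-)^\star$.
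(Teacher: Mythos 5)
Your proof is correct and follows essentially the same route as the paper's: compute the limit in $\EqHyp$ via \Cref{prop:eqhyp_complete}, transport the kernel pair of $q^\star\circ s$ along each cone component to the kernel pair of $q_D^\star\circ s_D$, apply the e-condition of each $F(D)$, and cancel using the mono $m\colon Q\mto L$, with fullness of $I$ giving creation. The only differences are presentational: you obtain the mediating arrow $k^D$ by citing \Cref{lemma:kern_pairs_pres_pullbacks} where the paper constructs the analogous $a_d$ directly from the pullback property, and you spell out the joint monicity of $\{(h^D_Q)^\star\}$ that the paper compresses into ``by the universal property of limits'' together with $m^\star$ being mono.
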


\begin{corollary}\label{cor:ereg}
	If an arrow $h: \mathcal{G \to H}$ in $\egg$ is a regular mono in $\egg$ then $I(h)$ is a regular mono in $\hyp$. 
\end{corollary}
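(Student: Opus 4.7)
The plan is to deduce the corollary directly from the preservation of limits by the inclusion functor $I$, which follows from Lemma~\ref{lem:elim}. Recall that an arrow is a regular mono precisely when it arises as the equalizer of some parallel pair; thus, to transport a regular mono along a functor it suffices that the functor preserves equalizers.

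More concretely, suppose $h\colon \mathcal{G}\to \mathcal{H}$ is a regular mono in $\egg$. Then by definition there exists a parallel pair $f,g\colon \mathcal{H}\rightrightarrows \mathcal{K}$ in $\egg$ together with an equalizer cone $(\mathcal{G},h)$ for $(f,g)$. By Lemma~\ref{lem:elim}, $\egg$ has all limits and $I$ creates them; in particular, $I$ preserves all limits, and hence preserves equalizers. Consequently, $(I(\mathcal{G}), I(h))$ is the equalizer of the pair $I(f), I(g)\colon I(\mathcal{H})\rightrightarrows I(\mathcal{K})$ in $\EqHyp$, so $I(h)$ is a regular mono in $\EqHyp$.

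No obstacle is expected: the content of the corollary is essentially formal once Lemma~\ref{lem:elim} is available, since creation of limits by a full inclusion is well known to imply preservation, and regular monos are stable under limit-preserving functors between categories having the relevant equalizers. (As a sanity check, note that we cannot hope to drop the hypothesis that $h$ is regular mono in $\egg$: a plain mono in $\egg$ need not be regular in $\EqHyp$, and the statement genuinely uses the regularity to express $h$ as an equalizer of arrows that lie in $\egg$.)
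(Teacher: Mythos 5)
Your proposal is correct and is essentially the paper's own (implicit) argument: the corollary is meant to follow at once from \Cref{lem:elim}, exactly as you argue, since creation of limits by $I$ together with completeness of $\EqHyp$ gives preservation of equalizers, and regular monos are precisely equalizers. Note only that the ``$\hyp$'' in the statement is evidently a slip for ``$\EqHyp$'' (the codomain of $I$), which is what you prove; if one insisted on the literal reading, regularity of the underlying $\hyp$-morphism would then follow from \Cref{cor:mono2} together with the fact that in $\hyp$ every mono is regular.
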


We can now turn to pushouts. We refer again the reader to \Cref{proof:epo} for the details.

\noindent
\parbox{11.5cm}{\begin{restatable}{lemma}{epo}\label{lem:epo}
	Suppose that the square on the right is a pushout in $\EqHyp$ and that $m$ is a mono in $\pbc$. If $\mathcal{G}_1$, $\mathcal{G}_2$ and $\mathcal{G}_3$ are e-hypergraphs, then $\mathcal{P}$ is 
	is an e-hypergraph.
\end{restatable}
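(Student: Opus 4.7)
The plan is to argue componentwise in $\Set$. By Corollary~\ref{cor:limcolim} and the first item of Proposition~\ref{prop:eqhyp_complete}, the sets $E_{\mathcal P}$, $V_{\mathcal P}$ and $Q_{\mathcal P}$ are obtained as pushouts in $\Set$ of the corresponding components of the $\mathcal{G}_i$. Since $m\in\pbc$ is a regular mono, Corollary~\ref{cor:mono2} gives that $m_E$, $m_V$ and $m_Q$ are injective, so these three pushouts are also pullbacks in $\Set$ with both legs injective. Writing $\phi_{\mathcal G}:=q_{\mathcal G}^\star\circ s_{\mathcal G}$ and $\psi_{\mathcal G}:=q_{\mathcal G}^\star\circ t_{\mathcal G}$, the task is to show $\phi_{\mathcal P}(e_1)=\phi_{\mathcal P}(e_2)\Rightarrow\psi_{\mathcal P}(e_1)=\psi_{\mathcal P}(e_2)$ for every $e_1,e_2\in E_{\mathcal P}$.

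I would then proceed by case analysis on the origin of $e_1,e_2$ in the pushout. The same-side case is routine: for $e_i=\mu_E(a_i)$ with $a_i\in E_{\mathcal{G}_2}$, the hypothesis becomes $\mu_Q^\star(\phi_{\mathcal{G}_2}(a_1))=\mu_Q^\star(\phi_{\mathcal{G}_2}(a_2))$; Remark~\ref{rem:mono} together with the injectivity of $\mu_Q$ reduce this to $\phi_{\mathcal{G}_2}(a_1)=\phi_{\mathcal{G}_2}(a_2)$, whence the e-hypergraph condition on $\mathcal{G}_2$ yields $\psi_{\mathcal{G}_2}(a_1)=\psi_{\mathcal{G}_2}(a_2)$, which I push forward along $\mu_Q^\star$; the $\mathcal{G}_3$-side case is symmetric.

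The main obstacle is the mixed case $e_1=\mu_E(a)$, $e_2=\nu_E(b)$, with $a\in E_{\mathcal{G}_2}$ and $b\in E_{\mathcal{G}_3}$. Since the $Q$-pushout in $\Set$ is also a pullback and $(-)^\star$ preserves pullbacks (Proposition~\ref{prop:fact}), from $\mu_Q^\star(\phi_{\mathcal{G}_2}(a))=\nu_Q^\star(\phi_{\mathcal{G}_3}(b))$ I obtain a unique $u\in Q_{\mathcal{G}_1}^\star$ satisfying $f_Q^\star(u)=\phi_{\mathcal{G}_2}(a)$ and $m_Q^\star(u)=\phi_{\mathcal{G}_3}(b)$; invoking $m\in\pbc$ and again pullback-preservation of $(-)^\star$, this $u$ lifts uniquely to $v\in V_{\mathcal{G}_1}^\star$ with $q_{\mathcal{G}_1}^\star(v)=u$ and $m_V^\star(v)=s_{\mathcal{G}_3}(b)$. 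The closing move is to exploit the kernel-pair functoriality of Lemma~\ref{lemma:kern_pairs_pres_pullbacks} and the $\Set$-specific Lemma~\ref{prop:kerset} on the cube formed by the $\phi_{\mathcal G_i}$'s sitting above the $E$-pushout (whose $Q^\star$-base is a pullback thanks to the injectivity of $m_Q^\star$), so as to relate $\psi_{\mathcal{P}}(e_1)$ and $\psi_{\mathcal{P}}(e_2)$ through the e-hypergraph conditions on all three $\mathcal{G}_i$. The hardest technical point, and the one where the full pullback clause of $\pbc$ rather than mere regular monicity becomes essential, is matching the target-classes through $\mathcal{G}_1$ consistently between the $\mathcal{G}_2$- and $\mathcal{G}_3$-sides so that the two descriptions of $\psi_{\mathcal P}(e_i)$ agree via $\mu_Q^\star\circ f_Q^\star=\nu_Q^\star\circ m_Q^\star$.
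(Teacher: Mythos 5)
Your reduction breaks at the very first step of the case analysis. From $m\in\pbc$ you only get that the legs \emph{opposite} $m$ are injective: $n_E$, $n_V$, $n_Q$ are pushouts of the injective $m_E$, $m_V$, $m_Q$, but the other legs $z_E$, $z_V$, $z_Q$ are pushouts of the arbitrary morphism $h$ and need not be injective ($z_Q$ merges classes whenever $h_Q$ does). So your claim that the componentwise pushouts have ``both legs injective'' is false, and the assertion that the $\mathcal{G}_3$-side case is ``symmetric'' to the $\mathcal{G}_2$-side case collapses: two edges of $\mathcal{G}_3$ may have inequivalent sources in $Q_3^\star$ and yet acquire equal quotiented sources only after passing to $Q_{\mathcal{P}}^\star$, so the e-condition of $\mathcal{G}_3$ cannot be invoked via injectivity. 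This is not a corner case; it is exactly where the interaction between the gluing and the closure condition lives, and it is the part of the statement that actually needs the $\pbc$ hypothesis.

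The mixed case is also not closed. The lift you construct is a word $v\in V_{\mathcal{G}_1}^\star$ over $u\in Q_{\mathcal{G}_1}^\star$; it is a word of vertices, not an edge of $\mathcal{G}_1$, so there is nothing to which the e-condition of $\mathcal{G}_1$ can be applied (and indeed the paper's proof never uses the e-condition of $\mathcal{G}_1$). Your ``closing move'' defers everything to \Cref{lemma:kern_pairs_pres_pullbacks} and \Cref{prop:kerset} without checking their hypotheses: \Cref{prop:kerset} needs, besides the pullback base, that the \emph{left} face of the cube — the square comparing $q_1^\star\circ s_1$ with $q_3^\star\circ s_3$ over $m_E$ and $m_Q^\star$ — is a pullback, and verifying (or circumventing) this is precisely where the work is. The paper's proof does not argue elementwise at all: it applies \Cref{prop:kerset} to the cube whose top is the $E$-pushout, whose vertical arrows are the $q_i^\star\circ s_i$, and whose base is the starred $Q$-pushout (a pullback because pushouts along injections in $\Set$ are pullbacks and $(-)^\star$ preserves pullbacks), concluding that the kernel pairs of these arrows form a pushout square; since its two injections $k_{n_E}$, $k_{z_E}$ are then jointly epic, the e-conditions of $\mathcal{G}_2$ and $\mathcal{G}_3$ alone yield $q^\star\circ t\circ u_1=q^\star\circ t\circ u_2$. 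Your same-side/mixed decomposition cannot substitute for this, because a pair in the kernel pair of $q^\star\circ s$ need not arise from a pair of edges inside a single $\mathcal{G}_i$ that already have equal quotiented sources there — that is the content the kernel-pair pushout provides.
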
}\hfill 
\parbox{4cm}{\xymatrix@R=15pt{\mathcal{G}_1 \ar[r]^{h}\ar@{>->}[d]_{m}&\mathcal{G}_2\ar[d]^{n}\\\mathcal{G}_3\ar[r]_{z}&\mathcal{P}}}

\commentato{ 
\begin{restatable}{corollary}{rege}\label{cor:rege}
		An arrow $h\colon \mathcal{G \to H}$  is a regular mono in $\egg$ if and only if $I(h)$ is a regular mono in $\hyp$.  In particular a morphism of $\egg$ is a regular mono if and only if all its components are injections.
\end{restatable}
\begin{proof}
	contenuto...
\end{proof}
}

Let now $\pbe$ be the class of arrows in $\egg$ that are sent to $\pbc$ by $I$. By definition and \Cref{lem:pbmono} we have that such class is closed under composition and decomposition, it contains all isomorphisms and it is stable under pullbacks. Using \Cref{lem:pbmono,lem:epo} we can further deduce that it is stable under pushouts. Then \Cref{thm:slice-functors,cor:eqade} yield the following.

\begin{corollary}\label{cor:pbe}
	$\egg$ is $\pbe$-adhesive.
\end{corollary}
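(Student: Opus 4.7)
The plan is to invoke point (4) of \Cref{thm:slice-functors} with $\X=\EqHyp$ (which is $\pbc$-adhesive by \Cref{cor:eqade}), $\Y=\egg$ as a full subcategory of $\EqHyp$, and $\mathcal{N}=\pbe$. To do so, I would verify three things: that $\egg$ is closed under pullbacks, that it is closed under $\pbc$-pushouts, and that $\pbe$ has all the structural closure properties demanded by the statement.

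Closure under pullbacks is immediate from \Cref{lem:elim}: since $I\colon \egg\to \EqHyp$ creates all limits, any pullback of arrows in $\egg$ computed in $\EqHyp$ already lies in $\egg$. Closure under $\pbc$-pushouts is exactly the content of \Cref{lem:epo}: if we form the pushout $(\mathcal{P},n,z)$ in $\EqHyp$ of a span with apex $\mathcal{G}_1\in\egg$, co-apexes $\mathcal{G}_2,\mathcal{G}_3\in\egg$ and one leg $m\colon \mathcal{G}_1\mto\mathcal{G}_3$ in $\pbc$, then $\mathcal{P}$ is an e-hypergraph, so by fullness of $\egg$ this is also a pushout in $\egg$.

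For the class $\pbe$, containment of isos and closure under composition and decomposition follow because $I$ is faithful and reflects these properties from the corresponding properties of $\pbc$ (\Cref{lem:pbmono}). Pullback stability: given a pullback in $\egg$ with one leg in $\pbe$, by closure under pullbacks it is also a pullback in $\EqHyp$; pullback stability of $\pbc$ in $\EqHyp$ then puts the opposite leg in $\pbc$, hence in $\pbe$ since it is already an arrow of $\egg$. Pushout stability is analogous, using closure of $\egg$ under $\pbc$-pushouts and pushout stability of $\pbc$ in $\EqHyp$.

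Having checked these hypotheses, \Cref{thm:slice-functors}(4) immediately yields that $\egg$ is $\pbe$-adhesive. The actual mathematical work has all been discharged in \Cref{lem:elim,lem:epo,lem:pbmono}, so the proof is essentially a bookkeeping exercise; the one point that requires attention is to confirm that the pushouts in $\egg$ relevant to the $\mathcal{M}$-adhesivity definition really do coincide with those in $\EqHyp$, which is precisely why \Cref{lem:epo} is phrased as a closure statement rather than merely an existence statement.
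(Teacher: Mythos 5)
Your proposal is correct and follows essentially the same route as the paper: the paper likewise checks that $\pbe$ inherits the closure properties from $\pbc$ via \Cref{lem:pbmono} (with \Cref{lem:epo} supplying pushout stability and \Cref{lem:elim} giving closure of $\egg$ under pullbacks), and then applies point (4) of \Cref{thm:slice-functors} together with \Cref{cor:eqade}. Your added remark that the relevant pushouts in $\egg$ coincide with those computed in $\EqHyp$ by fullness is exactly the role \Cref{lem:epo} plays in the paper's argument.
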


\subsection{E-term graphs, a.k.a. EGGs}

We now turn our attention to the labelled context.

\begin{definition} We say that an object $(\mathcal{H}, l)$ of $\EqHyp_\Sigma$ is a \emph{labelled e-hypergraph} if $\mathcal{H}$ is an e-hypergraph. We define the category $\egg_\Sigma$ as the full subcategory of $\EqHyp_\Sigma$ given by labelled e-hypergraphs. We denote by $Z_\Sigma$ the corresponding inclusion functor .
\end{definition}

To prove some adhesivity property of $\egg_\Sigma$, we begin with the following elementary, yet useful observation.

\begin{remark}\label{rem:algegg}
	Given a signature $\Sigma = (O_\Sigma, \ari_\Sigma)$,
	the hypergraph 
	$R(\mathcal{G}^\Sigma) = (O_\Sigma, 1, 1, \ari_\Sigma, \gamma_1, \id{1})$
	is an object of $\egg$.
	Indeed, under the identification of $1^\star$ with $\mathbb{N}$, the kernel $(S, \pi_1, \pi_2)$  of $\id{\mathbb{N}} \circ \ari_\Sigma$, is given by
	$S:=\{(o_1, o_2)\in O_\Sigma \times O_\Sigma \mid \ari_\Sigma(o_1)=\ari_\Sigma(o_2)\}$
	equipped with the two projections.  On the other hand, both $\id{\mathbb{N}} \circ \gamma_1 \circ \pi_1$ and $\id{\mathbb{N}}\circ \gamma_1 \circ \pi_2$ are the function $O_\Sigma \to \mathbb{N}$ constant in $1$.
\end{remark}

\Cref{rem:algegg} now implies at once the following result.

\samepage{\begin{proposition}\label{prop:varie}
	Let $\Sigma$ be an algebraic signature. Then the following hold
	\begin{enumerate}
		\item $\egg_\Sigma$ is equivalent to $\egg/R(\mathcal{G}^\Sigma)$;
		\item there exists a functor $W_\Sigma \colon \egg_\Sigma\to \egg$ forgetting the labeling which creates all colimits, pullbacks and equalizers. 
	\end{enumerate}
\end{proposition}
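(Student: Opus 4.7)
The plan for Part (1) is to lift the equivalence of \Cref{prop:slice} to the full subcategories of e-hypergraphs on both sides. By \Cref{prop:slice}, the assignment sending $(\mathcal{H}, l)$ to $(l, !_{Q_\mathcal{H}}) \colon \mathcal{H} \to R(\mathcal{G}^\Sigma)$ is an equivalence $\EqHyp_\Sigma \simeq \EqHyp/R(\mathcal{G}^\Sigma)$. The key input is \Cref{rem:algegg}: since $R(\mathcal{G}^\Sigma)$ itself lies in $\egg$, the slice $\egg/R(\mathcal{G}^\Sigma)$ is well-defined as a full subcategory of $\EqHyp/R(\mathcal{G}^\Sigma)$. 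By definition, an object $(\mathcal{H}, l)$ of $\EqHyp_\Sigma$ lies in $\egg_\Sigma$ exactly when $\mathcal{H} \in \egg$, which is precisely the condition that the corresponding morphism under the equivalence has domain in $\egg$. Since both inclusions are full, the equivalence of \Cref{prop:slice} restricts to the desired equivalence $\egg_\Sigma \simeq \egg/R(\mathcal{G}^\Sigma)$.

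For Part (2), the strategy is to transport the claim through the equivalence of Part (1), identifying $W_\Sigma$ with the canonical projection $P \colon \egg/R(\mathcal{G}^\Sigma) \to \egg$, and then invoke the standard slice-category fact that such a projection creates all connected limits and all colimits. Concretely, given a diagram $D \colon \mathcal{J} \to \egg_\Sigma$ whose image $P \circ D$ in $\egg$ admits a colimit cocone $(C, \{\lambda_j\}_{j \in \mathcal{J}})$, the family of labels $\{D(j) \to R(\mathcal{G}^\Sigma)\}_{j \in \mathcal{J}}$ is itself a cocone over $P \circ D$; its universal factorisation produces a unique arrow $C \to R(\mathcal{G}^\Sigma)$, which lifts $C$ to the required object of $\egg_\Sigma$ while the universal property transfers automatically. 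The dual argument, using connectedness of the index category to patch the labels into a well-defined map from the limit, yields creation of pullbacks and equalizers, both of which do exist in $\egg$ thanks to \Cref{lem:elim}.

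No step here should present a serious obstacle: both parts reduce, via \Cref{prop:slice} and \Cref{rem:algegg}, to routine slice-category bookkeeping. The one subtlety worth flagging is that transferring creation of (co)limits across the equivalence of Part (1) relies on the observation that equivalences preserve and reflect all universal properties, so the creation statement for the projection $P$ immediately yields the creation statement for $W_\Sigma \colon \egg_\Sigma \to \egg$ without further work.
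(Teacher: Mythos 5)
Your proposal is correct and follows essentially the same route as the paper, which derives the proposition ``at once'' from \Cref{rem:algegg} (so that $R(\mathcal{G}^\Sigma)$ is an object of $\egg$ and the equivalence of \Cref{prop:slice} restricts to the full subcategories) together with the slice/comma-category creation facts of \Cref{colim,lim,cor:pbapp}. Your explicit factorisation-of-labels and connectedness arguments are exactly the content of those appendix lemmas, so no gap remains.
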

}

Let $\pbe_\Sigma$ be the class of morphisms in $\egg_\Sigma$ whose image in $\egg$ lies in $\pbe$. Notice that $\pbe_\Sigma$ is also the class of arrows whose image through $Z_\Sigma$ is in $\pbc_\Sigma$.
By \Cref{thm:slice-functors} we get the following result.

\begin{corollary}\label{cor:eggade1}
	$\egg_\Sigma$ is $\pbe_\Sigma$-adhesive.
\end{corollary}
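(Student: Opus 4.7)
The statement is a direct corollary, so my plan is essentially to verify that the hypotheses of Proposition~\ref{thm:slice-functors}(2) are in place and then read off the conclusion. First, Corollary~\ref{cor:pbe} already gives that $\egg$ is $\pbe$-adhesive, and Proposition~\ref{prop:varie}(1) gives an equivalence $\egg_\Sigma \simeq \egg/R(\mathcal{G}^\Sigma)$. So the natural route is: apply the slice-adhesivity result of Proposition~\ref{thm:slice-functors}(2) to the $\pbe$-adhesive category $\egg$ at the object $R(\mathcal{G}^\Sigma)$, obtaining that $\egg/R(\mathcal{G}^\Sigma)$ is $\pbe/R(\mathcal{G}^\Sigma)$-adhesive; then transport this along the equivalence of Proposition~\ref{prop:varie}(1).

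The only step that needs a brief sanity check is the identification of the class $\pbe_\Sigma$ with $\pbe/R(\mathcal{G}^\Sigma)$ across the equivalence. By its definition, $\pbe_\Sigma$ consists of those morphisms of $\egg_\Sigma$ whose image through the forgetful functor $W_\Sigma\colon \egg_\Sigma \to \egg$ lies in $\pbe$. Under the equivalence $\egg_\Sigma \simeq \egg/R(\mathcal{G}^\Sigma)$, the functor $W_\Sigma$ corresponds precisely to the domain projection of the slice, so a morphism of $\egg_\Sigma$ belongs to $\pbe_\Sigma$ if and only if its underlying arrow in $\egg$ is in $\pbe$, which matches the definition of $\pbe/R(\mathcal{G}^\Sigma)$ in Proposition~\ref{thm:slice-functors}(2) verbatim.

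There is no real obstacle to overcome here; all the substantive work (closure of $\pbe$ under composition, decomposition, pullbacks, pushouts; stability and Van Kampen properties of $\pbe$-pushouts in $\egg$) has been carried out upstream in Corollary~\ref{cor:pbe} and the lemmas feeding into it, and the slice construction is a black box via Proposition~\ref{thm:slice-functors}(2). Thus the whole proof reduces to citing Corollary~\ref{cor:pbe}, Proposition~\ref{prop:varie}(1), and Proposition~\ref{thm:slice-functors}(2), with the one-line verification above that the classes of arrows match.
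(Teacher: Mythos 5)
Your proposal is correct and follows essentially the same route as the paper, which derives the corollary from Corollary~\ref{cor:pbe}, the equivalence $\egg_\Sigma \simeq \egg/R(\mathcal{G}^\Sigma)$ of Proposition~\ref{prop:varie}, and the slice case (point 2) of Proposition~\ref{thm:slice-functors}. Your extra check that $W_\Sigma$ corresponds to the slice projection, so that $\pbe_\Sigma$ matches $\pbe/R(\mathcal{G}^\Sigma)$, is exactly the identification the paper leaves implicit.
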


We turn now to term graphs with equivalence.

\begin{definition}\label[definintion]{def:GEqTGs}
	Given a signature $\Sigma$,  we say that an object $(\mathcal{H}, l)$ of $\EqTG_{\Sigma}$ is an \emph{e-term graph} if $\mathcal{H}$ is an e-hypergraph. We define the category $\eg$ as the full subcategory of $\EqTG_{\Sigma}$ given by e-term graphs and denote by $K_\Sigma$ the corresponding inclusion.
\end{definition}

\begin{remark}
	By definition we also have an inclusion functor $Y_\Sigma\colon \eg\to \egg_\Sigma$.
\end{remark}

\noindent 
\parbox{7.5cm}{\begin{remark}
	Now that we have put all the structures of this work in place, it is worthwhile to give a visual map of all the categories that we have discussed/introduced and of the relationships between them. We will use the curved arrows to denote full and faithful inclusions.
\end{remark}}\hfill \parbox{4cm}{\xymatrix@R=15pt@C=15pt{\eg \ar@{^{(}->}[r]^-{Y_\Sigma} \ar@{^{(}->}[d]_{K_\Sigma} & \egg_\Sigma \ar@{^{(}->}[d]^{Z_\Sigma} \ar[r]^{W_\Sigma}& \egg \ar@{^{(}->}[d]^{I}\\ \EqTG_{\Sigma} \ar@{^{(}->}[r]_{J_\Sigma}\ar[d]_{S_\Sigma}& \EqHyp_\Sigma \ar[r]_{V_\Sigma} \ar[d]_{T_\Sigma}& \EqHyp \ar[d]_{T}\\ \tg \ar@{^{(}->}[r]_{I_\Sigma}& \hyp_{\Sigma} \ar[r]_{U_\Sigma}  &\hyp}}

We can now prove our last three results. The reader can find the proof in \Cref{proof:tlim}.

\begin{restatable}{proposition}{limt}\label{prop:limt}
	$\eg$ has equalizers, binary products and pullbacks and they are created by $K_\Sigma$.
\end{restatable}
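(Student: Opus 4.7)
The plan is to reduce creation of these three limits by $K_\Sigma$ to already-established results. Since $\eg$ is a full subcategory of $\EqTG_\Sigma$, it suffices to show that whenever a finite diagram $D\colon \mathcal{D}\to \eg$ of one of the three shapes (discrete on two objects, parallel pair, cospan) is given, the limit of $K_\Sigma \circ D$ in $\EqTG_\Sigma$---which exists by \Cref{prop:term}---actually lies in $\eg$. Being an e-term graph amounts to being a term graph together with the e-hypergraph condition on the underlying object of $\EqHyp$; the first part is free from \Cref{prop:term}, so I only have to verify the e-hypergraph property.

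For equalizers and pullbacks the argument is direct. Compose $K_\Sigma$ with the forgetful chain $J_\Sigma\colon \EqTG_\Sigma \to \EqHyp_\Sigma$ and $V_\Sigma\colon \EqHyp_\Sigma \to \EqHyp$. By \Cref{prop:term}, $J_\Sigma$ creates these limits; by \Cref{prop:lim}, $V_\Sigma$ creates all connected limits, so equalizers and pullbacks in particular. Hence the underlying $\EqHyp$-object of the $\EqTG_\Sigma$-limit is precisely the corresponding limit in $\EqHyp$ of a diagram whose vertices are e-hypergraphs. Applying \Cref{lem:elim}, which says that $\egg$ has all limits and $I$ creates them, this limit in $\EqHyp$ is again an e-hypergraph.

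For binary products, $V_\Sigma$ does not help directly, so the argument has to detour through the slice presentation $\EqHyp_\Sigma \simeq \EqHyp/R(\mathcal{G}^\Sigma)$ given by \Cref{prop:slice}. The binary product in $\EqHyp_\Sigma$ of two labelled e-hypergraphs $(\mathcal{H}_1,l_1)$ and $(\mathcal{H}_2,l_2)$ then corresponds, after forgetting the labelling, to the pullback in $\EqHyp$ of the cospan $\mathcal{H}_1 \xrightarrow{l_1} R(\mathcal{G}^\Sigma) \xleftarrow{l_2} \mathcal{H}_2$. All three vertices of this cospan are e-hypergraphs: $\mathcal{H}_1$ and $\mathcal{H}_2$ by hypothesis, and $R(\mathcal{G}^\Sigma)$ by \Cref{rem:algegg}. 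A second appeal to \Cref{lem:elim} shows that the resulting pullback is an e-hypergraph, so the binary product in $\EqTG_\Sigma$ lies in $\eg$.

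In all three cases the $\EqTG_\Sigma$-limit is therefore an e-term graph, and by fullness of $K_\Sigma$ it is automatically a limit in $\eg$, which is exactly the creation statement. The main subtlety is the handling of binary products: being non-connected, they cannot be transferred through $V_\Sigma$, and the slice-category detour, together with \Cref{rem:algegg} to ensure the labelling codomain itself sits in $\egg$, is essential to reduce the problem back to \Cref{lem:elim}.
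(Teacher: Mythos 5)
Your proof is correct and follows essentially the same route as the paper's: take the limit in $\EqTG_{\Sigma}$ provided by \Cref{prop:term}, verify the e-hypergraph condition on the underlying object via \Cref{lem:elim}, and conclude by fullness of the inclusion. Your explicit treatment of binary products through the slice presentation and \Cref{rem:algegg} merely spells out a step that the paper's one-line proof leaves implicit.
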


Let now $\mathcal{T}_\Sigma$ be the class of morphisms of $\eg$ which are sent by $K_\Sigma$ to the class $\mathcal{T}$. Then \Cref{colim,lem:epo,prop:po} allow us to deduce the following result.

\begin{restatable}{proposition}{pot}\label{prop:pot}
	$\eg$ has $\mathcal{T}_\Sigma$-pushouts, which are created by $K_\Sigma$.
\end{restatable}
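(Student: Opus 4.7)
The plan is to reduce the existence of $\mathcal{T}_\Sigma$-pushouts in $\eg$ to the corresponding statement in $\EqTG_\Sigma$ given by \Cref{prop:po}, and then to use \Cref{lem:epo} to check that the apex of this pushout remains an e-hypergraph and therefore lies in $\eg$. In short, the work has already been done; what remains is to track the underlying structure through the tower of forgetful functors.

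Concretely, I would start from a span $\mathcal{G}_2 \stackrel{h}{\leftarrow} \mathcal{G}_1 \stackrel{k}{\rightarrow} \mathcal{G}_3$ in $\eg$ with, say, $h\in \mathcal{T}_\Sigma$. By definition $K_\Sigma(h)\in \mathcal{T}$, so \Cref{prop:po} supplies a pushout $\mathcal{P}$ in $\EqTG_\Sigma$ and asserts that $J_\Sigma(\mathcal{P})$ is the pushout of the image span in $\EqHyp_\Sigma$. Applying $V_\Sigma$, which creates colimits by \Cref{prop:lim}, we get that $V_\Sigma\circ J_\Sigma(\mathcal{P})$ is the pushout in $\EqHyp$ of the span obtained by discarding both the labelling and the term-graph constraint; furthermore, the image of $h$ through $V_\Sigma\circ J_\Sigma\circ K_\Sigma$ is in $\pbc$, since $\mathcal{T}_\Sigma$ is defined as the pre-image of $\pbc_\Sigma$ composed with the condition of being a regular mono in $\tg$.

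At this stage the hypotheses of \Cref{lem:epo} are all in place: the three objects $\mathcal{G}_i$ forget to e-hypergraphs (they come from $\eg$) and the $\pbc$-mono condition holds. The lemma then gives that the apex of the pushout in $\EqHyp$ is itself an e-hypergraph, which by definition means $\mathcal{P}\in \eg$. Since $\eg$ is a full subcategory of $\EqTG_\Sigma$, every cocone in $\eg$ over the original span is also a cocone in $\EqTG_\Sigma$ and hence factors uniquely through $\mathcal{P}$; the mediating morphism lives between objects of $\eg$, so by fullness it is a morphism of $\eg$. Thus $\mathcal{P}$ with its coprojections is the pushout in $\eg$, and $K_\Sigma$ creates it: the pushout exists in $\EqTG_\Sigma$ and its apex automatically belongs to $\eg$.

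I do not anticipate any real obstacle, since the decisive closure property is exactly \Cref{lem:epo}, and the compatibility of pushouts with the forgetful tower $\eg \to \EqTG_\Sigma \to \EqHyp_\Sigma \to \EqHyp$ is already encoded in the creation statements recalled in \Cref{prop:po}, \Cref{prop:lim} and \Cref{cor:limcolim}. The only care required is in correctly identifying which functor preserves/creates which structure, and in noting that full subcategories automatically inherit colimits whose apex happens to lie in them.
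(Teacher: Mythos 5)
Your proposal is correct and follows essentially the same route as the paper, which deduces the proposition directly from \Cref{prop:po} (pushouts created by $J_\Sigma$ in $\EqTG_\Sigma$), \Cref{lem:epo} (closure of e-hypergraphs under $\pbc$-pushouts), and the creation/preservation facts for the forgetful functors. Your unfolding of the argument, including the fullness observation that transfers the universal property back to $\eg$, is exactly what the paper's citation-style proof leaves implicit.
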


Reasoning as in the previous sections, \Cref{prop:limt,prop:pot} now give us the following. 

\begin{corollary}
	$\eg$ is $\mathcal{T}_\Sigma$-adhesive.
\end{corollary}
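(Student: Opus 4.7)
The plan is to invoke item (4) of \Cref{thm:slice-functors}, exactly as was done to derive the analogous \Cref{cor:tade} from \Cref{prop:tlim} and \Cref{prop:po} in the previous section. The ambient $\mathcal{M}$-adhesive category will be $\EqTG_{\Sigma}$, which is $\mathcal{T}$-adhesive by \Cref{cor:tade}, and the full subcategory will be $\eg$, embedded via $K_\Sigma$.

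First I would verify that $\eg$ is closed under pullbacks and $\mathcal{T}$-pushouts in $\EqTG_{\Sigma}$. Closure under pullbacks is immediate from \Cref{prop:limt}, since $K_\Sigma$ creates them; closure under $\mathcal{T}$-pushouts along arrows coming from $\mathcal{T}_\Sigma$ is \Cref{prop:pot}. Together these say that the image of $\eg$ under $K_\Sigma$ is closed in $\EqTG_{\Sigma}$ under the limits and colimits required by \Cref{thm:slice-functors}(4).

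Next I would check that $\mathcal{T}_\Sigma$ itself has all the formal properties required of the class witnessing adhesivity. By definition $\mathcal{T}_\Sigma = K_\Sigma^{-1}(\mathcal{T})$, and since $K_\Sigma$ is full and faithful the preimage automatically inherits from $\mathcal{T}$ the properties of: containing all isomorphisms, being closed under composition and decomposition, and being contained in $\mon(\eg)$ (indeed in $\reg(\eg)$, using \Cref{cor:ereg} together with the characterisation of regular monos in $\EqTG_\Sigma$). Stability under pullbacks in $\eg$ follows from \Cref{prop:limt} plus stability of $\mathcal{T}$ under pullbacks in $\EqTG_{\Sigma}$, and stability under pushouts follows from \Cref{prop:pot} plus stability of $\mathcal{T}$ under pushouts.

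Finally, since \Cref{prop:pot} says that $\mathcal{T}_\Sigma$-pushouts in $\eg$ are created by $K_\Sigma$ and coincide there with $\mathcal{T}$-pushouts, the Van Kampen property of the latter in $\EqTG_{\Sigma}$ transfers directly to the former in $\eg$: any cube in $\eg$ with vertical arrows in $\mathcal{T}_\Sigma$, bottom face a $\mathcal{T}_\Sigma$-pushout and the appropriate back/left pullbacks, becomes after applying $K_\Sigma$ a cube satisfying the hypotheses for $\mathcal{T}$-adhesivity in $\EqTG_\Sigma$, and full faithfulness of $K_\Sigma$ lets us pull the conclusion back to $\eg$. This is essentially a bookkeeping step; the only mild subtlety is matching the pullback hypotheses across the inclusion, which again is handled by creation. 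Thus all the hypotheses of \Cref{thm:slice-functors}(4) are satisfied and $\eg$ is $\mathcal{T}_\Sigma$-adhesive.
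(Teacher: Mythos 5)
Your proposal is correct and follows essentially the same route as the paper, which obtains the corollary by "reasoning as in the previous sections" from \Cref{prop:limt} and \Cref{prop:pot}, i.e.\ exactly the combination of closure of $\eg$ under pullbacks and $\mathcal{T}_\Sigma$-pushouts (created by $K_\Sigma$), the inherited closure properties of $\mathcal{T}_\Sigma$ as the preimage of $\mathcal{T}$, and item (4) of \Cref{thm:slice-functors} applied to the $\mathcal{T}$-adhesive category $\EqTG_{\Sigma}$. You merely spell out the bookkeeping (transfer of the Van Kampen condition along the full and faithful $K_\Sigma$) that the paper leaves implicit.
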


\section{Pros and cons of adhesive rewriting}
\label{rewriting}
The previous sections have shown how hypergraphs and term graphs with equivalence
can be described as suitable $\mathcal{M}$-adhesive categories. 
The same fact holds for their sub-categories where the equivalence is
 closed with respect to operator composition, and this allows to
 model EGGs, as originally presented in~\cite{WillseyNWFTP21}. 

\emph{Sharing.} 
Terms are trees, thus different term graphs may represent the same term, up-to the sharing of sub-terms.
Consider e.g. a constant $a$ and a binary operator $/$: the term $a / a$ admits a few different 
representations as a term graph with equivalence, as shown below

%

\begin{center}\begin{tikzpicture}[baseline=(w)]
        \begin{pgfonlayer}{nodelayer}
                \node[style=small box](a) at (-1.8, 0){$a$};
                \node[style=none] (aatt) at (-1.5, 0){};
                \node[style=node](v) at (-0.9, 0){};
                \node[style=medium box](div) at (0, 0){$/$};
                \node[style=none](divfst) at (-0.3, -0.3){};
                \node[style=none] (divsnd) at (-0.3, 0.3){};
                \node[style=none] (divout) at (0.3, 0){};
                \node[style=node] (w) at (0.9, 0){};
        \end{pgfonlayer}
        \begin{pgfonlayer}{edgelayer}
                \draw(aatt.center) to (v);
                \draw[out=-60, in=-180] (v) to (divfst.center);
                \draw[out=60, in=-180](v) to (divsnd.center);
                \draw(divout.center) to (w);
        \end{pgfonlayer}
\end{tikzpicture}
\qquad
\begin{tikzpicture}[baseline=(w)]
        \begin{pgfonlayer}{nodelayer}
                \node[style=small box](a1) at (-1.8, -0.5){$a$};
                \node[style=none] (a1att) at (-1.5, -0.5){};
                \node[style=node](v1) at (-0.9, -0.5){};
                \node[style=small box](a2) at (-1.8, 0.5){$a$};
                \node[style=none] (a2att) at (-1.5, 0.5){};
                \node[style=node](v2) at (-0.9, 0.5){};
                \node[style=medium box](div) at (0, 0){$/$};
                \node[style=none](divfst) at (-0.3, -0.3){};
                \node[style=none] (divsnd) at (-0.3, 0.3){};
                \node[style=none] (divout) at (0.3, 0){};
                \node[style=node] (w) at (0.9, 0){};
        \end{pgfonlayer}
        \begin{pgfonlayer}{edgelayer}
                \draw(a1att.center) to (v1);
                \draw(a2att.center) to (v2);
                \draw[out=0, in=-120] (v1) to (divfst.center);
                \draw[out=0, in=120](v2) to (divsnd.center);
                \draw(divout.center) to (w);
        \end{pgfonlayer}
        \begin{pgfonlayer}{eqlayer}
                \draw[dashed, rounded corners] (-1.2, -0.8) rectangle (-0.6, 0.8);
        \end{pgfonlayer}
\end{tikzpicture}
\qquad
\begin{tikzpicture}[baseline=(w)]
        \begin{pgfonlayer}{nodelayer}
                \node[style=small box](a1) at (-1.8, -0.5){$a$};
                \node[style=none] (a1att) at (-1.5, -0.5){};
                \node[style=node](v1) at (-0.9, -0.5){};
                \node[style=small box](a2) at (-1.8, 0.5){$a$};
                \node[style=none] (a2att) at (-1.5, 0.5){};
                \node[style=node](v2) at (-0.9, 0.5){};
                \node[style=medium box](div) at (0, 0){$/$};
                \node[style=none](divfst) at (-0.3, -0.3){};
                \node[style=none] (divsnd) at (-0.3, 0.3){};
                \node[style=none] (divout) at (0.3, 0){};
                \node[style=node] (w) at (0.9, 0){};
        \end{pgfonlayer}
        \begin{pgfonlayer}{edgelayer}
                \draw(a1att.center) to (v1);
                \draw(a2att.center) to (v2);
                \draw[out=0, in=-120] (v1) to (divfst.center);
                \draw[out=0, in=120](v2) to (divsnd.center);
                \draw(divout.center) to (w);
        \end{pgfonlayer}
\end{tikzpicture}
\end{center}
The left-most and the right-most images represent just ordinary term graphs, the middle one is a term graph with equivalence.
As such, they are all objects of $\EqTGs$.
However, note that the right-most image is not an object of $\eg$:
constants have no input, 
and nodes that are targets of edges with the same label and whose source is the empty word must be equivalent
and possibly coincide,
as in the middle and in the left-most image, respectively.

In general, for (acyclic) term graphs, a maximally and a minimally shared representation
exist, and for our toy example are the left-most and the right-most diagram above, respectively:
it is a standard result for term graph rewriting, see e.g.~\cite{AriolaKP00}. 
These two representations can be interpreted as the final and the initial object of a suitable comma category.
respectively.
The same properties carry on 
%
for term graphs with equivalence and for EGGs. However, while in the former case the
two representations are the same of those for ordinary term graphs, in the latter case
the minimally shared representation is now the middle diagram.
%

\emph{Rewriting.} The theory of $\mathcal{M}$-adhesivity ensures that if the rules are spans of 
arrows in $\mathcal{M}$, then we can lift the standard properties
of the DPO approach that hold in the category of graphs.
However, as we recalled in the introduction, instead of removing sub-terms, the EGG approach 
chooses to
just add new terms and link them to the older ones via the equivalence relation~\cite{DetlefsNS05}.
So, the corresponding DPO rules 
are spans $L \leftarrow L \rightarrow R$, where the first component is the identity, thus in $\pbc_\Sigma$,
while the second component may not belong to $\pbc_\Sigma$. 

Consider e.g. an EGG rule such $x / x \to 1$, from the introductory example in~\cite{WillseyNWFTP21}.
Variables in a term graph are represented as nodes that do not occur among the targets of an edge, 
thus the rule can be modelled as the DPO rule below, concisely given by the arrow $L \rightarrow R$

\begin{center}
\hspace{.25cm}
\xymatrix{        
 \begin{tikzpicture}[baseline=(w.base)]\begin{pgfonlayer}{nodelayer}
                \node[style=node](v1) at (-0.9, 0.3){};
                \node[style=node](v2) at (-0.9, -0.3){};
                \node[style=medium box] at (0, 0){$/$};
                \node[style=none](divfst) at (-0.3, 0.3){};
                \node[style=none](divsnd) at (-0.3, -0.3){};
                \node[style=none](divout) at (0.3, 0){};
                \node[style=node] (w) at (0.9, 0){};
        \end{pgfonlayer}        
        \begin{pgfonlayer}{edgelayer}
                \draw(v1) to (divfst.center);
                \draw(v2) to (divsnd.center);
                \draw(divout.center) to (w);
        \end{pgfonlayer}
        \begin{pgfonlayer}{eqlayer}
                \draw[dashed, rounded corners](-1.2, 0.6) rectangle (-0.6, -0.6);
                \end{pgfonlayer}\begin{pgfonlayer}{background}
                        \draw[color=white] (-1.5, 0.2) rectangle (1.5, -0.2);
                \end{pgfonlayer}
        \end{tikzpicture}
	\ar@{=>}[r] &
        \begin{tikzpicture}[baseline=(nil.center)]\begin{pgfonlayer}{nodelayer}
                \node[style=node](v1) at (-0.9, 0.3){};
                \node[style=node](v2) at (-0.9, -0.3){};
                \node[style=medium box] at (0, 0){$/$};
                \node[style=none](divfst) at (-0.3, 0.3){};
                \node[style=none](divsnd) at (-0.3, -0.3){};
                \node[style=none](divout) at (0.3, 0){};
                \node[style=node] (w) at (0.9, 0){};
                \node[style=small box] at (0, -1){$1$};
                \node[style=none](one) at (0.3, -1){};
                \node[style=node](z) at (0.9, -1){};
                \node[style=none](nil) at (0, -0.5){};
        \end{pgfonlayer}        
        \begin{pgfonlayer}{edgelayer}
                \draw(v1) to (divfst.center);
                \draw(v2) to (divsnd.center);
                \draw(divout.center) to (w);
                \draw(one.center) to (z);
        \end{pgfonlayer}
        \begin{pgfonlayer}{eqlayer}
                \draw[dashed, rounded corners](-1.2, 0.6) rectangle (-0.6, -0.6);
                \draw[dashed, rounded corners](0.6, 0.3) rectangle (1.2, -1.3);
        \end{pgfonlayer}\begin{pgfonlayer}{background}
                        \draw[color=white] (-1.5, 0.2) rectangle (1.5, -0.2);
        \end{pgfonlayer}\end{tikzpicture}
}
\end{center}

\noindent
In this case $L$ is the minimally shared EGG corresponding to the term $x / x$,
thus the rule could be applied to the two EGGs
depicted above with a match that is injective on equivalence classes.
In general, if the term graph underlying the EGG to whom the DPO rule is applied is acyclic,
the same property holds for the EGG obtained as the result of the rule application.
The right-hand side is a regular mono, though, hence it is not an arrow in $\pbc_\Sigma$.
The asymmetry between left-hand and right-hand sides falls in the current research about left-linear rules for 
adhesive categories, as pursued e.g. in~\cite{BaldanC0G24}.

\emph{Application conditions.} 
As argued above, the DPO rules that are suitable for the EGG approach have identities as left-hand side,
thus they belong to any choice of $\mathcal{M}$.
However, this would allow for the repeated application of the same rule, hence the 
possibility to keep on performing the same rewriting step.
This is forbidden by using rules \emph{with negative application conditions}, given as the usual span 
$L \leftarrow K \rightarrow R$ plus an additional arrow $n: L\rightarrow N$, such that a match $m: L \to G$ is admissible if it cannot
be factorised through $n$~\cite{HabelHT96}. For EGGs and rules $L \leftarrow L \rightarrow R$, it suffices to choose $n$ as the right-hand side itself.
The theory of $\mathcal{M}$-adhesivity carries on for rules with negative application conditions, see~\cite{ehrig2012,ehrig2014adhesive}.

\section{Conclusions and further and related works}
\label{conclusioni}
The aim of our paper was to extend the theory of $\mathcal{M}$-adhesive categories in order to include EGGs, 
a formalism for program optimisation. 
%
To do so, we revisited the notions of hyper-graphs and term graphs with equivalence, 
proving that they are $\mathcal{M}$-adhesive categories, and we extended these results in order to
prove the same property for EGGs as term graphs with equivalence satisfying a suitable closure constraint.
Summing up, we proved that EGGs are objects of an $\mathcal{M}$-adhesive category $\eg$ 
and that optimisation steps are obtained via DPO rules (possibly with negative application conditions) whose 
left-hand side is in $\mathcal{M}$,
and that allows for exploiting the properties 
of the $\mathcal{M}$-adhesive framework.

\emph{Future works.}
Our result on $\eg$ opens a few threads of research. The first is to check how the $\mathcal{M}$-adhesivity 
of EGGs can be pushed to model their rewriting via the double-pushout (DPO) approach. We have seen that 
the rules adopted in the literature of EGGs appears to be spans whose left-hand side is an identity and 
right-hand side is a regular mono (possibly with negative application 
conditions), and as such they fit the mould of rewriting on left-linear rewriting in $\mathcal{M}$-adhesive categories. 
%
However, it still needs to be shown how parallelism and causality, the key features for DPO rewriting
on $\mathcal{M}$-adhesive categories, can be exploited in the context of implementing the 
EGGs updates. 
Moreover, extensions of the EGGs formalism could be suggested by the adhesive 
machinery we developed.
In fact, most of the results
presented here for hypergraphs can be generalised to hierarchical hypergraphs, that is, 
hypergraphs with a hierarchy (a partial order) among 
edges~\cite{ghicaZan,CastelnovoGM24}.
The additional structure is useful for modelling properties such as encapsulation and sandboxing,
and it seems worthwhile to check the expressiveness and applications of hierarchical EGGS.

\emph{Related works.}
Despite the interest they have ben raising as an efficient data structure, we are not aware of any attempt to provide an 
algebraic characterisation of EGGs, the only exception being~\cite{ghica}. We leave for future work an in-depth comparison 
among the two proposals, which appear to be related yet quite different.
In fact, for both proposals the key intuition is  to use string diagrams to represents term graphs. We adopted a
more down-to-earth approach by equipping
concretely the nodes of a term graph with an equivalence relation, thus directly corresponding to the original presentation~\cite{DetlefsNS05},
at the same time extending and generalising it~\cite{concur2006} in the contemporary jargon of adhesive categories.
The route chosen in~\cite{ghica} is to consider term graphs as arrows of a symmetric monoidal category,
and equipping them with an enrichment over semi-lattices, the resulting formalism being reminiscent of hierarchical hypergraphs.
Thus, the solution in~\cite{ghica}  is  more general than ours since it can be lifted to term graphs defined over categories other than
$\Set$.
At the same time, we both consider the DPO approach for rewriting, even if only our solution guarantees that the resulting category
is actually $\mathcal{M}$-adhesive, thus allowing to exploit all the features of the framework
as they hold for DPO graph transformation.

\newpage

\interlinepenalty 1000
\bibliography{biblio}

\newpage
\appendix

\section{Omitted proofs}

This section contains the proofs which are omitted from the main body of the paper. 
We begin recalling  a well-known fact about composition and decomposition of pullbacks \cite[Lem.~1.1]{lack2005adhesive}.

\noindent
\parbox{10cm}{
\begin{lemma}\label{lem:pb1}
	Let $\X$ be a category, and consider the diagram aside, in which the right square is a pullback. Then the whole rectangle is a pullback if and only if the left square is one.
\end{lemma}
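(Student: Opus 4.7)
The plan is to prove both implications by straightforward diagram chasing with the universal property of pullbacks. Label the diagram as
\[
\xymatrix@R=15pt{X \ar[r]^{f} \ar[d]_{a} & Y \ar[r]^{g} \ar[d]_{b} & Z \ar[d]^{c} \\ X' \ar[r]_{f'} & Y' \ar[r]_{g'} & Z'}
\]
with the right square a pullback throughout.

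For the implication that assumes the left square is a pullback, I would start from an arbitrary cone over the outer rectangle, i.e.\ arrows $h\colon W\to Z$ and $k\colon W\to X'$ with $c\circ h = g'\circ f'\circ k$. Rewriting the commutativity as $c\circ h = g'\circ(f'\circ k)$, the universal property of the right pullback yields a unique $u\colon W\to Y$ with $g\circ u = h$ and $b\circ u = f'\circ k$. The last equation shows that $(u,k)$ is a cone over the left square, so the universal property of the left pullback gives a unique $v\colon W\to X$ with $f\circ v = u$ and $a\circ v = k$, and then $g\circ f\circ v = h$ as required. Uniqueness of $v$ falls out by running the two universal properties in reverse: any competing $v'$ produces, via $f\circ v'$, a second factorisation through the right pullback, forcing $f\circ v' = u$, and then uniqueness in the left pullback gives $v' = v$.

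For the converse, assume the outer rectangle is a pullback and take a cone $(h\colon W\to Y,\ k\colon W\to X')$ over the left square, with $b\circ h = f'\circ k$. Postcomposing with $g'$ and using commutativity of the right square gives $c\circ g\circ h = g'\circ b\circ h = g'\circ f'\circ k$, so $(g\circ h, k)$ is a cone over the outer rectangle. The universal property yields a unique $v\colon W\to X$ with $g\circ f\circ v = g\circ h$ and $a\circ v = k$. To upgrade this to $f\circ v = h$, observe that $f\circ v$ and $h$ are two arrows $W\to Y$ which, after postcomposition with $g$, both equal $g\circ h$, and, after postcomposition with $b$, both equal $f'\circ k$ (for $f\circ v$ using $b\circ f = f'\circ a$ and $a\circ v = k$). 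The uniqueness half of the right pullback then forces $f\circ v = h$. Uniqueness of $v$ itself is immediate from the universal property of the outer rectangle.

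There is no genuine obstacle here — this is a textbook two-out-of-three lemma for pullbacks, the only care needed is in bookkeeping which universal property is being invoked at each step and in checking that the two candidate factorisations agree on both legs before appealing to uniqueness in the right pullback.
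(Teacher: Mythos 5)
Your proof is correct and follows essentially the same route as the paper's argument (which it defers to the standard reference): the ``if'' direction pastes the two universal properties together, and the ``only if'' direction recovers $f\circ v = h$ by checking both composites with $g$ and with $b$ and invoking uniqueness of factorisations through the right-hand pullback. Nothing to add.
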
} 
	\parbox{4cm}{
	\xymatrix@R=15pt{X \ar[d]_{a} \ar[r]_{f}& \ar[r]_{g} Y \ar[d]^{b}& Z \ar[d]^{c}\\ A \ar[r]^{h}& B \ar[r]^{k}& C}}
\commentato{
\begin{proof}
	$(\Rightarrow)$ Let $q_1\colon Q\to Y$ and $q_2\colon Q\to A$ be two arrows such that $b\circ q_1=h\circ q_2$, if we compute we get
	\[c\circ g\circ q_1  =k\circ b\circ q_1=k\circ h\circ q_2\]
	\parbox{9cm}{Thus by the pullback property of the whole rectangle we get the dotted $l$ in the diagram on the side. 
	All we have to prove is that $f\circ l=q_1$. By construction we know that $g\circ f\circ l= g\circ q_1$, while we also have}
	\hfill \parbox{4cm}{\xymatrix@R=18pt{Q \ar@/_.3cm/[dr]_{q_2}\ar@/^.4cm/[rrr]^{g\circ q_1} \ar@{.>}[r]_{l} & X \ar[d]_{a} \ar[r]_{f}& \ar[r]_{g} Y \ar[d]^{b}& Z \ar[d]^{c}\\ & A \ar[r]_{h}& B \ar[r]_{k}& C}}
	\[	b\circ f\circ l = h\circ a\circ l= h\circ q_2=b\circ q_1\]
	and we can conclude since the right square in the original diagram is a pullback.
	
	For uniqueness: if $l'\colon Q\to X$ is such that $f\circ l'= q_1$ and $a\circ l'=q_2$
	then  $g\circ f \circ l' = g\circ q_1$ and we can conclude applying the pullback property of the outer rectangle.
	
	\smallskip \noindent 
	\parbox{9cm}{
	$(\Leftarrow)$ Take two arrows $q_1\colon Q\to Z$ and $q_2\colon Q\to A$ such that $c\circ q_1=k\circ h \circ q_2$
	We can apply the pullback property of the right square to get the dotted $q\colon Q\to Y$ in the following 

	Now, by construction we have $b\circ q =h\circ q_2$
	and thus, since the left square is a pullback, we get also a unique $l\colon Q\to X$ such that  $f\circ l = q$ and $a \circ l = q_2$
	but then we clearly have} \hfill 
	\parbox{6cm}{\vspace{7pt}
		\xymatrix@R=18pt{Q \ar@/_.3cm/[ddr]_{q_2}\ar@/^.5cm/[drrr]^{ q_1} \ar@{.>}@/^.3cm/[drr]^(.6){ q} \ar@{.>}[dr]^{l}  \\& X \ar[d]_{a} \ar[r]^{f}& \ar[r]^{g} Y \ar[d]_{b}& Z \ar[d]^{c}\\ & A \ar[r]_{h}& B \ar[r]_{k}& C}}
	\[	g\circ f \circ l = g\circ q = q_1	\] 
	
	We are left with uniqueness. Let $l'\colon Q\to X$ be another arrow such that $q_1 = g\circ f \circ l'$ and $q_2=a\circ l'$,  then we must also have
	\[
	b\circ f \circ l'= h\circ a \circ l' = h\circ q_2= b\circ q\]
	which implies $f\circ l' = q$, from which $l=l'$ follows.
\end{proof}
}

\subsection{Proofs for \Cref{sec:ade}}\label{app:primo}

\regmono*
\begin{proof}\label{regmono-proof}
	\begin{enumerate}
		\item  Consider the following cube in which the bottom face is an $\mathcal{M}$-pushout
		\[\xymatrix@C=15pt@R=9pt{&A\ar[dd]|\hole_(.65){\id{A}}\ar[rr]^{g} \ar[dl]_{\id{A}} && B \ar[dd]^{\id{B}} \ar[dl]_(.6){\id{B}} \\ A  \ar@{>->}[dd]_{m}\ar[rr]^(.65){g} & & B \ar@{>->}[dd]_(.3){n}\\&A\ar[rr]|\hole^(.65){g} \ar@{>->}[dl]^{m} && B \ar@{>->}[dl]^{n} \\C \ar[rr]_{f} & & D}\]
		By construction the top face of the cube is a pushout and the back one a pullback. The left face is a pullback because $m$ is mono, thus the Van Kampen property yields that the front and the right faces are pullbacks too and the claim follows.
		\item Let $m\colon X\mto Y$ be an arrow in $\mathcal{M}$, we can then take its pushout along itself, which, by the previous point, is also a pullback
		\[\xymatrix{X \ar@{>->}[r]^{m} \ar@{>->}[d]_{m}& Y \ar@{>->}[d]^{h}\\ Y \ar@{>->}[r]^{k} & Z}\]
		It is now immediate to see that $m$ is the equalizer of $h$ and $k$.   \qedhere
	\end{enumerate}
\end{proof}

\kpp*
\begin{proof}\label{kpp-proof}
	We begin by computing 
	\[g \circ h \circ \pi_f^1 =  t \circ f \circ \pi_f^1     =  t \circ f \circ \pi_f^2     =  g \circ h \circ \pi_f^2\]
	so that  existence  and uniqueness of the wanted $k_h$ follow at once from the universal property of $K_g$ as the pullback of $g$ along itself.

	\begin{enumerate}
		\item Let $a,b\colon T\rightrightarrows K_f$ be two arrows such that $k_h\circ a=k_h\circ b$, then we have
\begin{align*}h\circ \pi^1_f\circ a&=\pi^1_g\circ k_h\circ a=\pi^1_g\circ k_h\circ b=h\circ \pi^1_f\circ b\\
	h\circ \pi^2_f\circ a&=\pi^2_g\circ k_h\circ a=\pi^2_g\circ k_h\circ b=h\circ \pi^2_f\circ b
\end{align*}
Since $h$ is mono this entails that
\[	\pi^1_f\circ a=\pi^1_f\circ b \qquad \pi^2_f\circ a=\pi^2_f\circ b\]
and thus $a$ must coincide with $b$.
 
 		\item  
		To prove the second half of the claim, we can notice that, by \Cref{lem:pb1},  two rectangles below are pullbacks
		\[\xymatrix@R=15pt{K_f \ar[r]^{\pi^2_f} \ar[d]_{\pi^1_f} & X \ar[r]^{h} \ar[d]_{f} & Z \ar[d]^{g} &  K_f \ar[r]^{\pi^1_f} \ar[d]_{\pi^2_f} & X \ar[r]^{h} \ar[d]_{f} & Z \ar[d]^{g}\\
			X \ar[r]_{f}& Y \ar[r]_{t} & W & X \ar[r]_{f} & Y  \ar[r]_{t}& W}\]
		
		But then the outer rectangle in the  following diagrams are pullbacks too
		\[\xymatrix@R=15pt{	K_f  \ar@/^.4cm/[rr]^{h\circ\pi^2_f}\ar[r]_{k_h} \ar[d]_{\pi^1_f}& K_g \ar[d]_{\pi^1_g}  \ar[r]_{\pi^2_g}& Z \ar[d]_{g} & K_f  \ar@/^.4cm/[rr]^{h\circ\pi^2_f}\ar[r]_{k_h} \ar[d]_{\pi^1_f}& K_g \ar[d]_{\pi^1_g}  \ar[r]_{\pi^2_g}& Z \ar[d]_{g} \\
			X \ar[r]^{h} \ar@/_.4cm/[rr]_{t\circ f}& Y \ar[r]^{g} & W & X \ar[r]^{h} \ar@/_.4cm/[rr]_{t\circ f}& Y \ar[r]^{g} & W}\]
		
		Thus the left halves of the rectangle above are pullbacks again by \Cref{lem:pb1}.

		\item 		For the last square, let $(t_1, t_2)\colon T\to X\times X$ and $s\colon T\to K_g$ be two arrows such that
		\[(\pi^1_g, \pi^2_g)\circ s=(h\times h)\circ (t_1, t_2)\]
		
		Thus, by point $2$, there exist two arrows $x, y\colon T\rightrightarrows K_f$ fitting in the diagrams below
		\[\xymatrix@R=15pt{T\ar@{.>}[r]_-{x} \ar@/^.3cm/[rr]^{s} \ar@/_.3cm/[dr]_{t_1}& K_f \ar@{->}[r]_{k_h} \ar[d]_{\pi^1_f} & K_g \ar[d]^{\pi^1_g} & T \ar@{.>}[r]_-{y} \ar@/^.3cm/[rr]^{s} \ar@/_.3cm/[dr]_{t_2} & K_f  \ar@{->}[r]_{k_h} \ar[d]_{\pi^2_f}& K_g \ar[d]^{\pi^2_g}\\ & X \ar@{>->}[r]_{h} & Y &&X \ar@{>->}[r]_{h}& Y} \]
		By the first point $k_h$ is mono, thus we can deduce that $x=y$. By construction we have
		\[(\pi^1_f, \pi^2_f)\circ x=(t_1, t_2) \qquad k_h\circ x=s\] 
				
		Uniqueness now follows once again from the fact that $k_h$ is mono.
		\qedhere

	\end{enumerate}

\end{proof}

\noindent
\parbox{7cm}{\mpo*}\hfill \parbox{6cm}{\xymatrix@C=10pt@R=6pt{&A'\ar[dd]|\hole_(.65){a}\ar[rr]^{f'} \ar@{>->}[dl]_(.55){m'} && B' \ar[dd]^{b} \ar@{>->}[dl]_(.55){n'} & K_a\ar[rr]^{k_{f'}} \ar[dd]_{k_{m'}}&& K_b \ar[dd]^{k_{n'}} \\ C'  \ar[dd]_{c}\ar[rr]^(.7){g'} & & D' \ar[dd]_(.3){d}\\&A\ar[rr]|\hole^(.65){f} \ar[dl]^{m} && B \ar[dl]^{n}  & K_{c} \ar[rr]_{k_{g'}}&& K_d\\C \ar[rr]_{g} & & D }}

\noindent 
\parbox{4cm}{\xymatrix@C=10pt@R=6pt{&K_a\ar[dd]|\hole_(.65){\pi^1_a}\ar[rr]^{k_{f'}} \ar@{>->}[dl]_(.55){k_{m'}} && K_b\ar[dd]^{\pi^1_b} \ar@{>->}[dl]_(.55){k_{n'}} \\ K_c  \ar[dd]_{\pi^1_c}\ar[rr]^(.7){k_{g'}} & & K_d \ar[dd]_(.3){\pi^1_d}\\&A'\ar[rr]|\hole^(.65){f'} \ar@{>->}[dl]^(.4){m'} && B' \ar@{>->}[dl]^{n'}  \\C' \ar[rr]_{g'} & & D' }} \hfill \parbox{9cm}{\begin{proof}\label{mpo-proof}
	By  \Cref{prop:regmono}  we know that the top face of the original cube is a pullback. Thus \Cref{lemma:kern_pairs_pres_pullbacks} entails that in the following cube the vertical faces are pullbacks.
	The claim now follows from strict $\mathcal{M}$-adhesivity.
	\end{proof}}

\commentato{
\epic*
\noindent 
\parbox{10cm}{
\begin{proof}\label{epic-proof}
	By hypothesis, there exists a pair $f, g\colon Z \rightrightarrows X$ of which $e$ is the coequalizer.
	 Since $e \circ f = e \circ g$ we get the dotted arrow fitting in the digram aside.

	\parbox{\textwidth}{Let now $h: Z \to V$ be an arrow such that $h \circ \pi_1 = h \circ \pi_2$, then
	\[	h \circ f = h \circ \pi_1 \circ k 
	= h \circ \pi_2 \circ k 
	= h \circ g\]
	and thus there exists a unique $l: Y \to V$ such that $l \circ e = h$. \qedhere}
\end{proof}}\hfill
\parbox{9cm}{\vspace{-2cm}\xymatrix{Z \ar@{.>}[r]_k \ar@/_.3cm/[dr]_{g} \ar@/^.4cm/[rr]^{f} & K_e \ar[r]_{\pi^{1}_e}  \ar[d]_{\pi^2_e}& X \ar[d]^{e}\\ & X \ar[r]_{e} & Y}} 

\natepi*
\begin{proof}\label{natepi-proof}
	Let $(K_i, \pi_d^1, \pi_d^2)$ be the kernel pair of $\phi_d$ for each object $d$ in $\D$. Given an arrow $\alpha: d \to d'$ of $\D$, we have
	\[\phi_{d'} \circ F(\alpha) \circ \pi_d^1 = G(\alpha) \circ \phi_d \circ \pi_d^1 
	= G(\alpha) \circ \phi_d\circ \pi_d^2 
	= \phi_{d'} \circ F(\alpha) \circ \pi_d^2
	\]
	
	Thus, the solid part of the diagram below commutes, yielding the dotted arrow $K(\alpha)$.
	
	\noindent 
\parbox{5cm}{\xymatrix{K_d \ar[r]^{\pi^1_d} \ar[d]_{\pi^2_d} \ar@{.>}[dr]^{K(\alpha)}& F(d) \ar[dr]^{F(\alpha)} \\ F(d) \ar[dr]_{F(\alpha)} & K_{d'} \ar[r]^{\pi^1_{d'}} \ar[d]_{\pi^2_{d'}}& F(d') \ar[d]^{\phi_{d'}}\\ & F(d') \ar[r]_{\phi_{d'}} & G(d')}} \hfill \parbox{9cm}{In this way, we get a functor $K\colon \D\to \X$ mapping $d$ to $K_d$ and each arrow $\alpha$ onto $K(\alpha)$. Indeed, notice that $K(\id{d}) \colon  K_d \to K_d$ is the arrow such that
\begin{align*}
\pi_d^1 \circ K(\id{d})&=F(\id{d})\circ \pi^1_d=\id{F(d)}\circ \pi^{1}_d=\pi^1_d\\
\pi_d^2 \circ K(\id{d})&=F(\id{d})\circ \pi^2_d=\id{F(d)}\circ \pi^{2}_d=\pi^2_d
\end{align*}
Thus, by the universal property of pullbacks, $K(\id{d}) = \id{K_d}$.
}

	Let now $\alpha \colon a \to b$ and $\beta\colon b \to c$ be two arrows in $\D$,  computing, we have
	\begin{align*}
		\pi_c^1 \circ K(\beta \circ \alpha) &
		= F(\beta) \circ F(\alpha) \circ \pi_a^1 
		= F(\beta) \circ \pi_b^1 \circ K(\alpha) 
		= \pi_c^1 \circ K(\beta) \circ K(\alpha)\\
			\pi_c^2 \circ K(\beta \circ \alpha) &
		= F(\beta) \circ F(\alpha) \circ \pi_a^2 
		= F(\beta) \circ \pi_b^2 \circ K(\alpha) 
		= \pi_c^2 \circ K(\beta) \circ K(\alpha)
	\end{align*}
	Allowing us to conclude that $K(\beta \circ \alpha) = K(\beta) \circ K(\alpha)$, proving the functoriality of $K$.

	Hence,  by construction we have two natural transformations $\pi^1, \pi^2\colon  K \rightrightarrows F$. By \Cref{prop:reg_epi_coeq}, every component $\phi_d$ is the coequalizer of $\pi_d^1, \pi_d^2\colon K(d) \rightrightarrows F$, and so $\phi$ is the coequalizer of $\pi^1$ and $\pi^2$.   
\end{proof}

\epicol*

\begin{proof}\label{epicol-proof}
	By \Cref{cor:reg_epi_components_reg_epi_nat_trans}, we know that $\phi\colon F \to G$ is a regular epi, so that there is a functor $E\colon \D\to \X$ and $\eta, \theta\colon E \rightrightarrows F$ such that $\phi$ is the coequalizer of $\eta$ and $\theta$. Let now $(P, \{p_d\}_{d \in \D})$ be the colimit of $E$ and consider the unique arrows $a, b\colon P \rightrightarrows X$ fitting in the squares below
\medskip

\noindent 
	\parbox{4cm}{\xymatrix{E(d) \ar[r]^{p_d} \ar[d]_{\eta_d}& P \ar@{.>}[d]^{a} &E(d) \ar[d]_{\theta_d}
		 \ar[r]^{p_d}& P \ar@{.>}[d]^{b}\\ F(d) \ar[r]_{x_d} & X & F(d) \ar[r]_{x_d} & X}} \hfill	\parbox{8cm}{
	We want to show that $\phi$ coequalizes $a$ and $n$. Let thus $h\colon X \to Z$ be an arrow such that $h \circ a = h \circ b$. Then, for every $d$, 
	we have
	\[h \circ x_d \circ \eta_d = h \circ a \circ p_d 
	= h \circ b \circ p_d 
	= h \circ x_d \circ \theta_d\]}
	
	\noindent
	Thus, there is $h_d: G(d) \to Z$ such that $h\circ x_d = h_d \circ \phi_d$. It is now easy to see that $(Z, \{h_d\}_{d \in \D})$ is a cocone on $G$. Suppose $\alpha: d \to d'$ is an arrow of $\D$, then
	\[h_d\circ \phi_d=h\circ y_d=h\circ y_{d'}\circ F(\alpha)=h_{d'}\circ \phi_{d'}\circ F(\alpha)=h_{d'}\circ G(\alpha)\circ \phi_d\]
		By the hypothesis $\phi_d$ is regular epi for each $d$ and so we can conclude that $h_d = h_{d'} \circ G(\alpha)$.
	
	Therefore, we have an arrow $k\colon Y \to Z$ such that $k \circ y_d = h_d$. But then
	\[
	k \circ \phi \circ x_d = k \circ y_d \circ \phi_d = h_d \circ \phi_d = h\circ x_d\]
	Showing that $k \circ \phi = h$.
	
	For the uniqueness, let $k': Y \to Z$ be another arrow such that $k' \circ \phi = h$. Then we have
	\[k'\circ y_d\circ \phi_d=k'\circ \phi \circ x_d=h\circ x_d=h_d\circ \phi_d\]
	
	Since $\phi_d$ is a regular epi, we have $k' \circ y_d = h_d$ allowing us to conclude.
\end{proof}
}

To prove \Cref{prop:kerset} we need some results about pushouts and coproducts in $\Set$.

\noindent 
\parbox{11.7cm}{
\begin{lemma}\label{lem:po_set}
Suppose that the square aside is a pushout, with $m\colon A\mto B$ an injection. Let $\iota_{m}\colon E\mto B$ be the inclusion of the complement of the image of $m$. Then $(D, \{n, h\circ \iota_m \})$ is a coproduct. In particular, $h\circ \iota_m$ is mono.
\end{lemma}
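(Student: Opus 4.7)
The plan is to exploit the concrete construction of pushouts in $\Set$. Writing the top arrow of the pushout square as $g\colon A\to C$, the pushout object may be taken to be $D=(B\sqcup C)/{\sim}$, where $\sim$ is the equivalence relation generated by $m(a)\sim g(a)$ for $a\in A$, with $n\colon C\to D$ and $h\colon B\to D$ the induced coprojections. The splitting $B = m(A) \sqcup E$ afforded by the injectivity of $m$ should then turn this quotient into the coproduct of $C$ and $E$.

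First I would pin down the equivalence classes of $\sim$ via a short zigzag analysis. Since $m$ is injective, a basic step $m(a)\sim g(a)$ is reversible only by returning to the same $m(a)$: in any alternating chain starting at $c\in C$, reaching $B$ forces the $B$-element to be $m(a)$ for the unique $a$ with $g(a)=c$, and the next step back into $C$ lands again at $g(a)=c$. Consequently no two distinct elements of $C$ are identified, no element of $E$ appears in any nontrivial chain, and the classes are precisely the singletons $\{e\}$ for $e\in E$ together with the sets $\{c\}\cup\{m(a)\mid a\in g^{-1}(c)\}$ for $c\in C$. This gives a bijection $C\sqcup E\to D$ realising the desired coprojections as $n$ and $h\circ\iota_m$.

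To deduce the coproduct universal property directly from the pushout one, given $u\colon C\to X$ and $v\colon E\to X$ I would define $w\colon B\to X$ piecewise under $B=m(A)\sqcup E$: set $w(m(a)):=u(g(a))$ on $m(A)$ (well-defined by injectivity of $m$) and $w\circ\iota_m:=v$ on $E$. Then $w\circ m=u\circ g$, so the pushout property yields a unique $k\colon D\to X$ with $k\circ h=w$ and $k\circ n=u$; evidently $k\circ (h\circ\iota_m)=v$, so $k$ is a mediator for the coproduct. Conversely, any $k'$ with $k'\circ n=u$ and $k'\circ (h\circ\iota_m)=v$ agrees with $w$ on $E$ (by the second condition) and on $m(A)$ via $k'\circ h\circ m=k'\circ n\circ g=u\circ g$, forcing $k'\circ h=w$ and hence $k'=k$ by pushout uniqueness. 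The only real obstacle is the zigzag analysis of $\sim$, but it becomes routine once the injectivity of $m$ is exploited.
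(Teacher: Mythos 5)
Your proof is correct, and its decisive part coincides with the paper's argument: in your final paragraph the map $w$ is exactly the paper's $\phi$, built from the coproduct decomposition $B = m(A)\sqcup E$ (i.e.\ $(B,\{m,\iota_m\})$ is a coproduct because $m$ is injective), existence of the mediating map comes from the pushout property, and uniqueness comes from checking agreement after precomposition with $m$ and with $\iota_m$ and then invoking the joint epicness/uniqueness of the pushout cocone $(h,n)$ --- precisely the paper's $\psi$-versus-$\psi'$ step. The first two paragraphs (explicit quotient $(B\sqcup C)/{\sim}$ plus the zigzag analysis of the equivalence classes) constitute a second, more computational route: it does prove the statement, since injectivity of $m$ indeed forces the classes to be $\{e\}$ for $e\in E$ and $\{c\}\cup\{m(a)\mid g(a)=c\}$ for $c\in C$, but the paper deliberately avoids any concrete description of the pushout, arguing purely by universal properties. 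Note also that your own third paragraph makes the zigzag analysis unnecessary, so the ``only real obstacle'' you flag is not an obstacle at all; if you keep the element-wise description, it should be presented as an alternative proof rather than as a prerequisite.
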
}\hfill \parbox{2cm}{\vspace{0cm}\xymatrix@R=15pt{A \ar@{>->}[r]^{m}  \ar[d]_{g}& B \ar[d]^{h}\\ C \ar@{>->}[r]_{n} & D}}
\begin{proof}
		 Let $f\colon E\to Z$ and $k\colon C\to Z$ be two arrows with the same codomain. By definition of complement and since $m$ is mono we know that $(B, \{m, \iota_m\})$ is a coproduct. Define $\phi\colon B\to Z$ as the unique arrow such that \[\phi\circ m=k\circ g \qquad \phi \circ \iota_m=f\] 
		By the universal property of pushouts there is a unique arrow $\psi \colon D\to Z$ such that $\psi \circ n=k$ and $\psi \circ h=\phi$, thus 
		\[
		\psi \circ h\circ \iota_m =\phi \circ \iota_m=f\]
		
To conclude we have to show that $\psi$ is the unique arrow $D\to Z$ such that $\psi \circ n =k$ and $\psi \circ h\circ \iota_m=f$. Let $\psi'\colon D\to Z$ be another arrow such that $\psi' \circ n=k$ and $\psi' \circ h\circ \iota_m=f$. Then we have
		\[\psi'\circ h\circ m=\psi'\circ n\circ g=k\circ g=\phi \circ m =\psi \circ h\circ m \]
	Since $m$ is mono then $\psi'\circ h=\psi \circ h$,  but $(D, \{n, h\})$ is a pushout and so $\psi'=\psi$. 
\end{proof}

The category of $\Set$ enjoys two other remarkable properties; it is \emph{distributive} and \emph{extensive} \cite{carboni1993introduction}. Distributivity amount to the following property: for every family $\{X_i\}_{i}$ and an object $Y$, the unique morphisms $\phi$ and $\psi $ fitting in the diagrams below, where $j_i$, $k_i$ and $h_i$ are coprojections, are isomorphisms
\[\xymatrix@C=15pt@R=15pt{&Y\times X_i \ar@{>->}[dl]_{j_i} \ar@{>->}[dr]^{\id{Y}\times h_i} &&& X_i \times Y \ar@{>->}[dl]_{k_i} \ar@{>->}[dr]^{ h_i \times \id{Y}} \\ \Sum_{i\in I} (Y\times X_i) \ar@{.>}[rr]_{\phi}&& Y\times (\Sum_{i\in I}X_i) & \Sum_{i\in I}  (X_i \times Y )\ar@{.>}[rr]_{\psi}&&  (\Sum_{i\in I}X_i) \times Y }\]

\noindent 
\parbox{11.5cm}{\hspace{15pt} Extensivity means that given a family of objects $\{X_{i}\}_{i\in I}$ and a family of commuting squares as the one on the right, where $j_i$ is a  coprojection, then all the squares are pullbacks if and only if $(Z, \{k_i\}_{i\in I})$ is a coproduct.}\hfill 
\parbox{2cm}{\xymatrix@R=15pt{Z_i \ar[r]_{k_i} \ar[d]_{f_i} & Z\ar[d]^{f} \\ X_i \ar@{>->}[r]^{j_i} & X}}

\noindent
\parbox{11.5cm}{
\begin{remark}\label{rem:disj}
	Notice that extensivity entails that coproducts are \emph{disjoint}, i.e.~the pullback between two coprojections is given by the initial object (with initial maps as coprojections). To see this, let $\{X, \{j_i\}_{i\in I}\}$ be the coproduct of the family $\{X_i\}_{i\in I}$ and $k$ an element of $I$. Then  $(X_k, \{t_i\}_{i\in I})$ 
such that $t_i = \id{X_k}$ if $i=k$, and $?_{X_k}$ otherwise,
	is a coproduct and so the squares on the right are pullbacks.
\end{remark}}\hfill\parbox{2cm}{\xymatrix@R=15pt{\emptyset \ar@{>->}[d]_{?_{X_i}} \ar@{>->}[r]_{?_{X_k}}& X_k \ar@{>->}[d]^{j_k} \\ X_i \ar@{>->}[r]^{j_i} & X} \xymatrix@R=15pt{X_k \ar[r]_{\id{X}} \ar[d]_{\id{X}}& X_k \ar@{>->}[d]^{j_k}\\X_k \ar@{>->}[r]^{j_k} & X}}

In particular, the previous remark yields at once the following fact.

\noindent 
\parbox{11.5cm}{\begin{proposition}\label{prop:isempty}
	Suppose that the square on the right is a pullback. Let $\iota \colon E\mto Y$ be the inclusion of the complement of the image of $p_1$. If there exist two  arrows $t_1\colon T\to E$ and $t_2\colon T\to X$ such that $g\circ \iota\circ t_1= f\circ t_2$ then $T$ is empty.
\end{proposition}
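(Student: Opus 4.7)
The plan is to use the pullback hypothesis to factor the cone $(\iota\circ t_1, t_2)$ through $P$, and then exploit the fact that the image of $p_1$ and the complement $E$ are disjoint summands of $Y$.

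First I would apply the pullback property. Call $p_2\colon P\to X$ the other projection of the pullback. The assumption $g\circ \iota \circ t_1= f\circ t_2$ says that the pair $(\iota\circ t_1, t_2)\colon T\to Y\times X$ is a cone over $g$ and $f$, so there is a unique $u\colon T\to P$ with $p_1\circ u=\iota\circ t_1$ and $p_2\circ u=t_2$. In particular, $\iota\circ t_1$ factors through $p_1$.

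Next I would factor $p_1$ through its image: write $p_1=\iota_M\circ q$, where $\iota_M\colon M\mto Y$ is the inclusion of the image of $p_1$ and $q\colon P\eto M$ is surjective. By definition of $E$ as the complement of the image of $p_1$, the pair $(Y,\{\iota,\iota_M\})$ is a coproduct of $E$ and $M$. Hence the equality $\iota_M\circ (q\circ u)=p_1\circ u=\iota\circ t_1$ says that the two arrows $q\circ u\colon T\to M$ and $t_1\colon T\to E$ agree after being coprojected into $Y$.

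Finally I would invoke disjointness of coproducts, which holds in $\Set$ by \Cref{rem:disj}: the pullback of the two coprojections $\iota\colon E\mto Y$ and $\iota_M\colon M\mto Y$ is the initial object $\emptyset$, with initial maps as projections. Therefore there is a (unique) arrow $T\to \emptyset$, forcing $T=\emptyset$ in $\Set$. No step looks to be a real obstacle: the argument is essentially a one-line consequence of the universal property of the pullback combined with the disjointness of the decomposition $Y=E+M$ already recorded in \Cref{rem:disj}.
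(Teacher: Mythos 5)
Your proof is correct and follows essentially the same route as the paper's: factor the cone through the pullback $P$, factor $p_1$ through its image so that $Y$ decomposes as the coproduct of $E$ and the image, and then use the disjointness of coproducts (\Cref{rem:disj}) to obtain an arrow $T\to\emptyset$, forcing $T$ to be empty.
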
}\hfill \parbox{2cm}{\vspace{-.5cm}\xymatrix@R=15pt{P \ar[r]_{p_1} \ar[d]_{p_2}& Y \ar[d]^{g}\\X \ar[r]^{f}& Z}}
\begin{proof}
	By the universal property of pullbacks we get an arrow $t\colon T\to P$ such that 
	\[p_1\circ t=\iota \circ t_1 \qquad p_2\circ t=t_2\]
	\parbox{2cm}{\xymatrix{T \ar@{.>}[r]^{s}\ar@{>>}[dr]_{\pi} \ar@/^.4cm/[rr]^{t_1} & \emptyset \ar@{>->}[r]_{?_{E}} \ar@{>->}[d]_{?_I}& E \ar@{>->}[d]^{\iota}\\ & I \ar@{>->}[r]_{i} & Y}}\hfill \parbox{10.5cm}{Let $i\colon I\mto Y$ be the inclusion of the image of $p_1$, so that we have also an epi $q\colon P\eto I$. By definition of complement $(Y, \{i, \iota\})$ is a coproduct, thus by  \Cref{rem:disj} the inner square in the diagram on the left is a pullback and we get a dotted arrow $s\colon T\to \emptyset$ filling the diagram. But a set with an arrow towards the empty set must be empty and we conclude. \qedhere }
\end{proof}

Extensivity and distributivity, moreover, yield the following fact.

\begin{lemma}\label{lem:nodim}
	Let $f,g\colon A\rightrightarrows B+C$ be two arrows with a coproduct as codomain. Then $(A, \{j_i\}_{i=0}^
	3)$ is a coproduct, where $j_i\colon A_i\mto A $ is defined by the following pullback squares
	\[\xymatrix@C=30pt@R=15pt{A_0 \ar@{>->}[r]^{j_0}  \ar[d]_{p_0}& A   \ar[d]^{(f,g)}& A_1 \ar@{>->}[r]^{j_1}  \ar[d]_{p_1} & A \ar[d]^{(f,g)} \\
		B\times B \ar@{>->}[r]_-{i_B\times i_B} & (B+C)\times (B+C) & B\times C \ar@{>->}[r]_-{i_B\times i_C} & (B+C)\times (B+C)}\]
	\[ \xymatrix@C=30pt@R=15pt{A_2 \ar@{>->}[r]^{j_2}  \ar[d]_{p_2}& A   \ar[d]^{(f,g)}& A_3 \ar@{>->}[r]^{j_3}  \ar[d]_{p_3} & A \ar[d]^{(f,g)} \\
		C\times B \ar@{>->}[r]_-{i_C\times i_B} & (B+C)\times (B+C) & C\times C \ar@{>->}[r]_-{i_C\times i_C} & (B+C)\times (B+C)}\]
\end{lemma}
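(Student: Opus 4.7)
The plan is to leverage the two ambient properties of $\Set$ already on the table, distributivity and extensivity, and to read the statement as a two-step procedure: first decompose the codomain $(B+C)\times(B+C)$ as a coproduct of four pieces, then pull back the arrow $(f,g)\colon A\to (B+C)\times(B+C)$ along each piece.

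First, I would establish that $(B+C)\times(B+C)$ is canonically a coproduct of the four objects $B\times B$, $B\times C$, $C\times B$, $C\times C$, with coprojections exactly the four corner arrows $i_B\times i_B$, $i_B\times i_C$, $i_C\times i_B$, $i_C\times i_C$. This is a double application of distributivity as recalled before \Cref{rem:disj}: apply it once with the second factor to write $(B+C)\times(B+C)\cong (B+C)\times B+(B+C)\times C$, then apply it a second time to each summand to split $(B+C)\times B\cong B\times B+C\times B$ and $(B+C)\times C\cong B\times C+C\times C$. Chasing the induced isomorphisms, one checks that the coprojections of the resulting four-fold coproduct into $(B+C)\times(B+C)$ are precisely the four maps $i_X\times i_Y$ with $X,Y\in\{B,C\}$.

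Next, I would invoke extensivity as stated just above \Cref{rem:disj}: given the pairing $(f,g)\colon A\to (B+C)\times(B+C)$ and the coproduct decomposition of its codomain just obtained, pulling back $(f,g)$ along each of the four coprojections yields, by extensivity, a coproduct decomposition of $A$ whose coprojections are the top edges of those pullback squares. Since the four pullback squares displayed in the statement of the lemma are, up to notation, exactly these pullbacks, this identifies $\{j_i\colon A_i\to A\}_{i=0}^{3}$ as a coproduct cocone. This concludes the proof.

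The only non-routine part of the argument is the bookkeeping in the first step, matching the coprojections produced by iterated distributivity with the concrete product arrows $i_X\times i_Y$; once this identification is in place, extensivity does the rest with no further computation. No obstacle is expected beyond ensuring that the iso from $B\times B+B\times C+C\times B+C\times C$ to $(B+C)\times(B+C)$ is really described by the corner inclusions and not by some twisted variant.
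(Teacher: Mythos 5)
Your proof is correct and is exactly the argument the paper intends: the paper offers no explicit proof, merely remarking that ``extensivity and distributivity yield'' the lemma, and your two-step route (iterated distributivity to identify $(B+C)\times(B+C)$ as the coproduct of the four corner summands with coprojections $i_X\times i_Y$, then extensivity applied to $(f,g)$ to turn the four pullbacks into a coproduct decomposition of $A$) is precisely that implicit argument. Nothing further is needed.
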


We are now ready to exploit the previous properties to prove \Cref{prop:kerset}. 

\noindent
\parbox{7.5cm}{\mps*}\hfill
\parbox{6cm}{\xymatrix@C=10pt@R=6pt{&A'\ar[dd]|\hole_(.65){a}\ar[rr]^{f'} \ar@{>->}[dl]_(.55){m'} && B' \ar[dd]^{b} \ar@{>->}[dl]_(.55){n'} & K_a\ar[rr]^{k_{f'}} \ar[dd]_{k_{m'}}&& K_b \ar[dd]^{k_{n'}} \\ C'  \ar[dd]_{c}\ar[rr]^(.7){g'} & & D' \ar[dd]_(.3){d}\\&A\ar[rr]|\hole^(.65){f} \ar@{>->}[dl]^{m} && B \ar@{>->}[dl]^{n}  & K_{c} \ar[rr]_{k_{g'}}&& K_d\\C \ar@{->}[rr]_{g} & & D }}

\begin{proof}\label{proof:kerset}
	We can start noticing that $m$, being the pullback of $n$ is mono too. This, in turn, entails that $m'$, being the pullback of $m$, is mono, and so $n'$ is injective too because in $\Set$, as in any adhesive category, monomorphisms are stable under pushouts.  
\vspace{.1cm}
	
	\noindent 
	\parbox{11.5cm}{\hspace{15pt}
	Let $\iota' \colon E'\mto C'$ and $\iota\colon E\to C$ be the inclusions of the complement of the images of of $m'$ and $m$ respectively . By \Cref{lem:po_set}, $(D', \{n', g'\circ \iota\})$ is a coproduct. }\hfill \parbox{4cm}{\vspace{-5pt}\xymatrix@R=15pt{E' \ar@{.>}[d]_{w} \ar@{>->}[r]_{\iota'} & C' \ar[d]^{c}\\ E \ar@{>->}[r]^{\iota} & C}}
	
\vspace{.1cm}

	\noindent
	\parbox{10.5cm}{\hspace{15pt}We can also notice that $c\circ \iota'$ factors through $E$, as shown by the square on the right.To see this, let $e$ be in $E'$ and suppose that  $c(\iota'(e))=m(x)$ for some $x\in A$. Then we can apply the  universal property of pullbacks to build the dotted arrow $v$ in the diagram aside, where $v_0$ and $v_1$ are the arrows picking $x$ and $\iota'(e)$, respectively. But then $e$ must belong to the image of $m'$, which is a contradiction.}\hfill  \parbox{4cm}{\xymatrix@R=15pt{1 \ar[dr]_{v_0}  \ar@/^.3cm/[rr]^{v_1} \ar@{.>}[r]_{v}& A' \ar[d]_{a} \ar@{>->}[r]_{m'} & C' \ar[d]^{c}\\& A  \ar@{>->}[r]^{m}& C}}

	\begin{enumerate}
		\item Consider two arrows $t_1\colon T\to D' $ and $t_2\colon T\to B$ such that $d\circ t_1=n\circ t_2$. By extensivity
		
		\noindent\parbox{9.5cm}{ of $\Set$, we already know that $(T, \{l_i\}_{i=0}^1)$ is a coproduct, where $l_i\colon T_i\mto T$ are defined by the diagram aside, whose two halves are pullbacks. }\hfill\parbox{4cm}{\xymatrix@R=15pt{T_0 \ar@{>->}[r]_{l_0} \ar[d]_{h_0} & T \ar[d]^{t_1} & T_1 \ar@{>->}[l]^{l_1} \ar[d]^{h_1}\\ E' \ar@{>->}[r]^{g'\circ \iota'} & D' & B' \ar@{>->}[l]_{n'}}}
		
		If we compute we get
		\[g\circ c \circ \iota' \circ h_0=d\circ g'\circ \iota' \circ h_0=d\circ t_1\circ l_0=n\circ t_2\circ l_0\]
		
		By hypothesis the bottom face of the given cube is a pullback, therefore there exists an arrow $u\colon T_0\to A$ such that 
		\[f\circ u=n\circ t_2\circ l_0 \qquad m\circ u=c\circ \iota'\circ h_0\]
		
	By \Cref{prop:isempty} we conclude that $T_0$ is empty. Therefore $l_1$ is an isomorphism and we have $n'\circ h_1\circ l^{-1}_1=t_1$. On the other hand
\[n\circ b\circ h_1\circ l^{-1}_1=d\circ n'\circ h_1\circ l^{-1}_1=d\circ t_1=n\circ t_2\]
And we can conclude that $b\circ h_1\circ l^{-1}_1=t_2$ because $n$ is a mono. The claim now follows from the fact that  also $n'$ is mono.
		
		\item 	 By \Cref{lem:po_set,lem:nodim}, the previous point and the third point of \Cref{lemma:kern_pairs_pres_pullbacks}, we can decompose $K_d$ as the coproduct of $K_b$ $K_1$, $K_2$ and $K_3$ with coprojections given by, respectively, $k_{n'}$ and $j_1$, $j_2$ and $j_3$, where these objects and arrows fit in the following four pullbacks
		\[\xymatrix@C=45pt@R=15pt{K_b \ar@{>->}[r]^{k_{n'}}  \ar[d]_{(\pi^1_b, \pi^2_b)}& K_d   \ar[d]^{(\pi^1_d, \pi^2_d)}& K_1 \ar@{>->}[r]^{j_1}  \ar[d]_{(p_1, q_1)} & K_d \ar[d]^{(\pi^1_d, \pi^2_d)} \\
			B'\times B' \ar@{>->}[r]_-{n'\times n'} & D'\times D' & B'\times E' \ar@{>->}[r]_-{n'\times (g'\circ \iota')} & D'\times D'}\]
		\[ \xymatrix@C=45pt@R=15pt{K_2 \ar@{>->}[r]^{j_2}  \ar[d]_{(p_2, q_2)}& K_d   \ar[d]^{(\pi^1_d, \pi^2_d)}& K_3 \ar@{>->}[r]^{j_3}  \ar[d]_{(p_3, q_3)} & K_d \ar[d]^{(\pi^1_d, \pi^2_d)} \\
			E'\times B' \ar@{>->}[r]_-{(g'\circ \iota') \times n'} & D'\times D' & E'\times E' \ar@{>->}[r]_-{(g'\circ \iota')\times (g'\circ \iota')} & D'\times D'}\]

Let us now examine $K_1$ and $K_2$. We have
\begin{align*}
	n\circ b\circ p_1&=d\circ n'\circ p_1=d\circ \pi^1_d\circ j_1=d\circ \pi^2_d\circ j_1=d\circ g'\circ \iota'\circ q_1=g\circ c \circ \iota'\circ q_1\\
	n\circ b\circ q_2 &=d\circ n'\circ p_2=d\circ\pi^2_D\circ j_2
	=d\circ \pi^1_d\circ j_2=d\circ g'\circ \iota'\circ p_2 = g\circ c \circ \iota'\circ p_2
\end{align*}

Since the bottom face of the original cube is a pullback (by hypothesis), we conclude that there exist arrows $k_1\colon  K_1\to A$ and $k_2\colon K_2\to A$ such that
\[m\circ k_1= c \iota'\circ q_1 \quad f\circ k_1=b\circ p_1 \quad m\circ k_2=c \iota'\circ q_2 \quad f\circ k_2=b\circ q_2\]

By \Cref{prop:isempty} we conclude that $K_1$ and $K_2$
are both empty. In particular, this implies that $(K_d, \{k_{n'}, j_3\})$ is a coproduct.

We focus now on $K_3$. By computing we get
	\begin{align*}
	&g\circ \iota \circ w\circ p_3=g\circ c\circ \iota'\circ p_3=d\circ g'\circ \iota'\circ p_3=d\circ \pi^1_d\circ j_3\\=&d\circ \pi^2_d\circ j_3=d\circ g'\circ \iota' \circ q_3=g\circ c\circ \iota'\circ q_3=g\circ \iota \circ w\circ q_3
	\end{align*}

\noindent 
\parbox{3cm}{\xymatrix@R=15pt{ K_3 \ar[r]^{p_3} \ar@{.>}[dr]^{\phi_3} \ar[d]_{q_3}& E' \ar[dr]^{\iota'} \\E' \ar[dr]_{\iota'}&K_c \ar[r]^{\pi^1_c} \ar[d]_{\pi^2_c} & C'  \ar[d]^{c}\\ & C' \ar[r]_{c} & C }}\hfill\parbox{9.5cm}{By \Cref{lem:po_set} $g\circ \iota$ is a monomorphism, thus $w\circ p_3=w\circ q_3$ and therefore
\[c\circ \iota'\circ p_3=\iota\circ w\circ p_3=\iota \circ w\circ q_3=c\circ \iota'\circ q_3\]
Thus we get the dotted $\phi_3\colon K_3\to K_c$ in the diagram aside.}

Moreover, $k_{g'}\circ \phi_3=j_3$ as shown by the following computation.
\begin{align*}
&(\pi^1_d, \pi^2_d)\circ k_{g'}\circ \phi_3=(g'\times g')\circ (\pi^1_c, \pi^2_c)\circ \phi_3=(g'\times g')\circ (\iota'\circ p_3, \iota'\circ q_3)\\=&((g'\circ \iota') \times (g'\circ \iota'))\circ (p_3, q_3)=(\pi^1_d, \pi^2_d)\circ j_3
\end{align*}
 
 \commentato{ 
 
 	\smallskip
 \parbox{7.5cm}{We can go further. The outer part of the diagram on the right commutes, so that we have the dotted $\psi_0\circ K_b\to K_0$.
 Now, on the one hand $\phi_0\circ \psi_0=\id{K_b}$ because if we compute we get
  \[(\pi^1_b, \pi^2_b)\circ \phi_0\circ \psi_0=(p_0, q_0)\circ \psi_0=(\pi^1_b, \pi^2_b)\]}\hfill \parbox{4cm}{\vspace{-.5cm}\xymatrix{K_b \ar[dr]_{(\pi^1_b, \pi^2_b)}  \ar@{>->}@/^.4cm/[rr]^{k_{n'}} \ar@{.>}[r]_{\psi_0}& K_0 \ar@{>->}[r]_{j_0}  \ar[d]^{(p_0, q_0)}& K_d   \ar[d]^{(\pi^1_d, \pi^2_d)} \\
 	& B'\times B' \ar@{>->}[r]_-{n'\times n'} & D'\times D' }
 	}

 On the other hand, notice that 
 \begin{align*}
 \pi^1_{d}\circ k_{n'}\circ \phi_0&=n'\circ \pi^1_b\circ \phi_0=n'\circ p_0=\pi^1_d\circ j_0 \\
 \pi^2_{d}\circ k_{n'}\circ \phi_0&=n'\circ \pi^2_b\circ \phi_0=n'\circ p_0=\pi^2_d\circ j_0
 \end{align*}

  Thus $k_{n'}\circ \phi_0=j_0$ and we have
 \[(p_0, q_0)\circ \psi_0\circ \phi_0=(\pi^1_b, \pi^2_b)\circ \phi_0=(p_0, q_0) \qquad  j_0\circ \psi_0\circ \phi_0 = k_{n'}\circ \phi_0=j_0\]
We can conclude that $\phi_0$ is an isomorphism, $(K_b, \{k_{n'}, (\pi^1_b, \pi^2_b)\})$ is a pullback for $(\pi^1_d, \pi^2_d)$ along $n'\times n'$ and  that$(K_d, \{k_{n'}, j_3\})$ is a coproduct.}
 
By definition of complement, we also know that $(C', \{\iota', m'\})$ is a coproduct. We can then apply again \Cref{lem:nodim}, decomposing $K_c$ in four parts, given by the pullbacks below
 \[\xymatrix@C=45pt@R=15pt{H_0 \ar@{>->}[r]^{x_0}  \ar[d]_{(y_0, z_0)}& K_c   \ar[d]^{(\pi^1_c, \pi^2_c)}& H_1 \ar@{>->}[r]^{x_1}  \ar[d]_{(y_1, z_1)} & K_c \ar[d]^{(\pi^1_c, \pi^2_c)} \\
 	A'\times A' \ar@{>->}[r]_-{m'\times m'} & C'\times C' & A'\times E' \ar@{>->}[r]_-{m'\times  \iota'} & C'\times C'}\]
 \[ \xymatrix@C=45pt@R=15pt{H_2 \ar@{>->}[r]^{x_2}  \ar[d]_{(y_2, z_2)}& K_c   \ar[d]^{(\pi^1_c, \pi^2_c)}& H_3 \ar@{>->}[r]^{x_3}  \ar[d]_{(y_3, z_3)} & K_c \ar[d]^{(\pi^1_c, \pi^2_c)} \\
 	E'\times A' \ar@{>->}[r]_-{\iota' \times m'} & C'\times C' & E'\times E' \ar@{>->}[r]_-{ \iota'\times  \iota'} & C'\times C'}\]
 
 \noindent
\parbox{7cm}{ Let us examine $H_0$. We start noticing that 
 \begin{align*}
 	(\pi^1_d, \pi^2_d)\circ k_{g'}\circ x_0=&(g'\times g')\circ(\pi^1_c, \pi^2_c)\circ x_0\\=&(g'\times g')\circ (m'\times m')\circ (y_0, z_0)\\=&((g'\circ m')\times (g'\circ m'))\circ (y_0, z_0)\\=&((n'\circ f')\times (n'\circ f'))\circ (y_0, z_0)\\=&(n'\times n')\circ (f'\times f')\circ (y_0, z_0) \end{align*}}\hfill \parbox{5cm}{\xymatrix@C=30pt{H_0 \ar[rr]^{(y_0, z_0)} \ar[d]_{x_0} \ar@{.>}[dr]^{h_0}&& A'\times A' \ar[d]_{f'\times f'}\\K_c \ar[dr]_{k_{g'}}& K_b \ar[r]_-{(\pi^1_b, \pi^2_b)} \ar[d]_{k_{n'}} & B'\times B' \ar[d]_{n'\times n'}\\ &K_d \ar[r]^-{(\pi^1_d, \pi^2_d)} & D'\times D'}}
 
 Thus we get the dotted arrow $h_0\colon H_0\to K_b$ in the diagram above. Moreover, we have
 \[m\circ a\circ y_0=c\circ m'\circ y_0=c\circ \pi^1_c\circ x_0=c\circ \pi^2_c\circ x_0=c\circ m'\circ z_0=m\circ a \circ z_0\]
 Since $m$ is a mono $a\circ y_0=a\circ z_0$ and there exists an arrow $h'_0\colon H_0\to K_a$ such that $\pi^1_a\circ h'_0=y_0$ and $\pi^2_a\circ h'_0=z_0$. 
 We now can conclude that $k_{f'}\circ h'_0=h_0$, noticing that
 \[(\pi^1_b, \pi^2_b)\circ h_0=(f'\times f')\circ (y_0, z_0)=(f'\times f')\circ (\pi^1_a, \pi^2_a)\circ h'_0=(\pi^1_b, \pi^2_b)\circ k_{f'}\circ h'_0\]

We can prove another property of $h'_0$. By computing we have
\[(\pi^1_c, \pi^2_c)\circ k_{m'}\circ h'_0=(m'\times m')\circ (\pi^1_a, \pi^2_a)\circ h'_0=(m'\times m')\circ (y_0, z_0)=(\pi^1_c, \pi^2_c)\circ x_0\]
so that $k_{m'}\circ h'_0$ must coincide with $x_0$.

As a next step, let us focus on $H_1$ and $H_2$. Two computations yield \small
 \[\begin{split}
(\pi^1_d, \pi^2_d)\circ k_{g'}\circ x_1&=(g'\times g')\circ (\pi^1_c, \pi^2_c)\circ x_1\\=&(g'\times g')\circ (m'\times \iota')\circ (y_1, z_1)\\=&((g'\circ m') \times (g'\circ \iota'))\circ (y_1, z_1)\\=&((n'\circ f') \times (g'\circ \iota'))\circ (y_1, z_1)\\=&(n'\times  (g'\circ \iota'))\circ (f'\circ y_1, z_1)
 \end{split} \hspace{1pt} \begin{split}
 (\pi^1_d, \pi^2_d)\circ k_{g'}\circ x_2&=(g'\times g')\circ (\pi^1_c, \pi^2_c)\circ x_2\\=&(g'\times g')\circ ( \iota' \times m')\circ (y_2, z_2)\\=&((g'\circ \iota') \times (g'\circ m'))\circ (y_2, z_2)\\=&( (g'\circ \iota') \times (n'\circ f'))\circ (y_2, z_2)\\=&((g'\circ \iota') \times n')\circ (y_2, f'\circ z_2)
 \end{split} \]
\normalsize
 Hence, we have arrows $h_1\colon H_1\to K_1$ and $h_2\colon H_2\to K_2$, showing that $H_1$ and $H_2$ are empty. For $H_3$, let us consider the two diagrams below
 \[ \xymatrix@C=45pt@R=15pt{ H_3 \ar@{.>}[d]^{\alpha_3} \ar[r]^{x_3} \ar@/_.4cm/[dd]_{(y_3, z_3)} & K_c \ar[d]_{k_{g'}}& K_3 \ar@{.>}[d]^{\beta_3} \ar@/^.3cm$ $/[dr]^{\phi_3} \ar@/_.4cm/[dd]_{(p_3, q_3)}\\K_3 \ar@{>->}[r]^{j_3}  \ar[d]^{(p_3, q_3)} & K_d \ar[d]^{(\pi^1_d, \pi^2_d)} &  H_3 \ar@{>->}[r]^{x_3}  \ar[d]^{(y_3, z_3)} & K_c \ar[d]^{(\pi^1_c, \pi^2_c)}  \\
   E'\times E' \ar@{>->}[r]_-{(g'\circ \iota')\times (g'\circ \iota')} & D'\times D' & E'\times E' \ar@{>->}[r]_-{ \iota'\times  \iota'} & C'\times C'}\]
 Their solid part commute. Indeed, we have
 \[\begin{split}
 	(\pi^1_d, \pi^2_d)\circ k_{g'}\circ x_3&=(g'\times g')\circ (\pi^1_c, \pi^2_c) \circ x_3\\=&(g'\times g')\circ (\iota'\times \iota')\circ (y_3\times z_3)\\=&(g'\circ \iota') \times (g'\circ \iota')\circ (y_3, z_3)
 \end{split}\qquad \begin{split}
 (\pi^1_c, \pi^2_c)\circ \phi_3=&(\iota'\circ p_3, \iota' \circ q_3)\\=&
(\iota'\times \iota')\circ (p_3, q_3)\\& \end{split}\] 
Thus we get the dotted arrows $\alpha_3$ and $\beta_3$ which are one the inverse of the other. Indeed, on the one hand we have
\begin{align*}
j_3\circ \alpha_3\circ \beta_3=k_{g'}\circ x_3\circ \beta_3=k_{g'} \circ \phi_3=j_3  \qquad 
(p_3, q_3)\circ \alpha_3\circ \beta_3=(y_3, z_3)\circ \beta_3=(p_3, q_3) 
\end{align*}
 On the other hand, notice that
 \[(\pi^1_c, \pi^2_c)\circ \phi_3\circ \alpha_3 =(\iota'\times \iota')\circ (p_3, q_3)\circ \alpha_3= (\iota'\times \iota')\circ (y_3, z_3)=(\pi^1_c, \pi^2_c)\circ x_3\]
 Therefore $\phi_3\circ \alpha_3=x_3$ and thus $x_3\circ \beta_3\circ \alpha_3=x_3$. Moreover
 \[(y_3, z_3)\circ \beta_3\circ \alpha_3=(p_3, q_3)\circ \alpha_3=(y_3, z_3)\]
 Summing up, we have just proved that $(K_c, \{x_0, \phi_3\})$ is a coproduct.
 
 Let now $\gamma\colon K_b\to Z$ and $\delta\colon K_c\to Z$ be two arrows such that $\gamma\circ k_{f'}=\delta\circ k_{m'}$. We want to construct an arrow $\theta\colon K_d\to Z$ such that $\theta\circ k_{g'}=\delta$ and $\theta\circ k_{n'}=\gamma$.	
 
 We have already proved that $(K_d, \{k_n, j_3\})$ is a coproduct, thus there is a unique arrow $\theta\colon K_d\to Z$ such that  $\theta \circ k_{n'}=\gamma$ and $\theta \circ j_3=\delta \circ \phi_3 $.
 Now, on the one hand we have
 \begin{align*}
 	\theta \circ k_{g'}\circ \phi_3=\theta \circ  j_3=\delta\circ \phi_3 
 \end{align*}
 On the other hand, we can conclude that $\theta \circ k_{g'}=\delta$ as wanted, since
 \[\theta \circ k_{g'}\circ x_0=\theta \circ k_{n'}\circ h_0=\gamma \circ h_0=\gamma \circ k_{f'}\circ h'_0=\delta \circ k_{m'}\circ h'_0=\delta \circ x_0 \]
 We are left with uniqueness. If $\theta'$ is another arrow $K_d\to Z$ such that $\theta'\circ k_{n'}=\gamma$ and $\theta'\circ k_{g'}= \delta$. If we compute we get
 \[\theta'\circ j_3=\theta'\circ k_{g'}\circ \phi_3=\delta \circ \phi_3=\theta \circ j_3 \]
 We already know that $\theta'\circ k_{n'}=\theta \circ k_{n'}$, so the previous identity entails that $\theta =\theta'$.
			 \qedhere 
	\end{enumerate}
\end{proof}

\newpage
\subsection{Proofs for \Cref{hypereq}}\label{app:due}
\fhyp*

\begin{proof}\label{proof:forghyp}
	\begin{enumerate}
		\item  This follows at once from \Cref{rem:eqhyp_morphs}.
		\item  	Let $\mathcal{H}$ be a hypergraph, and define $L(\mathcal{H}) := (E_\mathcal{H}, V_{\mathcal{H}}, V_{\mathcal{H}}, s_\mathcal{H}, t_\mathcal{H}, \id{V_\mathcal{H}})$.  By construction we have $T(L(\mathcal{H}))=\mathcal{H}$, thus we can define
		$\eta_\mathcal{H}\colon \mathcal{H}\to T(L(\mathcal{H}))$ as the identity $\id{\mathcal{H}}$. 
		
		To see that in this way we get a unit,  take an arrow $(h, k)\colon \mathcal{H}\to T(\mathcal{G})$ for some $\mathcal{G}$ in $\EqHyp$. Then $(h,k, q_{\mathcal{G}}\circ k)$ is an arrow $L(\mathcal{H})\to \mathcal{G}$ and the unique one such that $T(h,k, q_{\mathcal{G}}\circ k)\circ \eta_{\mathcal{H}}=(h,k)$
		
		\item For every hypergraph $\mathcal{H}$ define $R(\mathcal{H})$ as $(E_\mathcal{H}, V_{\mathcal{H}}, 1, s_\mathcal{H}, t_\mathcal{H}, !_{V_\mathcal{H}})$, so that $T(R(\mathcal{H}))$ is again $\mathcal{H}$.  Now, let $\epsilon_{\mathcal{H}}\colon T(R(\mathcal{H}))\to \mathcal{H}$ be the identity and take an arrow $(h,k)\colon T(\mathcal{G})\to \mathcal{H}$ for some $\mathcal{G}$ in $\EqHyp$.  Notice that $!_{Q_\mathcal{G}}\circ q_{\mathcal{G}}=!_{V_{\mathcal{H}}}\circ k$
		so that $(h,k, !_{Q_\mathcal{G}})$ is an arrow $\mathcal{G}\to R(\mathcal{H})$ of $\EqHyp$ such that $\epsilon_{\mathcal{H}}\circ T(h,k, !_{Q_\mathcal{G}})=(h,k)$.
		
		Uniqueness of such an arrow follows at once from the first point and the fact that $1$ is terminal. \qedhere 
	\end{enumerate}
\end{proof}

\begin{remark}\label{rem:fact}
	Before proceeding further, let us recall the following result about the classes of regular epis (i.e.~surjections) and of monos in $\Set$. In particular, we need the fact that they form a a \emph{factorization system} \cite{kelly2006note} on $\Set$.  This amounts to ask that
	
	\noindent 
	\parbox{11.5cm}{
	\begin{enumerate} 
		\item every arrow $f\colon X\to Y$ can be factored as $m\circ e$, where $e\colon X\eto Q$ is a regular epi and $m\colon Q\mto Y$ is a mono;
			\item for every commuting square as the one on the right, with $e\colon X\eto E$ a surjection and $m\colon N\mto Y$ a mono, there exists a unique $k\colon E\to N$ making it commutative.
	\end{enumerate}}\hfill \parbox{2cm}{\xymatrix{X \ar@{>>}[d]_{e}\ar[r]^{f}& N \ar@{>->}[d]^{m} \\ E \ar@{.>}[ur]^{k}\ar[r]_{g} & Y}}
	
	\hspace{1pt}\newline 
	From these properties, one can deduce that if $f=m\circ e$ and $f=m'\circ e'$ are two factorizations of $f$ then there is a bijection $\phi$ such that $m=m'\circ \phi$ and $\phi \circ e=e'$.
\end{remark}

We will also need the following well-known fact about regular epis.

\begin{lemma}\label{lemma:nat_trans_reg_epi_canonical_arrow_reg_epi}
	Let $F,G\colon \D\rightrightarrows \X$ be two diagrams and suppose that $\X$ has all colimits of shape $\D$. Let $(X, \{x_d\}_{d \in \D})$ and $(Y, \{y_d\}_{d\in D})$ be the colimits of $F$ and $G$, respectively.  If $\phi\colon  F \to G$ is a natural transformation whose components are regular epis, then the arrow induced by $\phi$ from $X$ to $Y$ is a regular epi.
\end{lemma}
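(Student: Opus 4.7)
The plan is to exhibit the induced arrow $\bar\phi\colon X\to Y$ as a coequalizer, from which regularity follows immediately. The strategy proceeds in three stages: lift the pointwise coequalizer presentations of the $\phi_d$ to a functorial one, transport the result to the colimit level, then verify the universal property directly.

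First I would use kernel pairs to make the presentation of each $\phi_d$ as a coequalizer coherent across $\D$. Since each $\phi_d$ is a regular epi, by \Cref{prop:reg_epi_coeq} it is the coequalizer of its kernel pair $(K_d,\pi^1_d,\pi^2_d)$. For an arrow $\alpha\colon d\to d'$ in $\D$, naturality of $\phi$ together with the identity $\phi_d\circ\pi^1_d=\phi_d\circ\pi^2_d$ yields $\phi_{d'}\circ F(\alpha)\circ\pi^1_d=\phi_{d'}\circ F(\alpha)\circ\pi^2_d$, so the universal property of $K_{d'}$ produces a unique $K(\alpha)\colon K_d\to K_{d'}$ with $\pi^i_{d'}\circ K(\alpha)=F(\alpha)\circ\pi^i_d$ for $i=1,2$. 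A routine uniqueness argument shows that $K$ is a functor and that $\pi^1,\pi^2\colon K\rightrightarrows F$ are natural transformations.

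Next I would exploit the assumption that $\X$ has colimits of shape $\D$ to form the colimit $(P,\{p_d\}_{d\in\D})$ of $K$. The families $\{x_d\circ\pi^1_d\}_{d\in\D}$ and $\{x_d\circ\pi^2_d\}_{d\in\D}$ are cocones on $K$ by naturality of $\pi^1,\pi^2$ combined with the cocone property of $\{x_d\}$, so they induce arrows $a,b\colon P\rightrightarrows X$ satisfying $a\circ p_d=x_d\circ\pi^1_d$ and $b\circ p_d=x_d\circ\pi^2_d$ for every $d$.

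Finally I would show that $\bar\phi$ is the coequalizer of $a$ and $b$. The equality $\bar\phi\circ a=\bar\phi\circ b$ is checked after precomposition with each $p_d$, using $\bar\phi\circ x_d=y_d\circ\phi_d$ and that $\phi_d$ coequalizes $\pi^1_d,\pi^2_d$. For the universal property, given $h\colon X\to Z$ with $h\circ a=h\circ b$, precomposition with $p_d$ gives $h\circ x_d\circ\pi^1_d=h\circ x_d\circ\pi^2_d$, so $h\circ x_d$ factors uniquely as $h_d\circ\phi_d$ for some $h_d\colon G(d)\to Z$. Using that each $\phi_d$ is epi, one verifies that $\{h_d\}_{d\in\D}$ is a cocone on $G$, yielding a unique $k\colon Y\to Z$ with $k\circ y_d=h_d$; an easy computation then gives $k\circ\bar\phi=h$, and uniqueness of $k$ follows from the uniqueness in the cocone factorization.

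The main obstacle is the first stage: turning the individual coequalizer presentations into a diagram indexed by $\D$. Kernel pairs are precisely the right tool, and their existence is guaranteed in all the concrete categories to which the lemma is applied in this paper.
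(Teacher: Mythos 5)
Your proof is correct and follows essentially the same route as the paper's: the paper first assembles the kernel pairs of the $\phi_d$ into a functor equipped with natural projections onto $F$ (so that $\phi$ becomes a regular epi in the functor category) and then verifies directly that the induced arrow between the colimits is the coequalizer of the induced parallel pair, while you simply inline these two steps by using the kernel-pair functor itself as the parallel pair at the colimit level, with the same computations for existence and uniqueness of the factorisation. The only caveat --- the need for kernel pairs of the $\phi_d$, which the bare statement does not guarantee --- is present in the paper's own argument as well (its auxiliary corollary assumes $\X$ has pullbacks), and you flag it appropriately.
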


\noindent
\parbox{11.4cm}{
\comp*}\hfill 
\parbox{2cm}{\xymatrix{V \ar@{>>}[d]_{q} \ar[r]^{\pi^D_V}& V_D \ar@{>>}[dd]^{q_D}\\Q \ar@{>.>}[d]_{m}&\\ L \ar[r]_{l_d} & Q_D}}   

\begin{proof}\label{proof:comp}
	\begin{enumerate}
		\item We know by \Cref{prop:cocomp} that $\hyp$ is cocomplete. Thus, let $(E, V, s, t)$ together with $\{(\kappa_E^D, \kappa_V^D)\}_{D \in \D}$ be a colimit for $T \circ F$. By \Cref{colim} we also know that 
		 
		 \noindent 
		 \parbox{10.5cm}{$(V, \{\kappa^D_V\}_{D\in \D})$ is a colimit for $U_\eq\circ F$. Moreover, since $\Set$ is cocomplete too we can also take a colimit $(C, \{c_d\}_{d\in \D})$ for $K\circ F$. Now,  let $\alpha$ be an arrow $D\to D'$ in $\D$, and suppose that $F(\alpha)$ is $(h_1, h_2, h_3)$. By definition of morphisms in $\EqHyp$, the square on the right commutes.}
		\hfill 
		\parbox{2cm}{\xymatrix{V_d \ar[r]_{h_2} \ar@{>>}[d]_{q_D} & V_{D'} \ar@{>>}[d]^{q_{D'}}\\ Q_{D} \ar[r]^{h_3} & Q_{D'}}}
		\vspace{.05cm}
		
		Thus the family $\{q_D\}_{D\in \D}$ form a natural transformation $U_\eq\circ F\to K\circ F$.  By \Cref{lemma:nat_trans_reg_epi_canonical_arrow_reg_epi}, the induced arrow $q: V \to C$ between the colimits is a surjection.  We can then consider the object $(E, V, C, s, t, q)$ of $\EqHyp$, together with the family  $\{(\kappa_E^D, \kappa_V^D, c_D)\}_{D\in \D}$, which by construction is a cocone on $F$. Let $((E_\mathcal{G},V_\mathcal{G}, Q_{\mathcal{G}}, s_{\mathcal{G}}, t_{\mathcal{G}}, q_{\mathcal{G}}), \{(h^D_E, h^D_V, h^D_Q)\}_{D\in \D})$ be a cocone, then $((E_\mathcal{G},V_\mathcal{G}, s_{\mathcal{G}}, t_{\mathcal{G}}), \{(h^D_E, h^D_V)\}_{D\in \D})$ and $(Q_{\mathcal{G}}, \{h^D_Q\}_{D\in \D})$ are cocone on, respectively, $T\circ F$ and $K\circ D$, giving arrows $(h_E, h_V)$ in $\hyp$ and $h_Q$ in $\Set$ such that
		\[(h_E, h_V)\circ (\kappa_E^D, \kappa_V^D)=(h^D_E, h^D_V) \qquad h_Q\circ c_D=h^D_Q\]
		Now, to show that $(h_E, h_V, h_Q)$ is an arrow of $\EqHyp$ we can compute
		\[h_Q\circ q\circ \kappa^D_V=h_Q\circ c_D\circ q_D=h^D_Q\circ q_D=q_{\mathcal{G}} \circ h^D_V=q_{\mathcal{G}}\circ h_V\circ \kappa^D_V \]
		
		Uniqueness of such arrow follows form the fact that $T$ is faithful and \Cref{rem:eqhyp_morphs}.
		
		\item   $\hyp$ is complete, again by \Cref{prop:cocomp}, we can then consider a limiting cone $((E, V, s, t), \{(\pi^D_E, \pi^D_V)\}_{D\in \D})$ over of $T \circ F$. Now, $(V, \{q_D\circ \pi_V^D\}_{D\in \D})$, is a cone for $K \circ F$: indeed, if $\alpha\colon D \to D'$ is an arrow of $\D$, and $F(\alpha)=(h_1, h_2, h_3)$, then we have
	\[h_3\circ q_D\circ \pi^D_V =q_{D'} \circ h_2\circ \pi^D_{V}=q_{D'}\circ \pi^{D'}_V\]

\noindent 
\parbox{9.5cm}{Thus there is an arrow $l\colon V\to L$ such that $l_D\circ l=q_D\circ \pi^D_V$. By \Cref{rem:fact}, we know that there exist $m\colon Q\mto L$ and $q\colon X\eto Q $ such that $m\circ q=l$. Since the identity is mono, \Cref{rem:fact} yield a unique arrow $\pi^D_Q$ fitting in the rectangle aside.} 	\hfill \parbox{3cm}{\xymatrix{
	V\ar[r]^{\pi_V^D}\ar@{>>}[d]_{q}&V_D\ar@{>>}[r]^{q_D}&Q_D\ar[d]^{\id{Q_D}}\\Q\ar@{>->}[r]_{m}\ar@{.>}[urr]_{\pi_Q^D}&L\ar[r]_{l_D}&{Q_D}
	}}
	
	Let $\alpha$ be an arrow $D\to D'$ in $\D$. Then we have
		\[T(F\alpha\circ (\pi^D_{E}, \pi^D_V, \pi^D_Q))=T(F(\alpha))\circ (\pi^D_E, \pi^D_V)=(\pi^{D'}_E, \pi^{D'}_V)=T(\pi^{D'}_{E}, \pi^{D'}_V, \pi^{D'}_Q)\]

		Thus, by faitfhulness of $T$, $((E, V, C, s, t, q), \{(\pi_E^D, \pi_V^D, \pi_Q^D)\}_{D\in \D})$ is a cone over $F$. To see that it is terminal, let $((E_\mathcal{G},V_\mathcal{G}, Q_{\mathcal{G}}, s_{\mathcal{G}}, t_{\mathcal{G}}, q_{\mathcal{G}}), \{(h^D_E, h^D_V, h^D_Q)\}_{D\in \D})$ be another cone. In particular there is an arrow $(h_E, h_V)\colon (E_\mathcal{G},V_\mathcal{G},  s_{\mathcal{G}}, t_{\mathcal{G}})\to (E, V, s, t)$ in $\hyp$ such that
		\[\pi^D_E\circ h_E=h^D_E \qquad \pi^D_V\circ h_V=h^D_V\]

		Moreover, applying $K$ we get that $(Q_{\mathcal{G}}, \{h^D_{Q}\}_{D\in \D})$ is a cone over $K\circ F$, so that there is an arrow $h\colon Q_{\mathcal{G}} \to L$ such that $l_D\circ h=h^D_{Q}$.  Thus
		\[l_D\circ h\circ q_\mathcal{G}=h^D_Q\circ q_{\mathcal{G}}=q_{D}\circ h^D_V=q_D\circ \pi^D\circ h_V=l_D\circ l\circ h_V\]

\noindent 
\parbox{2cm}{\xymatrix@R=15pt{V_\mathcal{G} \ar@{>>}[d]_{q_{\mathcal{G}}} \ar[r]^{h_V}& V \ar@{>>}[r]^{q}  & Q \ar@{>->}[d]^{m}\\ Q_{\mathcal{G}}  \ar@{.>}[urr]_{k_Q}\ar[rr]_{h} && L }}\hfill \parbox{9.5cm}{Hence the  outer boundary of the square on the left commutes, yielding the dotted diagonal arrow $k_Q\colon Q_\mathcal{G}\to Q$.  We have therefore built an arrow $(h_E, h_V, k_Q)$ of $\EqHyp$ such that
\[(\pi^D_E, \pi^D_V, \pi^D_Q)\circ (h_E, h_V, k_Q)=(h^D_E, h^D_V, h^D_Q)\]}

As in the point above, uniqueness is guaranteed by faithfulness of $T$ and \Cref{rem:eqhyp_morphs}.	\qedhere 
	\end{enumerate}
\end{proof}

\mn*

\begin{proof}\label{proof:regmono}

$(\Rightarrow)$	If $(h_E, h_V, h_Q)$ is mono, from \Cref{cor:mono1} we have that $h_E$ and $h_V$ are monos.

\noindent
	\parbox{11cm}{
		\hspace{15pt}Let now $(f_E, f_V, f_Q), (g_E, g_V, g_Q): \mathcal{H\rightrightarrows K}$ be arrows equalized by $(h_E, h_V, h_Q)$. Let $e\colon E_Q\mto Q_{\mathcal{H}}$ be the equalizer in
	$\Set$ of $f_Q, g_Q\colon Q_{\mathcal{H}}\rightrightarrows Q_{\mathcal{K}}$. From the third point of \Cref{prop:fact} and \Cref{lim} we know that $h_V\colon V_{\mathcal{G}}\mto V_{\mathcal{H}} $ 
	is the equalizer of $f_V, g_V\colon V_{\mathcal{H}}\rightrightarrows V_{\mathcal{K}}$, thus by the second point of \Cref{prop:eqhyp_complete} we know that there is a mono 
	$m\colon Q_{\mathcal{G}}\mto E_Q$ fitting in the diagram aside. Now, notice that 
	$e\circ m\circ q_{\mathcal{G}}=q_\mathcal{H}\circ h_V=h_{Q}\circ q_{\mathcal{G}}$.
	Since $q_{\mathcal{G}}$ is epi we conclude that $e\circ m=h_Q$, from which the claim follows.}
	\hfill
	\parbox{2cm}{\xymatrix@R=18pt{V_\mathcal{G} \ar@{>>}[d]_{q_{\mathcal{G}}} \ar@{>->}[r]^{h_V}& V_\mathcal{H} \ar@{>>}[dd]^{q_{\mathcal{H}}}\\Q_\mathcal{G} \ar@{>.>}[d]_{m}&\\ E_Q \ar@{>->}[r]_{e} & Q_\mathcal{{H}}}}   
	
	$(\Leftarrow)$ Suppose that $h_E$, $h_V$, and $h_Q$  are monos. 
	We can build the cokernel pair of the three arrows, taking the pushout of $h_E$, $h_V$ and $h_Q$ along itself, obtaining the three diagrams below, which by \Cref{prop:regmono} are also pullbacks
	\[     \xymatrix@R=15pt{
	E_\mathcal{G} \ar@{>->}[d]_{h_E} \ar@{>->}[r]^{h_E}&E_\mathcal{H} \ar@{>->}[d]^{g_E} &	V_\mathcal{G} \ar@{>->}[d]_{h_V} \ar@{>->}[r]^{h_V}& V_\mathcal{H} \ar@{>->}[d]^{g_V} &Q_\mathcal{G} \ar@{>->}[d]_{h_Q} \ar@{>->}[r]^{h_Q}& Q_\mathcal{H} \ar@{>->}[d]^{g_Q} \\
		E_\mathcal{H} \ar@{>->}[r]_{f_E} & E_\mathcal{{K}} & V_H \ar@{>->}[r]_{f_V} & V_\mathcal{{K}} & Q_\mathcal{H} \ar@{>->}[r]_{f_Q} & Q_\mathcal{{K}}
	}
	\]   
	
	Now, we know by \Cref{prop:com} and \Cref{colim} that there exists arrows $s,t\colon E_{\mathcal{K}}\rightrightarrows V_{\mathcal{K}}$ such that the resulting hypergraph is the cokernel pair of $(h_E, h_V)$ in $\hyp$. Moreover, by \Cref{lemma:nat_trans_reg_epi_canonical_arrow_reg_epi} we know that the unique arrow $q\colon V_{\mathcal{K}}\to Q_{\mathcal{K}}$
	induced by $q_{Q}\circ q_{\mathcal{H}}$ and $f_{Q}\circ q_{\mathcal{H}}$ is a regular epi. Let thus $\mathcal{K}$ be the resulting hypergraph with equivalence.

	By construction we have two arrows 	$(f_E, f_V, f_Q), (g_E, g_V, g_Q): \mathcal{H\rightrightarrows K}$ such that 
	\[(f_E, f_V, f_Q)\circ (h_E, h_V, h_Q) =(g_E, g_V, g_Q)\circ (h_E, h_V, h_Q)\]
	
	Now, in $\Set$ every mono is regular \cite{maclane2012sheaves}, so by the dual of \Cref{prop:reg_epi_coeq} we deduce that $h_E$, $h_V$ and $h_Q$ are the equalizers of the arrows built above. By \Cref{prop:com,lim} we also know that $(h_E, h_V)$ is the equalizer in $\hyp$ of $(f_E, f_V)$ and $(g_E, g_V)$.  Hence, if $(z_E, z_V, z_Q)\colon \mathcal{Z}\to \mathcal{H}$ is an arrow such that 
			\[(f_E, f_V, f_Q)\circ (z_E, z_V, z_Q) =(g_E, g_V, g_Q)\circ (z_E, z_V, z_Q)\]
	then we get a unique morphism $(a_E, a_V)\colon (E_{\mathcal{Z}}, V_{\mathcal{Z}})\to (E_{\mathcal{G}}, V_{\mathcal{G}})$ such that $(h_E, h_V)\circ (a_E, a_V) =(z_E, z_V)$. Moreover, since $g_Q\circ z_Q=f_Q\circ z_Q$ there is a unique $a_Q\colon Q_{\mathcal{Z}}\to Q_\mathcal{G}$ such that $h_Q\circ a_Q=z_Q$. Thus
	\[h_Q\circ a_Q\circ q_{\mathcal{Z}}=z_Q\circ q_{\mathcal{Z}}=q_{\mathcal{H}}\circ z_{V}=q_{\mathcal{H}}\circ h_V\circ a_V=h_Q\circ q_{\mathcal{G}}\circ a_V\]
	Since $h_Q$ is mono we conclude that $a_Q\circ q_{\mathcal{Z}}=q_{\mathcal{G}}\circ a_V$ and we conclude.
\end{proof}

\commentato{
\quot*
\begin{proof}\label{proof:quot}
  Given a hypergraph $\mathcal{H}:=(E, V, s, t)$, let us define $R(E, V, s, t)$ as $(E, V, V, s, t, \id{B})$. By construction $\quo(R(\mathcal{{H}})) =\mathcal{H}$ so that we can take the identity as $\epsilon_{\mathcal{H}}\colon \quo(R(\mathcal{H}))\to \mathcal{H} $.
		
		Suppose that $\mathcal{G}$ is an object of $\EqHyp$ and take an arrow $(h_E, h_V): \quo(\mathcal{G}) \to \mathcal{H}$, then the three squares below commute, so that $(h_E,  h_V \circ q_{\mathcal{G}}, h_V)$ is a morphism of $\EqHyp$.
		\[\xymatrix@R=15pt@C=30pt{
			E_{\mathcal{G}} \ar[r]^{h_E}\ar[d]_{s_{\mathcal{G}}} & E \ar[d]^{s} & E_\mathcal{G} \ar[r]^{h_E}\ar[d]_{t_{\mathcal{G}}} & V \ar[d]^{t} & V_{\mathcal{G}} \ar[r]^{h_V \circ q_{\mathcal{G}}} \ar@{>>}[d]_{q_{\mathcal{G}}} & V \ar@{>>}[d]^{\id{V}} \\
			V_{\mathcal{G}}^{\star} \ar[r]_-{h_V^\star\circ q_{\mathcal{G}}^\star} & V^{\star} & V_{\mathcal{G}}^{\star} \ar[r]_{h_V^\star\circ q_\mathcal{G}^\star} & V^{\star} & C \ar[r]_{h_V} & V
		}\]
		
		By construction $\epsilon_{\mathcal{H}}\circ\quo(h_E,  h_V\circ q, h_V)=(h_E, h_V)$. To see that such an arrow is unique, let $(f_E, f_V, f_Q)$ be a morphism such that $\epsilon_{\mathcal{H}}\circ\quo(f_E, f_V, f_Q)=(h_E, h_V)$, then $f_E=h_E$ and $f_Q=h_V$, moreover
		\[f_V=\id{V}\circ f_V=f_Q\circ q_{\mathcal{G}}=h_{V}\circ q_\mathcal{G}  \]
	allowing us to conclude.		
\end{proof}
\commentato{		\item 	Since $\quo$ is a left adjoint, it preserves colimits.
		Let $F\colon \D \to \EqHyp$ be a diagram, to fix the notation, let  $\mathcal{C}$ be  $(E, V, s, t)$,  and $F(d)$ be $(E_d, V_d, Q_d, s_d, t_d, q_d)$. 
		
		By \Cref{prop:eqhyp_complete} we know that $F$ has a colimit, thus we only have to proof that $\quo$ reflects colimits.  Let $(\mathcal{G}, \{\kappa^d_E, \kappa^d_V, \kappa^d_Q\}_{d\in \D})$ be a cocone over $F$ such that $(\quo(\mathcal{G}). \{\kappa^d_E, \kappa^d_Q\}_{d\in \D})$ is colimiting for $\quo\circ F$.

		and let $(\mathcal{C}, \{(c^d_E, c^d_V)\}_{d\in \D})$, be a colimiting cocone for $\quo \circ F$.  
		
		Let $, )$ be a colimit of $F$, which exists by \Cref{prop:eqhyp_complete},  then by the same proposition we know that $(Q, \{\kappa^d_Q\}_{d\in \D})$ is colimiting for

		Now, let $(U, \{\kappa_d\}_{d\in \D})$  be a colimit for $U_\eq\circ F$. We have already noticed in the proof of the first point of \Cref{prop:eqhyp_complete} that  the family $\{q_d\}_{d\in \D}$ defines a natural transformation $U_{\eq}\circ F\to K\circ F$, moreover, by \Cref{prop:com,colim}, $(V, \{c^d_V\}_{d\in \D})$ is colimiting for  $K\circ F$, yielding an arrow $q: V \to C$, which is a surjection by \Cref{lemma:nat_trans_reg_epi_canonical_arrow_reg_epi}.

Define now $W: \EqHyp \to \Set$ as the functors $(X, Y, Z, x, y, z)$ to the set of hyperedges $X$ and each morphism to its first component. Again by \Cref{prop:com,colim} $(E, \{c^d_E\}_{d\in \D})$ is colimiting for $W \circ F$. Moreover, let $\alpha\colon d\to d'$  be an arrow in $\D$ and suppose that $F(\alpha)=(h_1, h_2, h_3)$, then we have
\begin{align*}
	\kappa^\star_{d'}\circ s_{d'} \circ W(F(\alpha))&= \kappa^\star_{d'}\circ s_{d'} \circ h_1= \kappa^\star_{d'}\circ h_2\circ s_{d}=\kappa^\star\_{d'}circ U(F(\alpha))\circ s^d =\kappa^\star_d\circ s_d
	\\\kappa^\star_{d'}\circ t_{d'} \circ W(F(\alpha))&= \kappa^\star_{d'}\circ t_{d'} \circ h_1= \kappa^\star_{d'}\circ h_2\circ t_{d}=\kappa^\star\_{d'}circ U(F(\alpha))\circ s^d =\kappa^\star_d\circ t_d
\end{align*}
So that $(U^\star, \{\kappa^\star_d\circ s_d\}_{d\in \D})$ and $(U^\star, \{\kappa^\star_d\circ t_d\}_{d\in \D})$ are cocones over $W\circ F$, yielding two arrows $s, t\colon E\rightrightarrows U^\star$.

By the proof of the first point of \Cref{prop:eqhyp_complete} the . \qedhere }
}

\pbm*

\begin{proof}\label{proof:pbmono}
	Closure under composition and decomposition follows from \Cref{lem:pb1}, while stability follows from the first point of \Cref{prop:kerset} because pushouts along injections are also pullbacks by \Cref{prop:regmono}. We are left with stability under pullbacks. 
	Let $h\colon \mathcal{G}_1\to \mathcal{G}_3$ be an arrow in $\pbc$ and $k\colon \mathcal{G}_2\to \mathcal{G}_3$ be another morphism of $\EqHyp$, with $\mathcal{G}_{i}=(E_i, V_i, Q_i, s_i, t_i, q_i)$.
	
	Consider the diagrams below, in which the square on the left is a pullback in $\EqHyp$, with $\mathcal{P}=(E_{\mathcal{P}, V_{\mathcal{P}}}, Q_{\mathcal{P}}, s_{\mathcal{P}}, t_{\mathcal{P}}, q_{\mathcal{P}})$, and in which the right half of the right rectangle is a pullback too, so that the existence of $m_\mathcal{P}$ is guaranteed by \Cref{prop:eqhyp_complete}. 
	
	Again by \Cref{prop:eqhyp_complete} the two halves of the central rectangle are pullbacks and so by \Cref{lem:pb1} the whole rectangle is a pullback too. This, in turn entails that the whole right rectangle is a pullback and so, also its left half is a pullback (again by \Cref{lem:pb1})
	\[
	\xymatrix@R=15pt{\mathcal{P} \ar[r]^{p} \ar[d]_{r}&  \mathcal{G}_1 \ar[d]^{h} &
		V_{\mathcal{P}} \ar@/^.4cm/[rr]^{p_Q\circ q_{\mathcal{P}}} \ar[r]_{p_V} \ar@{>->}[d]_{r_V}& V_1  \ar@{>>}[r]_{q_1} \ar@{>->}[d]_{h_V}& Q_1 \ar@{>->}[d]^{h_Q}& V_{\mathcal{P}} \ar@{>>}[r]_{q_{\mathcal{P}}} \ar@{>->}[d]_{r_V} & Q_{\mathcal{P}} \ar@{>->}[dr]_{r_Q} \ar@/^.4cm/[rr]^{p_Q} \ar@{>->}[r]_{m_{\mathcal{P}}}& T \ar[r]_{t_1} \ar@{>->}[d]_{t_2} & Q_1 \ar@{>->}[d]^{h_Q}\\
		\mathcal{G}_2 \ar[r]_{k} & \mathcal{G}_3 &   V_2  \ar@/_.3cm/[rr]_{k_Q\circ q_2}\ar[r]^{k_V}& V_3 \ar@{>>}[r]^{q_3} & Q_3 &V_2 \ar@{>>}[rr]^{q_2}&& Q_2 \ar[r]^{k_Q} & Q_3 }
	\]

	Let now $z_1\colon Z\to Q_{\mathcal{P}}$ and $z_2\colon Z\to V_2$ be two arrows such that $q_2\circ z_2=r_Q\circ z_1$. Then there exists a unique arrow $z\colon Z\to V_{\mathcal{P}}$ such that
	\[m_{\mathcal{P}}\circ q_{\mathcal{P}}\circ z= m_{\mathcal{P}}\circ z_1 \qquad r_V\circ z=z_2\]
	Since $m_\mathcal{P}$ is mono we conclude that $q_{\mathcal{P}}\circ z=z_1$ as wanted.
	
	Uniqueness follows at once since $r_V$, being the pullback of $h_V$ is a mono.
\end{proof}

\pbs*

\begin{proof}\label{proof:pbstable}
	Let $\mathcal{G}_i = (A_i, B_i, Q_i, s_i, t_i, q_i)$, $\mathcal{G}'_i=(A'_i, B'_i, Q'_i, s'_i, t'_i, q'_i)$, for $i \in \{1, 2, 3, 4\}$, be hypergraphs with equivalence, 
	and  suppose that in the first diagram below,  all the vertical faces are pullbacks, the bottom face is a pushout $h$ is a regular mono and $k_Q\colon Q_1\mto Q_3$ is mono. By \Cref{prop:eqhyp_complete} the same is true for the other two cubes, by \Cref{cor:mono2}, $h_E$ and $h_V$ are monos and so the top faces of these cubes are pushouts
	\[
	\xymatrix@C=10pt@R=10pt{&\mathcal{G}_1'\ar[dd]|\hole_(.65){a}\ar[rr]^{h'} \ar[dl]_{k'} && \mathcal{G}_2' \ar[dd]^{b} \ar[dl]_{z'} & &A_1'\ar[dd]|\hole_(.65){a_E}\ar@{>->}[rr]^{h_E'} \ar[dl]_{k_E'} && A_2' \ar[dd]_{b_E} \ar[dl]_{z_E'} &&B_1'\ar[dd]|\hole_(.65){a_V}\ar@{>->}[rr]^{h_V'} \ar[dl]_{k_V'} && B_2' \ar[dd]^{b_V} \ar[dl]_{z_V'} \\ 
		\mathcal{G}_3'  \ar[dd]_{c}\ar[rr]^(.7){p'} & & \mathcal{G}_4' \ar[dd]_(.3){d} &&A_3'  \ar[dd]_{c_E}\ar@{>->}[rr]^(.7){p_E'} & & A_4' \ar[dd]_(.3){d_E}
		&&B_3'  \ar[dd]_{c_V}\ar@{>->}[rr]^(.7){p_V'} & & B_4' \ar[dd]_(.3){d_V}\\
		&\mathcal{G}_1\ar[rr]|\hole^(.65){h} \ar[dl]^{k} && \mathcal{G}_2 \ar[dl]^{z} & &A_1\ar@{>->}[rr]|\hole^(.65){h_E} \ar[dl]^{k_E} && A_2 \ar[dl]^{z_E} &&B_1\ar@{>->}[rr]|\hole^(.65){h_V} \ar[dl]^{k_V} && B_2 \ar[dl]^{z_V} \\
		\mathcal{G}_3 \ar[rr]_{p} & & \mathcal{G}_4 && A_3 \ar@{>->}[rr]_{p_E} & & A_4&&B_3 \ar@{>->}[rr]_{p_V} & & B_4 }
	\]
	
	\noindent 
\parbox{10cm}{Now, if $d=(d_E, d_V, d_Q)$, we can pull back the third component to get the solid part of the cube aside. Notice moreover that the commutativity of the solid diagram yields the existence of the dotted $w\colon T\to Y$. By the usual composition and decomposition properties of pullbacks (cfr.~\Cref{lem:pb1}) the left face is a pullback. By the first point of \Cref{prop:eqhyp_complete} the bottom face is a pushout and $h_Q$ is a mono by \Cref{cor:mono2}, so the top face is a pushout too.}\hfill\parbox{3cm}{\xymatrix@C=10pt@R=10pt{&T\ar[dd]|\hole_(.65){x_2}\ar@{>->}[rr]^{x_1} \ar@{>.>}[dl]_{w} && U \ar[dd]^{u_2} \ar[dl]_{u_1} \\ Y  \ar[dd]_{y_2}\ar@{>->}[rr]^(.7){y_1} & & Q_4' \ar[dd]_(.3){d_Q}\\&Q_1\ar@{>->}[rr]|\hole^(.65){h_Q} \ar@{>->}[dl]^{k_Q} && Q_2 \ar[dl]^{z_Q} \\Q_3 \ar@{>->}[rr]_{p_C} & & Q_4 }}

	By the second point of \Cref{prop:eqhyp_complete} we know that there are monos $m_2\colon Q'_2\mto U$ and $m_3\colon Q'_3\mto Y$  fitting in the diagrams below
	
	\[\xymatrix@R=15pt{B'_3 \ar@{>>}[d]^{q'_3} \ar@/_.3cm/[dd]_{c_V} \ar@{>->}[r]^{p'_V}& B'_4 \ar@/^.2cm/@{>>}[dr]^{q'_4} & & B'_2 \ar[r]^{z_V} \ar@{>>}[d]^{q'_2} \ar@/_.3cm/[dd]_{b_V}& B'_4 \ar@/^.2cm/@{>>}[dr]^{q'_4}\\ Q'_3 \ar@{>.>}[r]^{m_3}& Y  \ar@{>->}[r]^{y_1} \ar[d]_{y_2}& Q'_4 \ar[d]_{d_C}& Q'_2 \ar@{>.>}[r]^{m_2} & U \ar[d]_{u_2} \ar[r]^{u_1} & Q'_4 \ar[d]_{d_C} \\ B_3 \ar@{>>}[r]_{q_3} & Q_3 \ar@{>->}[r]_{p_Q} & Q_4 &B_2 \ar@{>>}[r]_{q_3} & Q_2 \ar[r]_{z_Q} & Q_4}\]

\newpage
	\noindent 
	\parbox{8.5cm}{\hspace{15pt}
	For $Q'_1$, we can make a similar argument. Taking $(S, s_1, s_2)$ as the pullback of $m_2$ along $x_1$, we can use again the composition properties of pullbacks and the second point of \Cref{prop:eqhyp_complete} to guarantee the existence of a monomorphism $m_1\colon Q'_1\to S$ that makes the diagram aside commutative.}\hfill
	\parbox{4cm}{\xymatrix@R=15pt{B'_1\ar@{>>}[r]_{q'_1} \ar@{>->}[d]_{h'_V} \ar@/^.3cm/[rr]^{a_V} & Q'_1 \ar@{>.>}[d]_{m_1} & B_1 \ar@{>>}[dr]^{q_1}\\B'_2 \ar@{>>}[dr]_{q'_2} & S \ar@{>->}[r]^{s_2} \ar[d]_{s_1} & T \ar[r]^{x_2} \ar@{>->}[d]_{x_1}  & Q_1 \ar@{>->}[d]^{h_Q}\\& Q'_2 \ar@{>->}[r]_{m_2}& U \ar[r]_{u_2}& Q_2 }}

We have to show that the top face of the cube with which we have begun is a pushout. Let $\mathcal{H}$ be $(E, V, Q, s, t, q)$ and suppose that there exist  $o\colon  \mathcal{G}_2' \to \mathcal{H}$ and $w\colon  \mathcal{G}_3' \to \mathcal{H}$ such that $o \circ h' = w \circ k'$. Since $T$ preserves colimits by \Cref{prop:forghyp}, we know that there exists a morphism $(v_E, v_V)\colon T(\mathcal{G'}_4)\to T(\mathcal{H})$ such that
\[v_E\circ p'_E=w_E \quad v_V\circ p'_V=w_V \quad v_E\circ z'_E=o_e \quad v_V\circ z'_V=o_V\]

\commentato{\xymatrix@C=10pt@R=6pt{
		&B_1'\ar@{>>}[dd]|\hole_(.65){q_1'}\ar@{>->}[rr]^{h_V'} \ar[dl]_(.6){k_V'} && B_2' \ar@{>>}[dd]|\hole_(.65){q_2'} \ar[dl]_(.6){z_V'}\ar[dr]^{o_V}\\
		B_3'  \ar@{>>}[dd]_{q_3'}\ar@{>->}[rr]^(.7){p_V'} & & B_4' \ar@{>>}[dd]_(.3){q_4'} \ar[rr]^(.7){v_V}&&  V\ar[dd]^{q}\\
		&Q_1'\ar@{>->}[rr]|\hole^(.65){h'_Q} \ar[dl]_{k'_Q} && Q_2' \ar[dl]^(.4){z_Q'}\ar[dr]^{o_Q}\\
		Q'_3 \ar@{>->}[rr]_{p_Q'} & & Q'_4\ar@{.>}[rr]_{v_C} && Q
}}

\smallskip \noindent 
\parbox{8cm}{ \hspace{15pt}We want to extend such a morphism to one of $\EqHyp$ between $\mathcal{G}'_4\to \mathcal{H}$. If we are able to do so we can conclude because uniqueness is guaranteed by \Cref{rem:eqhyp_morphs}. 
	
	\hspace{15pt}Now, consider the cube aside, $h$ is in $\pbc$ so by \Cref{lem:pb1} we know that its back face is a pullback. We can then apply the second point of \Cref{prop:kerset} to deduce that the square on the right is a pushout.
	
	By construction we have that 
	\begin{align*}
		m_3\circ q'_3 \circ \pi_{m_3 \circ q'_3}^1& = m_3\circ q'_3 \circ \pi_{m_3\circ q_3'}^2 \\
		m_2\circ q_2' \circ \pi_{m_2 \circ q_2'}^1 &= m_2\circ q'_2 \circ \pi_{m_2 \circ q_2'}^2
		\end{align*} 
	}\hfill\parbox{4cm}{\xymatrix@C=10pt@R=10pt{
	&B_1'\ar@{>>}[dd]|(.53)\hole_(.67){s_2\circ m_1\circ q_1'}\ar@{>->}[rr]^{h_V'} \ar[dl]_(.6){k_V'} && B_2' \ar@{>}[dd]^{m_2\circ q_2'} \ar[dl]_(.6){z_V'} & & \\
	B_3'  \ar[dd]_{m_3\circ q_3'}\ar@{>->}[rr]^(.7){p_V'}  &&B_4' \ar@{>>}[dd]_(.3){q_4'} \\
	&T\ar@{>->}[rr]_(.3){x_1}|(.5)\hole \ar[dl]_{w} && U\ar[dl]^(.4){u_1} \\
	Y \ar@{>->}[rr]_{y_1} & &  Q'_4\\K_{s_2\circ m_1\circ q'_1}  \ar@{>->}[rr]_-{k_{h'_V}} \ar[dd]_{k_{k'_V}}&& K_{m_2\circ q'_2} \ar[dd]^{k_{z'_V}}\\ \\ K_{m_3\circ q'_3} \ar@{>->}[rr]^-{k_{p'_V}} && K_{q'_4}}
}
	but since $m_3$ and $m_2$ are monos this implies
\[q'_3 \circ \pi_{m_3 \circ q'_3}^1 = q'_3 \circ \pi_{m_3\circ q_3'}^2 \qquad q_2' \circ \pi_{m_2 \circ q_2'}^1 = q'_2 \circ \pi_{m_2 \circ q_2'}^2\]
	
By computing, we obtain
\begin{align*}
	&q \circ v_V \circ \pi_{q_4'}^1 \circ k_{p_V'} = q \circ v_V \circ p_V' \circ \pi_{m_3 \circ q_3'}^1= q \circ w_V \circ \pi_{m_3 \circ q_3'}^1 = w_Q \circ q_3' \circ \pi_{m_3 \circ q_3'}^1 \\
	=& w_Q \circ q_3' \circ \pi_{m_3 \circ q_3}^2 = q \circ w_V \circ \pi_{m_3 \circ q_3'}^2 = q \circ v_V \circ p_V' \circ \pi_{m_3 \circ q_3'}^2 = q \circ v_V \circ \pi_{q'_4}^2 \circ k_{p'_V}\\
&q \circ v_V \circ \pi_{q'_4}^1 \circ k_{z_V'} = q \circ v_V \circ z_V' \circ \pi_{m_2 \circ q_2'}^1 = q \circ o_V \circ \pi_{m_2 \circ q_2'}^1 = o_Q\circ q_2' \circ \pi_{m_2 \circ q_2'}^1 \\
	= &o_Q \circ q_2' \circ \pi_{m_2 \circ q_2}^2 = q \circ o_V \circ \pi_{m_2 \circ q_2'}^2 = q \circ v_V \circ t_V' \circ \pi_{m_2 \circ q_2'}^2 = q \circ v_V \circ \pi_{q'_4}^2 \circ k_{z'_V}
\end{align*}

As already noticed the square above is a pushout, hence we can conclude that
\[
q \circ v_V \circ \pi_{q'_4}^1 = q \circ v_V \circ \pi_{q'_4}^2
\]

Now, $q'_4$ is a regular epi and so it is the coequalizer of its kernel pair, hence there exists $v_Q\colon Q'_4\to Q$ such that $v_Q\circ q'_4=q\circ v_V$ and we can conclude.
\end{proof}

\pbvk*

\begin{proof}\label{proof:pbvk}
	
Consider a cube as the one below on the left, with regular monos as vertical arrows, pullbacks as back faces, pushouts as bottom and top faces and such that $h$ is in $\pbc$.
Given \Cref{lemma:stab}, if we show that the front faces are pullbacks too we can conclude.

To fix notation, let $\mathcal{G}_i = (A_i, B_i, Q_i, s_i, t_i, q_i)$, $\mathcal{G}'=(A_i', B_i', Q_i', s_i', t_i', q_i')$, for $i = 1, 2, 3, 4$. By \Cref{prop:eqhyp_complete} and \Cref{cor:mono2} we know that the central and right cube below have pushouts as bottom faces and pullbacks as back faces,
	thus their front faces are pullbacks
	\[
	\xymatrix@C=10pt@R=10pt{&\mathcal{G}_1'\ar[dd]|\hole_(.65){a}\ar[rr]^{h'} \ar[dl]_{k'} && \mathcal{G}_2' \ar[dd]^{b} \ar[dl]_{z'} & &A_1'\ar@{>->}[dd]|\hole_(.65){a_E}\ar@{>->}[rr]^{h_E'} \ar[dl]_{k_E'} && A_2' \ar@{>->}[dd]_{b_E} \ar[dl]_{z_E'} &&B_1'\ar@{>->}[dd]|\hole_(.65){a_V}\ar@{>->}[rr]^{h_V'} \ar[dl]_{k_V'} && B_2' \ar@{>->}[dd]^{b_V} \ar[dl]_{z_V'} \\ 
		\mathcal{G}_3'  \ar[dd]_{c}\ar[rr]^(.7){p'} & & \mathcal{G}_4' \ar[dd]_(.3){d} &&A_3'  \ar@{>->}[dd]_{c_E}\ar@{>->}[rr]^(.7){p_E'} & & A_4' \ar@{>->}[dd]_(.3){d_E}
		&&B_3'  \ar@{>->}[dd]_{c_V}\ar@{>->}[rr]^(.7){p_V'} & & B_4' \ar@{>->}[dd]_(.3){d_V}\\
		&\mathcal{G}_1\ar[rr]|\hole^(.65){h} \ar[dl]^{k} && \mathcal{G}_2 \ar[dl]^{z} & &A_1\ar@{>->}[rr]|\hole^(.65){h_E} \ar[dl]^{k_E} && A_2 \ar[dl]^{z_E} &&B_1\ar@{>->}[rr]|\hole^(.65){h_V} \ar[dl]^{k_V} && B_2 \ar[dl]^{z_V} \\
		\mathcal{G}_3 \ar[rr]_{p} & & \mathcal{G}_4 && A_3 \ar@{>->}[rr]_{p_E} & & A_4&&B_3 \ar@{>->}[rr]_{p_V} & & B_4 }
	\]

Let us now consider the diagrams below, in which the inner squares are pullbacks.	Since the outer diagrams commute, by definition of morphism of $\EqHyp$, then we have the existence of $m_2\colon Q'_2\to U$, $m_3\colon Q'_3\to Y $, $a_3\colon B'_3\to Y$ and $a_2\colon B'_2\to Y$.

\smallskip 
\noindent 
\parbox{5.2cm}{\hspace{15pt}Now, notice that  $m_3$ and $m_2$ are monos because $c_Q$ and $b_2$ are injections. By the proof of \Cref{prop:eqhyp_complete}, to conclude it is enough to show that
	\[m_3\circ q'_3 = a_3 \qquad m_2\circ q'_2=a_2\]
	
		\hspace{15pt}Indeed, if the previous equations hold, then $Q'_3$ and   $Q'_2$ are images for $a_3$ and $a_2$ and the claim follows from \Cref{prop:eqhyp_complete}.
	}\hfill \parbox{3.5cm}{\xymatrix@R=16pt{Q'_3 \ar@{.>}[r]_{m_3} \ar@{>->}@/^.3cm/[rr]^{p'_Q} \ar@{>->}@/_.3cm/[dr]_{c_Q} &Y \ar@{>->}[d]_{y_2}\ar@{>->}[r]_{y_1}& Q'_4 \ar@{>->}[d]^{d_Q}& Q'_2 \ar@{.>}[r]_{m_2} \ar@/^.3cm/[rr]^{z'_Q} \ar@{>->}@/_.3cm/[dr]_{b_Q}&U \ar[r]_{u_1} \ar@{>->}[d]_{u_2}& Q'_4\ar[d]^{d_Q}\\&Q_3 \ar@{>->}[r]_{p_3}& Q_4 &&Q_2 \ar[r]_{z_Q}& Q_4 }
\xymatrix@R=16pt{B'_3 \ar@{.>}[dr]^{a_3} \ar@{>->}[r]^{p'_V} \ar@{>>}[d]_{q'_3} &B'_4\ar@{>>}[dr]^{q'_4}&& B'_2\ar@{.>}[dr]^{a_2} \ar[r]^{z'_2} \ar@{>>}[d]_{q'_2} &B'_4\ar@{>>}[dr]^{q'_4}\\Q'_3 \ar@{>->}[dr]_{c_Q} &Y \ar@{>->}[r]_{y_1} \ar@{>->}[d]_{y_2}& Q'_4\ar[d]^{d_Q}&Q_2' \ar@{>->}[dr]_{c_Q}& U \ar@{>->}[d]_{u_2}\ar[r]_{u_1}& Q'_4 \ar@{>->}[d]^{d_Q}\\&Q_3 \ar@{>->}[r]_{p_Q}& Q_4 &&Q_2 \ar[r]_{z_Q}& Q_4 }}

By computing we have
\[\begin{split}
	y_1\circ a_3&= q'_4\circ p'_2=p'_3 \circ q'_3=y_1\circ m_3\circ q'_3\\
u_1\circ a_2&= q'_4\circ t'_2=t'_3 \circ q'_3=u_1\circ m_2\circ q'_2 
\end{split} \qquad \begin{split}
 	y_2\circ a_3&= d_3\circ q_3'=y_2\circ m_3\circ q'_3\\
u_2\circ a_2&= d_2\circ q_2'=u_2\circ m_2\circ q'_2
\end{split}\]	
And we have done.
\end{proof} 

\trm*
\begin{proof}\label{proof:term}
	Let $F\colon \D\to \EqTG_{\Sigma}$ be the diagram of an equalizer, a binary product or of a pullback. By \Cref{prop:lim} we can consider a limiting cone $((\mathcal{L}, l), \{\pi_d\}_{d\in \D})$. By \Cref{rem:t} we know that $T_\Sigma$ preserves limits, thus by \Cref{rem:obv}  $(l, \{T_\Sigma(\pi_d)\}_{d\in \D})$ is limiting for $I_\Sigma\circ S_\Sigma$. Then by \Cref{prop:tlim} $l$ is a term graph. We conclude, again by \Cref{rem:obv} that $(\mathcal{L}, l)$ is in $\EqTG_{\Sigma}$ and the claim follows. 
\end{proof}

\po*
\begin{proof}\label{proof:tade}
	Suppose that the square on the left below is a pushout in $\EqHyp_\Sigma$, with $h$ in $\mathcal{T}$. Then, by \Cref{rem:t} the square on the right is a pushout and by the definition of $\mathcal{T}$ and \Cref{rem:obv}, $T_\Sigma(h)$ is a regular mono in $\tg$
	\[\xymatrix{J_\Sigma(\mathcal{G}_0, l_0) \ar[r]^{k} \ar[d]_{h}& J_\Sigma(\mathcal{G}_1, l_1) \ar[d]^{p} &T_\Sigma (J_\Sigma(\mathcal{G}_0, l_0)) \ar[r]^{T_{\Sigma}(k)} \ar[d]_{T_{\Sigma}(h)}& T_\Sigma (J_\Sigma(\mathcal{G}_1, l_1)) \ar[d]^{T_\Sigma(p)}\\
J_\Sigma(\mathcal{G}_1, l_1) \ar[r]_{q}& J_\Sigma(\mathcal{H}, l) & T_\Sigma(J_\Sigma(\mathcal{G}_1, l_1)) \ar[r]_{T_\Sigma(q)}& T_\Sigma(J_\Sigma(\mathcal{H}, l))}\]
	
	We conclude using \Cref{prop:push,rem:obv}.
\end{proof}

\subsection{Proofs for \Cref{eggs}}
\elim*
\begin{proof}\label{proof:elim}
	Let $F: \D \to \egg$ be a diagram, with $F(d) = (A_d, B_d, Q_d, s_d, t_d, q_d)$.  Let $(U_d, u_1^d, u_2^d)$ be a kernel pair for $q_d\circ s_d$.
	Now let $(A, B, Q, s, t, q)$, together with projections $(\pi_E^d, \pi_V^d, \pi_Q^d)_{d\in \D}$, be the limit of $I \circ F$. Suppose that $(U, u_1, u_2)$ is the kernel pair of $q^\star\circ s$ and let $(L, (l_i)_{i \in \cat I})$ be the limit of $K \circ I \circ F$. If we show that it lies in $\egg$ we are done.
	
	By \Cref{prop:eqhyp_complete} there exists a mono $m\colon Q \mto L$ such that $\pi_Q^d = l_d \circ m$. Notice that
	\begin{align*}
		&q_d^\star \circ s_d\circ \pi^d_E\circ u_1  = q_d^\star\circ (\pi^d_V)^\star \circ s\circ u_1= (\pi_Q^d)^\star\circ q^\star\circ s\circ u_1\\=&(\pi_Q^d)^\star\circ q^\star\circ s\circ u_2= q_d^\star\circ (\pi^d_V)^\star\circ s\circ u_2=q_d^\star \circ s_d \circ \pi_E^d \circ u_2\\
	\end{align*}
	
	\vspace{.1cm}
	\noindent 
	\parbox{9.5cm}{ \vspace{-.8cm}\hspace{15pt}
	Thus for each $d$ in $\D$, there exists an arrow $a_d:U\to U_i$ making the diagram on the right commutative.
Then we have
\begin{align*}
	&l_d^\star\circ m^\star \circ q^\star \circ t \circ u_1    = q_d^\star\circ (\pi_V^d)^\star \circ t \circ u_1 \\
	= &q_d^\star \circ t_d \circ \pi_E^d \circ u_1
	= q_d^\star \circ t_d \circ u_1^d \circ a_d 
	= q_d^\star \circ t_d \circ u_2^d \circ a_d\\ 
	=& q_d^\star \circ t_d \circ \pi_E^d \circ u_2 = q_d^\star \circ (\pi_V^d)^\star \circ t \circ u_2 = l_d^\star \circ m^\star \circ q^\star \circ t \circ u_2
\end{align*}}\hfill \parbox{4cm}{\vspace{-1cm}	\xymatrix@R=20pt{
		U \ar[r]^{u_1} \ar[d]_{u_2} \ar@{.>}[dr]_{a_d} & A \ar[dr]^{\pi_E^d} & \\
		A \ar[dr]_{\pi_E^d} & U_d \ar[r]^{u_1^d} \ar[d]_{u_2^d}& A_d \ar[d]^{q_d^\star \circ s_d} \\
		& A_i  \ar[r]_{q_d^\star \circ s_d}& Q_d^\star
	}}

	By universal property of limits, we have that \- $m^\star\circ q^\star \circ t \circ u_1 = m^\star \circ q^\star \circ t \circ u_2$. Since $m$ is mono we deduce that $q^\star \circ t \circ u_1 = q^\star \circ t \circ u_2$, hence the claim.
\end{proof}

\begin{remark}\label{rem:kleene}
	Let $\delta_1\colon 1\to \mathbb{N}$ be the arrow which picks $1\in \mathbb{N}$. Consider the
	
	\noindent
	\parbox{11.5cm}{ inclusion $v_1\colon X\to X^\star $  defined by the first point of \Cref{prop:fact}. By extensivity we know that the square on the right is a pullback.}\hfill \parbox{4cm}{\vspace{-.5cm}\xymatrix@R=15pt{X \ar[r]_{!_{X}} \ar[d]_{!_{X}}& 1\ar[d]^{\delta_1} \\1 \ar[r]^{\lgh_{X}} & \mathbb{N}} }
	
	\noindent 
	\parbox{4cm}{\vspace{-.2cm} \xymatrix@C=15pt@R=1pt{&X\ar[dd]|\hole_(.65){!_{X}}\ar@{->}[rr]^{!_X} \ar@{>}[dl]_{f} && 1 \ar[dd]^{\delta_1} \ar[dl]^{\id{1}} \\ Y  \ar[dd]_{!_Y}\ar@{>}[rr]^(.7){!_Y} & & 1 \ar[dd]_(.3){\delta_1}\\&X^\star\ar@{>->}[rr]|(.53)\hole^(.70){\lgh_{X}} \ar@{>->}[dl]_{f^\star} && \mathbb{N} \ar[dl]^{\id{N}} \\Y^\star \ar@{>->}[rr]_{\lgh_{Y}} & & \mathbb{N}}}\hfill \parbox{9.5cm}{\vspace{-.3cm}\hspace{15pt}Let now $f\colon X\to Y$ be an arrow. Then we can build the cube aside which, by what we have just observed above, has pullbacks as back, front and right faces. Thus, by \Cref{lem:pb1} its left face is a pullback too.}
	
\end{remark}

\noindent
\parbox{11.5cm}{
\epo*}\hfill 
\parbox{4cm}{\xymatrix@R=15pt{\mathcal{G}_1 \ar[r]^{h}\ar[d]_{m}&\mathcal{G}_2\ar[d]^{n}\\\mathcal{G}_3\ar[r]_{z}&\mathcal{P}}}
\begin{proof}\label{proof:epo}
	Let $\mathcal{P} $ be $(A, B, C, s, t, q)$ and consider the kernel pair $(K_i, \pi_i^1, \pi_i^2)$ the kernel pair of $q_i^\star \circ s_i$, for $i\in \{1, 2, 3\}$. Let also $(U, u_1, u_2)$ be the kernel pair of $q^\star \circ s$. 
	
	\noindent
	\parbox{10cm}{\hspace{15pt}Consider now the cube on the right. By hypothesis and \Cref{rem:kleene} its left face is a pullback, moreover its bottom face, while not a pushout it is still a pullback by the first point of \Cref{prop:regmono} and the third one of \Cref{prop:fact}.}\hfill \parbox{3cm}{\xymatrix@C=10pt@R=2pt{&A_1\ar[dd]|\hole_(.65){q_1^\star \circ s_1}\ar[rr]^{h_E} \ar@{>->}[dl]_{m_E} && A_2 \ar[dd]^{q_2^\star \circ s_2} \ar@{>.>}[dl]_(.6){n_E} \\ A_3 \ar[dd]_{q_3^\star \circ s_3}\ar[rr]^(.7){z_E} & & A \ar[dd]_(.3){q^\star \circ s}\\&{Q_1^\star}\ar[rr]|\hole^(.65){h_Q^\star}\ar@{>->}[dl]_{m_Q^\star} && {Q_2^\star} \ar@{>->}[dl]^{n_Q^\star} \\{Q_3^\star} \ar[rr]_{z_Q^\star} & & Q^\star}}
	
	\noindent
	\parbox{2cm}{\vspace{-.25cm}\xymatrix@R=15pt{K_1\ar[r]_{k_{h_E}}\ar@{>->}[d]_{k_{m_E}}&K_2\ar@{>->}[d]^{k_{n_E}}\\K_3\ar[r]^{k_{z_E}}&U}}
\hfill \parbox{11.5cm}{ \vspace{-.25cm}\hspace{15pt}By hypothesis and the first point of \Cref{prop:eqhyp_complete} the top face is a pushout. Thus we can apply \Cref{prop:kerset} to deduce that the square on the left is a pushout.}
		By computing we obtain
	\begin{align*}
		q^\star \circ t \circ u_1 \circ f_n &= q^\star \circ t \circ n_E \circ \pi_2^1 = n_C^\star \circ q_2^\star \circ s_2 \circ \pi_2^1 = n_C^\star \circ q_2^\star \circ s_2 \circ \pi_2^2 =q^\star \circ t \circ u_2 \circ f_n \\
		q^\star \circ t \circ u_1 \circ f_k &= q^\star \circ t \circ k_E \circ \pi_3^1 = k_C^\star \circ q_3^\star \circ s_3 \circ \pi_3^1 = k_C^\star \circ q_3^\star \circ s_3 \circ \pi_3^2 =q^\star \circ t \circ u_2 \circ f_k
	\end{align*}
	We can therefore deduce that $q^\star \circ t \circ u_1 = q^\star \circ t \circ u_2$, and the claim follows.
\end{proof}

\limt*
\begin{proof}\label{proof:tlim}
Let $F\colon \D\to \eg$ be a diagram of one of the shapes mentioned in the statement. Let $((\mathcal{L}, l), \{\pi_d\}_{d\in \D})$ be a limiting cone for $k_\Sigma \circ F$. By \Cref{lim}, \Cref{prop:term}, and \Cref{proof:elim} we know that $\mathcal{L}$ is an e-termgraph and we can conclude.
\end{proof}

\section{Some properties of comma categories}
In this section we briefly recall the definition of the comma category \cite{mac2013categories} associated to two functors and some of its properties.
\begin{definition}\index{category!comma -}
	Let $L\colon \A\to \X$ and  $R\colon \B\rightarrow \X$ be two functors with the same codomain, the \emph{comma category} $\comma{L}{R}$ is the category in which

\noindent
\parbox{10.5cm}{\vspace{-.5cm}\begin{itemize}
		\item objects are triples $(A, B, f)$ with $A\in \A$, $B\in \B$, and $f\colon L(A)\rightarrow R(B)$; 
		\item a morphism $(A, B, f)\rightarrow (A', B', g)$ is a pair $(h, k)$ with $h\colon A\rightarrow A'$ in $\A$ and $k\colon B\rightarrow B'$ in $\B$ such that the diagram aside commutes\end{itemize}}\hfill 
	\parbox{4cm}{\vspace{-.1cm}\xymatrix@C=25pt{L(A) \ar[r]^{L(h)} \ar[d]_{f} & L(A') \ar[d]^{g}\\ R(B) \ar[r]_{R(k)}& R(B')}}
\end{definition} 
We have two forgetful functors 	$U_L\colon \comma{L}{R}\to \A$ and $U_R\colon \comma{L}{R} \to \B$ given respectively by
\[
\begin{split}
	\functor[l]{(A,B, f)}{(h,k)}{(A', B', g)}
	& \functormapsto
	\rfunctor{A}{h}{A'}
\end{split}\quad 
\begin{split}
	\functor[l]{(A,B, f)}{(h,k)}{(A', B', g)}
	& \functormapsto
	\rfunctor{B}{k}{B'}
\end{split}
\]

Given $L\colon \A\to \X$ and $R\colon \B\to \X$, we can also consider their duals $L^{op}\colon \A^{op}\to \X^{op}$ and $R^{op}\colon \B^{op}\to \X^{op}$.  An arrow $f\colon L(A)\to R(B)$ in $\X$ is the same thing as an arrow $f\colon R^{op}(B)\to L^{op}(A)$ in $\X^{op}$. Thus $\comma{L}{R}$ and $\comma{R^{op}}{L^{op}}$ have the same objects. Moreover, the left square below commutes in $\X$ if and only if the right one commutes in $\X^{op}$.
\[\xymatrix@C=25pt{L(A) \ar[r]^{L(h)} \ar[d]_{f} & L(A') \ar[d]^{g} & R(B') \ar[d]_{g} \ar[r]^{R(k)}& R(B) \ar[d]^{f}\\ R(B) \ar[r]_{R(k)}& R(B') & L(A') \ar[r]_{L(h)}  & L(A)}\]

Summing up we have just proved the following fact.
\begin{proposition}\label[proposition]{prop:dual}
	$(\comma{L}{R})^{op}$ is equal to	$\comma{R^{op}}{L^{op}}$, and $U^{op}_L=U_{L^{op}}$ and $U^{op}_R=U_{R^{op}}$.
\end{proposition}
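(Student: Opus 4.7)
The proof is essentially a bookkeeping exercise, and the paragraph immediately preceding the statement already sketches the key identification. My plan is simply to unfold both sides of the claimed equality and verify they coincide on objects, on morphisms, on identities, on composition, and then check the compatibility of the forgetful functors.

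First I would check object equality. By definition, an object of $(\comma{L}{R})^{op}$ is a triple $(A,B,f)$ with $A\in \A$, $B\in \B$, and $f\colon L(A)\to R(B)$ an arrow of $\X$. An object of $\comma{R^{op}}{L^{op}}$ is a triple $(B,A,g)$ with $B\in \B^{op}$, $A\in \A^{op}$, and $g\colon R^{op}(B)\to L^{op}(A)$ an arrow of $\X^{op}$. Since $\A^{op}$ and $\B^{op}$ have the same objects as $\A$ and $\B$, and since arrows $R^{op}(B)\to L^{op}(A)$ in $\X^{op}$ are precisely arrows $L(A)\to R(B)$ in $\X$, the two classes of objects coincide (up to cosmetic reordering of the triple, which is the convention already implicitly adopted in the excerpt).

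Next I would verify morphism equality. A morphism $(A,B,f)\to (A',B',g)$ in $(\comma{L}{R})^{op}$ is by definition a morphism $(A',B',g)\to (A,B,f)$ in $\comma{L}{R}$, i.e.\ a pair $(h,k)$ with $h\colon A'\to A$ in $\A$ and $k\colon B'\to B$ in $\B$ such that $f\circ L(h)=R(k)\circ g$. A morphism between the same two objects in $\comma{R^{op}}{L^{op}}$ is a pair $(k^{op},h^{op})$ with $k^{op}\colon B\to B'$ in $\B^{op}$ and $h^{op}\colon A\to A'$ in $\A^{op}$ making the corresponding square commute in $\X^{op}$. The second observation in the paragraph preceding the statement is exactly the assertion that the two commuting-square conditions are equivalent; thus morphism classes coincide.

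Then I would note that identities and composition in both categories are inherited directly from those of $\A$ and $\B$ (reversed), so they match automatically once objects and morphisms do. Finally, for the compatibility of forgetful functors, $U_L^{op}\colon (\comma{L}{R})^{op}\to \A^{op}$ sends $(A,B,f)$ to $A$ and $(h,k)^{op}$ to $h^{op}$, while $U_{L^{op}}\colon \comma{R^{op}}{L^{op}}\to \A^{op}$ does the same by definition; the analogous check for $U_R^{op}=U_{R^{op}}$ is symmetric. There is no genuinely hard part here — the only care required is to keep the convention on the ordering of the triple consistent between $\comma{L}{R}$ and $\comma{R^{op}}{L^{op}}$, after which the statement is a direct reading off of the definitions.
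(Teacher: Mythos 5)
Your proposal is correct and follows the same route as the paper, which proves the statement simply by observing that arrows $L(A)\to R(B)$ in $\X$ are the same as arrows $R^{op}(B)\to L^{op}(A)$ in $\X^{op}$ and that the relevant square commutes in $\X$ iff its dual commutes in $\X^{op}$, whence objects, morphisms and the forgetful functors are identified. Your additional remarks on identities, composition and the ordering of the triple are just the routine bookkeeping the paper leaves implicit.
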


\begin{lemma}\label[lemma]{colim}
	Let $L\colon \A\to \X$ and $R\colon \B\to \X$ be functors and $F\colon \D\to\comma{L}{R}$ be a diagram such that $L$ preserves colimits along $U_L\circ F$. Then the family $\{U_L, U_R\}$ jointly creates colimits of $F$ (see \cite[Sec.~5.1.3]{castelnovo2023thesis} or \cite[Sec.~2.3]{CastelnovoGM24}).
\end{lemma}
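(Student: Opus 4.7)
The plan is to unfold the definition of joint creation: given a diagram $F\colon \D\to\comma{L}{R}$, written componentwise as $F(d)=(A_d,B_d,f_d)$ with structural maps $F(\alpha)=(h_\alpha,k_\alpha)$, I must show that from any pair of colimits $(A,\{a_d\})$ for $U_L\circ F$ in $\A$ and $(B,\{b_d\})$ for $U_R\circ F$ in $\B$ there is a unique lifting to a cocone in $\comma{L}{R}$ which is itself colimiting, and that every colimit of $F$ is obtained this way.

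First I would build the lift. The family $\{R(b_d)\circ f_d\colon L(A_d)\to R(B)\}_{d\in\D}$ is a cocone on $L\circ U_L\circ F$: for every arrow $\alpha\colon d\to d'$ in $\D$ one computes
\[
R(b_{d'})\circ f_{d'}\circ L(h_\alpha)=R(b_{d'})\circ R(k_\alpha)\circ f_d=R(b_{d'}\circ k_\alpha)\circ f_d=R(b_d)\circ f_d,
\]
using the commuting square defining a morphism of $\comma{L}{R}$ and the cocone property of $\{b_d\}$ on $U_R\circ F$. By hypothesis $L$ preserves the colimit of $U_L\circ F$, so $(L(A),\{L(a_d)\})$ is a colimit of $L\circ U_L\circ F$ and there is a \emph{unique} $f\colon L(A)\to R(B)$ with $f\circ L(a_d)=R(b_d)\circ f_d$ for all $d$. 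These identities are exactly the commuting squares witnessing that each $(a_d,b_d)\colon F(d)\to(A,B,f)$ is a morphism of $\comma{L}{R}$, and together they form a cocone $\Lambda$ on $F$.

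Next I would verify that $\Lambda$ is colimiting. Given a competing cocone $\{(h_d,k_d)\colon F(d)\to(A',B',g)\}$, the universal properties of $A$ and $B$ yield unique arrows $h\colon A\to A'$ and $k\colon B\to B'$ with $h\circ a_d=h_d$ and $k\circ b_d=k_d$. The pair $(h,k)$ is a morphism $(A,B,f)\to(A',B',g)$ iff $R(k)\circ f=g\circ L(h)$; both sides become $R(k_d)\circ f_d=g\circ L(h_d)$ after precomposition with $L(a_d)$, so equality holds by the uniqueness clause of the universal property of $(L(A),\{L(a_d)\})$. Uniqueness of $(h,k)$ as a mediating morphism in $\comma{L}{R}$ is immediate from the uniqueness of $h$ and $k$ separately. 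Finally, if $\Lambda'$ is any other lift of $(\{a_d\},\{b_d\})$ to a cocone in $\comma{L}{R}$, its structure map $f'\colon L(A)\to R(B)$ must also satisfy $f'\circ L(a_d)=R(b_d)\circ f_d$, so $f'=f$ by the same uniqueness; this gives both the uniqueness of the lift and the fact that every colimit of $F$ projects to colimits under $U_L$ and $U_R$ in the prescribed way.

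The only mildly delicate step is the middle one: verifying that $(h,k)$ really is a morphism of the comma category. Everything else is bookkeeping around two applications of a universal property, but this compatibility check is where the assumption ``$L$ preserves colimits along $U_L\circ F$'' is genuinely used---without it, one could build the cocone in $\A$ and in $\B$ but could not force $f$ to exist, nor conclude that mediating morphisms respect the commuting square of the comma category.
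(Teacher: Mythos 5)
Your proposal is correct and follows essentially the same route as the paper's proof: build the cocone $\{R(b_d)\circ f_d\}$ on $L\circ U_L\circ F$, use preservation of the colimit by $L$ to obtain the unique structure map $f$, and verify the mediating pair $(h,k)$ is a comma-category morphism by precomposing with the jointly epic family $\{L(a_d)\}$. The only difference is that you spell out the uniqueness of the lift slightly more explicitly than the paper does, which is a welcome but minor addition.
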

\begin{proof}
	Suppose that $U_L\circ F$ and $U_R\circ F$ have respectively colimiting cocones  $\left(A, \{a_D\}_{D\in \D}\right)$ and $\left(B, \{b_D\}_{D\in \D}\right)$. By hypothesis $\left(L(A), \{L\left(a_D\right)\}_{D\in \D}\right)$ is colimiting for $L\circ U_L\circ F$. Now, let $F(D)$ be $(A_D, B_D, f_D)$,
	then we have arrows $R(b_D)\circ f_D\colon L(A_D)\to R(B)$ that forms a cocone on $L\circ U_L\circ F$: if $d\colon D\to D'$ is an arrow in $\D$ then $F(d)$ is an arrow in $\comma{L}{R}$ and so
	\begin{align*}
		&R\left(b_{D'}\right)\circ f_{D'}\circ L(U_L(F(d)))=R\left(b_{D'}\right)\circ R\left(U_R\left(F(d)\right)\right)\circ f_D\\=&R\left(b_{D'}\circ U_R\left(F(d)\right)\right)\circ f_D=R\left(b_D\right)\circ f_D
	\end{align*}
	
	\noindent \parbox{10cm}{
	\hspace{15pt}Thus there exists $f\colon L(A)\rightarrow R(B)$ fitting in the diagram on the right. Notice that $f$ is the unique arrow in $\X$ wich makes $\left(a_D, b_D\right)$ an arrow $\left(A_D, B_D, f_D\right)\to \left(A, B, f\right)$ of $\comma{L}{R}$. If we show that $\left((A, B, f), \left\{(a_D, b_D)\right\}_{D\in \D}\right)$ is colimiting for $F$ we are done.}\hfill 
	\parbox{4cm}{\xymatrix@C=35pt@R=15pt{ L(A_D) \ar[r]^{L(a_D)} \ar[d]_{f_D}& L(A) \ar[d]^{f}\\ R(B_D) \ar[r]_{R(b_D)} &R(B)}}
	
	First of all, let us show that it is a cocone. Given $d\colon D\to D'$ in $\D$ we have
	\begin{align*}
		&\left(a_{D'}, b_{D'}\right)\circ F(d)=	\left(a_{D'}, b_{D'}\right)\circ \left(U_L(F(d)), U_R(F(d))\right)\\=&\left( a_{D'}\circ U_L(F(d)),  b_{D'}\circ U_R(F(d)) \right)=\left(a_D, b_D\right)
	\end{align*}
	
	For the colimiting property, let $\left((X, Y, g), \left\{\left(x_D, y_D\right)\right\}_{D\in \D}\right)$ be another cocone on $F$. In particular, $\left(X, \left\{x_D\right\}_{D\in \D}\right)$ and $\left(Y, \left\{y_D\right\}_{D\in \D}\right)$ are cocones on $U_L\circ F$ and $U_R\circ F$ respectively, so we have uniquely determined arrows $x\colon A\rightarrow X$ and $y\colon B\rightarrow Y$ such that 
	\[x\circ a_D= x_D \qquad y\circ b_D= y_D\]
	Let us show that $(x,y)$ is an arrow of $\comma{L}{R}$. Given $D\in \D$ we have
	\begin{align*}
		&R(y)\circ f\circ L(a_D)=R(y)\circ R\left(b_D\right)\circ f_D=R\left(y\circ b_D\right)\circ f_D\\=&R\left(y_D\right)\circ f_D= g\circ L\left(x_D\right)=g\circ L\left(x\circ a_D\right)=g\circ L(x)\circ L\left(a_D\right)
	\end{align*}
	from which it follows  that $g\circ L(x)=R(y)\circ f$ as wanted.
\end{proof}

\Cref{prop:dual} and \Cref{colim} now yields the following.
\begin{corollary}\label[corollary]{lim} The family $\{U_L, U_R\}$ jointly creates limits along every diagram $F\colon \D\to \comma{L}{R}$ such that $R$ preserves the limit of $U_R\circ F$.
\end{corollary}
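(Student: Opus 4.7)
The plan is to deduce \Cref{lim} from \Cref{colim} by a straightforward dualisation argument, using \Cref{prop:dual} as the bridge. The idea is that limits in $\comma{L}{R}$ are the same data as colimits in $(\comma{L}{R})^{op}$, and by \Cref{prop:dual} the latter category is itself a comma category, namely $\comma{R^{op}}{L^{op}}$, whose projections match (up to taking opposites) those of $\comma{L}{R}$.

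Concretely, given a diagram $F\colon \D\to \comma{L}{R}$ such that $R$ preserves the limit of $U_R\circ F$, I would consider its opposite $F^{op}\colon \D^{op}\to (\comma{L}{R})^{op}$. By \Cref{prop:dual}, $F^{op}$ may equivalently be viewed as a diagram into $\comma{R^{op}}{L^{op}}$, with forgetful functors $U_{R^{op}}$ and $U_{L^{op}}$ which coincide with $U_R^{op}$ and $U_L^{op}$, respectively. The assumption that $R$ preserves the limit of $U_R\circ F$ translates, under the usual duality between limits and colimits, into the statement that $R^{op}$ preserves the colimit of $U_R^{op}\circ F^{op}=U_{R^{op}}\circ F^{op}$. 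This is exactly the hypothesis needed to apply \Cref{colim} to $F^{op}$ with $R^{op}$ playing the role of ``$L$'' and $L^{op}$ the role of ``$R$''.

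Applying \Cref{colim}, I then obtain that the pair $\{U_{R^{op}}, U_{L^{op}}\}$ jointly creates the colimit of $F^{op}$ in $\comma{R^{op}}{L^{op}}$. Taking opposites once more, this yields precisely that $\{U_R, U_L\}=\{U_L, U_R\}$ jointly creates the limit of $F$ in $\comma{L}{R}$, as required.

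No serious obstacle is expected here: all the technical work has already been done in \Cref{colim}, and the only subtlety is being careful about how preservation of a specific limit dualises to preservation of the corresponding colimit, and how the jointly-creating property for the forgetful functors of the comma category transfers under the identification $(\comma{L}{R})^{op}=\comma{R^{op}}{L^{op}}$. Once these identifications are stated cleanly, the proof reduces to a one-line appeal to the previous lemma.
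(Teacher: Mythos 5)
Your proposal is correct and is exactly the paper's argument: the paper derives \Cref{lim} by combining \Cref{prop:dual} with \Cref{colim}, i.e.\ by the same dualisation through the identification $(\comma{L}{R})^{op}=\comma{R^{op}}{L^{op}}$ and the matching of forgetful functors. You simply spell out the bookkeeping that the paper leaves implicit.
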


We can use \Cref{lim} to characterize monos in comma categories. 
\begin{corollary}\label[corollary]{cor:mono}
	If $R$ preserves pullbacks then an arrow $(h,k)$ in $\comma{L}{R}$ is mono if and only if both $h$ and $k$ are monos.
\end{corollary}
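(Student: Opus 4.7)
The plan is to handle the two directions separately; the pullback-preservation hypothesis on $R$ only enters in the ``only if'' direction.

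For the easy direction, assume $h$ and $k$ are monos, and consider two parallel arrows $(s_1,t_1),(s_2,t_2)\colon (X,Y,g)\rightrightarrows (A,B,f)$ equalised by $(h,k)$. Since composition in a comma category is componentwise, this means $hs_1=hs_2$ and $kt_1=kt_2$, whence $s_1=s_2$ and $t_1=t_2$. No hypothesis is required here.

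For the ``only if'' direction, the naive attempt to lift a pair $\alpha_1,\alpha_2\colon X\rightrightarrows A$ with $h\alpha_1=h\alpha_2$ to the candidates $(\alpha_i,\id{B})\colon (X,B,f\circ L(\alpha_i))\to (A,B,f)$ fails, since the two sources generically differ and so the monicity of $(h,k)$ cannot be applied across them. The strategy is instead to show directly that the kernel pairs of $h$ and $k$ are trivial, and conclude by \Cref{prop:kermono}. Form the kernel pairs $(K_h,p_1,p_2)$ in $\A$ and $(K_k,q_1,q_2)$ in $\B$. Because $R$ preserves the pullback defining $K_k$, the object $R(K_k)$ is the pullback of $R(k)$ along itself in $\X$. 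Using the square condition $R(k)\circ f=f'\circ L(h)$ together with $hp_1=hp_2$, the pair $(f\circ L(p_1),f\circ L(p_2))\colon L(K_h)\to R(B)\times R(B)$ becomes equalised by $R(k)$ and hence factors uniquely through $R(K_k)$ as some $g\colon L(K_h)\to R(K_k)$ with $R(q_i)\circ g=f\circ L(p_i)$ for $i=1,2$.

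These very identities are the square conditions that make both $(p_1,q_1)$ and $(p_2,q_2)$ arrows $(K_h,K_k,g)\rightrightarrows (A,B,f)$ in $\comma{L}{R}$; moreover $(h,k)$ equalises them since $hp_1=hp_2$ and $kq_1=kq_2$. Monicity of $(h,k)$ then forces $(p_1,q_1)=(p_2,q_2)$, that is $p_1=p_2$ and $q_1=q_2$, so by \Cref{prop:kermono} both $h$ and $k$ are monos. The entire weight of the hypothesis lies in the single step of producing the common source object $(K_h,K_k,g)$, which is the concrete manifestation of the joint-creation principle of \Cref{lim}; this is the only real obstacle, and once $g$ is built the rest is formal.
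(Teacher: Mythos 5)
Your proposal is correct, and the comparison with the paper is worth spelling out. The paper routes \emph{both} implications through the joint-creation statement (\Cref{lim}): for the ``if'' half it lifts the two trivial pullback squares over $h$ and $k$ to a pullback of $(h,k)$ along itself with identity projections, and for the ``only if'' half it sends the trivial square over $(h,k)$ back down to its components. Your ``if'' half is instead the bare-hands observation that composition in $\comma{L}{R}$ is componentwise; this is more elementary and makes visible that the pullback-preservation hypothesis plays no role in that direction, whereas in the paper's formulation \Cref{lim} (and hence at least preservation of monos by $R$) is formally invoked there too. Your ``only if'' half is, as you note yourself, an inlining of the creation argument: building $g\colon L(K_h)\to R(K_k)$ from the square condition $R(k)\circ f=f'\circ L(h)$ and from $R$ preserving the kernel pair of $k$ is exactly the construction hidden inside \Cref{colim}/\Cref{lim}, and the rest is the standard ``kernel pair with equal legs'' criterion (your appeal to \Cref{prop:kermono} is slightly indirect: from $p_1=p_2$ monicity of $h$ is immediate, or one observes that $p_1$ is then an isomorphism, so the kernel pair is isomorphic to $(A,\id{A},\id{A})$). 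Two small caveats, both shared with the paper's own proof rather than specific to yours: forming $K_h$ and $K_k$ presupposes that these kernel pairs exist in $\A$ and $\B$, which the stated hypotheses do not literally guarantee (the creation lemma likewise needs the limits downstairs to exist; in the paper's applications the base categories are $\Set$-like, so nothing is lost); and the notation ``$(f\circ L(p_1),f\circ L(p_2))\colon L(K_h)\to R(B)\times R(B)$'' should not be read as requiring a product in $\X$ --- what you actually use, and all that is needed, is the factorisation of the cone $(f\circ L(p_1),f\circ L(p_2))$ through the preserved pullback $R(K_k)$.
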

\begin{proof}
	$(\Rightarrow)$  If $(h,k)\colon (A,B,f)\to (A', B', g)$ is a mono then the first rectangle below is a pullback in $\comma{L}{R}$. By \Cref{lim} then also the other two squares are pullbacks, respectively in $\A$ and $\B$, proving that both $h$ and $k$ are monos
	\[\xymatrix{(A, B, f)  \ar[rr]^{\id{(A,B,f)}} \ar[d]_{\id{(A,B,f)}}&& (A, B, f)  \ar[d]^{(h,k)}& A \ar[r]^{\id{A}} \ar[d]_{\id{A}} & A \ar[d]^{h} & B \ar[r]^{\id{B}} \ar[d]_{\id{B}} & B  \ar[d]^{k}\\ (A, B, f)  \ar[rr]_{(h,k)}&& (A', B', g) &A \ar[r]_{h} & A' & B \ar[r]_{k} & B'}\]
	
\newpage
	\smallskip\noindent 
	\parbox{4cm}{ \xymatrix{A \ar[r]^{\id{A}} \ar[d]_{\id{A}} & A \ar[d]^{h} & B \ar[r]^{\id{B}} \ar[d]_{\id{B}} & B  \ar[d]^{k}\\ A \ar[r]_{h} & A' & B \ar[r]_{k} & B'}}\hfill \parbox{8.5cm}{$(\Leftarrow)$ Since $h$ and $k$ are monos then we have the two pullback squares on the left. Thus by \Cref{lim} the pullback of $(h,k)$ along itself has isomorphisms as projections and so $(h,k)$ is mono. \qedhere}
	
\end{proof}

We end this section pointing out another useful fact,  showing that in some cases we can guarantee  the existence of a left adjoint to $U_R$. 

\begin{proposition}\label[proposition]{prop:left}
	If $\A$ has initial objects and $L$ preserves them then the forgetful functor $U_R\colon \comma{L}{R}\to \B$ has a left adjoint $\Delta$.
\end{proposition}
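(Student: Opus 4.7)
The plan is to construct the left adjoint explicitly. Given $B\in \B$, let $0_{\A}$ denote an initial object of $\A$. Since $L$ preserves initial objects by hypothesis, $L(0_{\A})$ is initial in $\X$, hence for every $B\in \B$ there is a unique arrow $\iota_B\colon L(0_{\A})\to R(B)$. I would then define $\Delta\colon \B\to \comma{L}{R}$ on objects by $\Delta(B):=(0_{\A}, B, \iota_B)$, and on a morphism $k\colon B\to B'$ by $\Delta(k):=(\id{0_{\A}}, k)$. The square expressing that $\Delta(k)$ is a morphism in $\comma{L}{R}$, namely
\[R(k)\circ \iota_B=\iota_{B'}\circ L(\id{0_{\A}}),\]
commutes automatically because both sides are arrows $L(0_{\A})\to R(B')$ and $L(0_{\A})$ is initial in $\X$. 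Functoriality of $\Delta$ is immediate.

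Next, I would take the unit at $B$ to be the identity $\eta_B:=\id{B}\colon B\to U_R(\Delta(B))=B$. To prove the universal property, fix an object $(A', B', g)$ of $\comma{L}{R}$ and an arrow $k\colon B\to B'=U_R(A',B',g)$. I need to produce a unique morphism $(h, k')\colon \Delta(B)\to (A', B', g)$ with $U_R(h, k')\circ \eta_B=k$, which forces $k'=k$. The first component $h\colon 0_{\A}\to A'$ must be the unique arrow $?_{A'}$ from the initial object, so uniqueness is automatic. For existence, I check that $(?_{A'}, k)$ is indeed a morphism in $\comma{L}{R}$: the required equation $R(k)\circ \iota_B=g\circ L(?_{A'})$ holds because both sides are parallel arrows $L(0_{\A})\to R(B')$ with initial domain.

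No step here is genuinely hard; the whole argument is driven by the initiality of $L(0_{\A})$ in $\X$, which trivialises every coherence condition one would otherwise have to verify. The only subtlety worth flagging is that the hypothesis ``$L$ preserves initial objects'' is exactly what is needed to turn $L(0_{\A})$ into an initial object of $\X$, and without it $\iota_B$ would not exist uniquely (or at all). Everything else is a direct unwinding of the definitions of comma category and adjunction.
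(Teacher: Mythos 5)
Your construction is correct and is essentially the paper's own proof: define $\Delta(B)=(0_{\A},B,?_{R(B)})$, take the unit to be the identity, and use initiality of $L(0_{\A})$ (guaranteed by the preservation hypothesis) to make the defining square of $(?_{A'},k)$ commute, with uniqueness forced by initiality of $0_{\A}$. Your write-up merely spells out functoriality of $\Delta$ a bit more explicitly than the paper does; no substantive difference.
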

\begin{proof} For an object $B\in \B$ we can define $\Delta(B)$ as $(0, B, ?_{R(B)})$, for $0$ is an initial object in $\A$ and $?_{R(B)}$ is the unique arrow $L(0)\to R(B)$. Let $\id{B}\colon B\to U_R(\Delta(B))$ be the identity,
	
	\noindent 
	\parbox{10cm}{and suppose that a $k\colon B\to U_R(A, B', f)$ in $\B$ is given. By initiality of $0$, there is only one arrow $?_A\colon 0\to A $ in $\A$ and, 
	since $L$ preserves initial objects, the square aside commutes. Thus $(?_A,k)$ is the unique morphism $\Delta(B)\to (A, B', f)$ such that $U_R(?_A,k)=k$. }\hfill 
	\parbox{2cm}{\vspace{-.1cm}\xymatrix@C=30pt@R=15pt{L(0) \ar[r]_{L(?_A)} \ar[d]_{?_{R(B)}} & L(A) \ar[d]^{f}\\ R(B) \ar[r]^{R(k)}& R(B')}}  
\end{proof}
Dualizing we get immediately the following.
\begin{corollary}If $\B$ has terminal objects preserved by $R$ then $U_L\colon \comma{L}{R}\to \A$ has a right adjoint.
\end{corollary}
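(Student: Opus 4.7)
The plan is to deduce this corollary from Proposition~\ref{prop:left} by a formal duality argument, using the identification of $(\comma{L}{R})^{op}$ with $\comma{R^{op}}{L^{op}}$ provided by Proposition~\ref{prop:dual}. Since the statement is the exact ``op''-version of Proposition~\ref{prop:left} (initial objects $\leftrightsquigarrow$ terminal objects, $U_R$ $\leftrightsquigarrow$ $U_L$, left adjoint $\leftrightsquigarrow$ right adjoint), no new construction is needed: the entire work is a bookkeeping translation through duality.

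Concretely, first I would consider the functors $R^{op}\colon \B^{op}\to \X^{op}$ and $L^{op}\colon \A^{op}\to \X^{op}$. The hypothesis that $\B$ has terminal objects preserved by $R$ translates to: $\B^{op}$ has initial objects, and $R^{op}$ preserves them (a functor preserves terminal objects iff its opposite preserves initial ones). Applying Proposition~\ref{prop:left} to the pair $(R^{op}, L^{op})$ then yields a left adjoint $\Delta\colon \A^{op}\to \comma{R^{op}}{L^{op}}$ to the forgetful functor $U_{L^{op}}\colon \comma{R^{op}}{L^{op}}\to \A^{op}$. Explicitly, for $A\in \A$ one would set $\Delta(A):=(A, 1, ?_{L(A)})$, where $1$ is a terminal object of $\B$ and $?_{L(A)}\colon L(A)\to R(1)$ is the unique arrow into $R(1)$ (which is terminal by assumption).

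Next I would invoke Proposition~\ref{prop:dual}, which gives $(\comma{L}{R})^{op}=\comma{R^{op}}{L^{op}}$ and $U_L^{op}=U_{L^{op}}$. Hence the adjunction produced in the previous step is, up to this identification, an adjunction $\Delta\dashv U_L^{op}$ between $\A^{op}$ and $(\comma{L}{R})^{op}$. Dualizing (i.e.~passing to opposite categories on both sides) converts this into an adjunction $U_L\dashv \Delta^{op}$ between $\comma{L}{R}$ and $\A$, which is exactly a right adjoint to $U_L$, as claimed.

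There is essentially no obstacle; the only thing to be careful about is the direction of the adjunction after dualization: a left adjoint in the opposite category corresponds to a right adjoint in the original category, so we really do land on a right adjoint of $U_L$ and not another left adjoint. Since Proposition~\ref{prop:dual} matches the forgetful functors up on the nose, the unit and counit constructed in Proposition~\ref{prop:left}'s proof transport back to give the unit and counit of $U_L\dashv \Delta^{op}$ without further computation.
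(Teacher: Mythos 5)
Your proof is correct and matches the paper exactly: the paper's entire justification is ``Dualizing we get immediately the following,'' i.e.\ precisely the transport of Proposition~\ref{prop:left} through the identification $(\comma{L}{R})^{op}=\comma{R^{op}}{L^{op}}$ of Proposition~\ref{prop:dual}. Your explicit description of the right adjoint as $A\mapsto (A,1,\,!_{R(1)}\colon L(A)\to R(1))$ is the correct unwinding of that duality.
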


\subsection{Slice categories}

This section is devoted to recall some basic facts about the so called \emph{slice categories}.
\begin{definition}\index{category!slice -}
	Let $X$ be an object of a category $\X$. We define the following two categories.
	
	\noindent 
	\parbox{11cm}{\vspace{-.5cm}\begin{itemize}
		\item The \emph{ slice category over $X$} is the category $\X/X$ which has as objects arrows $f\colon Y\to X$ and  in which an arrow $h\colon f\to g$ is $h\colon Y\to Y'$ in $\X$ such that the triangle on the right commutes.
		\item  	 Dually, the \emph{ slice category under $X$} is the category $X/\X$ in which objects are arrows $f\colon X\to Y$  with domain $X$ and a morphism $h\colon  f\to g$ is an arrow of $\X$ fitting in a triangle as the one aside.
	\end{itemize}} \hfill \parbox{2cm}{\vspace{-0cm}\xymatrix@C=15pt@R=15pt{Y \ar[dr]_{f} \ar[rr]^{h}&& Y' \ar[dl]^{g}\\ & X} \xymatrix@C=15pt@R=15pt{&X \ar[dr]^{g}\ar[dl]_{f}\\ Y  \ar[rr]_{h}&& Y'}}
\end{definition}

\begin{remark} For every $X\in\ X$ we have forgetful functors
	\[\begin{split}
		\dom_X&\colon \X/X\to \X\\
		\functor[l]{f}{h}{g}
		&\functormapsto
		\rfunctor{\dom(f) }{h}{\dom(g)}
	\end{split}\qquad \begin{split}
		\cod_X&\colon X/\X\to \X\\
		\functor[l]{f}{h}{g}
		&\functormapsto
		\rfunctor{\cod(f) }{h}{\cod(g)}
	\end{split}\]
\end{remark}

We can realize the slice over and under an object $X\in \X$ as comma categories.

\begin{proposition}\label[proposition]{prop:commaapp}
	For every object $X$ in a category $\X$, if $\delta_X\colon \T\to \X$ is  the constant functor of value  $X$ from the category with only one object $*$, then $\X/X$ and $X/\X$ are isomorphic to, respectively,  $\comma{\id{X}}{\delta_X}$ and $\comma{\delta_X}{\id{X}}$ .
\end{proposition}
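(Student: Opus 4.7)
The proof is a routine unpacking of the definitions, so the plan is simply to exhibit the canonical isomorphism functor and verify it is well-defined, functorial, and bijective on objects and arrows. Define $F\colon \X/X \to \comma{\id{\X}}{\delta_X}$ on objects by $F(f\colon A\to X) := (A, *, f)$, where $*$ is the unique object of $\T$ and $f$ is regarded as an arrow $\id{\X}(A) = A \to X = \delta_X(*)$. On morphisms, send $h\colon f \to g$ (that is, an arrow $h$ in $\X$ with $g\circ h = f$) to the pair $(h, \id{*})$. This is a legitimate morphism of $\comma{\id{\X}}{\delta_X}$ because the required square $\delta_X(\id{*})\circ f = g \circ \id{\X}(h)$ collapses to $f = g\circ h$, which holds by hypothesis. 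Functoriality is inherited directly from composition and identities in $\X$, using that the only arrow of $\T$ is $\id{*}$.

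For the inverse $G$, send $(A, *, f)$ to $f\colon A\to X$ and a morphism $(h,k)\colon (A,*,f)\to(A',*,g)$ to $h$; here $k$ is forced to equal $\id{*}$. The commutativity condition for $(h,k)$ in the comma category reads $g\circ h = f$, so $h$ is indeed a morphism of $\X/X$. The equalities $G\circ F = \id{\X/X}$ and $F\circ G = \id{\comma{\id{\X}}{\delta_X}}$ are then immediate from the definitions, yielding the first isomorphism.

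For the second isomorphism $X/\X \cong \comma{\delta_X}{\id{\X}}$ I would proceed symmetrically, sending $f\colon X\to A$ to $(*, A, f)$ and $h\colon f\to g$ to $(\id{*}, h)$; alternatively, one can use \Cref{prop:dual} by noting that $(X/\X)^{op}$ coincides with $\X^{op}/X$ (the slice over $X$ in $\X^{op}$) and $(\delta_X)^{op} = \delta_X$ viewed as a functor $\T\to \X^{op}$, so the already-established first isomorphism applied in $\X^{op}$ dualises to the desired one. There is no real obstacle here beyond careful bookkeeping between the two sides of the definitions.
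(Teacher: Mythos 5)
Your proof is correct and follows essentially the same route as the paper: both exhibit the explicit functors sending $f\colon A\to X$ to $(A,*,f)$ and $h$ to $(h,\id{*})$ (and symmetrically for the under-slice), and observe they are mutually inverse, the comma-square condition collapsing to the slice condition since $\T$ has only the identity arrow. The duality remark you add is a fine alternative for the second half but changes nothing essential.
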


\newpage
\begin{proof} Define functors $F_1\colon \comma{\id{X}}{\delta_X}\to \X/X$ and $G_1\colon \X/X\to \comma{\id{X}}{\delta_X}$ as follows
	\[	\begin{split}
		\functor[l]{(Y, *,  f)}{(h, \id{*} )}{( Y', *,  g)}
		\functormapsto
		\rfunctor{f }{h}{g}
	\end{split} \qquad \begin{split}
		\functor[l]{f }{h}{g}
		\functormapsto
		\rfunctor{(\dom(f), *, f)}{(h, \id{*} )}{(\dom(g), *, g)}
	\end{split}\]
	Similarly, we have $F_2\colon \comma{\delta_X}{\id{X}}\to X/\X$ and $G_2\colon X/\X\to \comma{\delta_X}{\id{X}}$
	\[	\begin{split}
		\functor[l]{(*, Y, f)}{(\id{*},h )}{(*, Y', g)}
		\functormapsto
		\rfunctor{f }{h}{g}
	\end{split} \qquad \begin{split}
		\functor[l]{f }{h}{g}
		\functormapsto
		\rfunctor{(*, \cod(f), f)}{(\id{*}, h )}{(*, \cod(g), g)}
	\end{split}\]
	It is now obvious to see that $F_1,G_1$ and $F_2, G_2$ are pairs of inverses.   
\end{proof}

A straightforward application of \Cref{lim,colim} now yields the following.
\begin{corollary}\label[corollary]{cor:pbapp}
For every object $\X$, $\X/X$ has all colimits and connected limits that $\X$ has. Moreover such limits and colimits are created by $\dom_X$.

In particular, if $\X$ has pullbacks, equalizers or pushouts, then for every object $X$, the slice $\X/X$ has such limits and colimits.
\end{corollary}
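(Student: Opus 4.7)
The plan is to reduce everything to \Cref{prop:commaapp}, \Cref{colim}, and \Cref{lim}, observing that $\X/X$ is realized as the comma category $\comma{\id{X}}{\delta_X}$ where $\delta_X\colon \T\to \X$ is the constant functor from the one-arrow category $\T$. Under this isomorphism, $\dom_X$ corresponds to the forgetful $U_{\id{X}}\colon \comma{\id{X}}{\delta_X}\to \X$, whereas the other forgetful $U_{\delta_X}\colon \comma{\id{X}}{\delta_X}\to \T$ is essentially trivial, since both its domain of arrows and its codomain $\T$ have only identity structure to record. In particular, $U_{\delta_X}$ creates every (co)limit vacuously: any diagram $U_{\delta_X}\circ F\colon \D\to \T$ is constant at the unique object of $\T$, and both its limit and its colimit are that object together with all identity arrows.

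For the colimit half, I would apply \Cref{colim} with $L=\id{X}$. The identity functor preserves every colimit, so the preservation hypothesis on $L$ is automatic for every diagram $F\colon \D\to \comma{\id{X}}{\delta_X}$. Hence the pair $\{U_{\id{X}}, U_{\delta_X}\}$ jointly creates all colimits; combining this with the vacuous creation by $U_{\delta_X}$ noted above, one concludes that $\dom_X$ alone creates any colimit of shape $\D$ that exists in $\X$. Thus $\X/X$ inherits every colimit from $\X$, and such colimits are computed in $\X$ via $\dom_X$.

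For the limit half, I would apply \Cref{lim} with $R=\delta_X$. Here the preservation condition is more subtle: one needs $\delta_X$ to send the limit of $U_{\delta_X}\circ F$ in $\T$ to a limit in $\X$. Since $U_{\delta_X}\circ F$ is constant at the object of $\T$, its limit is that object together with the family of identities, and $\delta_X$ sends this cone to the constant cone on $X$ of shape $\D$. This constant cone is a limit cone in $\X$ exactly when $\D$ is connected, which is the well-known characterisation of connected shapes. Therefore \Cref{lim} applies precisely for connected diagrams, and again the triviality of $U_{\delta_X}$ reduces joint creation to creation by $\dom_X$ alone, giving that $\X/X$ has all connected limits that $\X$ has and that $\dom_X$ creates them.

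The ``in particular'' clause is then immediate, because pullbacks and equalizers are connected limits, while pushouts are colimits. The main conceptual point to make explicit is the one requiring some care, namely the recognition that the constant functor $\delta_X$ preserves the limit of a constant diagram on a non-empty connected category but fails to do so for disconnected shapes (e.g.\ it does not send the empty diagram to a terminal object in general), which is exactly why one obtains only \emph{connected} limits and not arbitrary ones.
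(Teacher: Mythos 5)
Your proposal is correct and follows exactly the route the paper intends: realize $\X/X$ as $\comma{\id{X}}{\delta_X}$ via \Cref{prop:commaapp}, invoke \Cref{colim} (with $L=\id{X}$ preserving all colimits) and \Cref{lim} (with $\delta_X$ preserving precisely the limits of connected shape), and observe that the $\T$-component of the joint creation is vacuous so that $\dom_X$ alone creates the (co)limits. The paper states this as "a straightforward application" of those two results; you have simply spelled out the details, including the correct identification of connectedness as the exact condition under which $\delta_X$ preserves the relevant limit.
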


\end{document}